\newtheorem{theorem}{Theorem}
\newtheorem{proposition}[theorem]{Proposition}
\newtheorem{lemma}[theorem]{Lemma}
\theoremstyle{definition}
\newtheorem{assumption}[theorem]{Assumption}
\theoremstyle{remark}
\newtheorem{remark}[theorem]{Remark}
\numberwithin{equation}{section}
\numberwithin{theorem}{section}
\DeclareMathOperator{\Pas}{\mathds{P}\text{-as}}
\DeclareMathOperator*{\argmin}{argmin}
\DeclareMathOperator*{\Argmin}{Argmin}
\DeclareMathOperator*{\supp}{supp}
\DeclareMathOperator{\oo}{o}
\DeclareMathOperator{\OO}{O}
\DeclareMathOperator{\Cost}{Cost}
\DeclareMathOperator{\Var}{\mathbb{V}ar}
\DeclareMathOperator{\Refine}{Refine}
\newcommand{\e}{\mathrm{e}}
\newcommand{\intl}{[\![}
\newcommand{\intr}{]\!]}
\title{Adaptive Multilevel Stochastic Approximation of the Value-at-Risk}
\author{%
St{\'e}phane~Cr{\'e}pey\footnote{Universit\'e Paris Cit\'e, CNRS, Laboratoire de Probabilit\'es, Statistique et Mod\'elisation (LPSM, UMR 8001), 75013 Paris.
\texttt{stephane.crepey@lpsm.paris}}
\and
Noufel~Frikha\footnote{Universit\'e Paris 1 Panth\'eon-Sorbonne, Centre d'Economie de la Sorbonne (CES), 106 Boulevard de l’H\^opital, 75642 Paris Cedex 13. \texttt{noufel.frikha@univ-paris1.fr}}
\and
Azar~Louzi\footnote{Universit\'e Paris Cit\'e, CNRS, Laboratoire de Probabilit\'es, Statistique et Mod\'elisation (LPSM, UMR 8001), 75013 Paris. \texttt{azar.louzi@lpsm.paris}}
\and
Jonathan~Spence\footnote{ Maxwell Institute for Mathematical Sciences, School of Mathematics, University of Edinburgh, Edinburgh, UK, EH9 3FD. \texttt{jspence5@ed.ac.uk}}
}
\date{\today}
\begin{document}

\maketitle

\begin{center}
\begin{minipage}{.8\linewidth}
\small
{\bf Abstract.}
Cr\'epey, Frikha, and Louzi (2025) introduced a multilevel stochastic approximation scheme to compute the value-at-risk of a financial loss that is only simulatable by Monte Carlo.
The best complexity of the scheme is in $\OO(\varepsilon^{-\frac52})$, $\varepsilon>0$ being a prescribed accuracy, which is suboptimal compared to the canonical multilevel Monte Carlo performance.
This suboptimality stems from the discontinuity of the Heaviside function involved in the biased stochastic gradient that is recursively evaluated to derive the value-at-risk.
To mitigate this issue, this paper proposes and analyzes a multilevel stochastic approximation algorithm that adaptively selects the number of inner samples at each level, and proves that its best complexity is in $\OO(\varepsilon^{-2}|\ln{\varepsilon}|^\frac52)$. Our theoretical analysis is exemplified through numerical experiments.
\medskip

\noindent
{\bf Keywords.}
stochastic approximation~;
value-at-risk~;
nested Monte Carlo~;
multilevel Monte Carlo~;
adaptive Monte Carlo

\noindent
{\bf MSC.}
Primary
65C05,
62L20;
secondary
62G32,
91G60
%\medskip

%\noindent
%{\bf DOI.}
%XXX
%\quad
\noindent
{\bf arXiv.}
2408.06531
\quad
\noindent
{\bf HAL.}
04670735

\noindent
Accepted for publication in Annals of Applied Probability

\end{minipage}
\end{center}

\section{Introduction}

The value-at-risk (VaR) is the predominantly used risk metric in finance \cite{BISFRTB}.
It stands for the quantile, at some confidence level, of the loss of a portfolio.
The probability that such loss exceeds the VaR expresses the chances of occurrence of large losses on the portfolio.
Evaluating a portfolio's VaR is thus paramount to assessing its risk.
Note that a large class of financial portfolios can only be valued by Monte Carlo (MC), which introduces a bias when estimating the loss, hence increasing the VaR computation's complexity. Determining a portfolio's VaR is thus considered a challenging task.

Following the celebrated works on VaR convexification (Rockafellar and Uryasev \cite{RU00}, Ben-Tal and Teboulle \cite{BT07} and Bardou et al.~\cite{BFP09}), a nascent line of research has adopted a stochastic approximation (SA) viewpoint to estimate the VaR (Bardou et al.~\cite{BFP09,BFP09:2,BFP16}, Costa and Gadat \cite{CG21} and Bercu et al.~\cite{BCG21}).
Recent developments (Barrera et al.~\cite{Bar+19} and Cr\'epey et al.~\cite{CFL23,amlsa}) have focused on the nested nature of the problem.
Notably, Barrera et al.~\cite{Bar+19} propose a nested SA (NSA) method that couples an outer SA scheme with an inner nested MC scheme to compute the VaR. It is shown by Cr\'epey et al.~\cite{CFL23} that such a scheme runs in $\OO(\varepsilon^{-3})$ time, $\varepsilon>0$ being a prescribed accuracy.
Meanwhile, subsequently to the wide success of multilevel MC (MLMC) methods (Heinrich \cite{Hei98,Hei01}, Giles \cite{Gil08} and Dereich \cite{Der11}), multilevel SA (MLSA) algorithms (Frikha \cite{Fri16} and Dereich \cite{Der19}) have emerged as a means to accelerate biased SA schemes such as NSA.
While the latter injects biased nested simulations directly into an iterative SA scheme, MLSA telescopes a sequence of correlated estimate pairs of lower and lower biases, hence reducing the complexity by an order of magnitude.
Cr\'epey et al.~\cite{CFL23} leverage the multilevel acceleration proposed by Frikha \cite{Fri16} to reduce the complexity of the VaR NSA scheme of Barrera et al.~\cite{Bar+19} to $\OO(\varepsilon^{-2-\delta})$, $\frac12\leq\delta<1$ being a small number governed by the integrability degree of the loss.
They also identify a parametrization for their MLSA algorithm that achieves a complexity for the estimation of the expected shortfall (ES), a risk measure closely linked to the VaR, in $\OO(\varepsilon^{-2}|\ln{\varepsilon}|^2)$.
Cr\'epey et al.~\cite{amlsa} further obtain central limit theorems (CLTs) for VaR and ES NSA and MLSA schemes, as well as their Polyak-Ruppert variants that enjoy stronger numerical stability properties.
\\

Albeit significant, the performance gain achieved by MLSA over NSA in estimating the VaR \cite{CFL23} is to be nuanced.
According to Giles~\cite{Gil08} and Frikha~\cite{Fri16}, the canonical optimal complexity for multilevel techniques is of order $\varepsilon^{-2}$.
The suboptimality of the VaR MLSA scheme arises from its inner recursions relying on the evaluation of a discontinuous gradient akin to a Heaviside function ($x\mapsto \mathds{1}_{x>0}$). When a biased simulation of the loss falls on the opposite side of the discontinuity relative to the exact target loss, the multilevel recursion incurs an $\OO(1)$ update error. The accumulation of these errors propagates to the final estimator, leading to the suboptimality.

Motivated by valuing a digital option with Heaviside payoff in a local volatility model, Giles \cite{Gil24} surveys multiple approaches to the above discontinuity issue in a multilevel Monte Carlo (MLMC) context.
Most techniques described therein emanate from the derivative sensitivity computation literature and their common goal is to lower the variance of the paired estimators at each level of the multilevel simulation.
We retain three ideas that may apply to our nested SA setting.
Firstly, a natural idea is to smoothen the Heaviside payoff, but this only attains a complexity in $\OO(\varepsilon^{-2-\frac14})$ due to a large smoothing Lipschitz constant that is involved in.
Secondly, Malliavin calculus could be used in conjunction with a cubic spline interpolation, which however propagates regression noises to the final estimator.
Finally, levelwise adaptive refinement of inner Monte Carlo samplings seems to be a versatile technique that achieves the desired performance gain.
It advocates to dynamically refine the biased simulations in order to deal with the discontinuity issue (Giles and Haji-Ali \cite{GH19}, Haji-Ali et al.~\cite{HST22} and Giles et al.~\cite{GHS23}).
Let us provide a brief overview of this technique.

An early example of adaptive nested Monte Carlo simulation can be traced back to Broadie et al.~\cite{BDM11}, who endeavor to efficiently evaluate the probability of occurrence of large losses within the nested Monte Carlo setting of Gordy and Juneja \cite{GJ10}.
The latter had proposed an algorithm for estimating the VaR, which computes an order statistic out of multiple loss scenarios, each one being obtained by averaging out multiple inner MC samples.
The authors in \cite{BDM11} are interested in estimating the probability that such nested loss, simulated with the help of inner MC samples, exceeds the VaR. This estimation involves evaluating a Heaviside function that is centered on the VaR itself.
The adaptive strategy developed in \cite{BDM11} consists in refining simulations of risk scenarios that fall too close to the VaR to reduce their chances of ending up on the wrong side thereof, by increasing the amount of inner simulations involved.
In so doing, evaluating the Heaviside function at the nested loss becomes almost equivalent to evaluating it at the true loss. Giles and Haji-Ali \cite{GH19} and Haji-Ali et al.~\cite{HST22} revisit the adaptive refinement concept of \cite{BDM11} in the context of nested MLMC, whereby the adaptive refinement is used to increase the number of inner simulations at each level according to how close their functional falls with respect to the Heaviside discontinuity.

We note that the work of Elfverson et al.~\cite{doi:10.1137/140984294} has led to similar adaptive ideas in a partial differential equation Monte Carlo approximation setting, where the Monte Carlo estimation error is assumed to be almost surely bounded.
The terminology \emph{``adaptive MLMC''} has also been employed in other meanings, such as adaptive path simulation (Hoel et al.~\cite{HoelvonSchwerinSzepessyTempone+2014+1+41}) and importance sampling (Ben Alaya et al.~\cite{doi:10.1080/17442508.2022.2084338}).
\\

In this article, we develop an adaptive refinement strategy to efficiently estimate the VaR using the MLSA algorithm derived in \cite{CFL23}.
As in MLMC, our adaptive strategy prioritizes aligning the samples with their targets relative to the discontinuities of the update functions, which are centred at the SA iterates.
Note that the strategy that was developed for MLMC in \cite{GH19,HST22} does not directly apply to our SA setting, since refining the samples that drive an SA scheme shifts their distributions relatively to the SA iterates, who themselves evolve from one iteration to the next.
This is problematic since the shift in sample distribution affects the original SA problem formulation, rendering it possibly non-convex and possessing multiple stationary points.
To address this issue, we carefully incorporate a saturation factor that varies along the iterations in order to mitigate the undesirable effects of the change in sample distribution.

We apply our strategy to both NSA and MLSA paradigms and provide sharp $L^2(\mathbb{P})$-controls.
For a prescribed accuracy $\varepsilon>0$, their optimal complexities are shown to be in $\OO(\varepsilon^{-\frac52}|\ln{\varepsilon}|^\frac12)$ and in $\OO(\varepsilon^{-2}|\ln{\varepsilon}|^\frac52)$ respectively, resulting in an order of magnitude $\varepsilon^\frac12$ speed-up on their non-adaptive counterparts, up to a logarithmic factor.
The adaptive MLSA algorithm largely attains a complexity order that is comparable to the standard unbiased Robbins-Monro algorithm, thus demonstrating the narrowing of the performance gap between nested MLSA schemes and Robbins-Monro schemes in the context of Heaviside-type update functions.
Our numerical analyses, conducted on a toy model as well as a more concrete financial setup, strongly support our theoretical findings.
\\

The paper is structured as follows.
Section~\ref{sec:mlsa} recalls the problem and main results in \cite{CFL23} regarding the estimation of the VaR using NSA and MLSA.
Section~\ref{sec:amlsa} develops an adaptive refinement strategy to enhance the efficiency of the Monte Carlo sampling that is nested within the MLSA approach. Sections~\ref{sec:adansa} and~\ref{sec:adamlsa} exploit this strategy to reduce the complexities of NSA and MLSA and provide subsequent $L^2(\mathbb{P})$-controls and complexity rates.
Section~\ref{sec:num} presents numerical studies to demonstrate the performance improvement resulting from the adaptive strategy.

\section{Stochastic Approximation Approach}
\label{sec:mlsa}

This section recalls the setting and main results in \cite{CFL23} on the stochastic approximation of the VaR. They will serve as a baseline for the adaptive strategy that we develop in Section~\ref{sec:amlsa}.

\subsection{Unbiased Stochastic Approximation Algorithm}
\label{sec:problem}

Let $(\Omega,\mathcal{F},\mathbb{P})$ be a probability space accommodating all of the subsequent random variables.
Let $X^0\in L^1(\mathbb{P})$ be an $\mathbb{R}$-valued random loss of a portfolio, defined at some time horizon $\tau>0$.
Following Acerbi and Tasche \cite{AT02} and F\"ollmer and Schied \cite{FS10}, the VaR of $X^0$ at some confidence level $\alpha\in(0,1)$, denoted $\xi^0_\star$, is defined as
\begin{equation}
\xi^0_\star
:=\inf{\{\xi\in\mathbb{R}:\mathbb{P}(X^0\leq\xi)\geq\alpha\}}.
\label{eq:VaR}
\end{equation}
As stated in \cite{RU00,BT07,BFP09}, if the cdf $F_{X^0}$ of $X^0$ is continuous, then $\xi^0_\star$ is the left-end solution to
\begin{equation}
\label{eq:sa:opt}
\min_{\xi\in\mathbb{R}}{V_0(\xi)},
\quad\text{where}\quad
V_0(\xi):=\xi+\frac1{1-\alpha}\mathbb{E}[(X^0-\xi)^+],
\quad\xi\in\mathbb{R}.
\end{equation}
Moreover, $V_0$ is convex and continuously differentiable on $\mathbb{R}$, with
\begin{equation*}
V_0'(\xi)
=\frac{F_{X^0}(\xi)-\alpha}{1-\alpha}
=\mathbb{E}[H(\xi,X^0)],
\quad \xi\in\mathbb{R},
\end{equation*}
where
\begin{equation}
H(\xi,x)=1-\frac1{1-\alpha}\mathds1_{x \geq\xi},
\quad\xi,x\in\mathbb{R}.
\label{eq:H1}
\end{equation}
If $F_{X^0}$ is additionally increasing, then $V_0$ is strictly convex and $\xi^0_\star$ is the unique minimizer of $V_0$:
\begin{equation}
\label{eq:sa:opt:sol}
\xi^0_\star=\argmin{V_0}.
\end{equation}
Besides, if $X^0$ admits a continuous pdf $f_{X^0}$, then $V_0$ is twice continuously differentiable on $\mathbb{R}$, with
\begin{equation*}
V_0''(\xi)=\frac1{1-\alpha}f_{X^0}(\xi),
\quad \xi\in\mathbb{R}.
\end{equation*}

When i.i.d.~samples of $X^0$ are available, \cite{BFP09} proposes to estimate $\xi^0_\star$ using the unbiased SA scheme of dynamics
\begin{equation}
\xi^0_{n+1}=\xi^0_n-\gamma_{n+1}H(\xi^0_n,X^0_{n+1}),
\quad n\in\mathbb{N},
\label{eq:sa:alg:xi}
\end{equation}
where $(X^0_n)_{n\geq1}\stackrel{\text{\tiny\rm i.i.d.}}{\sim}X^0$, $\xi^0_0$ is a real-valued random initialization independent of the samples $(X^0_n)_{n\geq1}$, and $(\gamma_n)_{n\geq1}$ is a positive non-increasing sequence such that $\sum_{n=1}^\infty\gamma_n=\infty$ and $\lim_{n\to\infty}\gamma_n=0$.

We are interested in the setting where $X^0$ can be written as a conditional expectation:
\begin{equation}
\label{def:X0:cond:expect}
X^0=\mathbb{E}[\varphi(Y,Z)|Y].
\end{equation}
Here, $Y$ and $Z$ are two independent random variables taking values in $\mathbb{R}^d$ and $\mathbb{R}^q$ respectively, and $\varphi:\mathbb{R}^d\times\mathbb{R}^q\to\mathbb{R}$ is a measurable function such that $\varphi(Y,Z)\in L^1(\mathbb{P})$.
From a financial standpoint, $Y$ typically models the dynamics of the portfolio's risk factors up to the time horizon $\tau$ of the loss simulation, $Z$ their dynamics beyond $\tau$ and $\varphi(Y,Z)$ the subsequent future cash flows.

\begin{remark}
Generally, the risk factors $Y$ and $Z$ can be sampled from a model at a fixed cost $c_0>0$, and the cash flow structure $\varphi$ is known.
If $Y$ and $Z$ are simulated by discretizing a stochastic differential equation (SDE), one must take into account additional terms in our error controls, as well as extra factors in our complexity formulas.
We refer for instance to Rhee and Glynn \cite{RG12} for a multilevel method to simulate the functional $\varphi(Y,Z)$ based on an SDE.
\end{remark}

Hereafter, we are interested in the case where $X^0$ cannot be sampled exactly. One must then leverage the nested formulation of $X^0$ in \eqref{def:X0:cond:expect} to produce suitable samples for the SA approach.

\subsection{Nested Stochastic Approximation Algorithm}
\label{ssec:nested:sa}

Let $\mathbb{N}^*:=\mathbb{N}\setminus\{0\}$. A natural idea is to approximate $X^0$ by the nested Monte Carlo estimator
\begin{equation}
\label{eq:Xh}
X^h:=\frac1K\sum_{k=1}^K\varphi(Y,Z^k),
\quad\text{where}\quad
h:=\frac1K\in\mathcal{H}=\bigg\{\frac1{K'}: K'\in\mathbb{N}^*\bigg\}
\end{equation}
and $Y$ and $(Z^k)_{1\leq k\leq K}\stackrel{\text{\tiny\rm i.i.d.}}{\sim}Z$ are independent.
$h$ is termed \emph{bias parameter}, since it helps control the bias associated with the VaR estimation.
For some $h\in\mathcal{H}$, we swap $X^0$ by $X^h$ in the original problem \eqref{eq:sa:opt} and obtain the approximate problem
\begin{equation}
\label{eq:sa:nested:opt}
\min_{\xi\in\mathbb{R}}{V_h(\xi)},
\quad\text{where}\quad
V_h(\xi):=\xi+\frac1{1-\alpha}\mathbb{E}[(X^h-\xi)^+],
\quad\xi\in\mathbb{R}.
\end{equation}
Again, assuming that $\varphi(Y,Z)\in L^1(\mathbb{P})$, if the cdf $F_{X^h}$ of $X^h$ is continuous, then $V_h$ is convex and continuously differentiable on $\mathbb{R}$, with
\begin{equation*}
V_h'(\xi)
=\frac{F_{X^h}(\xi)-\alpha}{1-\alpha}
=\mathbb{E}[H(\xi,X^h)],
\quad \xi\in\mathbb{R},
\end{equation*}
recalling the definition \eqref{eq:H1}.
If $F_{X^h}$ is additionally increasing, then $V_h$ is strictly convex and
\begin{equation}
\label{eq:sa:nested:sol}
\xi^h_\star=\argmin{V_h}
\end{equation}
is well defined and constitutes a biased estimator of $\xi^0_\star$.
Finally, if $X^h$ admits a continuous pdf $f_{X^h}$, then $V_h$ is twice continuously differentiable on $\mathbb{R}$, with
\begin{equation*}
V_h''(\xi)=\frac1{1-\alpha}f_{X^h}(\xi),
\quad \xi\in\mathbb{R}.
\end{equation*}

Assuming $F_{X^h}$ continuous and increasing, $\xi^0_\star$ can be estimated with bias $h\in\mathcal{H}$ using the NSA scheme of dynamics
\begin{equation}
\xi^h_{n+1}=\xi^h_n-\gamma_{n+1}H(\xi^h_n,X^h_{n+1}),\quad n\in\mathbb{N},
\label{eq:sa:nested:alg:xi}
\end{equation}
where $(X^h_n)_{n\geq1}\stackrel{\text{\tiny\rm i.i.d.}}{\sim}X^h$, $\xi^h_0$ is a real-valued random initialization independent of the samples $(X^h_n)_{n\geq1}$, and $(\gamma_n)_{n\geq1}$ is a positive non-increasing sequence such that $\sum_{n=1}^\infty\gamma_n=\infty$ and $\lim_{n\to\infty}\gamma_n=0$.

\subsubsection{Convergence Analysis}
Let $\overline{\mathcal{H}}:=\mathcal{H}\cup\{0\}$.
The global error of NSA can be written as a sum of statistical and bias errors as such:
\begin{equation}
\label{eq:global:error:VaR:NSA}
\xi^h_n-\xi^0_\star
=(\xi^h_n-\xi^h_\star)
+(\xi^h_\star-\xi^0_\star).
\end{equation}

\begin{remark}
In the MC realm, a statistical error has zero expectation, which is not the case for the quantity $\xi^h_n-\xi^h_\star$. Our terminology here is borrowed from the SA literature (e.g.~Duflo \cite{Duf96} or Frikha \cite{Fri16}).
Nevertheless, as recalled later in Lemma~\ref{lmm:error}(\ref{lmm:error:statistical}), we do have $\xi^h_n-\xi^h_\star\to0$ in $L^2(\mathbb{P})$ as $n\uparrow\infty$, exactly as in the MC framework.
\end{remark}

\begin{assumption}[{\cite[Assumptions~3.2 \&~3.5]{CFL23}}]\label{asp:misc}\
\begin{enumerate}[(i)]
    \item\label{asp:misc:i}
For all $h\in \mathcal{H}$, $F_{X^h}$ admits the first order Taylor expansion
\begin{equation*}
F_{X^h}(\xi)-F_{X^0}(\xi)=v(\xi)h+w(\xi,h)h,
\quad\xi\in\mathbb{R},
\end{equation*}
for some functions $v,w(\cdot,h):\mathbb{R}\to\mathbb{R}$ satisfying, for all $\xi^0_\star\in\Argmin{V_0}$,
\begin{equation}\label{eq:w}
\int^\infty_{\xi^0_\star}v(\xi)\mathrm{d}\xi<\infty
,\qquad
\lim_{\mathcal{H}\ni h\downarrow0}w(\xi^0_\star,h)=\lim_{\mathcal{H}\ni h\downarrow0}\int^\infty_{\xi^0_\star}w(\xi,h)\mathrm{d}\xi=0.
\end{equation}

    \item\label{asp:misc:ii}
For all $h\in\overline{\mathcal{H}}$, the law of $X^h$ admits a continuous pdf $f_{X^h}$ with respect to the Lebesgue measure. Moreover, the pdf $(f_{X^h})_{h\in\mathcal{H}}$ converge locally uniformly to $f_{X^0}$.

    \item\label{asp:misc:iii}
For all $R>0$,
\begin{equation*}
\inf_{h\in\overline{\mathcal{H}},\xi\in B(\xi^0_\star,R)}{f_{X^h}(\xi)}>0.
\end{equation*}

    \item\label{asp:misc:iv}
The pdf $(f_{X^h})_{h\in\overline{\mathcal{H}}}$ are uniformly bounded and uniformly Lipschitz, namely,
\begin{equation*}
\sup_{h\in\overline{\mathcal{H}}}{(\|f_{X^h}\|_\infty+[f_{X^h}]_{\text{\rm Lip}})}<\infty,
\end{equation*}
where $[f_{X^h}]_\text{\rm Lip}$ denotes the Lipschitz constant of $f_{X^h}$, $h\in\overline{\mathcal{H}}$.
\end{enumerate}
\end{assumption}

\begin{remark}[{\cite[Remark~1.2]{amlsa}}]\label{rmk:nested:misc}
Giorgi et al.'s \cite[Propositions~5.1(a,b)]{Giorgi2020} provide guarantees on the postulates  of Assumptions~\ref{asp:misc}(\ref{asp:misc:i})-(\ref{asp:misc:ii}).
The second part of \eqref{eq:w} also reads $w(\xi^0_\star,h)=\int_{\xi^0_\star}^\infty w(\xi,h)\mathrm{d}\xi=\oo(1)$ as $\mathcal{H}\ni h\downarrow0$, $\xi^0_\star\in\Argmin{V_0}$.
Assumptions~\ref{asp:misc}(\ref{asp:misc:iii})-(\ref{asp:misc:iv}) are natural in view of Assumption~\ref{asp:misc}(\ref{asp:misc:ii}) and the increasing nature of $F_{X^0}$.
By Assumptions~\ref{asp:misc}(\ref{asp:misc:ii})-(\ref{asp:misc:iii}), for all $h\in\overline{\mathcal{H}}$ and all $\xi^h_\star\in\Argmin{V_h}$, $V_h''(\xi^h_\star)=(1-\alpha)^{-1}f_{X^h}(\xi^h_\star)>0$, subsequently reducing $\Argmin{V_h}$ to a singleton $\{\xi^h_\star\}$.
Assumptions~\ref{asp:misc}(\ref{asp:misc:iii}) may be easily relaxed to supposing $f_{X^0}(\xi^0_\star)>0$, but the proposed formulation facilitates the derivation of the Lyapunov properties reprised in Lemma~\ref{lmm:lyapunov}, which are fundamental to the statistical error controls of Lemma~\ref{lmm:error}(\ref{lmm:error:statistical}) below.
\end{remark}

We refer to Lemma~\ref{lmm:lyapunov} for important consequences of Assumption~\ref{asp:misc}.
For $q\in\mathbb{N}^*$, we let 
\begin{equation}\label{eq:lambda}
\bar\lambda_q:=\inf_{h\in\overline{\mathcal{H}}}{\bar\lambda_{h,q}}>0,
\end{equation}
where the $(\bar\lambda_{h,q})_{h\in\overline{\mathcal{H}},q\geq1}$ are defined in Lemma~\ref{lmm:lyapunov}(\ref{lmm:lyapunov-ii}). Note that $(\bar\lambda_q)_{q\geq1}$ are non-decreasing with respect to $q\in\mathbb{N}^*$.

\begin{lemma}[{\cite[Proposition~3.4 \& Theorem~3.7(i)]{CFL23}}]\label{lmm:error}

\begin{enumerate}[(i)]

    \item\label{lmm:error:weak}
Suppose that $\varphi(Y,Z)\in L^1(\mathbb{P})$, that Assumptions \ref{asp:misc}(\ref{asp:misc:i})--(\ref{asp:misc:iii}) hold, and that $f_{X^0}>0$.
Then, as $\mathcal{H}\ni h\downarrow0$,
\begin{equation*}
\xi^h_\star-\xi^0_\star
=-\frac{v(\xi^0_\star)}{f_{X^0}(\xi^0_\star)}h+\oo(h).
\end{equation*}
Thus, for all $h\in\mathcal{H}$,
\begin{equation}
\label{eq:oO}
|\xi^h_\star-\xi^0_\star|
\leq Ch,
\quad\text{where}\quad
C
=\bigg|\frac{v(\xi^0_\star)}{f_{X_0}(\xi^0_\star)}\bigg|
+\sup_{h\in\mathcal{H}}\bigg\{\bigg|\frac{\xi^h_\star-\xi^0_\star}h+\frac{v(\xi^0_\star)}{f_{X_0}(\xi^0_\star)}\bigg|\bigg\}.
\end{equation}

    \item\label{lmm:error:statistical}
Suppose that $\varphi(Y,Z)\in L^2(\mathbb{P})$, that Assumption~\ref{asp:misc} holds, and that
\begin{equation}
\label{integrability:condition:xi0h}
\sup_{h\in\overline{\mathcal{H}}}
\mathbb{E}\bigg[|\xi^h_0|^2
\exp\bigg(\frac{2k_\alpha}{1-\alpha}
\sup_{h'\in\mathcal{H}}\|f_{X^{h'}}\|_\infty
|\xi_0^h|\bigg)\bigg]
<\infty,
\end{equation}
where $k_\alpha=1\vee\frac\alpha{1-\alpha}$.
If $\gamma_n=\gamma_1n^{-\beta}$, $n\in\mathbb{N}^*$, with $\gamma_1>0$ and $\beta\in(0,1]$, and if $\bar\lambda_1\gamma_1>1$ when $\beta=1$, then, for all $h\in\mathcal{H}$ and all $n\in\mathbb{N}^*$,
\begin{equation*}
\mathbb{E}[(\xi^h_n-\xi^h_\star)^2]\leq C\gamma_n,
\end{equation*}
where
\begin{equation}
\label{eq:C:recall}
C=\sup_{h\in\mathcal{H}}{\kappa_{h,1}(C_{h,1,1}+C_{h,2})},
\end{equation}
and, for $h\in\mathcal{H}$ and $q\in\mathbb{N}^*$, $C_{h,1,q}=$
\begin{equation*}
\begin{cases}
\gamma_1^{-1}(\mathbb{E}[\mathcal{L}^{\mu_{h,q}}_{h,1}(\xi^h_0)]+\frac{\e+1}{\e\mu_{h,q}+1})\\
\quad\times\sup_{m\geq1}{\{m^\beta\exp{(2^{2\beta+1}(1+\e\mu_{h,q})\zeta_{h,1}^{\mu_{h,q}}\gamma_1^2\Phi_{1-2\beta}(m+1)-\frac{\lambda^{\mu_{h,q}}_{h,1}}2\gamma_1\Phi_{1-\beta}(m+1))}\}}\\
+(1+\e)\zeta^{\mu_{h,q}}_{h,1}(2^{2\beta}\gamma_1\sup_{m\geq1}{\{m^\beta\exp(-2^{-(\beta+2)}\lambda^{\mu_{h,q}}_{h,1}\gamma_1m^{1-\beta})\Phi_{1-2\beta}(m+1)\}}+\frac{2^{\beta+1}}{\lambda^{\mu_{h,q}}_{h,1}})\\
\qquad\qquad\qquad
\qquad\qquad\qquad
\qquad\qquad\qquad
\qquad\qquad\qquad
\qquad\qquad\qquad
\text{if $\beta\in(0,1)$,}\\
\gamma_1^{-1}\exp((1+\e\mu_{h,q})\zeta^{\mu_{h,q}}_{h,1}\frac{\pi^2}6\gamma_1^2+\frac{\gamma_1\lambda^{\mu_{h,q}}_{h,1}}2)
(\mathbb{E}[\mathcal{L}^{\mu_{h,q}}_{h,1}(\xi^h_0)]+\gamma_1^2(1+\e)\zeta^{\mu_{h,q}}_{h,1}2^{(\gamma_1\lambda^{\mu_{h,q}}_{h,1})\vee2})\\
\qquad\qquad\qquad
\qquad\qquad\qquad
\qquad\qquad\qquad
\qquad\qquad\qquad
\qquad\qquad\qquad
\text{if $\beta=1$,}
\end{cases}
\end{equation*}
and $C_{h,2}=$
\begin{equation*}
\begin{cases}
(\gamma_1^{-1}\mathbb{E}[\bar{\mathcal{L}}_{h,2}(\xi^h_0)]+\frac{1+2^\beta C_{h,1,2}}{1+\e\mu_{h,2}^2})\\
\quad\times\sup_{m\geq1}{\{m^\beta\exp{(2^{2\beta+1}\bar\zeta_{h,2}(1+\e\mu_{h,2}^2)\gamma_1^2\Phi_{1-2\beta}(m+1)-\frac{\bar\lambda_{h,2}}2\gamma_1\Phi_{1-\beta}(m+1))}\}}\\
+(2^\beta C_{h,1,2}+1)\\
\quad\times\bar\zeta_{h,2}(\gamma_1^22^{2\beta}\sup_{m\geq1}{\{m^\beta\exp(-2^{-(\beta+2)}\bar\lambda_{h,2}\gamma_1m^{1-\beta})\Phi_{1-2\beta}(m+1)\}}+\gamma_1\frac{2^{2\beta+1}}{\bar\lambda_{h,2}}),\\
\qquad\qquad\qquad
\qquad\qquad\qquad
\qquad\qquad\qquad
\qquad\qquad\qquad
\qquad\qquad\qquad
\text{if $\beta\in(0,1)$,}\\
\gamma_1^{-1}\exp((1+\e\mu_{h,2}^2)\bar\zeta_{h,2}\frac{\pi^2}6\gamma_1^2+\frac{\gamma_1\bar\lambda_{h,2}}2)\\
\quad\times(\mathbb{E}[\bar{\mathcal{L}}_{h,2}(\xi^h_0)]+\gamma_1^3(2C_{h,1,2}+1)\bar\zeta_{h,2}2^{(\gamma_1\bar\lambda_{h,2})\vee3}\sup_{m\geq1}{\{\frac{\Phi_{\bar\lambda_{h,2}\gamma_1-2}(m+1)}{(m+1)^{\bar\lambda_{h,2}\gamma_1-1}}\}})\\
\qquad\qquad\qquad
\qquad\qquad\qquad
\qquad\qquad\qquad
\qquad\qquad\qquad
\qquad\qquad\qquad
\text{if $\beta=1$,}
\end{cases}
\end{equation*}
and also, for $\mu\geq0$,
\begin{equation*}
\begin{aligned}
\bar\zeta_{h,q}&=\zeta^{\mu_{h,q}}_{h,q},
\quad
\zeta^\mu_{h,q}=\frac12\chi^\mu_{h,q}k_\alpha^2\big((\gamma_1\sigma^\mu_q+\sigma^\mu_{q-1})\vee\sigma^\mu_q\big),\\
\sigma^\mu_{h,q}&=2^{q-1}\e^{\mu k_\alpha^2\gamma_1}\big((1+\e\mu^qk_\alpha^{2q}\gamma_1^q)\vee\e k_\alpha^{2q}\big),
\end{aligned}
\end{equation*}
with the constants $\mu_{h,q}$, $\kappa_{h,q}$, $\lambda^\mu_{h,q}$, $\bar\lambda_{h,q}$ and $\chi^\mu_{h,q}$, and the functions $\mathcal{L}$ and $\Phi$, being defined in Lemmas~\ref{lmm:lyapunov} and \ref{lmm:gamma}.
\end{enumerate}
\end{lemma}

\begin{remark}
\begin{enumerate}[(i)]
\item
The value of the constant $C$ in Lemma~\ref{lmm:error}(\ref{lmm:error:statistical}) is retrieved from the proof of \cite[Theorem~3.7(i)]{CFL23}. Its value is based on the results of Lemmas~\ref{lmm:lyapunov} and \ref{lmm:gamma}.

\item
The proof of Lemma~\ref{lmm:error}(\ref{lmm:error:statistical}) in \cite{CFL23} can be replicated line by line by taking $h=0$, which corresponds to the unbiased case of Section~\ref{sec:problem}. We can thereby obtain the following $L^2(\mathbb{P})$-control over the statistical error of the SA algorithm \eqref{eq:sa:alg:xi}:
\begin{equation}\label{eq:unbiased:statistical:error}
\mathbb{E}[(\xi^0_n-\xi^0_\star)^2]\leq C\gamma_n,
\quad n\in\mathbb{N},
\end{equation}
where
\begin{equation*}
C=\kappa_{0,1}(C_{0,1}+C_{0,2}),
\end{equation*}
with $C_{0,1}$ and $C_{0,2}$ being defined respectively like $C_{h,1}$ and $C_{h,2}$ in Lemma~\ref{lmm:error}(\ref{lmm:error:statistical}), once $h$ is substituted with $0$, and $\kappa_{0,1}$ being defined in Lemma~\ref{lmm:lyapunov}(\ref{lmm:lyapunov-iv}).
\end{enumerate}
\end{remark}

\subsubsection{Complexity Analysis}
The controls of Lemma~\ref{lmm:error} on the bias and statistical errors in \eqref{eq:global:error:VaR:NSA} result in the following complexities.

\begin{proposition}[{\cite[Theorem~3.8]{CFL23}}]
\label{prp:cost:nsa}
Let $\varepsilon>0$ be a prescribed accuracy. Within the framework of Lemma~\ref{lmm:error}(\ref{lmm:error:statistical}), setting
\begin{equation*}
h=\frac1{\lceil C\varepsilon^{-1}\rceil}
%\sim\varepsilon
\quad\text{and}\quad
n=\lceil\bar{C}^\frac1\beta\gamma_1^\frac1\beta\varepsilon^{-\frac2\beta}\rceil,
\end{equation*}
where $C,\bar{C}>0$ are the constants described in \eqref{eq:oO} and \eqref{eq:C:recall} respectively,
yields a global $L^2(\mathbb{P})$ error for NSA of order $\varepsilon$ as $\varepsilon\downarrow0$. The corresponding computational cost verifies
\begin{equation*}
\Cost^\beta_\text{\tiny\rm NSA}\leq C'nh^{-1}\sim C''\varepsilon^{-\frac2\beta-1}
\quad\text{as}\quad
\varepsilon\downarrow0,
\end{equation*}
for some constants $C',C''>0$ independent of $\varepsilon$.
The minimal corresponding computational cost satisfies
\begin{equation*}
\Cost^1_\text{\tiny\rm NSA}\leq C''\varepsilon^{-3}
\quad\text{as}\quad
\varepsilon\downarrow0,
\end{equation*}
and is attained for $\gamma_n=\gamma_1n^{-1}$, i.e.~for $\beta=1$ under the constraint $\bar\lambda_1\gamma_1>1$.
\end{proposition}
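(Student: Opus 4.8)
The plan is to split the global error via the decomposition~\eqref{eq:global:error:VaR:NSA} into its bias and statistical components, control each separately in $L^1(\mathbb{P})$, and only then turn to the cost count. Since the bias term $\xi^h_\star-\xi^0_\star$ is deterministic, taking $L^1(\mathbb{P})$-norms and applying the triangle inequality gives
\begin{equation*}
\mathbb{E}\big[\big|\xi^h_n-\xi^0_\star\big|\big]
\leq\mathbb{E}\big[\big|\xi^h_n-\xi^h_\star\big|\big]+\big|\xi^h_\star-\xi^0_\star\big|.
\end{equation*}
First I would bound the bias term: by Lemma~\ref{lmm:error}(\ref{lmm:error:weak}), $\xi^h_\star-\xi^0_\star=-\frac{v(\xi^0_\star)}{f_{X_0}(\xi^0_\star)}h+\oo(h)$, so that $\big|\xi^h_\star-\xi^0_\star\big|=\OO(h)=\OO(\varepsilon)$ since $h\sim\varepsilon$.

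For the statistical term, I would invoke Lemma~\ref{lmm:error}(\ref{lmm:error:statistical}) together with the Cauchy--Schwarz (equivalently Jensen) inequality to pass from the $L^2$-control to an $L^1$-control, uniformly in $h\in\mathcal{H}$:
\begin{equation*}
\mathbb{E}\big[\big|\xi^h_n-\xi^h_\star\big|\big]
\leq\Big(\mathbb{E}\big[\big(\xi^h_n-\xi^h_\star\big)^2\big]\Big)^{\frac12}
\leq\sqrt{C\gamma_n}.
\end{equation*}
With $\gamma_n=\gamma_1 n^{-\beta}$ and $n=\lceil\varepsilon^{-2/\beta}\rceil\geq\varepsilon^{-2/\beta}$, one has $\gamma_n\leq\gamma_1\varepsilon^2$, whence the statistical error is also $\OO(\varepsilon)$. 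The choice of $n$ is exactly what balances the two sources of error: the statistical contribution decays like $\gamma_n^{1/2}\sim n^{-\beta/2}$, so $n\sim\varepsilon^{-2/\beta}$ is the smallest iteration count making it comparable to the $\OO(h)=\OO(\varepsilon)$ bias. Summing the two $\OO(\varepsilon)$ contributions yields the claimed global $L^1(\mathbb{P})$ error of order $\varepsilon$.

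It remains to count the cost. Each of the $n$ iterations of~\eqref{eq:sa:nested:alg:xi} evaluates $H(\xi^h_n,X^{(n+1)}_h)$, and producing one innovation $X_h^{(n+1)}$ from its nested formulation~\eqref{eq:Xh} requires $K=h^{-1}$ evaluations of $\varphi$. Hence $\Cost^\beta_\text{\tiny\rm NSA}\leq Cnh^{-1}$, and substituting $n\sim\varepsilon^{-2/\beta}$ together with $h^{-1}=\lceil\varepsilon^{-1}\rceil\sim\varepsilon^{-1}$ gives $\Cost^\beta_\text{\tiny\rm NSA}\sim C\varepsilon^{-2/\beta-1}$. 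Finally, the exponent $-\tfrac2\beta-1$ is increasing in $\beta\in(0,1]$, so, since $\varepsilon\in(0,1)$, the cost $\varepsilon^{-2/\beta-1}$ is decreasing in $\beta$ and minimized at the largest admissible value $\beta=1$ (legitimate under the constraint $\bar\lambda_1\gamma_1>1$ required by Lemma~\ref{lmm:error}(\ref{lmm:error:statistical})), which produces $\Cost^1_\text{\tiny\rm NSA}\leq C\varepsilon^{-3}$.

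I do not anticipate a genuine obstacle here: the two quantitative estimates of Lemma~\ref{lmm:error} do the heavy lifting, and the proposition is essentially their assembly plus the elementary balancing of the errors and optimization over $\beta$. The only point demanding care is that the constant $C$ in the statistical bound be uniform in $h\in\mathcal{H}$ — which is precisely how Lemma~\ref{lmm:error}(\ref{lmm:error:statistical}) is stated, through the supremum over $h$ — so that it does not degrade as $h\sim\varepsilon\downarrow0$; this uniformity, together with the deterministic nature of the bias term, is what makes the two $\OO(\varepsilon)$ bounds combine cleanly.
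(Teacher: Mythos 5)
Your proposal is correct and follows exactly the route the paper intends: the proposition is quoted from \cite[Theorem~2.8]{CFL23}, and the paper's own justification is precisely the assembly you give --- the decomposition~\eqref{eq:global:error:VaR:NSA}, the bias control of Lemma~\ref{lmm:error}(\ref{lmm:error:weak}), the $h$-uniform $L^2$-control of Lemma~\ref{lmm:error}(\ref{lmm:error:statistical}) passed to $L^1$ via Jensen, the cost count $nh^{-1}$, and the optimization of the exponent $-\frac2\beta-1$ at $\beta=1$ under $\bar\lambda_1\gamma_1>1$. Your emphasis on the uniformity in $h\in\mathcal{H}$ of the statistical constant is exactly the point that makes the balancing legitimate as $h\sim\varepsilon\downarrow0$.
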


\subsection{Multilevel Stochastic Approximation Algorithm}
\label{ssec:mlsa}

With a complexity ceiling of $\OO(\varepsilon^{-3})$ for nested VaR estimation, as achieved by NSA, a multilevel approach is proposed in \cite{CFL23} to accelerate the latter scheme.
We recall below the MLSA scheme and the associated $L^2(\mathbb{P})$-control and complexity.

Take $h_0:=\frac1K\in\mathcal{H}$, $M\geq2$ an integer and $L\in\mathbb{N}^*$ a number of levels, and let
\begin{equation}
\label{eq:hl}
h_\ell:=\frac{h_0}{M^\ell}=\frac1{KM^\ell}\in\mathcal{H},
\quad\ell\in[\![0,L]\!].
\end{equation}
To each bias parameter $h_\ell$, $\ell\in[\![0,L]\!]$, corresponds an approximate problem $\min_{\xi\in\mathbb{R}}{V_{h_\ell}(\xi)}$ to \eqref{eq:sa:opt} with unique solution $\xi^{h_\ell}_\star$. These solutions can be telescoped into
\begin{equation}
\label{eq:xi:chi:hL}
\xi^{h_L}_\star
=\xi^{h_0}_\star+\sum_{\ell=1}^L\xi^{h_\ell}_\star-\xi^{h_{\ell-1}}_\star.
\end{equation}
By denoting $\mathbf{N}:=(N_\ell)_{0\leq\ell\leq L}$ a sequence of iteration amounts, the MLSA scheme \cite{CFL23,Fri16} consists in approximating $\xi^0_\star \approx \xi^{h_L}_\star$ with
\begin{equation}\label{alg:mlsa}
\xi^\text{\tiny\rm ML}_\mathbf{N}=\xi^{h_0}_{N_0}+\sum_{\ell=1}^L\xi^{h_\ell}_{N_\ell}-\xi^{h_{\ell-1}}_{N_\ell},
\end{equation}
where each level $\ell\in[\![0,L]\!]$ is simulated independently.
More precisely, once $\xi^{h_0}_{N_0}$ is simulated using $N_0$ iterations of \eqref{eq:sa:nested:alg:xi} with bias $h_0$, at each level $\ell\in[\![1,L]\!]$, for $j\in\{(\ell-1),\ell\}$, $\xi^{h_j}_{N_\ell}$ is obtained by iterating
\begin{equation}
\xi^{h_j}_{n+1}=\xi^{h_j}_n-\gamma_{n+1}H(\xi^{h_j}_n,X^{h_j}_{n+1}),\quad0\leq n\leq N_\ell-1,
\label{eq:sa:ml:alg:xi}
\end{equation}
where $(X^{h_{\ell-1}}_n,X^{h_\ell}_n)_{1\leq n\leq N_\ell}\stackrel{\text{\tiny\rm i.i.d.}}{\sim}(X^{h_{\ell-1}},X^{h_\ell})$ and $(\xi^{h_{\ell-1}}_0,\xi^{h_\ell}_0)$ is an $\mathbb{R}^2$-valued random initialization that is independent of the samples $(X^{h_{\ell-1}}_n,X^{h_\ell}_n)_{1\leq n\leq N_\ell}$.
Crucially, at each level $\ell\in[\![1,L]\!]$, $X^{h_\ell}$ and $X^{h_{\ell-1}}$ are strongly correlated, in the sense that
\begin{equation}\label{perfect:correlation}
X^{h_{\ell-1}}=\frac1{KM^{\ell-1}}\sum_{k=1}^{KM^{\ell-1}}\varphi(Y,Z^k),
\quad
X^{h_\ell}=\frac1MX^{h_{\ell-1}}+\frac1{KM^\ell}\sum_{k=KM^{\ell-1}+1}^{KM^\ell}\varphi(Y,Z^k),
\end{equation}
where $(Z^k)_{1\leq k\leq KM^\ell}\stackrel{\text{\rm\tiny i.i.d.}}{\sim}Z$ are independent of $Y$.

At each level $\ell\in[\![1,L]\!]$, $\xi^{h_\ell}_{N_\ell}$ and $\xi^{h_{\ell-1}}_{N_\ell}$ are referred to respectively as the fine and coarse approximations.
$\xi^{h_0}_{N_0}$ is referred to as the initial (or level $0$) approximation.
We can infer from \eqref{eq:sa:ml:alg:xi} that MLSA correlates multiple NSA pairs, each pair being set with consecutive bias parameters on the geometric scale $\mathcal{H}_0:=\{h_\ell,\ell\in\mathbb{N}\}$.
The produced effect is an incremental bias correction of the level $0$ approximation with bias parameter $h_0$.
Conversely, NSA can be viewed as an MLSA instance with $0$ higher levels.

The correlation between $X^{h_\ell}$ and $X^{h_{\ell-1}}$ is primarily induced by their shared $Y$ simulation.
A quick look at the proof of \cite[Theorem~4.6]{CFL23} (recalled below in Theorem~\ref{thm:ml-variance-cv}) reveals that their overlap in simulations of $Z$ has little impact on the error controls on the MLSA algorithm.
Indeed, in MLMC, antithetic sampling \cite{GS14} which seeks to exploit the correlation between $X^{h_\ell}$ and $X^{h_{\ell-1}}$ in terms of $Z$ simulations, only achieves an improvement on standard MLMC in the order of $|\ln{\varepsilon}|$, for a prescribed accuracy $\varepsilon>0$.

\subsubsection{Convergence Analysis}
The following lemma delimits three frameworks under which we derive $L^2(\mathbb{P})$-controls and complexities for the MLSA scheme.

\begin{lemma}[{\cite[Proposition~4.2]{CFL23}}]\label{lmm:local:strong:error:indicator:func}\
\begin{enumerate}[(i)]
\item\label{lmm:local:strong:error:indicator:func-i}
Assume that the real-valued random variables $X^h$ admit pdf $f_{X^h}$ that are bounded uniformly in $h\in\overline{\mathcal{H}}$.
\begin{enumerate}[\rm a.]
\item\label{lmm:local:strong:error:indicator:func-ia}
If there exists $p_\star>1$ such that
\begin{equation}\label{assumption:finite:Lp:moment}
\mathbb{E}\big[\big|\varphi(Y,Z)-\mathbb{E}[\varphi(Y,Z)|Y]\big|^{p_\star}\big]<\infty,
%\quad\text{for some $p_\star>1$},
\end{equation}
then, for all $h,h'\in\overline{\mathcal{H}}$ such that $0\leq h\leq h'$, and all $\xi\in\mathbb{R}$,
\begin{equation*}
\mathbb{E}[|\mathds1_{\{X^h>\xi\}}-\mathds1_{\{X^{h'}>\xi\}}|]\leq C(h'-h)^\frac{p_\star}{2(p_\star+1)},
\end{equation*}
where
\begin{equation*}
C=B_{p_\star}\mathbb{E}\big[\big|\varphi(Y, Z)-\mathbb{E}[\varphi(Y, Z)|Y]\big|^{p_\star}\big]^\frac1{p_\star+1}\Big(\sup_{h\in\overline{\mathcal{H}}}\|f_{X^h}\|_\infty\Big)^\frac{p_\star}{p_\star+1},
\end{equation*}
with $B_{p_\star}$ being a positive constant that depends only on $p_\star$.

\item\label{lmm:local:strong:error:indicator:func-ib}
Assume that there exists a constant $\mathfrak{g}\geq0$ such that, for all $u\in\mathbb{R}$,
\begin{equation}
\label{assumption:conditional:gaussian:concentration}
\mathbb{E}\Big[\exp\Big(u\big(\varphi(Y,Z)-\mathbb{E}[\varphi(Y,Z)|Y]\big)\Big)\Big|Y\Big]\leq\e^{\mathfrak{g}u^2}\quad\Pas.
\end{equation}
Then, for all $h,h'\in\overline{\mathcal{H}}$ such that $0\leq h<h'$ and all $\xi\in\mathbb{R}$,
\begin{equation*}
\begin{aligned}
\mathbb{E}[|\mathds1_{\{X^h>\xi\}}&-\mathds1_{\{X^{h'}>\xi\}}|]\\
&\leq2\sqrt{\mathfrak{g}(h'-h)}\bigg(1+\sqrt2\Big(\sup_{h''\in\overline{\mathcal{H}}}\|f_{X^{h''}}\|_\infty\Big)\big|\ln{\big(\mathfrak{g}(h'-h)\big)}\big|^\frac12\bigg).
\end{aligned}
\end{equation*}
\end{enumerate}

\item\label{lmm:local:strong:error:indicator:func-ii}
For $\ell\in\mathbb{N}^*$, let
\begin{equation*}
G^\ell:= \frac{X^{h_\ell}-X^{h_{\ell-1}}}{\sqrt{h_\ell}},
\end{equation*}
and
\begin{equation*}
F_{X^{h_{\ell-1}}|G^\ell=g}: x\mapsto\mathbb{P}(X^{h_{\ell-1}}\leq x|G^\ell=g),
\quad g\in\supp(\mathbb{P}_{G^\ell}).
\end{equation*}
Consider the sequence of random variables $(K^\ell)_{\ell\geq1}$ given by $K^\ell:= K^\ell(G^\ell)$, where
\begin{equation*}
K^\ell(g):=\sup_{x\neq y}\frac{|F_{X^{h_{\ell-1}}|G^\ell=g}(x)-F_{X^{h_{\ell-1}}|G^\ell=g}(y)|}{|x-y|},
\quad g\in\supp(\mathbb{P}_{G^\ell}),
\quad\ell\in\mathbb{N}^*.
\end{equation*}
Assume that the $(K^\ell)_{\ell\geq1}$ satisfy
\begin{equation}\label{assump:unif:lipschitz:integrability:conditional:cdf}
\sup_{\ell\geq1}\mathbb{E}[|G^\ell|K^\ell]<\infty.
\end{equation}
Then, for all $\xi\in\mathbb{R}$,
\begin{equation*}
\mathbb{E}[|\mathds1_{\{X^{h_\ell}>\xi\}}-\mathds1_{\{X^{h_{\ell-1}}>\xi\}}|]
\leq Ch_\ell^\frac12,
\end{equation*}
where
\begin{equation*}
C=\sup_{\ell\geq1}\mathbb{E}[|G^\ell|K^\ell].
\end{equation*}
\end{enumerate}
\end{lemma}

\begin{remark}
\label{rmk:framework}
Lemma~\ref{lmm:local:strong:error:indicator:func}'s frameworks are ordered from weakest to strongest.
\begin{enumerate}[(i)]
\item\label{rmk:framework:p}
The constant $B_{p_\star}$ in the postulate of Lemma~\ref{lmm:local:strong:error:indicator:func}(\ref{lmm:local:strong:error:indicator:func-i})\hyperref[lmm:local:strong:error:indicator:func-ia]{a} emanates from a Burkholder-Davis-Gundy inequality.
According to Revuz and Yor~\cite[Chapter~IV, Section~4]{RY99},
\begin{equation*}
B_{p_\star}
=\begin{cases}
\frac{4-p_\star}{2-p_\star}2^{p_\star}
&\text{if $p_\star<2$,}\\
(\frac{p_\star}{p_\star-1})^{\frac{p_\star^2}2}
&\text{if $p_\star\geq2$.}
\end{cases}
\end{equation*}

\item\label{rmk:framework:sub:gaussian}
According to Fathi and Frikha \cite{fathi:frikha} and Frikha and Menozzi \cite{FM12}, if
\begin{equation*}
\mathbb{E}\Big[\exp\Big(u_0\big(\varphi(Y,Z)-\mathbb{E}[\varphi(Y,Z)|Y]\big)^2\Big)\Big]<\infty
\end{equation*}
for some $u_0>0$, then there exists $\mathfrak{g}>0$ such that, for all $u\geq0$,
\begin{equation*}
\mathbb{E}\Big[\exp\Big(u\big(\varphi(Y,Z)-\mathbb{E}[\varphi(Y,Z)|Y]\big)\Big)\Big]\leq\e^{\mathfrak{g}u^2}.
\end{equation*}
This is analogous to our Gaussian concentration framework \eqref{assumption:conditional:gaussian:concentration}.

Note that the framework \eqref{assumption:conditional:gaussian:concentration} entails the framework \eqref{assumption:finite:Lp:moment} for all $p_\star>1$, which can be shown via an exponential series expansion.
\item (\cite[Remark~4.3]{CFL23})
The framework \eqref{assump:unif:lipschitz:integrability:conditional:cdf} relaxes Gordy and Juneja's \cite[Assumption~1]{GJ10}, and is  the most advantageous computationally (c.f.~Proposition~\ref{prp:mlsa:complexity}(\ref{prp:mlsa:complexity:iii})).
It holds for instance if $(g,\ell)\mapsto F_{X^{h_{\ell-1}}|G^\ell=g}$ is uniformly Lipschitz in $g$ and $\ell$.
\end{enumerate}
\end{remark}

\begin{assumption}[{\cite[Assumption~4.4]{CFL23}}]\label{asp:fh-f0}
There exist some constants $c,\delta_0>0$ such that, for all $h\in\mathcal{H}$ and for any compact set $\mathcal{K}\subset\mathbb{R}$,
\begin{equation*}
\sup_{\xi\in\mathcal{K}}{|f_{X^h}(\xi)-f_{X^0}(\xi)|}\leq ch^{\frac14+\delta_0}.
\end{equation*}
\end{assumption}

\begin{remark}[{\cite[Remark~4.5]{CFL23}}]
Assumption~\ref{asp:fh-f0} holds within the framework of \cite[Proposition~5.1(a)]{Giorgi2020}.
\end{remark}

\begin{theorem}[{\cite[Theorem~4.6]{CFL23}}]\label{thm:ml-variance-cv}
Suppose that $\varphi(Y,Z)\in L^2(\mathbb{P})$, that Assumptions~\ref{asp:misc} and~\ref{asp:fh-f0} hold, and that
\begin{equation*}
\sup_{h\in\overline{\mathcal{H}}}
\mathbb{E}\bigg[|\xi^h_0|^2
\exp\bigg(\frac{4k_\alpha}{1-\alpha}
\sup_{h'\in\mathcal{H}}\|f_{X^{h'}}\|_\infty
|\xi_0^h|\bigg)\bigg]
<\infty,
\end{equation*}
where $k_\alpha=1\vee\frac\alpha{1-\alpha}$.
If $\gamma_n=\gamma_1n^{-\beta}$, $n\in\mathbb{N}^*$, with $\gamma_1>0$ and $\beta\in(0,1]$, and if $\bar\lambda_2\gamma_1>2$ when $\beta=1$, then,
there exists some constant $C>0$ such that, for all $L\in\mathbb{N}^*$ and all $\mathbf{N}=(N_0,\dots,N_L)\in(\mathbb{N}^*)^{L+1}$,
\begin{equation}\label{L2:norm:ML:VaR}
\mathbb{E}[(\xi^\text{\tiny\rm ML}_\mathbf{N}-\xi_\star^{h_L})^2]
\leq C\bigg(\gamma_{N_0}+\bigg(\sum_{\ell=1}^L\gamma_{N_\ell}\bigg)^2+\sum_{\ell=1}^L\gamma_{N_\ell}^\frac32+\sum_{\ell=1}^L\gamma_{N_\ell}\epsilon(h_\ell)\bigg),
\end{equation}
where
\begin{equation}
\label{eq:eps(hl)}
\epsilon(h)
:=\begin{cases}
h^\frac{p_\star}{2(1+p_\star)}
&\text{if \eqref{assumption:finite:Lp:moment} holds,}\\
h^\frac12|\ln{h}|^\frac12
&\text{if \eqref{assumption:conditional:gaussian:concentration} holds,}\\
h^\frac12
&\text{if \eqref{assump:unif:lipschitz:integrability:conditional:cdf} holds,}
\end{cases}
\quad h\in\mathcal{H}.
\end{equation}
\end{theorem}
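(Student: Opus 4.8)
The plan is to run the standard multilevel variance decomposition, the genuine difficulty being the levelwise strong-error control in the presence of the Heaviside-type update $H$ of~\eqref{eq:H1}. Write $D_0\coloneqq\xi^{h_0}_{N_0}-\xi^{h_0}_\star$ and, for $\ell=1,\dots,L$, $D_\ell\coloneqq(\xi^{h_\ell}_{N_\ell}-\xi^{h_\ell}_\star)-(\xi^{h_{\ell-1}}_{N_\ell}-\xi^{h_{\ell-1}}_\star)$, so that the telescoping~\eqref{eq:xi:chi:hL} gives $\xi^\text{\tiny\rm ML}_\mathbf{N}-\xi^{h_L}_\star=\sum_{\ell=0}^L D_\ell$. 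Since the levels are simulated independently, the $D_\ell$ are independent, whence
\begin{equation*}
\mathbb{E}\big[\big(\xi^\text{\tiny\rm ML}_\mathbf{N}-\xi^{h_L}_\star\big)^2\big]=\sum_{\ell=0}^L\Var(D_\ell)+\Big(\sum_{\ell=0}^L\mathbb{E}[D_\ell]\Big)^2.
\end{equation*}
This reduces the task to a levelwise bias control of $\mathbb{E}[D_\ell]$ and a levelwise variance control of $\Var(D_\ell)\leq\mathbb{E}[D_\ell^2]$.

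For the squared-bias term I would use the first-order control $|\mathbb{E}[\xi^h_n-\xi^h_\star]|\leq C\gamma_n$ on each SA iterate (established by a dedicated analysis of the mean iterate, sharper than the $L^2$ bound of Lemma~\ref{lmm:error}(\ref{lmm:error:statistical}), which alone would only yield a $\sqrt{\gamma_n}$ order and hence the wrong rate). This gives $|\mathbb{E}[D_\ell]|\leq C\gamma_{N_\ell}$ for $\ell\geq1$ and $|\mathbb{E}[D_0]|\leq C\gamma_{N_0}$, so that, after $(a+b)^2\leq2a^2+2b^2$ and $\gamma_{N_0}^2\leq\gamma_{N_0}$, the squared sum is bounded by $C(\gamma_{N_0}+(\sum_{\ell\geq1}\gamma_{N_\ell})^2)$. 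The level-$0$ variance $\Var(D_0)\leq\mathbb{E}[D_0^2]\leq C\gamma_{N_0}$ follows directly from Lemma~\ref{lmm:error}(\ref{lmm:error:statistical}).

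The crux is the coupled variance bound $\mathbb{E}[D_\ell^2]\leq C(\gamma_{N_\ell}^{3/2}+\gamma_{N_\ell}\epsilon(h_\ell))$, $\ell\geq1$. Setting $\Delta_n\coloneqq(\xi^{h_\ell}_n-\xi^{h_\ell}_\star)-(\xi^{h_{\ell-1}}_n-\xi^{h_{\ell-1}}_\star)$ and $a_n\coloneqq\mathbb{E}[\Delta_n^2]$, I would subtract the two coupled recursions~\eqref{eq:sa:ml:alg:xi}, square and condition to get
\begin{equation*}
a_{n+1}\leq a_n-2\gamma_{n+1}\mathbb{E}\big[\Delta_n\big(V_{h_\ell}'(\xi^{h_\ell}_n)-V_{h_{\ell-1}}'(\xi^{h_{\ell-1}}_n)\big)\big]+\gamma_{n+1}^2\mathbb{E}\big[(\delta H_{n+1})^2\big],
\end{equation*}
with $\delta H_{n+1}$ the difference of the two evaluations of $H$. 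Two estimates drive the argument. First, since the difference of two indicators is $\{-1,0,1\}$-valued, $(\delta H_{n+1})^2=(1-\alpha)^{-2}\big|\mathds{1}_{X_{h_\ell}\geq\xi^{h_\ell}_n}-\mathds{1}_{X_{h_{\ell-1}}\geq\xi^{h_{\ell-1}}_n}\big|$; inserting the intermediate indicator $\mathds{1}_{X_{h_{\ell-1}}\geq\xi^{h_\ell}_n}$ splits this into a common-threshold piece, controlled by Lemma~\ref{lmm:local:strong:error:indicator:func} with $h'-h\sim h_\ell$ and hence of order $\epsilon(h_\ell)$ from~\eqref{eq:eps(hl)}, and a same-bias threshold-shift piece bounded by the uniform density bound times $|\xi^{h_\ell}_n-\xi^{h_{\ell-1}}_n|\leq|\Delta_n|+|\xi^{h_\ell}_\star-\xi^{h_{\ell-1}}_\star|$, the offset being $\OO(h_\ell)$ by Lemma~\ref{lmm:error}(\ref{lmm:error:weak}); this yields $\mathbb{E}[(\delta H_{n+1})^2]\leq C(\epsilon(h_\ell)+\sqrt{a_n})$. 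Second, for the drift I would decompose $V_{h_\ell}'(\xi^{h_\ell}_n)-V_{h_{\ell-1}}'(\xi^{h_{\ell-1}}_n)=\big(V_{h_{\ell-1}}'(\xi^{h_\ell}_n)-V_{h_{\ell-1}}'(\xi^{h_{\ell-1}}_n)\big)+\big(V_{h_\ell}'-V_{h_{\ell-1}}'\big)(\xi^{h_\ell}_n)$, apply the Lyapunov/strong-convexity constant $\bar\lambda$ of~\eqref{eq:lambda} to the first summand while absorbing the fixed-point offset $\OO(h_\ell)$, and bound the second summand by $(1-\alpha)^{-1}|F_{X_{h_\ell}}-F_{X_{h_{\ell-1}}}|(\xi^{h_\ell}_n)=\OO(h_\ell)$ via Assumption~\ref{asp:misc}(\ref{asp:misc:i}) (with Assumption~\ref{asp:fh-f0} supplying the companion density estimates), obtaining the dissipative lower bound $\mathbb{E}[\Delta_n(\cdots)]\geq\bar\lambda a_n-Ch_\ell\sqrt{a_n}$. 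Substituting both gives
\begin{equation*}
a_{n+1}\leq(1-2\bar\lambda\gamma_{n+1})a_n+C\gamma_{n+1}h_\ell\sqrt{a_n}+C\gamma_{n+1}^2\big(\epsilon(h_\ell)+\sqrt{a_n}\big).
\end{equation*}
Feeding the crude a priori bound $a_n\leq C\gamma_n$ (itself a consequence of Lemma~\ref{lmm:error}(\ref{lmm:error:statistical}) applied to both coupled iterates) into the $\sqrt{a_n}$ terms, and treating the cross term $\gamma_{n+1}h_\ell\sqrt{a_n}$ by Young's inequality, linearizes the recursion; unwinding it as in Lemma~\ref{lmm:error}(\ref{lmm:error:statistical}) extracts the $\gamma_{N_\ell}^{3/2}$ order from the $\gamma_{n+1}^2\sqrt{a_n}$ variance contribution and the $\gamma_{N_\ell}\epsilon(h_\ell)$ order from the coupling piece, the residual $h_\ell$-contributions being of lower order. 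Summing over $\ell$ and recombining with the first two steps delivers~\eqref{L2:norm:ML:VaR}.

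The main obstacle is precisely this coupled variance estimate. Unlike in plain MLMC, the fine and coarse SA iterates carry distinct, co-evolving thresholds $\xi^{h_\ell}_n\neq\xi^{h_{\ell-1}}_n$, so the indicator-coupling lemma — stated for a common threshold — cannot be applied directly and must be combined with a threshold-shift bound; moreover, genuine dissipativity of the difference recursion has to be recovered despite the mismatch between the drifts $V_{h_\ell}',V_{h_{\ell-1}}'$ and between their minimizers. The delicate point is to keep every error term at its correct order and to feed the self-referential $\sqrt{a_n}$ contributions back into the recursion without degrading the final $\gamma_{N_\ell}^{3/2}+\gamma_{N_\ell}\epsilon(h_\ell)$ rate.
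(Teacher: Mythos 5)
Your overall architecture—independence across levels, the split $\mathbb{E}[(\sum_\ell D_\ell)^2]=\sum_\ell\Var(D_\ell)+(\sum_\ell\mathbb{E}[D_\ell])^2$, and the decomposition of $(\delta H_{n+1})^2$ into a common-threshold piece controlled by Lemma~\ref{lmm:local:strong:error:indicator:func} plus a threshold-shift piece controlled by the uniform density bound—does match the strategy behind this statement (cited from CFL23, and reproduced in this paper for the adaptive analog, Theorem~\ref{thm:amlsa:L2}, in Appendix~\ref{apx:prf:main}). But there is a genuine gap at your dissipativity step. You claim $\mathbb{E}\big[\Delta_n\big(V_{h_\ell}'(\xi^{h_\ell}_n)-V_{h_{\ell-1}}'(\xi^{h_{\ell-1}}_n)\big)\big]\geq\bar\lambda a_n-Ch_\ell\sqrt{a_n}$ by ``applying the Lyapunov/strong-convexity constant $\bar\lambda$ of~\eqref{eq:lambda}''. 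That bound is not available: $V_h''=(1-\alpha)^{-1}f_{X_h}$ vanishes at infinity, so $V_h$ is not globally strongly convex, and Assumption~\ref{asp:misc}(\ref{asp:misc:iii}) only bounds $f_{X_h}$ below on compact balls around $\xi^0_\star$. The constant $\bar\lambda_q$ is defined through the exponential Lyapunov functions of Lemma~\ref{lmm:lyapunov}(\ref{lmm:lyapunov-ii}) (an inequality of the form $V_h'\,(\mathcal{L}^\mu_{h,q})'\geq\lambda^\mu_{h,q}\mathcal{L}^\mu_{h,q}$), not as a global convexity modulus, so a mean-value bound $V_{h_{\ell-1}}''(\zeta_n)\geq\bar\lambda$ along the unbounded iterates is unjustified. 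The paper's proof obtains dissipativity by linearizing with the \emph{constant} $V_0''(\xi^0_\star)$ and pushing the nonlinearity into the Taylor remainder of type~\eqref{eq:rhohn}, which is quadratic in $\xi_n-\xi_\star$ and is controlled in $L^2(\mathbb{P})$ via the $L^4(\mathbb{P})$ bound on the iterates: this is exactly where the fourth-moment exponential integrability assumption of the theorem enters, an assumption your proposal never invokes—its absence is the symptom of the gap.

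A second, quantitative problem: handling the cross term $\gamma_{n+1}h_\ell\sqrt{a_n}$ ``by Young's inequality'' injects a source proportional to $\gamma_{n+1}h_\ell^2$ into the recursion and leaves a floor $a_n\gtrsim h_\ell^2$ that does not vanish as $N_\ell\uparrow\infty$; since the theorem's right-hand side tends to $0$ for fixed $L$ as all $N_\ell\uparrow\infty$, your per-level claim $\mathbb{E}[D_\ell^2]\leq C\big(\gamma_{N_\ell}^{3/2}+\gamma_{N_\ell}\epsilon(h_\ell)\big)$ cannot be reached along this route. To kill the $O(h_\ell)$ offsets one must recenter each drift at its own minimizer (using $V_h'(\xi^h_\star)=0$), so that after extracting $V_0''(\xi^0_\star)$ the residual coefficients are only $O\big(h_\ell^{\frac14+\delta_0}\big)$ by Assumption~\ref{asp:fh-f0} (the $g$-term~\eqref{eq:ghn}) or quadratic (the $\rho$-term~\eqref{eq:rhohn}), and then sum across levels by Cauchy--Schwarz using $\sum_{\ell\geq1}h_\ell^{2\delta_0\wedge\frac34}<\infty$—this is Steps~3--4 of Appendix~\ref{apx:prf:main}. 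With those two repairs, your recursion for $a_n$ becomes essentially the paper's explicit iteration with the products $\Pi_{k:n}$, and the remaining ingredients of your sketch—the a priori bound $a_n\leq C\gamma_n$ from Lemma~\ref{lmm:error}(\ref{lmm:error:statistical}), the bound $\mathbb{E}[(\delta H_{n+1})^2]\leq C(\epsilon(h_\ell)+\sqrt{a_n})$, and the $O(\gamma_n)$ bias of the iterates feeding the squared-bias term—are sound and coincide with the paper's.
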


\subsubsection{Complexity Analysis}
The global error of MLSA can be decomposed into statistical and bias errors:
\begin{equation}\label{mlsa:error:global}
\xi^\text{\tiny ML}_\mathbf{N}-\xi^0_\star=(\xi^\text{\tiny ML}_\mathbf{N}-\xi^{h_L}_\star)+(\xi^{h_L}_\star-\xi^0_\star).
\end{equation}

\begin{proposition}[{\cite[Proposition~4.7, Lemma~4.8 \& Theorem~4.9]{CFL23}}]\label{prp:mlsa:complexity}\
\begin{enumerate}[(i)]
    \item \label{prp:mlsa:complexity:levels}
Suppose that Assumption~\ref{asp:misc}(\ref{asp:misc:i}) is satisfied.
Let $\varepsilon>0$ be a prescribed accuracy and $C>0$ the constant defined in \eqref{eq:oO}.
If $h_0>C^{-1}\varepsilon$, then, setting
\begin{equation}
\label{selection:number:level:MLSA}
L=\bigg\lceil\frac{\ln{(Ch_0\varepsilon^{-1})}}{\ln{M}}\bigg\rceil\geq1
\end{equation}
achieves a bias error for MLSA of order $\varepsilon$.
    \item
The computational cost of MLSA satisfies
\begin{equation*}
\Cost_\text{\tiny\rm MLSA}^\beta
\leq\bar{C}\sum_{\ell=0}^L\frac{N_\ell}{h_\ell},
\end{equation*}
for some constant $\bar{C}>0$.

    \item\label{prp:mlsa:complexity:iii}
Let $\varepsilon>0$ be a prescribed accuracy.
Within the framework of Theorem \ref{thm:ml-variance-cv}, there exists a constant $C'>0$ such that, setting
\begin{equation*}
N_\ell=\bigg\lceil(C'\gamma_1)^\frac1\beta\varepsilon^{-\frac2\beta}\bigg(\sum_{\ell'=0}^Lh_{\ell'}^{-\frac\beta{1+\beta}}\epsilon(h_{\ell'})^\frac1{1+\beta}\bigg)^\frac1\beta h_\ell^\frac1{1+\beta}\epsilon(h_\ell)^\frac1{1+\beta}\bigg\rceil,
\quad\ell\in[\![0,L]\!],
\end{equation*}
i.e.
\begin{equation*}
    N_\ell=
\begin{cases}
   \lceil(C'\gamma_1)^\frac1\beta\varepsilon^{-\frac2\beta}h_\ell^{\frac1{1+\beta}(1+\frac{p_\star}{2(1+p_\star)})}
   (\sum_{\ell'=0}^Lh_{\ell'}^{-{\frac1{1+\beta}(\beta-\frac{p_\star}{2(1+p_\star)})}})^\frac1\beta\rceil\\
      \qquad\qquad\qquad\qquad
      \qquad\qquad\qquad\qquad
      \qquad\qquad\qquad\qquad
      \quad
      \text{if \eqref{assumption:finite:Lp:moment} holds,}\\
   \lceil(C'\gamma_1)^\frac1\beta \varepsilon^{-\frac2\beta}h_\ell^\frac3{2(1+\beta)}|\ln{h_\ell}|^\frac1{2(1+\beta)}
   (\sum_{\ell'=0}^Lh_{\ell'}^{-\frac{2\beta-1}{2(1+\beta)}}|\ln{h_{\ell'}}|^\frac1{2(1+\beta)})^\frac1\beta\rceil\\
      \qquad\qquad\qquad\qquad
      \qquad\qquad\qquad\qquad
      \qquad\qquad\qquad\qquad
      \quad
      \text{if \eqref{assumption:conditional:gaussian:concentration} holds,}\\
   \lceil(C'\gamma_1)^\frac1\beta\varepsilon^{-\frac2\beta}h_\ell^\frac3{2(1+\beta)}
   (\sum_{\ell'=0}^Lh_{\ell'}^{-\frac{2\beta-1}{2(1+\beta)}})^\frac1\beta\rceil\\
      \qquad\qquad\qquad\qquad
      \qquad\qquad\qquad\qquad
      \qquad\qquad\qquad\qquad
      \quad
      \text{if \eqref{assump:unif:lipschitz:integrability:conditional:cdf} holds,}
\end{cases}
\end{equation*}
achieves a statistical error on the estimation of $\xi^0_\star$ of order $\varepsilon$.
The optimal computational cost is attained when $\beta=1$ under the constraint $\bar\lambda_2\gamma_1>2$, in which case
\begin{equation*}
\Cost_\text{\tiny\rm MLSA}^1
\leq C''
\begin{cases}
   \varepsilon^{-3 +\frac{p_\star}{2(1+p_\star)}}
      &\text{if \eqref{assumption:finite:Lp:moment} holds,}\\
   \varepsilon^{-\frac52}|\ln{\varepsilon}|^\frac12
      &\text{if \eqref{assumption:conditional:gaussian:concentration} holds,}\\
   \varepsilon^{-\frac52}
      &\text{if \eqref{assump:unif:lipschitz:integrability:conditional:cdf} holds,}
\end{cases}
\end{equation*}
for some constant $C''>0$.
\end{enumerate}
\end{proposition}

\begin{remark}
The advantageous framework described by \eqref{assump:unif:lipschitz:integrability:conditional:cdf} results in its best computational cost in $\OO(\varepsilon^{-\frac52})$, which remains suboptimal compared to the canonical optimum of $\OO(\varepsilon^{-2})$ that is achievable by multilevel techniques \cite{Gil08,Fri16}.
\end{remark}

The next section devises a strategy to adaptively refine the inner Monte Carlo samples of NSA and MLSA.

\section{Adaptive Refinement Strategy}
\label{sec:amlsa}

The suboptimality of MLSA is linked to the discontinuity of the gradient $H$ \eqref{eq:H1} intervening in the VaR recursion \eqref{eq:sa:ml:alg:xi}.
Indeed, for $n\in\mathbb{N}^*$, if the simulated loss 
$$
X^h_n=\frac1K\sum_{k=1}^K\varphi(Y_n,Z^k_n)
$$ 
is too close to the estimate $\xi^h_{n-1}$ but falls on its opposite side with respect to its sampling target 
\begin{equation}\label{def:X0:n}
X^0_n:=\mathbb{E}[\varphi(y,Z)]_{|y=Y_n},
\end{equation}
an $\OO(1)$ error is registered due to the discontinuity of $H(\xi^h_{n-1},\cdot)$ at $\xi^h_{n-1}$, thus lowering the overall performance of the multilevel algorithm.
To address this issue, we investigate the incorporation of an adaptive refinement layer into MLSA.
\\

The following steps elucidate the intuition behind our adaptive refinement strategy.
For simplicity, we consider a nested simulation $X^{h_\ell}$, targeting an unbiased simulation $X^0$, that we want to refine adaptively given a current iterate $\xi$ at the iteration number $n$.
Assuming that $Y$ and $Z^1,\dots,Z^{KM^\ell}\stackrel{\text{\tiny\rm i.i.d.}}{\sim}Z$ were used to simulate $X^{h_\ell}$ according to \eqref{eq:Xh}, refining the latter to $X^{h_{\ell+1}}$ consists in simulating additional samples $Z^{KM^\ell+1},\dots,Z^{KM^{\ell+1}}\stackrel{\text{\tiny\rm i.i.d.}}{\sim}Z$ independently from $Y$ and $Z^1,\dots,Z^{KM^\ell}$ and setting
\begin{equation}
\label{refinement}
X^{h_{\ell+1}}=\frac1MX^{h_\ell}+\frac1{KM^{\ell+1}}\sum_{k=KM^{\ell}+1}^{KM^{\ell+1}}\varphi(Y,Z^k).
\end{equation}

\noindent
\emph{Step~1. A confidence based heuristic.}
\newline
We loosely adapt the reasonings employed in \cite[Section~2.3.1]{GH19} and \cite[Section~3]{HST22} to derive a preliminary refinement strategy. Roughly speaking, considering a refinement amount $\eta\in\mathbb{N}$, we want to ensure that $H(\xi,X^{h_{\ell+\eta}})=H(\xi,X^0)$ with high probability by aligning $X^{h_{\ell+\eta}}$ with $X^0$ on the same side of the discontinuity of $H(\xi,\cdot)$ at $\xi$.
Via a conditional CLT,
\begin{equation}
\label{eq:CLT}
X^{h_{\ell+\eta}}\approx\mathcal{N}\Big(X^0,h_{\ell+\eta}\Var\big(\varphi(Y,Z)\big|Y\big)\Big).
\end{equation}
To achieve the desired alignment, we consider a critical value $C_\mathrm{ad}$ corresponding to a confidence level $p\in(0,1)$, and choose $\eta$ minimal such that $|X^{h_{\ell+\eta}}-X^0|\leq|X^0-\xi|$ with confidence $p$, which can be expressed as
\begin{equation}\label{strategy:basic}
C_\mathrm{ad}\sqrt{h_{\ell+\eta}\Var\big(\varphi(Y,Z)\big|Y\big)}
\leq|X^0-\xi|,
\quad\text{or}\quad
|X^0-\xi|
\geq C_\mathrm{ad}h_{\ell+\eta}^\frac12,
\end{equation}
up to a modification of $C_\mathrm{ad}$ conditionally on $Y$.
However, this approach is impractical as $X^0$ is inaccessible.

We thus change our perspective and require instead that $|X^{h_{\ell+\eta}} - X^0| \leq |X^{h_{\ell+\eta}} - \xi|$ with confidence $p$, to ensure that $X^0$ and $X^{h_{\ell+\eta}}$ are on the same side relative to $\xi$. We then select $\eta$ minimal such that
\begin{equation}\label{simple:strategy}
C_\mathrm{ad}\sqrt{h_{\ell+\eta}\Var\big(\varphi(Y,Z)\big|Y\big)}
\leq|X^{h_{\ell+\eta}}-\xi|,
\quad\text{i.e.}\quad
|X^{h_{\ell+\eta}}-\xi|
\geq C_\mathrm{ad}h_{\ell+\eta}^\frac12,
\end{equation}
allowing for a modification of $C_\mathrm{ad}$ conditional on $Y$. 
In effect, we augment the number of inner simulations of $X^{h_\ell}$ by $\eta$ refinements until the threshold $C_\mathrm{ad}h_{\ell+\eta}^\frac12$ around $\xi$ is crossed.
\cite{GH19} views this process as estimating $X^0$ by $X^{h_{\ell+\eta}}$ in the original strategy \eqref{strategy:basic}, while \cite{GHS23} interprets the evaluation of the criterion \eqref{simple:strategy} as performing a Student t-test on the null hypothesis ``$X^{h_{\ell+\eta}}=\xi$''.
\\

For more flexibility, we introduce two parameters $r>1$ and $0<\theta\leq1$ that respectively control the strictness and budgeting of the refinement strategy.
\\

\noindent
\emph{Step~2. Refinement strictness.}
\newline
We redefine the refinement strategy as choosing $\eta$ minimal such that
\begin{equation}\label{eq:naive:strategy}
|X^{h_{\ell+\eta}}-\xi|
\geq C_\mathrm{ad}h_{\ell+\eta}^\frac1r.
\end{equation}
The parameter $r$ allows us to adjust the strategy's propensity to refine. Larger $r$ values impose more strictness. Setting $r=2$ retrieves the previous baseline strategy.
\\

\noindent
\emph{Step~3. Refinement budgeting.}
\newline
The strategy outlined above may be risky, as the refinement amount $\eta$ required to satisfy \eqref{eq:naive:strategy} could be excessively large, hence increasing the subsequent complexity. To address this, we impose an upper limit on $\eta$ at $\lceil \theta \ell \rceil$:
\begin{equation}\label{eq:good:strategy}
\eta(\xi)=\lceil\theta\ell\rceil\wedge\min\{k:|X^{h_{\ell+k}}-\xi|\geq C_\mathrm{ad}h_{\ell+k}^\frac1r\}.
\end{equation}
Note importantly the dependency of $\eta$ on $\xi$. For this strategy to be computationally efficient, the entailed number of inner simulations should, on average, be comparable to the number of inner simulations absent the strategy. To ensure this, we relax the constant $C_\mathrm{ad}$ to a normalization factor $c(h_\ell)$ that is calibrated such that $\mathbb{E}[h_{\ell+\eta(\xi)}^{-1}] = \OO(h_{\ell}^{-1})$.
We have
\begin{equation*}
\mathbb{E}[h_{\ell+\eta(\xi)}^{-1}]
=\sum_{k=0}^{\lceil\theta\ell\rceil}h_{\ell+k}^{-1}\mathbb{P}\big(\eta(\xi)=k\big)
\leq h_\ell^{-1}+\sum_{k=1}^{\lceil\theta\ell\rceil}h_{\ell+k}^{-1}\mathbb{P}\big(\eta(\xi)=k\big).
\end{equation*}
From \eqref{eq:good:strategy} and Assumption~\ref{asp:misc}(\ref{asp:misc:iv}),
\begin{equation*}
\mathbb{P}\big(\eta(\xi)=k\big)
\leq\mathbb{P}\big(|X^{h_{\ell+k-1}}-\xi|<c(h_\ell)h_{\ell+k-1}^\frac1r\big)
\leq2M^\frac1r\sup_{h\in\bar{H}}\|f_{X^h}\|_\infty c(h_\ell)h_{\ell+k}^\frac1r.
\end{equation*}
Extending the definition of $s \mapsto h_s$ to $\mathbb{R}$ with $h_s := \frac{h_0}{M^s}$, we obtain, for some constant $C>0$,
\begin{equation*}
\mathbb{E}[h_{\ell+\eta(\xi)}^{-1}]
\leq C\bigg(h_\ell^{-1}+c(h_\ell)\sum_{k=1}^{\lceil\theta\ell\rceil}h_{\ell+k}^\frac1r\bigg)
\leq C\big(h_\ell^{-1}+c(h_\ell)h_{(1+\theta)\ell(\frac1r-1)}\big).
\end{equation*}
To balance the terms on the right-hand side, we normalize with 
\begin{equation*}
c(h_\ell)=h_{\theta\ell(r-1)-\ell}^\frac1r,
\end{equation*}
yielding the refinement strategy
\begin{equation*}
\eta(\xi)=\lceil\theta\ell\rceil\wedge\min\{k:|X^{h_{\ell+k}}-\xi|\geq C_\mathrm{ad}h_{\theta\ell(r-1)+k}^\frac1r\}.
\end{equation*}
This approach is essentially the same as the one delineated in \cite{HST22}. The amount of inner simulations is increased until the varying threshold $C_\mathrm{ad} h_{\theta \ell (r - 1) + k}^{\frac{1}{r}}$ around $\xi$ is crossed by the sample $X^{h_{\ell+k}}$.
\\

\noindent
\emph{Step~4. Refinement saturation.}
\newline
To illustrate the strategy thus far, let's test it along a single dynamic of \eqref{eq:sa:nested:alg:xi} with bias parameter $h_\ell$:
\begin{equation*}
\xi^{h_\ell}_{n+1}=\xi^{h_\ell}_n-\gamma_{n+1}H(\xi^{h_\ell}_n,X^{h_{\ell+\eta(\xi^{h_\ell}_n)}}_{n+1}).
\end{equation*}
Unlike the MC setting \cite{HST22} where the iterate $\xi$ remains constant across the recursions, the iterates $(\xi^{h_\ell}_n)_{n\geq1}$ above change from one step to the next. This causes the samples $(X^{h_\ell}_n)_{n\geq1}$ to refine to $(X^{h_{\ell+\eta(\xi^{h_\ell}_{n-1})}}_n)_{n\geq1}$, de facto solving the root finding program $\xi:\mathbb{E}[H(\xi,X^{h_{\ell+\eta(\xi)}})]=0$, or equivalently, $\xi:\mathbb{E}[\mathds{1}_{\{X^{h_{\ell+\eta(\xi)}}<\xi\}}]=\alpha$.
Since $\eta(\xi)$ evolves with $\xi$, $\xi\mapsto\mathbb{E}[\mathds{1}_{\{X^{h_{\ell+\eta(\xi)}}<\xi\}}]$ is no longer given by a cdf, as it may not be monotone and could have several roots.

To address this issue, we saturate the refinement amount $\eta$ for large iteration numbers $n$ (corresponding to terminal SA phases) to the cap $\lceil\theta\ell\rceil$, turning it into the convex program $\min_\xi V_{h_{\ell+\lceil\theta\ell\rceil}}(\xi)$ for $n$ large. This is achieved by incorporating an increasing dependency on $n$ into the strategy's threshold.
Such dependency was calibrated from the ensuing error controls provided in Proposition~\ref{prp:error:statistical:bis} and Theorem~\ref{thm:amlsa:L2}.
We refer to Figure~\ref{fig:adaptivity} for a visual representation of the adaptive refinement procedure, to \eqref{eq:eta} for our definitive strategy and to Remark~\ref{rmk:n} for additional related comments.

\begin{figure}[H]
\includegraphics[width=\textwidth]{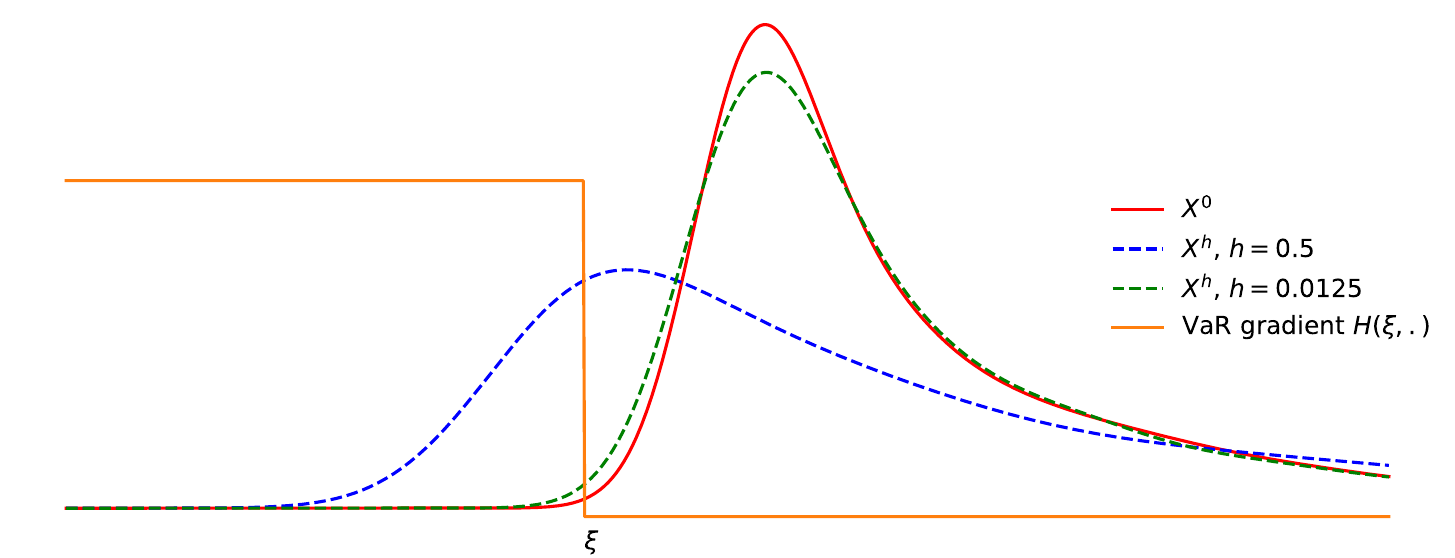}
\caption{An illustration of the adaptive refinement procedure.
The sample $X^h$ is refined by lowering its bias parameter $h$ from $\frac12$ to $\frac18$, to ensure that $H(\xi,X^h)$ is equal to $H(\xi,X^0)$ with great probability.
The plotted pdf of $X^0$ and $X^h$, $h\in\{\frac12,\frac18\}$, are based on the case study of Section~\ref{ssec:euption}.
The gradient $x\mapsto H(\xi,\cdot)$ is not represented at scale.}
\label{fig:adaptivity}
\end{figure}

The following lemma revisits the framework \eqref{assump:unif:lipschitz:integrability:conditional:cdf} to grant us a flexible basis for our adaptive refinement strategy.
Its proof is postponed to Appendix~\ref{proofs}.

\begin{lemma}\label{lmm:local:strong:error:indicator:func-ii:bis}
For $0\leq h<h'\in\overline{\mathcal{H}}$, define
\begin{equation}\label{eq:G}
G_h^{h'}:=\frac{X^{h'}-X^h}{\sqrt{h'}},
\end{equation}
and
\begin{equation*}
F_{X^h|G_h^{h'}=g}: x\mapsto\mathbb{P}(X^h\leq x|G_h^{h'}=g),
\quad g\in\supp(\mathbb{P}_{G_h^{h'}}).
\end{equation*}
Assume that the random variables $(K_h^{h'})_{0\leq h<h'\in\overline{\mathcal{H}}}$, defined by $K_h^{h'}:= K_h^{h'}(G_h^{h'})$, where
\begin{equation*}
K_h^{h'}(g):=\sup_{x\neq y}\frac{|F_{X^h|G_h^{h'}=g}(x)-F_{X^h,G_h^{h'}=g}(y)|}{|x-y|},
\quad g\in\supp(\mathbb{P}_{G_h^{h'}}),
\quad 0\leq h<h'\in\overline{\mathcal{H}},
\end{equation*}
satisfy
\begin{equation}\label{assump:unif:lipschitz:integrability:conditional:cdf:bis}
\sup_{0\leq h<h'\in\overline{\mathcal{H}}}\mathbb{E}[K_h^{h'}|G_h^{h'}|]<\infty.
\end{equation}
Then, for all $0\leq h<h'\in\overline{\mathcal{H}}$,
\begin{equation*}
\mathbb{E}[|\mathds1_{\{X^{h'}>\xi\}}-\mathds1_{\{X^h>\xi\}}|]\leq C(h')^\frac12,
\end{equation*}
where
\begin{equation*}
C=\sup_{0\leq h_1<h_2\in\overline{\mathcal{H}}}{\mathbb{E}[K_{h_1}^{h_2}|G_{h_1}^{h_2}|]}.
\end{equation*}
\end{lemma}

\begin{remark}
The framework \eqref{assump:unif:lipschitz:integrability:conditional:cdf} can be viewed as a special case of \eqref{assump:unif:lipschitz:integrability:conditional:cdf:bis} for consecutive bias pairs on the geometric scale $\mathcal{H}_0=\{h_\ell,\ell\in\mathbb{N}\}\subset\mathcal{H}$.
In standard MLSA, the fine and coarse approximations at each level are controlled by consecutive bias parameters in $\mathcal{H}_0$. If one is to refine either approximation separately, the refined fine and coarse approximations are no longer controlled by consecutive bias parameters in $\mathcal{H}_0$, hence the need for the generalized framework above.
\end{remark}

Following the roadmap of \cite{GH19,HST22,GHS23}, we define
\begin{equation}
\label{eq:eta}
\eta^\ell_n(\xi):=\lceil\theta\ell\rceil\wedge\min{\{k\in\intl0,\lceil\theta\ell\rceil\intr:|X^{h_{\ell+k}}-\xi|\geq C_\mathrm{ad}\psi^{\ell,k}_n\}},
\quad\ell,  n\in\mathbb{N}^*,
\end{equation}
with the convention $\min\varnothing=+\infty$,
where $C_\mathrm{ad}>0$, $r>1$, $0<\theta\leq1$,
\begin{equation}
\psi^{\ell,k}_n
:=
\begin{cases}
    u_n^{-\frac1{p_\star}}
    h_{\theta\ell(r-1)+k}^\frac1r
    &\text{if \eqref{assumption:finite:Lp:moment} holds,}\\
    \ln^\frac12{((\gamma_n/\gamma_1)^{-\frac12}h^{-\frac{1+\theta}2}_{\ell+k})}
    h_{\theta\ell(r-1)+k}^\frac1r
    &\text{if \eqref{assumption:conditional:gaussian:concentration} or \eqref{assump:unif:lipschitz:integrability:conditional:cdf:bis} holds,}
\end{cases}
\label{eq:psi}
\end{equation}
and
\begin{equation}\label{eq:kappa}
u_n=\gamma_1n^{-\delta},
\quad\text{with}\quad\delta\in(0,1].
\end{equation}
Note the property 
\begin{equation}
\label{eq:sum:psi}
\begin{aligned}
\sum_{k=0}^{\lceil\theta\ell\rceil}\psi^{\ell,k}_n
&\leq
\begin{cases}
(1-M^{-\frac1r})^{-1}\psi^{\ell,0}_n
&\text{if \eqref{assumption:finite:Lp:moment} holds,}\\
(1-M^{-\frac1r})^{-1}\sqrt{2+\theta}\psi^{\ell,0}_n
&\text{if \eqref{assumption:conditional:gaussian:concentration} or \eqref{assump:unif:lipschitz:integrability:conditional:cdf:bis} holds,}
\end{cases}\\
&\leq\frac{\sqrt3}{1-M^{-\frac12}}\psi^{\ell,0}_n,
\end{aligned}
\end{equation}
where we used in the last inequality the constraints $\theta\in(0,1]$ and $r\in(1,2]$, which stem from the frameworks of Lemmas~\ref{lmm:bias} and~\ref{lmm:local:strong:error:indicator:func:variance}.

We refer to $C_\mathrm{ad}$ as the confidence constant \cite{GH19,HST22}, $r$ as the refinement strictness parameter \cite{HST22,GHS23}, $\theta$ as the budgeting parameter, the $n$-dependent factor in \eqref{eq:psi} as the saturation factor, and the quantity $\frac{|X^{h_{\ell+k}}-\xi|}{\psi^{\ell,k}_n}$, that our adaptive strategy seeks to greedily make as large as possible, as the error margin \cite{BDM11}.

\begin{algorithm}[H]
\caption{Adaptive refinement strategy}
\label{alg:refine}
\begin{algorithmic}[1]
\Require {A sample $X^{h_\ell}$, an iterate $\xi$, refinement parameters $C_\mathrm{ad}>0$, $r>1$ and $0<\theta\leq1$, a level $\ell\in\mathbb{N}^*$, an iteration number $n \in\mathbb{N}^*$}
\State {$\eta\gets0$}
\While {$\eta<\lceil\theta\ell\rceil$ \textbf{and} $|X^{h_{\ell+\eta}}-\xi|<C_\mathrm{ad}\psi^{\ell,\eta}_n$}
    \State {Refine $X^{h_{\ell+\eta}}$ to $X^{h_{\ell+\eta+1}}$}
    \State {$\eta\gets\eta+1$}
\EndWhile
\State \Return {$X^{h_{\ell+\eta}}$}
\end{algorithmic}
\end{algorithm}

We refer to \cite[Algorithm~3.1]{HST22} for an analogous refinement strategy in an MLMC setting, that is independent of the iteration number $n$.

\begin{remark}\label{rmk:refine}
\begin{enumerate}[(i)]
\item
At each level $\ell$ and iteration number $n$, the adaptive strategy refines the simulation $X^{h_\ell}_n$ used in \eqref{eq:sa:ml:alg:xi} to $X^{h_{\ell+\eta}}_n$, with the aim of escaping the region $[\xi\pm C_\mathrm{ad}\psi^{\ell,\eta}_n]$ around the discontinuity of $x\mapsto H(\xi,x)$ at $x=\xi$.

\item
\label{rmk:level:0}
No refinement is applied at the level $\ell=0$, as $\eta^0_n=0$ for all $n\in\mathbb{N}^*$.

\item
\label{rmk:variation}
The thresholds $C_\mathrm{ad}\psi^{\ell,k}_n$ depend on the hyperparameters $r>1$ and $0<\theta\leq1$.
It decreases for $r$ or $\theta$ large.
It also depends on the level $\ell$ and the iteration number $n$.
It decreases for $\ell$ large and increases for $n$ large.

\item
Recalling the definition \eqref{def:X0:n}, by virtue of the almost sure convergence of $X^{h_\ell}_n$ to $X^0_n$ as $\ell\uparrow\infty$ (by the conditional law of large numbers, under suitable assumptions), the refined simulation $X^{h_{\ell+\eta^\ell_n(\xi)}}_n$ is in theory closer to the actual simulation target $X^0_n$ than $X^{h_\ell}_n$.
It is therefore expected that $H(\xi,X^{h_{\ell+\eta^\ell_n(\xi)}}_n)=H(\xi,X^0_n)$ with high probability.
See Remark~\ref{rmk:n} for a complementary comment.

\item\label{rmk:C}

In our derivation of the adaptive refinement strategy, the confidence constant $C_\mathrm{ad}$ took in fact the form of $C_p\sqrt{\Var(\varphi(Y,Z)|Y)}$, where $C_p$ is a critical value of the law $\mathcal{N}(0,1)$ at some confidence level $p$ and $\sqrt{\Var(\varphi(Y,Z)|Y)}$ is the sampling standard deviation.
For a $99\%$-confidence level, a natural choice for $C_\mathrm{ad}$ is three standard deviations, i.e.~$C_p=3$. The standard deviation can be estimated empirically.
However, the ensuing convergence and complexity analyses show that our choice of $C_\mathrm{ad}$ constant for the adaptive strategy \eqref{eq:eta} accomplishes the desired performance gain. In practice, $C_\mathrm{ad}$ should be fine-tuned to avoid over- or under-adapting.

\end{enumerate}
\end{remark}

\section{Adaptive Nested Stochastic Approximation Algorithm}
\label{sec:adansa}

Before delving into the adaptive refinement of MLSA, let us first investigate the influence of our refinement strategy on NSA, which we recall can be considered an instance of $0$ level MLSA.

Consider a bias parameter $h_\ell$, $\ell\in\mathbb{N}^*$, on the geometric scale $\mathcal{H}_0=\{h_{\ell'},\ell'\in\mathbb{N}\}$, which allows resorting to single-level refinements \eqref{refinement}.
The adaptively refined nested SA algorithm for estimating the VaR can be written as
\begin{equation}
\label{adNSA}
\widetilde\xi^{h_\ell}_{n+1} = \widetilde\xi^{h_\ell}_{n} - \gamma_{n+1} H(\widetilde\xi_n^{h_\ell}, \widetilde{X}^{h_{\ell}}_{n+1}),
\quad n\in\mathbb{N},
\end{equation}
where
\begin{equation}\label{eq:eta:l:n}
\widetilde{X}^{h_{\ell}}_{n+1} := X^{h_{\ell+\widetilde\eta^\ell_{n+1}}}_{n+1},
\quad
\widetilde\eta^\ell_{n+1}:=\eta^\ell_{n+1}(\widetilde\xi^{h_\ell}_n),
\quad n\in\mathbb{N},
\end{equation}
and $\widetilde\xi^{h_\ell}_0$ is a random real-valued initialization that is independent of the samples $(\widetilde{X}^{h_{\ell}}_n)_{n\geq1}$.

\begin{algorithm}[H]
\caption{Adaptive Nested SA for estimating the VaR}
\label{alg:nested-sa}
\begin{algorithmic}[1]
\Require {$K,M,\ell,N\in\mathbb{N}^*$ with $M\geq2$, a positive non-increasing sequence
$(\gamma_n)_{n\geq1}$ such that $\sum_{n=1}^\infty\gamma_n=\infty$ and $\lim_{n\to\infty}\gamma_n=0$, refinement parameters $C_\mathrm{ad}>0$, $r>1$ and $0<\theta\leq1$}
\State {Sample $\widetilde\xi^{h_\ell}_0$ randomly}
\For {$n=0\mathrel{.\,.}N-1$}
   \State {Simulate $Y_{n+1}\sim Y$ and $Z^1_{n+1},\dots,Z^{KM^\ell}_{n+1}\stackrel[]{\text{\rm\tiny i.i.d.}}{\sim}Z$ independently of $Y_{n+1}$}
   \State {$X^{h_\ell}_{n+1}\gets\frac1{KM^\ell}\sum_{k=1}^{KM^\ell}\varphi(Y_{n+1},Z^k_{n+1})$}
   \State{$\widetilde{X}^{h_\ell}_{n+1}\gets\Refine_{C_\mathrm{ad},r,\theta,\ell,n}(X^{h_\ell}_{n+1},\widetilde\xi^{h_\ell}_n)$}
   \State {$\widetilde\xi^{h_\ell}_{n+1}\gets\widetilde\xi^{h_\ell}_n-\gamma_{n+1}H(\widetilde\xi^{h_\ell}_n,\widetilde{X}^{h_\ell}_{n+1})$}
\EndFor
\State \Return {$\widetilde\xi^{h_\ell}_N$}
\end{algorithmic}
\end{algorithm}

\begin{remark}
\label{rmk:n}
Unlike in \cite{GH19,HST22,GHS23}, the adaptive refinement \eqref{eq:eta} depends on the iteration number $n$.
As previously discussed, using an adaptive refinement independent of $n$ can be viewed as seeking a root of $\xi\mapsto\mathbb{E}[H(\xi,X^{h_{\ell+\eta^\ell(\xi)}})]$, which is not guaranteed to retrieve a problem like \eqref{eq:sa:nested:opt}, as it could have multiple roots. The introduced dependency on $n$ saturates the refinement amount to $\lceil\theta\ell\rceil$ for large $n$, thus aligning $X^{h_{\ell+\eta^\ell_n(\xi)}}_n$ with $X^{h_{\ell+\lceil\theta\ell\rceil}}_n$ with high probability and practically solving the strictly convex nested SA problem $\min_\xi V_{h_{\ell+\lceil\theta\ell\rceil}}(\xi)$ with bias parameter $h_{\ell+\lceil\theta\ell\rceil}$.

This behavior can be verified by calculation. Assuming that $\varphi(Y,Z)\in L^1(\mathbb{P})$, by Markov's inequality,
\begin{align}
\mathbb{P}\big(\eta^\ell_n(\xi)<\lceil\theta\ell\rceil\big)
&=\mathbb{P}(\exists k\in[\![0,\lceil\theta\ell\rceil-1]\!],
|X^{h_{\ell+k}}-\xi|\geq C_\mathrm{ad}\psi^{\ell,k}_n)\notag\\
&\leq\sum_{k=0}^{\lceil\theta\ell\rceil-1}\mathbb{P}(|X^{h_{\ell+k}}-\xi|\geq C_\mathrm{ad}\psi^{\ell,k}_n)\notag\\
&\leq\frac1{C_\mathrm{ad}\psi^{\ell,k}_n}\sum_{k=0}^{\lceil\theta\ell\rceil-1}\mathbb{E}[|X^{h_{\ell+k}}-\xi|]
\to0\quad\text{as}\quad n\uparrow\infty.
\label{eq:P(eta<theta*l)}
\end{align}
Observe that
\begin{equation*}
\mathbb{P}\big(\exists n_0\in\mathbb{N},\forall n\geq n_0,\eta^\ell_n(\xi)=\lceil\theta\ell\rceil\big)
=1-\mathbb{P}\big(\limsup_{n\uparrow\infty}{\{\eta^\ell_n(\xi)<\lceil\theta\ell\rceil\}}\big).
\end{equation*}
Given the increasing monotonicity of $n\mapsto\psi^{\ell,k}_n$ following Remark~\ref{rmk:refine}(\ref{rmk:variation}), the events $(\{\eta^\ell_n(\xi)<\lceil\theta\ell\rceil\})_{n\geq0}$ are decreasing, so that by Fatou's lemma and \eqref{eq:P(eta<theta*l)},
\begin{equation*}
\mathbb{P}\big(\limsup_{n\uparrow\infty}{\{\eta^\ell_n(\xi)<\lceil\theta\ell\rceil\}}\big)
\leq\liminf_{n\uparrow\infty}\mathbb{P}\big(\eta^\ell_n(\xi)<\lceil\theta\ell\rceil\big)
=0.
\end{equation*}
Finally,
\begin{equation*}
\mathbb{P}\big(
    \exists n_0\in\mathbb{N},
    \forall n\geq n_0,
    \eta^\ell_n(\xi)
    =\lceil\theta\ell\rceil
\big)=1.
\end{equation*}
In words, since $(\eta^\ell_n(\xi))_{n\geq1}$ are discrete random variables taking values in $[\![0,\lceil\theta\ell\rceil]\!]$, considering \eqref{eq:P(eta<theta*l)}, there exists a $\Pas$-finite random time $n_0$ that depends on $\ell$ and $\xi$, such that for $n\in\mathbb{N}$, the equality $\eta^\ell_n(\xi)=\lceil\theta\ell\rceil$ holds on the event $\{n\geq n_0\}$.
Hence the desired saturation effect for $n$ large.

Of course, the iterate $\xi$ depends on the iteration number $n$. The boundedness of the iterates in some $L^p(\mathbb{P})$ space, $0<p\leq2$, is hence essential to retrieve the asymptotic behavior described in \eqref{eq:P(eta<theta*l)}. Such boundedness is established in Proposition~\ref{prp:error:statistical:bis}.
\end{remark}

\subsection{Convergence Analysis}
The proofs of the ensuing results are postponed to Appendix~\ref{proofs}.

\begin{lemma}\label{lmm:bias}
Let $\gamma_n=\gamma_1n^{-\beta}$, $n\in\mathbb{N}^*$, with $\gamma_1>0$ and $\beta\in(0,1]$.
\begin{enumerate}[(i)]
\item\label{lmm:bias:i}
Assume that the real-valued random variables $X^h$ admit pdf $f_{X^h}$ that are bounded uniformly in $h\in\overline{\mathcal{H}}$.
\begin{enumerate}[\rm a.]
\item\label{lmm:bias:i:a}
If \eqref{assumption:finite:Lp:moment} is satisfied, with
\begin{equation}
\label{assumption:finite:Lp:moment:bias}
p_\star>2,
\quad
1<r<2,
\quad\text{and}\quad
0<\theta\leq\frac{p_\star-2}{p_\star+2},
\end{equation}
then, for all $\ell,n\in\mathbb{N}^*$,
\begin{equation*}
|\mathbb{E}[\mathds1_{\{X^{h_{\ell+\eta^\ell_n(\xi)}}>\xi\}}
-\mathds1_{\{X^{h_{\ell+\lceil\theta\ell\rceil}}>\xi\}}]|
\leq C_1h_\ell^{1+\theta}u_n,
\end{equation*}
where
\begin{equation*}
C_1=\frac{h_0^{-\frac{2p_\star}{p_\star+2}}h_1^{-p_\star(\frac1r-\frac12)}}{M^{p_\star(\frac1r-\frac12)}-1}\sqrt2C_\mathrm{ad}^{-p_\star}B_{p_\star}\mathbb{E}\big[\big|\varphi(Y,Z)-\mathbb{E}[\varphi(Y,Z)|Y]\big|^{p_\star}\big]^\frac1{p_\star},
\end{equation*}
with $B_{p_\star}$ being a positive constant that depends only on $p_\star$.

\item\label{lmm:bias:i:b}
If \eqref{assumption:conditional:gaussian:concentration} is satisfied for some $\mathfrak{g}>0$, with
\begin{equation}
\label{assumption:conditional:gaussian:concentration:bias}
h_0\geq\bigg(\frac{8\mathfrak{g}}{C_\mathrm{ad}^2}\bigg)^\frac{r}{2-r},\quad
1<r\leq2,
\quad\text{and}\quad
0<\theta\leq1,
\end{equation}
then, for all $\ell,n\in\mathbb{N}^*$,
\begin{equation*}
|\mathbb{E}[\mathds1_{\{X^{h_{\ell+\eta^\ell_n(\xi)}}>\xi\}}
-\mathds1_{\{X^{h_{\ell+\lceil\theta\ell\rceil}}>\xi\}}]|
\leq C_2h_\ell^{1+\theta}\gamma_n,
\end{equation*}
where
\begin{equation*}
C_2=\frac2{\gamma_1(1-M^{-1})}.
\end{equation*}
\end{enumerate}

\item\label{lmm:bias:ii}
If \eqref{assump:unif:lipschitz:integrability:conditional:cdf:bis} is satisfied with, for some $\upsilon_0>0$,
\begin{equation}
\label{assump:unif:lipschitz:integrability:conditional:cdf:bias}
\begin{gathered}
\sup_{0\leq h<h'\in\overline{\mathcal{H}}}\mathbb{E}[\exp(\upsilon_0|G_h^{h'}|^2)]<\infty,
\quad
h_0\geq\bigg(\frac2{\upsilon_0C_\mathrm{ad}^2}\bigg)^\frac{r}{2-r},\\
1<r\leq2,
\quad\text{and}\quad
0<\theta\leq1,
\end{gathered}
\end{equation}
then, for all $\ell,n\in\mathbb{N}^*$,
\begin{equation*}
|\mathbb{E}[\mathds1_{\{X^{h_{\ell+\eta^\ell_n(\xi)}}>\xi\}}
-\mathds1_{\{X^{h_{\ell+\lceil\theta\ell\rceil}}>\xi\}}]|
\leq C_3h_\ell^{1+\theta}\gamma_n,
\end{equation*}
where
\begin{equation*}
C_3=\frac{\sup_{0\leq h<h'\in\overline{\mathcal{H}}}\mathbb{E}[\exp(\upsilon_0|G_h^{h'}|^2)]}{\gamma_1(1-M^{-1})}.
\end{equation*}
\end{enumerate}
\end{lemma}

\begin{remark}
\begin{enumerate}[(i)]
\item
We refer to Remark~\ref{rmk:framework}(\ref{rmk:framework:p}) for an elaboration on the value of $B_{p_\star}$ in Lemma~\ref{lmm:bias}(\ref{lmm:bias:i})\hyperref[lmm:bias:i:a]{a}.

\item
Unlike \cite[Lemma~3.9]{HST22}, the bias controls above display a dependency on $n$ that decays roughly in the order of the step size $\gamma_n$. This property will prove useful to control the numerical error induced by the refinement strategy \eqref{eq:eta} in the adaptive nested scheme \eqref{adNSA}.

\item
Using the special values of $h=0$ and $h'=1$ within the first postulate of the framework \eqref{assump:unif:lipschitz:integrability:conditional:cdf:bias} yields that the quantity
\begin{equation*}
\mathbb{E}\Big[\exp\Big(\upsilon_0\big(\varphi(Y,Z)-\mathbb{E}[\varphi(Y,Z)|Y]\big)^2\Big)\Big]
\end{equation*}
is bounded.
As discussed in Remark~\ref{rmk:framework}(\ref{rmk:framework:sub:gaussian}), this entails a Gaussian concentration of the random variable $\varphi(Y,Z)-\mathbb{E}[\varphi(Y,Z)|Y]$.
\end{enumerate}
\end{remark}

Define
\begin{equation}
\label{def:zeta}
\widetilde\gamma_n^\ell=
\begin{cases}
(u_nh_\ell^{1+\theta})\vee\gamma_n
&\text{if \eqref{assumption:finite:Lp:moment:bias} holds,}\\
\gamma_n
&\text{if \eqref{assumption:conditional:gaussian:concentration:bias} or \eqref{assump:unif:lipschitz:integrability:conditional:cdf:bias} holds,}
\end{cases}
\quad\ell,n\in\mathbb{N}^*.
\end{equation}
Recalling the definition of $(\lambda_q)_{q\geq1}$ in \eqref{eq:lambda}, the next lemma provides an adaptive counterpart to Lemma~\ref{lmm:error}(\ref{lmm:error:statistical}).

\begin{proposition}
\label{prp:error:statistical:bis}
Let $\gamma_n=\gamma_1n^{-\beta}$, $n\in\mathbb{N}^*$, with $\gamma_1>0$ and $\beta\in(0,1]$.
Suppose that Assumption~\ref{asp:misc} holds.
\begin{enumerate}[(i)]
    \item\label{prp:error:statistical:bis:i}
If $\bar\lambda_1\gamma_1>1$ when $\beta=1$, and
\begin{equation*}
\sup_{\ell\geq0}\mathbb{E}\bigg[|\widetilde\xi^{h_\ell}_0|^2
\exp\bigg(\frac{2k_\alpha}{1-\alpha}
\sup_{\ell'\geq1}\|f_{X^{h_{\ell'}}}\|_\infty
|\widetilde\xi_0^{h_\ell}|\bigg)\bigg]<\infty,
\end{equation*}
where $k_\alpha=1\vee\frac\alpha{1-\alpha}$,
then, for all $\ell,n\in\mathbb{N}^*$,
\begin{equation}
\label{eq:L2:stat:error}
\mathbb{E}[(\widetilde\xi^{h_\ell}_n-\xi^{h_{\ell+\lceil\theta\ell\rceil}}_\star)^2]
\leq C^\beta\widetilde\gamma_n^\ell,
\quad
\text{where}
\quad
C^\beta=\sup_{\ell\geq1}C^{\ell,\beta},
\end{equation}
with the $(C^{\ell,\beta})_{\ell\geq1}$ being the positive constants given in \eqref{eq:C^l,b}.
    \item\label{prp:error:statistical:bis:ii}
If $\bar\lambda_2\gamma_1>2$ when $\beta=1$, and
\begin{equation*}
\sup_{\ell\geq0}\mathbb{E}\bigg[|\widetilde\xi^{h_\ell}_0|^4
\exp\bigg(\frac{4k_\alpha}{1-\alpha}
\sup_{\ell'\geq1}\|f_{X^{h_{\ell'}}}\|_\infty
|\widetilde\xi_0^{h_\ell}|\bigg)\bigg]<\infty,
\end{equation*}
where $k_\alpha=1\vee\frac\alpha{1-\alpha}$,
then, for all $\ell,n\in\mathbb{N}^*$,
\begin{equation}
\label{eq:L4:control}
\mathbb{E}[(\widetilde\xi^{h_\ell}_n-\xi^{h_{\ell+\lceil\theta\ell\rceil}}_\star)^4]
\leq\bar{C}^\beta(\widetilde\gamma_n^\ell)^2,
\quad
\text{where}
\quad
\bar{C}^\beta=\sup_{\ell\geq1}\bar{C}^{\ell,\beta},
\end{equation}
with the $(\bar{C}^{\ell,\beta})_{\ell\geq1}$ being the positive constants given in \eqref{eq:C:bar:l,b}.
\end{enumerate}
\end{proposition}

\begin{remark}
The above results are comparable to the non-adaptive case that is dealt with in Lemma~\ref{lmm:error}(\ref{lmm:error:statistical}).
The $L^4(\mathbb{P})$-control is necessary to our study of the adaptive MLSA scheme in Section \ref{sec:adamlsa}.
\end{remark}

\subsection{Complexity Analysis}
Proposition~\ref{prp:error:statistical:bis}(\ref{prp:error:statistical:bis:i}) provides an insight into the behavior of adNSA.
Fixing $\ell\in\mathbb{N}^*$ and running \eqref{adNSA} $n\in\mathbb{N}^*$ times results in a global error
\begin{equation*}
\widetilde\xi^{h_\ell}_n-\xi^0_\star
=(\widetilde\xi^{h_\ell}_n-\xi^{h_{\ell+\lceil\theta\ell\rceil}}_\star)+(\xi^{h_{\ell+\lceil\theta\ell\rceil}}_\star-\xi^0_\star),
\end{equation*}
where the first term represents the statistical error and the second the bias error.
The proofs of the following results are available in Appendix~\ref{complexity}.

\begin{proposition}\label{prp:selection:number:level:adNSA}
Suppose that Assumption~\ref{asp:misc}(\ref{asp:misc:i}) holds.
Let $\varepsilon>0$ be a fixed prescribed accuracy, and let $C>0$ denote the constant described in \eqref{eq:oO}.
If $h_0>C^{-1}\varepsilon$, then, setting
\begin{equation}
\label{selection:number:level:adNSA}
\ell=\bigg\lceil\frac{\ln{(Ch_0\varepsilon^{-1})}}{(1+\theta)\ln{M}}\bigg\rceil
\geq1
\end{equation}
achieves a bias error for adNSA of order $\varepsilon$.
\end{proposition}

\begin{remark}
According to the decomposition \eqref{eq:global:error:VaR:NSA}, when using NSA with a bias $h_\ell$ on the geometric scale $\mathcal{H}_0=\{h_{\ell'},\ell'\in\mathbb{N}\}$, one must set
\begin{equation*}
\ell=\bigg\lceil\frac{\ln{(Ch_0\varepsilon^{-1})}}{\ln{M}}\bigg\rceil
\geq1
\end{equation*}
to achieve a bias error of order $\varepsilon$. adNSA requires a smaller $\ell$, which in turn translates into a less expensive, albeit biased, simulation $X^{h_\ell}$ that is then refined. This adjustment contributes to a reduction in the complexity of adNSA.
\end{remark}

The next lemma quantifies the average simulation amounts performed per iteration under the adaptive strategy.

\begin{lemma}\label{lmm:E:h:-1}
Under Assumptions~\ref{asp:misc}(\ref{asp:misc:ii})-(\ref{asp:misc:iv}), by setting $r>1$, for all $\ell,n\in\mathbb{N}^*$,
\begin{equation*}
\mathbb{E}[h_{\ell+\widetilde\eta^\ell_n}^{-1}]
\leq C\begin{cases}
    h_\ell^{-1}n^\frac\delta{p_\star}
    &\text{if \eqref{assumption:finite:Lp:moment} holds,}\\
    h_\ell^{-1}(\ell+\ln{n})^\frac12
    &\text{if \eqref{assumption:conditional:gaussian:concentration} or \eqref{assump:unif:lipschitz:integrability:conditional:cdf:bis} holds,}
\end{cases}
\end{equation*}
where $C>0$ is described in \eqref{eq:C=C1:max:C2}.
\end{lemma}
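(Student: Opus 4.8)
The plan is to condition on the current iterate and reduce the claim to a single geometric sum whose calibration is dictated by the budgeting offset built into $\psi$ in~\eqref{eq:psi}. The key structural fact is that, at recursion rank $n$, the fresh innovation $X_{h_\ell}^{(n)}$ together with all of its refinements $X_{h_{\ell+k}}^{(n)}$ is independent of the iterate $\widetilde\xi^{h_\ell}_{n-1}$ against which the threshold in~\eqref{eq:eta} is tested (recall $\eta_\ell^{(n)}=\eta_\ell^n(\widetilde\xi^{h_\ell}_{n-1})$ from~\eqref{eq:eta:l:n}). Conditioning on $\widetilde\xi^{h_\ell}_{n-1}=\xi$, I would first decompose
\begin{equation*}
\mathbb{E}\big[h_{\ell+\eta_\ell^{(n)}}^{-1}\big]
=\sum_{k=0}^{\lceil\theta\ell\rceil}h_{\ell+k}^{-1}\mathbb{P}\big(\eta_\ell^{(n)}=k\big)
\leq h_\ell^{-1}+\sum_{k=1}^{\lceil\theta\ell\rceil}h_{\ell+k}^{-1}\mathbb{P}\big(\eta_\ell^{(n)}=k\big),
\end{equation*}
and then bound each tail probability for $k\geq1$. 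For such $k$, the event $\{\eta_\ell^{(n)}=k\}$ forces the while-loop of Algorithm~\ref{alg:refine} not to have terminated at index $k-1$, hence $|X_{h_{\ell+k-1}}-\xi|<C_a\psi_{k-1,\ell}^n$; since $X_{h_{\ell+k-1}}$ admits, conditionally on $\xi$, a density bounded by $\sup_{h\in\overline{\mathcal{H}}}\|f_{X_h}\|_\infty$ by Assumption~\ref{asp:misc}(\ref{asp:misc:iv}), and $\xi$ is independent of that innovation, I get $\mathbb{P}(\eta_\ell^{(n)}=k)\leq 2C_a\sup_{h\in\overline{\mathcal{H}}}\|f_{X_h}\|_\infty\,\psi_{k-1,\ell}^n$ uniformly in $\xi$, and therefore also after integrating $\xi$ out. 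This is essentially the estimate already displayed in the \emph{Refinement budgeting} paragraph, now retained verbatim with the $n$-dependent threshold.

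Next I would substitute $\psi$ from~\eqref{eq:psi} and $h_s=h_0M^{-s}$, reducing the problem to controlling $\sum_{k=1}^{\lceil\theta\ell\rceil}h_{\ell+k}^{-1}\psi_{k-1,\ell}^n$. Factoring out the $h$-part shows that, up to the slowly-varying prefactor of $\psi$, the summand is proportional to $h_\ell^{-1}M^{\frac{r-1}{r}(k-\theta\ell)}$, whose exponent \emph{increases} in $k$ precisely because $r>1$; the sum is thus a geometric series dominated by its top term $k=\lceil\theta\ell\rceil$. The budgeting offset $\theta\ell(r-1)$ inside $\psi$ (equivalently the normalization $c(h_\ell)$) is calibrated exactly so that this top term is $\OO(h_\ell^{-1})$: indeed $M^{\frac{r-1}{r}(\lceil\theta\ell\rceil-\theta\ell)}\leq M^{\frac{r-1}{r}}$ is bounded, using $\lceil\theta\ell\rceil-\theta\ell\in[0,1)$, so that the whole geometric sum is $\OO(h_\ell^{-1})$ \emph{uniformly in} $\ell$ despite the $M^{(1+\theta)\ell}$ growth of $h_{\ell+\lceil\theta\ell\rceil}^{-1}$.

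The only residual dependence on $n$ and $\ell$ then comes from the slowly-varying prefactor of $\psi$. In the framework~\eqref{assumption:finite:Lp:moment:bias} it is $u_n^{-1/p_\star}=\gamma_1^{-1/p_\star}n^{\delta/p_\star}$ by~\eqref{eq:kappa}, producing the factor $n^{\delta/p_\star}$; in the frameworks~\eqref{assumption:conditional:gaussian:concentration:bias}/\eqref{assump:unif:lipschitz:integrability:conditional:cdf:bias} it is $\big(\ln(\gamma_n^{-\frac12}h_{\ell+k}^{-\frac12(1+\theta)})\big)^{\frac12}$, which after inserting $\gamma_n=\gamma_1n^{-\beta}$ and $h_{\ell+k}=h_0M^{-(\ell+k)}$ and using $k\leq\lceil\theta\ell\rceil\leq\ell+1$ is $\leq C\sqrt{1+\ln n+\ell}$ uniformly in $k$. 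Absorbing the $k=0$ contribution $h_\ell^{-1}$ (harmless since $n^{\delta/p_\star}\geq1$ and $\sqrt{1+\ln n+\ell}\geq1$) then yields the two claimed bounds. I expect the geometric-sum/normalization step to be the main obstacle: the crux is verifying that $r>1$ renders the summand increasing in $k$, so the top term dominates, and that the budgeting offset cancels the $h_{\ell+\lceil\theta\ell\rceil}^{-1}$ blow-up to leave an $\ell$-uniform constant; the ceiling rounding $\lceil\theta\ell\rceil$ versus $\theta\ell$ must be tracked but contributes only an $\OO(1)$ factor, and the independence/conditioning argument, while simple, is what legitimizes the uniform density bound.
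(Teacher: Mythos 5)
Your proposal is correct and follows essentially the same route as the paper's proof: the same decomposition $\mathbb{E}\big[h_{\ell+\eta_\ell^{(n)}}^{-1}\big]=\sum_{k=0}^{\lceil\theta\ell\rceil}h_{\ell+k}^{-1}\,\mathbb{P}(\eta_\ell^{(n)}=k)$, the same density-based bound $\mathbb{P}(\eta_\ell^{(n)}=k)\leq2C_a\big(\sup_{h\in\overline{\mathcal{H}}}\|f_{X_h}\|_\infty\big)\psi_{k-1,\ell}^n$ from Assumption~\ref{asp:misc}(\ref{asp:misc:iv}), and the same geometric summation in which the budgeting offset $h_{\theta\ell(r-1)}^{\frac1r}$ cancels the $h_{\ell+k}^{-1}$ growth for $r>1$, leaving only the prefactors $n^{\frac\delta{p_\star}}$ and $\sqrt{1+\ln{n}+\ell}$. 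Your additional justifications (the independence of the fresh innovation from the iterate $\widetilde\xi^{h_\ell}_{n-1}$ legitimizing the uniform density bound, and the $\lceil\theta\ell\rceil-\theta\ell\in[0,1)$ bookkeeping) are details the paper leaves implicit but fully consistent with its argument.
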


\begin{remark}
Note the analogous \cite[Proposition~3.1]{HST22} for adaptive Monte Carlo, independent nonetheless of the iteration number $n$. The average amount of inner simulations therein is in the order of $h_\ell^{-1}$.
The saturation of our adaptive strategy for $n$ large is responsible for expanding these amounts by a small order depending on $n$.
We refer to Remark~\ref{rmk:n} for further comments on this dependence on $n$.
\end{remark}

\begin{proposition}
\label{prp:ansa:complexity}
Suppose that Assumptions~\ref{asp:misc}(\ref{asp:misc:ii})-(\ref{asp:misc:iv}) hold.
By setting $r>1$,
the average complexity of adNSA satisfies
\begin{equation*}
\Cost_{\text{\rm adNSA}}
\leq C
\begin{cases}
n^{1+\frac\delta{p_\star}}h_\ell^{-1}
&\text{if \eqref{assumption:finite:Lp:moment} holds,}\\
n(\ell^\frac12+\ln^\frac12{n})h_\ell^{-1}
&\text{if \eqref{assumption:conditional:gaussian:concentration} or \eqref{assump:unif:lipschitz:integrability:conditional:cdf:bis} holds,}
\end{cases}
\end{equation*}
where $C>0$ is the constant described in \eqref{eq:C:meta}.
\end{proposition}

\begin{remark}
Absent the saturation factor, the complexity would simply be in $\OO(nh_\ell^{-1})$, similar to the complexity order of NSA, as per Proposition~\ref{prp:cost:nsa}.
\end{remark}

\begin{theorem}
\label{thm:cost:adansa}
Let $\varepsilon>0$ be a prescribed accuracy.
Assume that $h_0>(C')^{-1}\varepsilon$, where $C'>0$ is the constant described in \eqref{eq:oO}, and that $\bar\lambda_1\gamma_1>1$ if $\beta=1$, and set $\ell$ as in \eqref{selection:number:level:adNSA}.
Then, denoting $C>0$ the constant defined in \eqref{eq:cst}, by taking
\begin{equation*}
n=\begin{cases}
\lceil C^\frac2\delta\varepsilon^{-\frac1\delta}\rceil
&\text{if \eqref{assumption:finite:Lp:moment:bias} holds with $\delta\leq\frac\beta2$,}\\
\lceil C^\frac2\beta\varepsilon^{-\frac2\beta}\rceil
&\text{if \eqref{assumption:finite:Lp:moment:bias} holds with $\delta\geq\frac\beta2$, or if \eqref{assumption:conditional:gaussian:concentration:bias} or \eqref{assump:unif:lipschitz:integrability:conditional:cdf:bias} holds,}
\end{cases}
\end{equation*}
the statistical $L^2(\mathbb{P})$-error for adNSA is of order $\varepsilon$ as $\varepsilon\downarrow0$.

Under this setting, by taking $r>1$, the computational cost of adNSA satisfies
\begin{equation*}
\Cost_{\text{\rm adNSA}}
\leq\bar{C}\begin{cases}
\varepsilon^{-\frac{p_\star+\delta}{p_\star\delta}-\frac1{1+\theta}}
&\text{if \eqref{assumption:finite:Lp:moment:bias} holds with $\delta\leq\frac\beta2$,}\\
\varepsilon^{-\frac{2(\delta+p_\star)}{\beta p_\star}-\frac1{1+\theta}}
&\text{if \eqref{assumption:finite:Lp:moment:bias} holds with $\delta\geq\frac\beta2$,}\\
\varepsilon^{-\frac2\beta-\frac1{1+\theta}}|\ln{\varepsilon}|^\frac12
&\text{if \eqref{assumption:conditional:gaussian:concentration:bias} or \eqref{assump:unif:lipschitz:integrability:conditional:cdf:bias} holds,}
\end{cases}
\quad\text{as}\quad
\varepsilon\downarrow0,
\end{equation*}
where $\bar{C}>0$ is the constant defined in \eqref{eq:cst:bar}.
The minimal computational cost, which is attained under the constraint $\bar\lambda_1\gamma_1>1$, satisfies
\begin{equation*}
\Cost_{\text{\rm adNSA}}
\leq\bar{C}\begin{cases}
\varepsilon^{-\frac{5p_\star+4}{2p_\star}}
&\text{if \eqref{assumption:finite:Lp:moment:bias} holds, $\beta=1$, $\delta=\frac\beta2=\frac12$, and $\theta=\frac{p_\star-2}{p_\star+2}$,}\\
\varepsilon^{-\frac52}|\ln{\varepsilon}|^\frac12
&\text{if \eqref{assumption:conditional:gaussian:concentration:bias} or \eqref{assump:unif:lipschitz:integrability:conditional:cdf:bias} holds, and $\beta=\theta=1$,}
\end{cases}
\quad\text{as}\quad
\varepsilon\downarrow0.
\end{equation*}
\end{theorem}

\begin{remark}
\begin{enumerate}[(i)]
\item
Within the framework of Assumption~\eqref{assumption:finite:Lp:moment:bias}, adNSA achieves a speed-up over NSA when $p_\star>4$.
By contrast, heavier-tailed distributions, corresponding to an integrability exponent $p_\star\leq4$, do not appear to benefit from adaptive refinement. This may be explained by the fact that adaptivity is primarily sensitive to higher-order moments.

\item
By comparison with Proposition~\ref{prp:cost:nsa}, the complexities outlined in Theorem~\ref{thm:cost:adansa} are an order of magnitude of $\varepsilon^\frac12$ lower than those of NSA for large $p_\star$ under the framework \eqref{assumption:finite:Lp:moment:bias} and up to a logarithmic factor under the frameworks \eqref{assumption:conditional:gaussian:concentration:bias} and \eqref{assump:unif:lipschitz:integrability:conditional:cdf:bias}.
As we will see in the next section, adNSA serves as a fundamental component of the adaptive MLSA scheme. Consequently, the performance improvements noted on the former scheme should result in an acceleration of the latter.

\end{enumerate}
\end{remark}

The table below recaps the differences between NSA and adNSA.

\begin{table}[H]
\centerline{
    \begin{tabular}{!{\vrule width 1pt}c!{\vrule width 1pt}c|c!{\vrule width 1pt}}
         \specialrule{0.1em}{0em}{0em}
         Algorithm
         & NSA & adNSA \\
         \specialrule{0.1em}{0em}{0em}
         Bias parameter
         & $h\sim\varepsilon$
         & $h_\ell\sim\varepsilon^\frac1{1+\theta}\Leftrightarrow\ell\sim\dfrac{|\ln{\varepsilon}|}{(1+\theta)\ln{M}}$\\
         \hline
         Iteration amount
         & $n\sim\varepsilon^{-\frac2\beta}$
         & $n\sim\begin{cases}
\varepsilon^{-\frac1\delta}
&\text{if \eqref{assumption:finite:Lp:moment:bias} holds with $\delta\leq\frac\beta2$,}\\
\varepsilon^{-\frac2\beta}
&\text{if \eqref{assumption:finite:Lp:moment:bias} holds with $\delta\geq\frac\beta2$,}\\
&\text{\;or if \eqref{assumption:conditional:gaussian:concentration:bias} or \eqref{assump:unif:lipschitz:integrability:conditional:cdf:bias} holds,}
\end{cases}$\\
         \hline
         \multirow{3}{*}[-0.5em]{Best complexity}
         & \multirow{3}{*}[-0.5em]{$\Cost_\text{\tiny\rm NSA}=\OO(\varepsilon^{-3})$}
         & $\Cost_\text{\tiny\rm adNSA}=$\\
         & 
         & $\begin{cases}
\OO(\varepsilon^{-\frac{5p_\star+4}{2p_\star}})
&\text{if \eqref{assumption:finite:Lp:moment:bias} holds,}\\
\OO(\varepsilon^{-\frac52}|\ln{\varepsilon}|^\frac12)
&\text{if \eqref{assumption:conditional:gaussian:concentration:bias} or \eqref{assump:unif:lipschitz:integrability:conditional:cdf:bias} holds.}
\end{cases}$ \\
         \specialrule{0.1em}{0em}{0em}
    \end{tabular}
}
    \caption{Comparison of NSA and adNSA. $\varepsilon\>0$ designates the prescribed accuracy. The statements refer to Propositions~\ref{prp:cost:nsa} and~\ref{thm:cost:adansa}.}
    \label{tbl:cpx:nested}
\end{table}

\section{Adaptive Multilevel Stochastic Approximation Algorithm}
\label{sec:adamlsa}

Recalling that MLSA telescopes multiple paired NSA schemes, the previous development on adNSA provides a framework for the extension of the adaptive refinement strategy to the multilevel paradigm. We define the adaptive multilevel SA (adMLSA) estimator for the VaR as
\begin{equation}
\label{alg:amlsa:var}
\widetilde\xi^\text{\rm\tiny ML}_\mathbf{N}=\xi^{h_0}_{N_0}+\sum_{\ell=1}^L\widetilde\xi^{h_\ell}_{N_\ell}-\widetilde\xi^{h_{\ell-1}}_{N_\ell},
\end{equation}
where $\mathbf{N}:=(N_0,\dots,N_L)\in(\mathbb{N}^*)^{L+1}$ represents the number of iterations at each level.
Each level $\ell\in[\![0,L]\!]$ is simulated independently.
As detailed in Remark~\ref{rmk:refine}(\ref{rmk:level:0}), the level $0$ estimator is not refined, resulting in $N_0$ iterations of NSA.
Each of the remaining levels $\ell\in[\![1,L]\!]$ is obtained as follows: after initializing $(\widetilde\xi^{h_{\ell-1}}_0,\widetilde\xi^{h_\ell}_0)$, for each $n=0,\dots,N_\ell-1$, once the components of $(X^{h_{\ell-1}}_{n+1},X^{h_\ell}_{n+1})$ have been simulated according to \eqref{perfect:correlation}, $X^{h_{\ell-1}}_{n+1}$ and $X^{h_\ell}_{n+1}$ are separately refined as in \eqref{eq:eta:l:n}, relative to the fine and coarse iterates $\widetilde\xi^{h_{\ell-1}}_n$ and $\widetilde\xi^{h_\ell}_n$, into $\widetilde{X}^{h_{\ell-1}}_{n+1}$ and $\widetilde{X}^{h_\ell}_{n+1}$ respectively.
These refined samples are then injected into separate single updates of \eqref{adNSA} to obtain the following iterates $(\widetilde\xi^{h_{\ell-1}}_{n+1},\widetilde\xi^{h_\ell}_{n+1})$.

\begin{algorithm}[H]
\caption{Adaptive Multilevel SA for estimating the VaR}
\label{alg:adamlsa}
\begin{algorithmic}[1]
\Require {A number of levels $L\geq1$, a bias parameter $h_0=\frac1K\in\mathcal{H}$, an integer $M\geq2$, $N_0,\dots,N_L\in\mathbb{N}^*$, a positive non-increasing sequence 
$(\gamma_n)_{n\geq1}$ such that $\sum_{n=1}^\infty\gamma_n=\infty$ and $\lim_{n\to\infty}\gamma_n=0$, refinement parameters $C_\mathrm{ad}>0$, $r>1$ and $0<\theta\leq1$}
\State {Sample $\xi^{h_0}_0$ randomly}
\State {$\xi^{h_0}_{N_0}\gets\mathrm{NSA}(h_0,N_0)$}
\For {$\ell=1\mathrel{.\,.}L$}
   \State {Set $h_\ell\gets\frac{h_0}{M^\ell}$}
   \State {Sample $(\widetilde\xi^{h_{\ell-1}}_0,\widetilde\xi^{h_\ell}_0)$ randomly}
   \label{alg:adamlsa:sample:0}
   \For {$n=0\mathrel{.\,.}N_\ell-1$}
      \State {Simulate $Y_{n+1}\sim Y$ and $Z^1_{n+1},\dots,Z^{KM^\ell}_{n+1}\stackrel[]{\text{\rm\tiny i.i.d.}}{\sim}Z$ independently of $Y_{n+1}$}
      \State {$X^{h_{\ell-1}}_{n+1}\gets\frac1{KM^{\ell-1}}\sum_{k=1}^{KM^{\ell-1}}\varphi(Y_{n+1},Z^k_{n+1})$}
      \State {$X^{h_\ell}_{n+1}\gets\frac1MX^{h_{\ell-1}}_{n+1}+\frac1{KM^\ell}\sum_{k=KM^{\ell-1}+1}^{KM^\ell}\varphi(Y_{n+1},Z^k_{n+1})$}
%       \algstore{adaMLSA}
% \end{algorithmic}
% \end{algorithm}

% \begin{algorithm}[H]
% \begin{algorithmic}[1]
%       \algrestore{adaMLSA}
      \For {$j=\ell-1\mathrel{.\,.}\ell$}
         \State{$\widetilde{X}^{h_j}_{n+1}\gets\Refine_{C_\mathrm{ad},r,\theta,j,n}(X^{h_j}_{n+1},\widetilde\xi^{h_j}_n)$}
         \State {$\widetilde\xi^{h_j}_{n+1}\gets\widetilde\xi^{h_j}_n-\gamma_{n+1}H(\widetilde\xi^{h_j}_n,\widetilde{X}^{h_j}_{n+1})$}
      \EndFor
   \EndFor
\EndFor
\State {$\widetilde\xi^\text{\tiny\rm ML}_\mathbf{N}\gets\xi^{h_0}_{N_0}+\sum_{\ell=1}^L \widetilde\xi^{h_\ell}_{N_\ell}-\widetilde\xi^{h_{\ell-1}}_{N_\ell}$}
\State \Return {$\widetilde\xi^\text{\tiny\rm ML}_\mathbf{N}$}
\end{algorithmic}
\end{algorithm}

\begin{remark}
In line~\ref{alg:adamlsa:sample:0} of Algorithm~\ref{alg:adamlsa}, the intializations $\widetilde\xi^{h_{\ell-1}}_0$ and $\widetilde\xi^{h_\ell}_0$ may be set equal, or even be set deterministically.
\end{remark}

\subsection{Convergence Analysis}
The proofs of the following results are postponed to Appendix~\ref{proofs}.

The next result guarantees stronger error controls for adMLSA comparatively with its non-adaptive counterpart, Lemma~\ref{lmm:local:strong:error:indicator:func}, for MLSA.

\begin{lemma}\label{lmm:local:strong:error:indicator:func:variance}
Let $\gamma_n=\gamma_1n^{-\beta}$, $n\in\mathbb{N}^*$, with $\gamma_1>0$ and $\beta\in(0,1]$.
\begin{enumerate}[(i)]
\item\label{lmm:local:strong:error:indicator:func-i:variance}
Assume that the $X^h$ admit pdf $f_{X^h}$ that are bounded uniformly in $h\in\overline{\mathcal{H}}$.
\begin{enumerate}[\rm a.]
\item\label{lmm:local:strong:error:indicator:func-ia:variance}
If \eqref{assumption:finite:Lp:moment} is satisfied, with
\begin{equation}
\label{assumption:finite:Lp:moment:variance}
1<r<2,
\quad\text{and}\quad
0<\theta\leq\frac{p_\star}{p_\star+2},
\end{equation}
then, for all $\ell,n\in\mathbb{N}^*$,
\begin{equation*}
\mathbb{E}[|\mathds1_{\{X^{h_{\ell+\eta^\ell_n(\xi)}}>\xi\}}-\mathds1_{\{X^0>\xi\}}|]
\leq C_1h_\ell^{(1+\theta)\frac{p_\star}{2(p_\star+1)}},
\end{equation*}
where
\begin{equation*}
\begin{aligned}
C_1
&=h_0^{-\frac{p_\star}{2(p_\star+1)}}B_{p_\star}\mathbb{E}\big[\big|\varphi(Y, Z)-\mathbb{E}[\varphi(Y, Z)|Y]\big|^{p_\star}\big]^\frac1{p_\star+1}\Big(\sup_{h\in\overline{\mathcal{H}}}\|f_{X^h}\|_\infty\Big)^\frac{p_\star}{p_\star+1}\\
&\quad+\frac{\gamma_1h_0^{-\frac{2(p_\star+1)}{p_\star+2}}h_1^{-p_\star(\frac1r-\frac12)}}{M^{p_\star(\frac1r-\frac12)}-1}\sqrt2C_\mathrm{ad}^{-p_\star}B_{p_\star}\mathbb{E}\big[\big|\varphi(Y, Z)-\mathbb{E}[\varphi(Y, Z)|Y]\big|^{p_\star}\big]^\frac1{p_\star},
\end{aligned}
\end{equation*}
with $B_{p_\star}$ being a positive constant that depends only on $p_\star$.

\item\label{lmm:local:strong:error:indicator:func-ib:variance}
If \eqref{assumption:conditional:gaussian:concentration} is satisfied for $\mathfrak{g}>0$, with
\begin{equation}
\label{assumption:conditional:gaussian:concentration:variance}
h_0\geq\bigg(\frac{4\mathfrak{g}}{C_\mathrm{ad}^2}\bigg)^\frac{r}{2-r},
\quad
1<r\leq2,
\quad\text{and}\quad
0<\theta\leq1,
\end{equation}
then, for all $\ell,n\in\mathbb{N}^*$,
\begin{equation*}
\mathbb{E}[|\mathds1_{\{X^{h_{\ell+\eta^\ell_n(\xi)}}>\xi\}}
-\mathds1_{\{X^0>\xi\}}|]
\leq C_2h_\ell^\frac{1+\theta}2|\ln{h_\ell}|^\frac12,
\end{equation*}
where
\begin{equation*}
C_2=4\mathfrak{g}^\frac12h_0^{-\frac12}\bigg(1+\Big(\sup_{h\in\overline{\mathcal{H}}}{\|f_{X^h}\|_\infty}\Big)\Big(2\big(2\vee|\ln{(\mathfrak{g}h_0^{-1})}|\big)\Big)^\frac12\bigg)+\frac2{1-M^{-\frac12}}.
\end{equation*}
\end{enumerate}

\item\label{lmm:local:strong:error:indicator:func-ii:variance}
If \eqref{assump:unif:lipschitz:integrability:conditional:cdf:bis} is satisfied, with, for some $\upsilon_0>0$,
\begin{equation}
\label{assump:unif:lipschitz:integrability:conditional:cdf:variance}
\begin{gathered}
\sup_{h\in\mathcal{H}}\mathbb{E}[\exp(\upsilon_0|G_0^h|^2)]<\infty,
\quad
h_0\geq\bigg(\frac1{\upsilon_0C_\mathrm{ad}^2}\bigg)^\frac{r}{2-r},\\
1<r\leq2,
\quad\text{and}\quad
0<\theta\leq1,
\end{gathered}
\end{equation}
then, for all $\ell,n\in\mathbb{N}^*$,
\begin{equation*}
\mathbb{E}[|\mathds1_{\{X^{h_{\ell+\eta^\ell_n(\xi)}}>\xi\}}
-\mathds1_{\{X^0>\xi\}}|]
\leq C_3h_\ell^\frac{1+\theta}2,
\end{equation*}
where
\begin{equation*}
C_3=h_0^{-\frac12}\sup_{0\leq h_1<h_2\in\overline{\mathcal{H}}}{\mathbb{E}[K_{h_1}^{h_2}|G_{h_1}^{h_2}|]}+\frac{\sup_{h\in\mathcal{H}}{\mathbb{E}[\exp{(\upsilon_0|G_0^h|^2)}]}}{1-M^{-\frac12}}.
\end{equation*}
\end{enumerate}
\end{lemma}

\begin{remark}
\begin{enumerate}[(i)]
    \item
Although the frameworks of Lemmas~\ref{lmm:bias} and~\ref{lmm:local:strong:error:indicator:func:variance} are similar, those of Lemma~\ref{lmm:bias} are stronger.

    \item
The value of $B_{p_\star}$ in Lemma~\ref{lmm:local:strong:error:indicator:func:variance}(\ref{lmm:local:strong:error:indicator:func-i:variance})\hyperref[lmm:local:strong:error:indicator:func-ia:variance]{a} is discussed in Remark~\ref{rmk:framework}(\ref{rmk:framework:p}).

    \item
In comparison with the non-adaptive frameworks of Lemma~\ref{lmm:local:strong:error:indicator:func}, the variance controls for adMLSA display an extra exponent factor of $1+\theta$, which should bring down the algorithm's statistical error faster to $0$.

\end{enumerate}
\end{remark}

The main convergence result follows. Its proof is deferred to Appendix~\ref{apx:prf:main}.

\begin{theorem}
\label{thm:amlsa:L2}
Suppose that $\varphi(Y,Z)\in L^2(\mathbb{P})$, that Assumptions~\ref{asp:misc} and~\ref{asp:fh-f0} hold, and that
\begin{equation*}
\sup_{\ell\geq0}\mathbb{E}\bigg[|\widetilde\xi^{h_\ell}_0|^4
\exp\bigg(\frac{4k_\alpha}{1-\alpha}
\sup_{\ell'\geq1}\|f_{X^{h_{\ell'}}}\|_\infty
|\widetilde\xi_0^{h_\ell}|\bigg)\bigg]<\infty,
\end{equation*}
where $k_\alpha=1\vee\frac\alpha{1-\alpha}$.
If $\gamma_n=\gamma_1n^{-\beta}$, $n\in\mathbb{N}^*$, with $\gamma_1>0$ and $\beta\in(0,1]$, and if $\bar\lambda_2\gamma_1>2$ when $\beta=1$,
then, for all $L\in\mathbb{N}^*$ and all $\mathbf{N}=(N_0,\dots,N_L)\in(\mathbb{N}^*)^{L+1}$,
\begin{equation}
\label{L2:norm:adaML:VaR}
\mathbb{E}[(\widetilde\xi^\text{\tiny\rm ML}_\mathbf{N}-\xi^{h_{L+\lceil\theta L\rceil}}_\star)^2]
\leq C\bigg(\gamma_{N_0}+\bigg(\sum_{\ell=1}^L\widetilde\gamma_{N_\ell}^\ell\bigg)^2+\sum_{\ell=1}^L\gamma_{N_\ell}(\widetilde{\gamma}_{N_\ell}^\ell)^\frac12+\sum_{\ell=1}^L\gamma_{N_\ell}\epsilon(h_\ell)^{1+\theta}\bigg),
\end{equation}
where $C>0$ is the constant described in \eqref{eq:master:cst}, $(\widetilde\gamma_n^\ell)_{\ell,n\geq1}$ are defined in \eqref{def:zeta}, and $\epsilon$ is defined as in \eqref{eq:eps(hl)}:
\begin{equation*}
\epsilon(h)
:=\begin{cases}
h^\frac{p_\star}{2(1+p_\star)}
&\text{if \eqref{assumption:finite:Lp:moment:bias} holds,}\\
h^\frac12|\ln{h}|^\frac12
&\text{if \eqref{assumption:conditional:gaussian:concentration:bias} holds,}\\
h^\frac12
&\text{if \eqref{assump:unif:lipschitz:integrability:conditional:cdf:bias} holds,}
\end{cases}
\quad h\in\mathcal{H}.
\end{equation*}
\end{theorem}

\begin{remark}
Similarly to Theorem~\ref{thm:ml-variance-cv}, the upper $L^2(\mathbb{P})$ control for the statistical error of adMLSA is comprised four terms: the first term governs the level $0$ simulation and the remaining three govern the drifts and martingales arising from the multilevel linearization \eqref{eq:xiML-xi*(hL)}.
The first difference with MLSA lies in the inclusion of $\widetilde\gamma_n^\ell$, which emanates from the controls of Proposition~\ref{prp:error:statistical:bis}.
Additionally, the final term exhibits an extra exponent factor of $1+\theta$, resulting in an accelerated convergence rate compared to MLSA.
\end{remark}

The convergence rate speed-up, shown in the previous theorem, translates a significant performance gain that should reflect in the complexity of adMLSA, as we clarify next.

\subsection{Complexity Analysis}

Approximating $\xi^0_\star$ by $\widetilde\xi^\text{\tiny\rm ML}_\mathbf{N}$ results in a global error that decomposes into statistical and bias errors as such:
\begin{equation}
\label{amlsa:global:error:var}
\widetilde\xi^\text{\tiny\rm ML}_\mathbf{N}-\xi^0_\star
=(\widetilde\xi^\text{\tiny\rm ML}_\mathbf{N}-\xi^{h_{L+\lceil\theta L\rceil}}_\star)+(\xi^{h_{L+\lceil\theta L\rceil}}_\star-\xi^0_\star).
\end{equation}
The proofs related to this section are deferred to Appendix~\ref{complexity}.

\begin{proposition}
\label{prp:selection:number:level:adMLSA}
Suppose that Assumption~\ref{asp:misc}(\ref{asp:misc:i}) holds.
Let $\varepsilon>0$ be a prescribed accuracy and $C>0$ the constant described in \eqref{eq:oO}.
If $h_0>C^{-1}\varepsilon$, then, setting the number of levels to
\begin{equation}
\label{selection:number:level:adMLSA}
L=\bigg\lceil\frac{\ln{(Ch_0\varepsilon^{-1})}}{(1+\theta)\ln{M}}\bigg\rceil\geq1
\end{equation}
achieves a bias error for adMLSA of order $\varepsilon$.
\end{proposition}

\begin{remark}
In view of Propositions~\ref{prp:selection:number:level:adMLSA} and~\ref{prp:mlsa:complexity}(\ref{prp:mlsa:complexity:levels}), to achieve a comparable bias error order, adMLSA requires significantly less levels than MLSA.
In fact, for $\theta=1$, adMLSA requires half as many levels as MLSA to achieve a bias error of order $\varepsilon$.
This property alone makes adMLSA much faster than MLSA.
It is linked to the fact that, for a given number of levels $L$, adMLSA achieves a bias error of order $h_L^{1+\theta}$, an order of magnitude $h_L^\theta$ lower than MLSA, which achieves a bias error of order $h_L$.
\end{remark}

\begin{proposition}
\label{prp:amlsa:complexity}
Suppose that Assumptions~\ref{asp:misc}(\ref{asp:misc:ii})-(\ref{asp:misc:iv}) hold.
By setting $r>1$,
the average complexity of adMLSA satisfies
\begin{equation}
\label{eq:adMLSA:cost}
\Cost_\text{adMLSA}\leq C
\begin{cases}
        \sum_{\ell=0}^LN_\ell^{1+\frac\delta{p_\star}}h_\ell^{-1}
    &\text{if \eqref{assumption:finite:Lp:moment} holds,}\\
        \sum_{\ell=0}^LN_\ell(L^\frac12+\ln^\frac12{N_\ell})h_\ell^{-1}
    &\text{if \eqref{assumption:conditional:gaussian:concentration} or \eqref{assump:unif:lipschitz:integrability:conditional:cdf:bis} holds,}
\end{cases}
\end{equation}
where $C>0$ is the constant given in \eqref{eq:new:C}.
\end{proposition}

\begin{remark}
Without a saturation factor in the refinement strategy, the computational cost would be lowered to $\OO(\sum_{\ell=0}^NN_\ell h_\ell^{-1})$, similar to both MLSA and the adaptive MLMC algorithm \cite{HST22}.
However, omitting this factor would adversely affect the convergence of the statistical error \eqref{L2:norm:adaML:VaR}. See Remark~\ref{rmk:n} for further comments on the necessity to saturate the refinements.
\end{remark}

\begin{theorem}
\label{thm:amlsa:complexity}
Let $C_\mathrm{ad}>0$, $r>1$ and $0<\theta\leq1$, and suppose that Assumptions~\ref{asp:misc} and~\ref{asp:fh-f0} hold.
Let $\varepsilon>0$ be a prescribed accuracy.
Denote $C,\bar{C}>0$ the constants on the right-hand sides of \eqref{L2:norm:adaML:VaR} and \eqref{eq:adMLSA:cost} respectively.
\begin{enumerate}[(i)]
\item\label{thm:amlsa:complexity:indicator:func-ia:N}
If \eqref{assumption:finite:Lp:moment:bias} holds,
then, by setting 
\begin{equation}
\label{eq:Nl:i}
N_\ell=\bigg\lceil C_1^{\beta,\delta}\varepsilon^{-\frac2\beta}\bigg(\sum_{\ell'=0}^Lh_{\ell'}^{-\frac{(2\beta-(1+\theta))p_\star+(2\beta-(1+\theta)\delta)}{2(p_\star+1)((1+\beta)p_\star+\delta)}p_\star}\bigg)^\frac1\beta h_\ell^{\frac{(3+\theta)p_\star+2}{2(p_\star+1)((1+\beta)p_\star+\delta)}p_\star}\bigg\rceil,
\quad\ell\in[\![0,L]\!],
\end{equation}
where
\begin{equation*}
\begin{aligned}
C_1^{\beta,\delta}&=C^\frac{p_\star}{(1+\beta)p_\star+\delta}(\bar{C}\gamma_1)^\frac{(p_\star+\delta)p_\star}{(1+\beta)p_\star+\delta}\\
&\hphantom{=}\times
\bigg(\bigg(\frac{\beta p_\star}{p_\star+\delta}\bigg)^\frac{p_\star+\delta}{(1+\beta)p_\star+\delta}+\bigg(\frac{\beta p_\star}{p_\star+\delta}\bigg)^{-\frac{\beta p_\star}{(1+\beta)p_\star+\delta}}\bigg)^\frac1\beta
\bigg(\frac{p_\star+\delta}{(1+\beta)p_\star+\delta}\bigg)^\frac1\beta,
\end{aligned}
\end{equation*}
adMLSA scores a global $L^2(\mathbb{P})$-error of order $\varepsilon$.
By taking $L$ as in \eqref{selection:number:level:adMLSA}, the corresponding average complexity satisfies
\begin{equation*}
\Cost_\text{adMLSA}
\leq\widehat{C}_1
\begin{cases}
\varepsilon^{-\frac{2(p_\star+\delta)}{\beta p_\star}}
&\text{if $\theta>\frac{(2\beta-1)p_\star+2\beta-\delta}{p_\star+\delta}$,}\\
\varepsilon^{-\frac{2(p_\star+\delta)}{\beta p_\star}}
|\ln{\varepsilon}|^\frac{(1+\beta)p_\star+\delta}{\beta p_\star}
&\text{if $\theta=\frac{(2\beta-1)p_\star+2\beta-\delta}{p_\star+\delta}$,}\\
\varepsilon^{-\frac{2(p_\star+\delta)}{\beta p_\star}-\frac{(2\beta-(1+\theta))p_\star+2\beta-(1+\theta)\delta}{(1+\theta)\beta(p_\star+1)}}
&\text{if $\theta<\frac{(2\beta-1)p_\star+2\beta-\delta}{p_\star+\delta}$,}
\end{cases}
\;\;\text{as}\;\;
\varepsilon\downarrow0,
\end{equation*}
where $\widehat{C}_1>0$ is the constant on the right hand side of \eqref{eq:1:C}.
The best complexity is attained if $\beta=1$, $\delta\downarrow0$, and $\theta=\frac{p_\star-2}{p_\star+2}$:
\begin{equation*}
\inf_{\beta,\delta,\theta}
\Cost_\text{adMLSA}
\leq\widehat{C}_1
\varepsilon^{-2
-\frac{3p_\star+2}{2p_\star(p_\star+1)}-\iota},
\quad\text{for any $\iota>0$,}
\quad\text{as $\varepsilon\downarrow0$.}
\end{equation*}

\item\label{thm:amlsa:complexity:indicator:func-ib:N}
If \eqref{assumption:conditional:gaussian:concentration:bias} holds,
then, taking $L$ as in \eqref{selection:number:level:adMLSA} and
\begin{equation*}
N_\ell
=\bigg\lceil(\bar{C}\gamma_1)^\frac1\beta
\varepsilon^{-\frac2\beta}
\bigg(\sum_{\ell'=0}^Lh_{\ell'}^{-\frac{2\beta-(1+\theta)}{2(1+\beta)}}|\ln{h_{\ell'}}|^\frac{1+\theta}{2(1+\beta)}\bigg)^\frac1\beta
h_\ell^\frac{3+\theta}{2(1+\beta)}
|\ln{h_\ell}|^\frac{1+\theta}{2(1+\beta)}\bigg\rceil,
\;\ell\in[\![0,L]\!],
\end{equation*}
leads to a global $L^2(\mathbb{P})$ error for adMLSA of order $\varepsilon$ as $\varepsilon\downarrow0$.
The corresponding average complexity satisfies
\begin{equation*}
\Cost_\text{adMLSA}
\leq\widehat{C}_2
\begin{cases}
\varepsilon^{-\frac2\beta}
|\ln{\varepsilon}|^\frac{1+\theta+\beta}{2\beta}
&\text{if $\theta>2\beta-1$,}\\
\varepsilon^{-\frac2\beta}
|\ln{\varepsilon}|^\frac{3(1+\beta)+\theta}{2\beta}
&\text{if $\theta=2\beta-1$,}\\
\varepsilon^{-\frac{3(1+\theta)+2\beta}{2(1+\theta)\beta}}
|\ln{\varepsilon}|^\frac{1+\theta+\beta}{2\beta}
&\text{if $\theta<2\beta-1$,}
\end{cases}
\quad\text{as}\quad
\varepsilon\downarrow0,
\end{equation*}
where $\widehat{C}_2>0$ is the constant on the right hand side of \eqref{eq:cost=O(eps):bis}.
This complexity is minimized for $\beta=\theta=1$, in which case
\begin{equation*}
\inf_{\beta,\theta}
\Cost_\text{adMLSA}
\leq \widehat{C}_2\varepsilon^{-2}
|\ln{\varepsilon}|^\frac72,
\quad\text{as}\quad
\varepsilon\downarrow0.
\end{equation*}

\item\label{thm:amlsa:complexity:indicator:func-ii:N}
If \eqref{assump:unif:lipschitz:integrability:conditional:cdf:bias} holds,
then, setting $L$ as in \eqref{selection:number:level:adMLSA} and
\begin{equation*}
N_\ell
=\bigg\lceil(\bar{C}\gamma_1)^\frac1\beta
\varepsilon^{-\frac2\beta}
\bigg(\sum_{\ell'=0}^Lh_{\ell'}^{-\frac{2\beta-(1+\theta)}{2(1+\beta)}}\bigg)^\frac1\beta
h_\ell^\frac{3+\theta}{2(1+\beta)}\bigg\rceil,
\quad\ell\in[\![0,L]\!],
\end{equation*}
gives a global $L^2(\mathbb{P})$ error for adMLSA of order $\varepsilon$ as $\varepsilon\downarrow0$.
The corresponding average complexity of
\begin{equation*}
\Cost_\text{adMLSA}
\leq\widehat{C}_3
\begin{cases}
\varepsilon^{-\frac2\beta}
|\ln{\varepsilon}|^\frac{1+\theta+\beta}{2\beta}
&\text{if $\theta>2\beta-1$,}\\
\varepsilon^{-\frac2\beta}
|\ln{\varepsilon}|^\frac{3(1+\beta)+\theta}{2\beta}
&\text{if $\theta=2\beta-1$,}\\
\varepsilon^{-\frac{3(1+\theta)+2\beta}{2(1+\theta)\beta}}
|\ln{\varepsilon}|^\frac{1+\theta+\beta}{2\beta}
&\text{if $\theta<2\beta-1$,}
\end{cases}
\quad\text{as}\quad
\varepsilon\downarrow0,
\end{equation*}
where $\widehat{C}_3>0$ is the constant on the right hand side of \eqref{eq:cost=O(eps):ter}.
This complexity is minimized for $\beta=\theta=1$, whereby
\begin{equation*}
\inf_{\beta,\theta}
\Cost_{\text{adMLSA}}
\leq\widehat{C}_3
\varepsilon^{-2}
|\ln{\varepsilon}|^\frac52,
\quad\text{as}\quad
\varepsilon\downarrow0.
\end{equation*}
\end{enumerate}
\end{theorem}

\begin{remark}
\begin{enumerate}[(i)]
\item
Within the frameworks of Theorems~\ref{thm:amlsa:complexity}(\ref{thm:amlsa:complexity:indicator:func-ib:N})-(\ref{thm:amlsa:complexity:indicator:func-ii:N}), adMLSA scores a significant performance gain over MLSA and retrieves the canonical multilevel performance \cite{Gil08,Fri16} of order $\varepsilon^{-2}$, up to a logarithmic factor.

\item
As for the framework of Theorem~\ref{thm:amlsa:complexity}(\ref{thm:amlsa:complexity:indicator:func-ia:N}), it is faster than its MLSA counterpart for any $p_\star>2$.
Better performance is expected by taking $p_\star$ sufficiently large, recalling that a large class of portfolio payoffs are $L^{p_\star}(\mathbb{P})$-integrable for all $p_\star>2$.
Note that, asymptotically as $p_\star\uparrow\infty$, $\Cost_{\text{adMLSA}}=\OO(\varepsilon^{-2-\iota})$, for all $\iota>0$, while $\Cost_{\text{MLSA}}=\OO(\varepsilon^{-\frac52})$, retrieving the same overperformance reported above.
\end{enumerate}
\end{remark}

The table below compares MLSA and adMLSA.

\begin{table}[H]
\centerline{
    \begin{tabular}{!{\vrule width 1pt}c!{\vrule width 1pt}c|c!{\vrule width 1pt}}
         \specialrule{0.1em}{0em}{0em}
         Algorithm
         & MLSA & adMLSA \\
         \specialrule{0.1em}{0em}{0em}
         \multirow{3}{*}{$L^2(\mathbb{P})$-control}
         & $\OO(h_L^2+\sum_{\ell=0}^L\gamma_{N_\ell}\epsilon(h_\ell))$
         & $\OO(h_L^{2(1+\theta)}+\sum_{\ell=0}^L\widetilde\gamma_{N_\ell}^\ell\epsilon(h_\ell)^{1+\theta})$ \\
         & (Decomposition \eqref{mlsa:error:global},
         & (Decomposition \eqref{amlsa:global:error:var},\\
         & Lemma~\ref{lmm:error}(\ref{lmm:error:weak}) and Theorem~\ref{thm:ml-variance-cv})
         & Lemma~\ref{lmm:error}(\ref{lmm:error:weak}) and Theorem~\ref{thm:amlsa:L2})\\
         \hline
         \multirow{2}{*}[-0.25em]{Number of levels}
         & $L\sim\dfrac{|\ln{\varepsilon}|}{\ln{M}}$
         & $L\sim\dfrac{|\ln{\varepsilon}|}{(1+\theta)\ln{M}}$\\
         & (Proposition~\ref{prp:mlsa:complexity}(\ref{prp:mlsa:complexity:levels}))
         & (Proposition~\ref{prp:selection:number:level:adMLSA})\\
         \hline
         \multirow{3}{*}[-2em]{Best complexity}
         & $\Cost_\text{\tiny\rm MLSA}=$
         & $\Cost_\text{\tiny\rm adMLSA}=$\\
         & $\begin{cases}
   \OO(\varepsilon^{-3+\frac{p_\star}{2(1+p_\star)}})
      &\text{if \eqref{assumption:finite:Lp:moment} holds,}\\
   \OO(\varepsilon^{-\frac52}|\ln{\varepsilon}|^\frac12)
      &\text{if \eqref{assumption:conditional:gaussian:concentration} holds,}\\
   \OO(\varepsilon^{-\frac52})
      &\text{if \eqref{assump:unif:lipschitz:integrability:conditional:cdf} holds.}
\end{cases}$
         & $\begin{cases}
   \OO(\varepsilon^{-2-\frac{3p_\star+2}{2p_\star(p_\star+1)}-\iota}),
   \forall\iota>0,
      &\text{if \eqref{assumption:finite:Lp:moment:bias} holds,}\\
   \OO(\varepsilon^{-2}|\ln{\varepsilon}|^\frac72)
      &\text{if \eqref{assumption:conditional:gaussian:concentration:bias} holds,}\\
    \OO(\varepsilon^{-2}|\ln{\varepsilon}|^\frac52)
      &\text{if \eqref{assump:unif:lipschitz:integrability:conditional:cdf:bias} holds.}
\end{cases}$ \\
         & (Proposition~\ref{prp:mlsa:complexity}(\ref{prp:mlsa:complexity:iii}))
         & (Theorem~\ref{thm:amlsa:complexity}) \\
         \specialrule{0.1em}{0em}{0em}
    \end{tabular}
}
    \caption{Comparison of MLSA and adMLSA.
    $\varepsilon>0$ designates a prescribed accuracy.}
    \label{tbl:cpx}
\end{table}

\section{Heuristics}
\label{heuristics}

\subsection{Refinement Parameters}
In view of Theorem~\ref{thm:cost:adansa} and Theorem~\ref{thm:amlsa:complexity}, to compute the VaR efficiently, we need to set $\theta=\frac{p_\star-2}{p_\star+2}$ within the framework \eqref{assumption:finite:Lp:moment:bias} and $\theta=1$ within the frameworks \eqref{assumption:conditional:gaussian:concentration:bias} and \eqref{assump:unif:lipschitz:integrability:conditional:cdf:bias}. Note that $\theta\approx1$ for $p_\star$ large enough within the framework \eqref{assumption:finite:Lp:moment:bias}.

The choice $r=1+\frac1\theta$ together with the assumption $\theta\approx1$ lead to thresholds in~(\ref{eq:psi}) of the form
\begin{equation*}
\psi^{\ell,k}_n=h_{\ell+k}^\frac\theta{1+\theta}w_n\approx h_{\ell+k}^\frac12w_n,
\end{equation*}
where $w_n$ is the saturation factor.
Complementing \cite{GHS23}, under this choice of $r$ and $\theta$, our strategy can be seen as conducting consecutive Student t-tests on the null hypotheses ``$X^{h_{\ell+k}}=\xi$'', $0\leq k\leq\lceil\theta\ell\rceil$, until the first rejection.

\subsection{Refinement Strategy}
Let us fix some level $\ell$ and iteration number $n$, and denote $\xi^\ell$ and $\xi^{\ell-1}$ the corresponding fine and coarse iterates. For simplicity, we omit $n$ from our notation.
To increase efficiency, we propose to only refine the fine sample $X^{h_\ell}$ to $\widetilde{X}^{h_\ell}=X^{h_{\ell+\eta^\ell(\xi^\ell)}}$, then discard the $Z$ simulations indexed from $KM^{\ell-1+\eta^\ell(\xi^\ell)}+1$ onward to recover the coarse sample $\widetilde{X}^{h_{\ell-1}}$:
\begin{align*}
\widetilde{X}^{h_{\ell-1}}
&=\frac1{KM^{\ell-1+\eta^\ell(\xi^\ell)}}\sum_{k=1}^{KM^{\ell-1}+\eta^\ell(\xi^\ell)}\varphi(Y,Z^k),\\
\widetilde{X}^{h_\ell}
&=\frac1M\widetilde{X}^{h_{\ell-1}}+\frac1{KM^{\ell+\eta^\ell(\xi^\ell)}}\sum_{k=KM^{\ell-1+\eta^\ell(\xi^\ell)+1}}^{KM^{\ell+\eta^\ell(\xi^\ell)}}\varphi(Y,Z^k),
\end{align*}
where $(Z^k)\stackrel{\text{\tiny\rm i.i.d.}}\sim Z$ are independent of $Y$.
This amounts to setting $\eta^{\ell-1}(\xi^{\ell-1})=(\eta^\ell(\xi^\ell)-2)^+$.
While this strategy for $\eta^{\ell-1}$ is not centered on $\xi^{\ell-1}$, numerical trials have shown it to produce better results.

\subsection{Confidence Constant}
In the following, we tune the confidence constant $C_\mathrm{ad}$ appearing in \eqref{eq:psi} on a grid.
Remark~\ref{rmk:refine}(\ref{rmk:C}) however suggests a different treatment for this constant.

Denote $\sigma=\sqrt{\Var(\varphi(Y,Z)|Y)}$ the sampling standard deviation, and $\sigma^h$, $h=\frac1K\in\mathcal{H}$, its empirical approximation given by
\begin{equation}
\sigma^h
=\bigg(\frac1K\sum_{k=1}^K\varphi(Y,Z^k)^2-(X^h)^2\bigg)^\frac12.
\end{equation}
An alternate refinement strategy consists in incorporating an estimation of $C_\mathrm{ad}$ in \eqref{eq:psi} with the quantity $C_p\sigma^{h_{\ell+k}}$ at the refinement step $k$, where $C_p$ is set to $3$ to retrieve a $99\%$-confidence on the closeness between $X^{h_{\ell+k}}$ and $X^0$.

We term $\sigma$-adNSA and $\sigma$-adMLSA the adNSA and adMLSA versions where $C_\mathrm{ad}$ is estimated by $(C_p\sigma^{h_{\ell+k}})_{0\leq k\leq\lceil\theta\ell\rceil}$. For comparative purposes, these versions are run as well and their performances are reported and discussed in subsequent analyses.

\subsection{Unsaturated Refinement}
To further reduce complexity, we propose to look into a desaturated variant of the refinement strategy \eqref{eq:eta}, whereby the saturation factor is set to $1$:
\begin{equation}\label{eq:unsat}
\eta^\ell(\xi)
=\lceil\theta\ell\rceil\wedge\min{
    \{k\in\intl0,\lceil\theta\ell\rceil\intr:
    |X^{h_{\ell+k}}-\xi|
    \geq C_\mathrm{ad}\psi^{\ell,k}\}
},
\quad\ell\in\mathbb{N}^*,
\quad\xi\in\mathbb{R},
\end{equation}
with the convention $\min\varnothing=+\infty$, where $C_\mathrm{ad}>0$, $r>1$, $0<\theta\leq1$, and
\begin{equation*}
\psi^{\ell,k}=h_{\theta\ell(r-1)+k}^\frac1r,
\quad k\in\intl0,\lceil\theta\ell\rceil\intr,
\quad\ell\in\mathbb{N}^*.
\end{equation*}

To be able to incorporate this refinement strategy in an adaptive multilevel scheme, we need the corresponding optimal iteration amounts.
These unfortunately cannot be easily obtained for this heuristic.
Given that the strategy \eqref{eq:unsat} resembles that of \cite[Eq (3.2)]{HST22} for adaptive MLMC, one could simply consider the iteration amounts derived in \cite{HST22}.
The complexity theorems \cite[Theorem~3.10 and 3.11]{HST22} content with pointing to the best complexities, without providing the corresponding iteration amounts.
These however may be found in the proof of the earlier work of \cite[Theorem~3.1]{Gil08}.
But adopting these iteration amounts remains problematic for a couple of reasons.
First of all, as pointed out in Section~\ref{sec:amlsa}, the saturation factor incorporated in the refinement strategy \eqref{eq:eta} has been worked out from the bias and variance upper estimates derived in Lemma~\ref{lmm:bias} and Proposition~\ref{prp:error:statistical:bis}.
Substituting this factor with one, which is tantamount to unsaturation, leads to upper estimates that do not asymptotically vanish with large iteration amounts.
Besides, the frameworks used by \cite{Gil08} or \cite{HST22} are different from ours in \eqref{assumption:finite:Lp:moment:bias}, \eqref{assumption:conditional:gaussian:concentration:bias} and \eqref{assump:unif:lipschitz:integrability:conditional:cdf:bias}.
Considering all of these concerns, we opt for the prudent choice of using our own optimal iteration amounts calculated in Theorem~\ref{thm:amlsa:complexity} as obtained by factoring in a saturation effect.
In other words, the unsaturated refinement strategy \eqref{eq:unsat} is simply treated as a heuristic in the pursuit of faster execution.

The adaptive NSA and MLSA algorithms corresponding to the heuristic \eqref{eq:unsat}, that we dub u-adNSA and u-adMLSA, will therefore be run with the same iteration amounts $n$ and $(N_\ell)_{0\leq\ell\leq L}$ delineated in Theorems~\ref{thm:cost:adansa} and~\ref{thm:amlsa:complexity}.

\section{Financial Case Studies}
\label{sec:num}

In the ensuing numerical studies, we illustrate the closure of the performance gap between MLSA and SA, which is made possible by using our adaptive refinement strategy.
To that end, we revisit the VaR use cases already handled by SA, NSA and MLSA in \cite[Sections~5 \&~6]{CFL23}.
The implementations for the case studies can be found at \href{https://github.com/azarlouzi/ada_mlsa}{\tt github.com/azarlouzi/ada\_mlsa}.

\subsection{European Option}
\label{ssec:euption}

We succinctly recall here the setting of \cite[Section~5]{CFL23}.
We refer to the developments therein for a rigorous derivation of the ensuing statements.

Consider a European option of maturity $T=1$ and payoff $\mathbb{R}\ni x\mapsto-x^2$, on an underlying asset following a standard Brownian motion dynamic $(W_t)_{0\leq t\leq1}$.
The risk-free rate is null and the pricing is performed under $\mathbb{P}$.
The option's value at time $t\in[0,1]$ is
\begin{equation*}
V_t=\mathbb{E}[-W_1^2|W_t],
\end{equation*}
and its associated loss at a horizon $\tau\in(0,1)$ is
\begin{equation*}
X^0=V_0-V_\tau.
\end{equation*}
We are interested in retrieving the VaR $\xi^0_\star$ of this loss at some confidence level $\alpha\in(0,1)$.

\subsubsection{Analytical and Simulation Formulas}
Let $\varphi:\mathbb{R}^2\to\mathbb{R}$,
\begin{equation*}
\varphi(y,z)
:=-(\sqrt{\tau}y+\sqrt{1-\tau}z)^2,
\quad
y,z\in\mathbb{R}.
\end{equation*}

On the one hand,
\begin{equation*}
X^0
\stackrel{\mathcal{L}}{=}
-1-\mathbb{E}[\varphi(Y,Z)|Y]
=\tau(Y^2-1),
\end{equation*}
where $Y$ and $Z$ are independent and of law $\mathcal{N}(0,1)$.
$X^0$ can thus be simulated exactly, hence the benchmarking unbiased SA scheme \cite{BFP09} is applicable to estimate the VaR.

On the other hand, for a bias parameter $h=\frac1K\in\mathcal{H}$, $X^0$ can be approximated by
\begin{equation*}
X^h=-1-\frac1K\sum_{k=1}^K\varphi(Y,Z^k),
\end{equation*}
where $Y,Z^1,\dots,Z^K\stackrel[]{\text{\rm\tiny i.i.d.}}{\sim}\mathcal{N}(0,1)$.
We can then apply NSA, MLSA, adNSA, $\sigma$-adNSA, u-adNSA, adMLSA, $\sigma$-adMLSA and u-adMLSA on this basis to approximate the VaR.

Finally, the VaR $\xi^0_\star$ at level $\alpha$ has an analytical form:
\begin{equation}\label{trueVaR}
\xi^0_\star=\tau\bigg\{\bigg(F^{-1}\bigg(\frac{1-\alpha}2\bigg)\bigg)^2-1\bigg\},
\end{equation}
where $F$ is the standard Gaussian cdf.
Its evaluation will help assess the estimation errors of the aforementioned SA schemes.

\subsubsection{Numerical Results}
We conduct below a performance comparison of the different SA schemes discussed in this paper.
We set the confidence level to $\alpha=97.5\%$ and the time horizon to $\tau=0.5$, which yields $\xi^0_\star\approx2.012$ via \eqref{trueVaR}.

To minimize complexity, all algorithms are run with $\beta=1$ and their respective theoretical optimal iteration amounts.
SA, NSA, adNSA and $\sigma$-adNSA are run with $\gamma_n=\frac1{100+n}$.
MLSA, adMLSA, $\sigma$-adMLSA and u-adMLSA are run with $M=2$ under the framework \eqref{assumption:finite:Lp:moment:bias}, with $\delta=0.05$ and $p_\star=11$. The parameter $p_\star$ has been fine-tuned on a grid of values to minimize RMSE and execution time.
We set $\theta=\frac{p_\star-2}{p_\star+2}$ and $r=1+\frac1\theta$ for the adaptive schemes, as recommended in Section~\ref{heuristics}.
Identical $h_0$ parametrization was applied to adNSA, $\sigma$-adNSA, MLSA, adMLSA, $\sigma$-adMLSA and u-adMLSA, and identical $\gamma_n$ parametrization was applied to MLSA, adMLSA, $\sigma$-adMLSA and u-adMLSA.
Further parametrizations of the adaptive and multilevel schemes, obtained via a grid search, are described in Table~\ref{tbl:option}.
The subsequent level amounts computed via \eqref{selection:number:level:MLSA}, \eqref{selection:number:level:adNSA} and \eqref{selection:number:level:adMLSA} are reported in columns $L$, $\ell_\mathrm{ad}$ and $L_\mathrm{ad}$.
For adNSA and adMLSA, the iteration amounts scaling factor $C$ (e.g.~$C$ represents $C^{1,\delta}_1$ for adMLSA as per \eqref{eq:Nl:i}) and the confidence constant $C_\mathrm{ad}$ were jointly optimized by grid search, which led to the choices $C=2$ and $C_\mathrm{ad}=0.5$ for adNSA and $C=2$ and $C_\mathrm{ad}=12$ for adMLSA and u-adMLSA.
As for $\sigma$-adNSA and $\sigma$-adMLSA, we use the critical value $C_p=3$ with scaling factor $C=2$.

\begin{table}[H]
\centerline{
\begin{tabular}{|c|c|c|c|c|c|c|}
\hline
$\varepsilon$
& $h_0$ & $\ell_\mathrm{ad}$ \\
\specialrule{0.1em}{0em}{0em}
$\frac{1}{32}$ & $\frac{1}{16}$ & $1$ \\
\hline
$\frac{1}{64}$ & $\frac{1}{32}$ & $1$ \\
\hline
$\frac{1}{128}$ & $\frac{1}{32}$ & $2$ \\
\hline
$\frac{1}{256}$ & $\frac{1}{32}$ & $2$ \\
\hline
$\frac{1}{512}$ & $\frac{1}{32}$ & $3$ \\
\hline
\end{tabular}
\hfill
\begin{tabular}{|c|c|c|c|c|c|c|}
\hline
$\varepsilon$ & $h_0$ & $L$ & $\gamma_n$ \\
\specialrule{0.1em}{0em}{0em}
$\frac{1}{32}$ & $\frac{1}{16}$ & $1$ & $\frac{2}{2.5\times10^3+n}$ \\
\hline
$\frac{1}{64}$ & $\frac{1}{32}$ & $1$ & $\frac{2}{4\times10^3+n}$ \\
\hline
$\frac{1}{128}$
& $\frac{1}{32}$ & $2$ & $\frac{0.75}{9\times10^3+n}$ \\
\hline
$\frac{1}{256}$ & $\frac{1}{32}$ & $3$ & $\frac{0.25}{10^4+n}$ \\
\hline
$\frac{1}{512}$ & $\frac{1}{32}$ & $4$ & $\frac{0.09}{10^4+n}$ \\
\hline
\end{tabular}
\hfill
\begin{tabular}{|c|c|c|c|c|c|c|}
\hline
$\varepsilon$ & $h_0$ & $L_\mathrm{ad}$ & $\gamma_n$ \\
\specialrule{0.1em}{0em}{0em}
$\frac{1}{32}$ & $\frac{1}{16}$ & $1$ & $\frac{2}{2.5\times10^3+n}$ \\
\hline
$\frac{1}{64}$ & $\frac{1}{32}$ & $1$ & $\frac{2}{4\times10^3+n}$ \\
\hline
$\frac{1}{128}$ & $\frac{1}{32}$ & $2$ & $\frac{0.75}{9\times10^3+n}$ \\
\hline
$\frac{1}{256}$ & $\frac{1}{32}$ & $2$ & $\frac{0.25}{10^4+n}$ \\
\hline
$\frac{1}{512}$ & $\frac{1}{32}$ & $3$ & $\frac{0.09}{10^4+n}$ \\
\hline
\end{tabular}
}
\caption{Parametrizations of adNSA and $\sigma$-adNSA (left), MLSA (center) and adMLSA, $\sigma$-adMLSA and u-adMLSA (right), by prescribed accuracy.}
\label{tbl:option}
\end{table}

Root-mean-square errors (RMSEs) relative to the true $\xi^0_\star$, as well as corresponding average execution times over $200$ runs, for a prescribed accuracy ranging in $\{\frac1{7},\frac1{13},\frac1{26},\frac1{52},\frac1{103}\}$, are graphed for each algorithm on a logarithmic scale in Figure~\ref{fig:euption}.
Recalling that $\xi^0_\star\approx2.012$, these prescribed accuracies correspond to accuracy rates ranging from $\frac{1/7}{2.012}\approx 7.10\%$ down to $\frac{1/103}{2.012}\approx0.48\%$. Before feeding them into the different SA algorithms, we adjust each accuracy $\varepsilon$ by a scaling factor of $C'=\frac15$ into $C'\varepsilon$ ($C'$ represents the constant on the right hand side of \eqref{L2:norm:adaML:VaR}).

Figure~\ref{fig:euption:eps} plots the average execution times against the prescribed accuracies themselves to illustrate the achieved complexities. 
The slopes fitted on these curves, shown in dashed lines in Figures~\ref{fig:euption} and~\ref{fig:euption:eps}, are reported in Table~\ref{tbl:option:slopes}.

\begin{figure}[H]
\includegraphics[width=\textwidth]{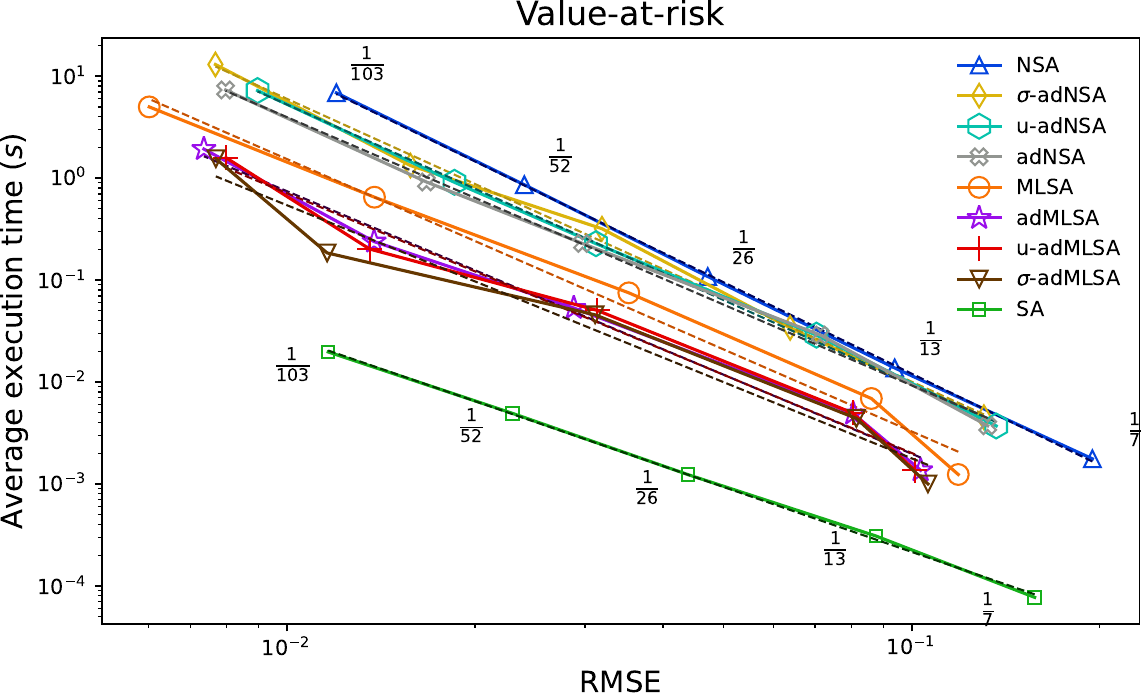}
\caption{Performance comparison of the different SA schemes.}
\label{fig:euption}
\end{figure}

\begin{figure}[H]
\includegraphics[width=\textwidth]{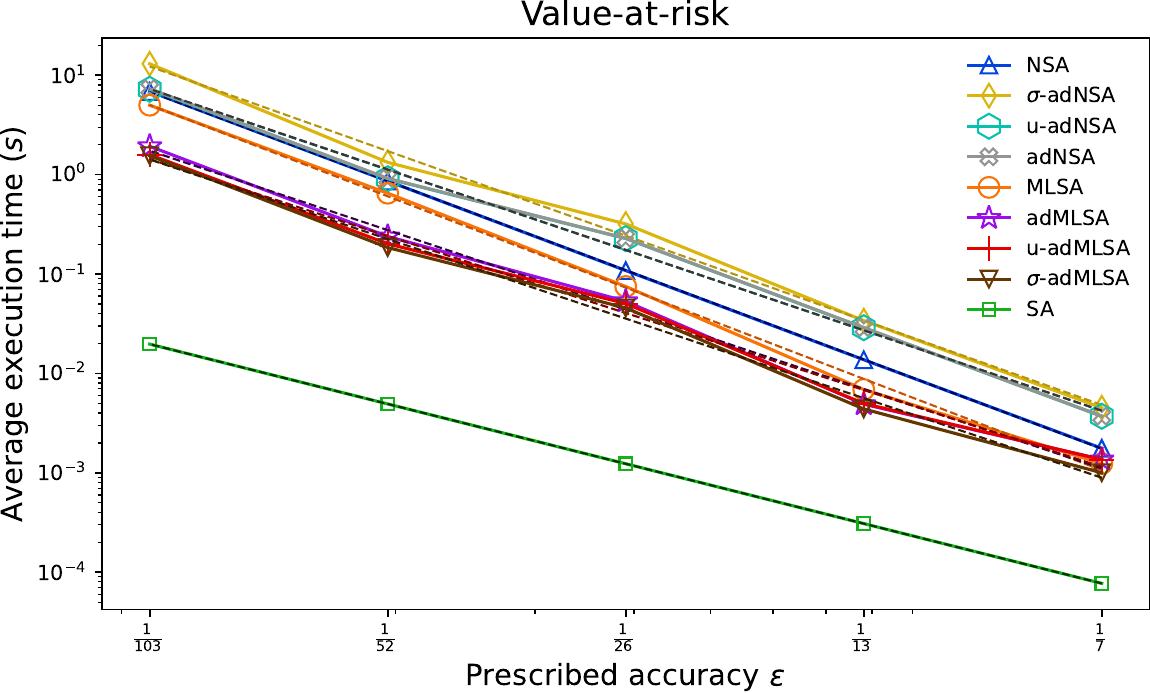}
\caption{Complexity comparison of the different SA schemes.}
\label{fig:euption:eps}
\end{figure}

\begin{table}[H]
\centerline{
\begin{tabular}{|c!{\vrule width 1pt}c|c|c|c|c|c|c|c|c|c|}
\hline
SA scheme
& NSA
& $\sigma$-adNSA
& u-adNSA
& adNSA
& MLSA 
& adMLSA
& u-adMLSA
& $\sigma$-adMLSA
& SA \\
\specialrule{0.1em}{0em}{0em}
RMSE
& $-2.98$
& $-2.78$
& $-2.74$
& $-2.63$
& $-2.67$
& $-2.58$
& $-2.55$
& $-2.48$
& $-2.12$
\\\hline
$\varepsilon$
& $-2.98$
& $-2.83$
& $-2.68$
& $-2.69$
& $-3.05$
& $-2.65$
& $-2.57$
& $-2.66$
& $-2.00$
\\\hline
\end{tabular}
}
\caption{Reported slopes from Figures~\ref{fig:euption} and~\ref{fig:euption:eps}.}
\label{tbl:option:slopes}
\end{table}

\cite[Section~5.3]{CFL23} already provides a thorough discussion on SA, NSA and MLSA.
We retain here that MLSA achieves a partial gain on the gap of performance separating SA and NSA.

The novelties here are the adaptive schemes.
On Figure~\ref{fig:euption},
adNSA, $\sigma$-adNSA and u-adNSA achieve comparable performances, outperforming NSA by a margin that seems to widen for smaller accuracies.
adNSA seems to be slightly outperforming both $\sigma$-adNSA and u-adNSA.
The overperformance over $\sigma$-adNSA may be attributed to the overhead computation performed by the latter algorithm to recompute the confidence constant $C_\mathrm{ad}$.
The overperformance over u-adNSA can be explained by the desaturated refinement strategy used by the latter algorithm.
adMLSA, $\sigma$-adMLSA and u-adMLSA show to be outperforming MLSA:
for a target RMSE of order $10^{-2}$, adMLSA, $\sigma$-adMLSA and u-adMLSA run on average in $1\mathrm{s}$ while MLSA runs in $2\mathrm{s}$, achieving a $2$-fold speed-up over their non-adaptive counterpart.
$\sigma$-adMLSA seems to be slightly outperforming adMLSA, which can be explained by the precision brought in by the recomputation of the standard deviation $\sigma$, in spite of the entailed computational overhead.
The advantage of adMLSA over $\sigma$-adMLSA remains that, once calibrated on a few accuracies, the confidence constant $C_\mathrm{ad}$ for adMLSA needs not be readjusted, which eliminates all additional upstream fine-tuning.
u-adMLSA seems also to be slightly outperforming adMLSA, which can be attributed to the computational gain achieved by the simpler unsaturated refinement strategy, despite the implied refinement imprecision.

Eventually, an examination of the fitted slopes highlights that adNSA achieves the theoretical complexity exponent predicted by Theorem~\ref{thm:cost:adansa}.
It also demonstrates that adMLSA approaches the quadratic complexity anticipated by Theorem~\ref{thm:amlsa:complexity}(\ref{thm:amlsa:complexity:indicator:func-ia:N}) for $p_\star$ large.
%The performances of these schemes match that of SA that assumes the exact simulatability of the true loss $X^0$.
\\

Note that, in a comparable numerical setting, \cite{HST22} instead plots the quantity ``Average execution time $\times$ $\mathrm{RMSE}^2$'' versus the $\mathrm{RMSE}$ on a logarithmic scale. This choice is intended to illustrate the underlying complexity results, recalling that the average execution time serves as a proxy for the computational cost, while the RMSE represents the prescribed accuracy. Under a quasi-quadratic complexity regime ($\Cost\propto\varepsilon^{-2-\delta}$, with $\delta>0$ small), one expects a log-linear relationship between these quantities, with a mildly negative slope ($\ln(\Cost\varepsilon^2)\propto-\delta\ln\varepsilon$).
The main drawback of this graphical representation is that the quantity ``Average execution time $\times$ $\mathrm{RMSE}^2$'' lacks a clear interpretation, as it does not directly convey a tangible performance gain by directly comparing execution times for a given level of accuracy.

\subsection{Interest Rate Swap}

The following setting is recapitulated from \cite[Section~6]{CFL23}.

We consider a swap, of strike $\bar{K}$, maturity $T$ and nominal $\bar{N}$, with each leg being worth $1$ at inception, and issued at par on some underlying interest (or FX) rate $(S_t)_{0\leq t\leq T}$ that follows a Black-Scholes model with risk neutral drift $\bar\kappa$ and constant volatility $\bar\sigma$.
At the coupon dates $0<T_1<\dots<T_d=T$, the swap remunerates the cash flows $\Delta T_i(S_{T_{i-1}}-\bar{K})$, where $\Delta T_i=T_i-T_{i-1}$, $T_0=0$.
The risk-free rate is $\bar{r}$ and the risk neutral probability measure is $\mathbb{P}$.

For $t\in[0,T]$, let $i_t$ be the integer such that $t\in[T_{i_t-1},T_{i_t})$ if $t\in[0,T)$, and $+\infty$ otherwise.
The fair value of the swap at time $t\in[0,T]$ is
\begin{equation*}
P_t
=\bar{N}\mathbb{E}\bigg[\sum_{i=i_t}^d\e^{-\bar{r}(T_i-t)}\Delta T_i(S_{T_{i-1}}-\bar{K})\bigg|S_t\bigg].
\end{equation*}
The loss on a short position on the swap at a time horizon $\tau\in(0,T_1)$ is
\begin{equation*}
X^0=\e^{-\bar{r}\tau}P_\tau.
\end{equation*}
We are interested in computing the VaR $\xi^0_\star$ of this loss at some confidence level $\alpha\in(0,1)$.

\subsubsection{Analytical and Simulation Formulas}
On the one hand,
\begin{equation*}
X^0\stackrel{\mathcal{L}}{=}\bar{N} AS_0\bigg(\exp\bigg(-\frac{\bar\sigma^2}2\tau+\bar\sigma\sqrt\tau U\bigg)-1\bigg),
\quad\text{where}\quad
A:=\sum_{i=2}^d\e^{-\bar{r}T_i}\Delta T_i\e^{\bar\kappa T_{i-1}}
\end{equation*}
and $U\sim\mathcal{N}(0,1)$.
This allows simulating $X^0$ exactly, hence the availability of SA to approximate the VaR.

On the other hand, $X^0$ satisfies
\begin{equation}\label{eq:swap:nested:expect}
X^0\stackrel{\mathcal{L}}{=}\mathbb{E}[\varphi(Y,Z)|Y],
\end{equation}
where $Y\in\mathbb{R}$ is independent of $Z=(Z_1,\dots,Z_{d-1})\in\mathbb{R}^{d-1}$, with
\begin{equation*}
\begin{aligned}
Y
&:=\exp\bigg(-\frac{\bar\sigma^2}2\tau+\bar\sigma\sqrt\tau U_0\bigg),\\
Z_1
&:=\exp\bigg(-\frac{\bar\sigma^2}2(T_1-\tau)+\bar\sigma\sqrt{T_1-\tau}U_1\bigg),\\
Z_i
&:=\exp\bigg(-\frac{\bar\sigma^2}2\Delta T_i+\bar\sigma\sqrt{\Delta T_i}U_i\bigg),
&&2\leq i\leq d-1,\\
\varphi(y,z)
&:=\bar{N}S_0\sum_{i=2}^d\e^{-\bar{r}T_i}\Delta T_i\e^{\bar\kappa T_{i-1}}\bigg(y\prod_{j=1}^{i-1}z_j-1\bigg),
&&y\in\mathbb{R},
\;\; z=(z_1,\dots,z_{d-1})\in\mathbb{R}^{d-1},
\end{aligned}
\end{equation*}
and $(U_i)_{0\leq i\leq d-1}\stackrel{\text{\rm\tiny i.i.d.}}{\sim}\mathcal{N}(0,1)$.
The nested Monte Carlo averaging \eqref{eq:Xh} is thus available to approximate \eqref{eq:swap:nested:expect}, hence the applicability of NSA, adNSA, $\sigma$-adNSA, u-adNSA, MLSA, adMLSA, $\sigma$-adMLSA and u-adMLSA to approximate the VaR.

Finally, the VaR $\xi^0_\star$ at level $\alpha$ is available analytically:
\begin{equation}\label{trueVaR:2}
\xi^0_\star=\bar{N}AS_0\bigg(\exp\bigg(F^{-1}(\alpha)\bar\sigma\sqrt\tau-\frac{\bar\sigma^2}2\tau\bigg)-1\bigg),
\end{equation}
where $F$ is the standard Gaussian cdf.
The output of this formula will serve as a benchmark for the outcomes of the aforementioned algorithms.

\subsubsection{Numerical Results}
For the case study, we set $S_0=1\%$, $\bar{r}=2\%$, $\bar\kappa=12\%$, $\bar\sigma=20\%$, $T=1\,\mathrm{year}$, $\Delta T_i=3\,\mathrm{months}$, $\tau=7\,\mathrm{days}$ and $\alpha=85\%$. We use the $30/360$ day count fraction convention.
\eqref{trueVaR:2} yields $\xi^0_\star\approx219.64$.

We run all algorithms at their theoretical optimums, with $\beta=1$ and the corresponding optimal iteration amounts.
The learning rate $\gamma_n=\frac{100}n$ is employed for SA and $\gamma_n=\frac{50}n$ for NSA, adNSA, $\sigma$-adNSA and u-adNSA.
For adNSA, $\sigma$-adNSA, u-adNSA, MLSA, adMLSA, $\sigma$-adMLSA and u-adMLSA, we adopt the framework \eqref{assumption:finite:Lp:moment:bias} with the exponents $p_\star=8$ (fine-tuned on a grid to reduce RMSE and execution time) and $\delta=0.05$, and the geometric step $M=2$.
As suggested in Section~\ref{heuristics}, we set $\theta=\frac{p_\star-2}{p_\star+2}$ and $r=1+\frac1\theta$.
For every prescribed accuracy $\varepsilon\in\{\frac1{2},2,4,8,15\}$, we tune $h_0$ (governing the number of levels $L$ for MLSA, $\ell_\mathrm{ad}$ for adNSA, $\sigma$-adNSA and u-adNSA, and $L_\mathrm{ad}$ for adMLSA, $\sigma$-adMLSA and u-adMLSA) and the learning rate $(\gamma_n)_{n\geq1}$ for MLSA, adMLSA, $\sigma$-adMLSA and u-adMLSA on suitable grids.
Table~\ref{tbl:swap:mlsa} lists these parametrizations by prescribed accuracy.
The iteration amounts scaling factor $C$ and adaptive refinement confidence constant $C_\mathrm{ad}$ are tuned on grids.
We retain $C=2$ for adMLSA, $\sigma$-adMLSA and u-adMLSA, $C_\mathrm{ad}=30$ for adMLSA and u-adMLSA, $C=2$ for adNSA, $\sigma$-adNSA and u-adNSA, and $C_\mathrm{ad}=300$ for adNSA and u-adNSA.
We eventually set the critical value $C_p=3$ for $\sigma$-adNSA and $\sigma$-adMLSA.

\begin{table}[H]
\centerline{
\begin{tabular}{|c|c|c|c|c|c|c|}
\hline
$\varepsilon$ & $h_0$ & $\ell_\mathrm{ad}$ \\
\specialrule{0.1em}{0em}{0em}
$\frac{1}{32}$ & $\frac{1}{8}$ & $2$ \\
\hline
$\frac{1}{64}$ & $\frac{1}{16}$ & $2$ \\
\hline
$\frac{1}{128}$ & $\frac{1}{16}$ & $2$ \\
\hline
$\frac{1}{256}$ & $\frac{1}{16}$ & $3$ \\
\hline
$\frac{1}{512}$ & $\frac{1}{16}$ & $4$ \\
\hline
\end{tabular}
\hfill
\begin{tabular}{|c|c|c|c|c|c|c|}
\hline
$\varepsilon$ & $h_0$ & $L$ & $\gamma_n$ \\
\specialrule{0.1em}{0em}{0em}
$\frac{1}{32}$ & $\frac{1}{8}$ & $2$ & $\frac{6}{10+n}$ \\
\hline
$\frac{1}{64}$ & $\frac{1}{16}$ & $2$ & $\frac{20}{500+n}$ \\
\hline
$\frac{1}{128}$ & $\frac{1}{16}$ & $3$ & $\frac{21}{10^3+n}$ \\
\hline
$\frac{1}{256}$ & $\frac{1}{16}$ & $4$ & $\frac{20}{2\times10^3+n}$ \\
\hline
$\frac{1}{512}$ & $\frac{1}{16}$ & $5$ & $\frac{21}{3\times10^3+n}$ \\
\hline
\end{tabular}
\hfill
\begin{tabular}{|c|c|c|c|c|c|c|}
\hline
$\varepsilon$ & $h_0$ & $L_\mathrm{ad}$ & $\gamma_n$ \\
\specialrule{0.1em}{0em}{0em}
$\frac{1}{32}$ & $\frac{1}{8}$ & $2$ & $\frac{6}{10+n}$ \\
\hline
$\frac{1}{64}$ & $\frac{1}{16}$ & $2$ & $\frac{20}{500+n}$ \\
\hline
$\frac{1}{128}$ & $\frac{1}{16}$ & $2$ & $\frac{21}{10^3+n}$ \\
\hline
$\frac{1}{256}$ & $\frac{1}{16}$ & $3$ & $\frac{20}{2\times10^3+n}$ \\
\hline
$\frac{1}{512}$ & $\frac{1}{16}$ & $4$ & $\frac{21}{3\times10^3+n}$ \\
\hline
\end{tabular}
}
\caption{Parametrizations of adNSA and $\sigma$-adNSA (left), MLSA (center) and adMLSA and $\sigma$-adMLSA (right), by prescribed accuracy.}
\label{tbl:swap:mlsa}
\end{table}

The joint evolution of the RMSE and average execution time over $200$ runs for each SA scheme, for an accuracy $\varepsilon$ looping through $\{\frac1{2},2,4,8,15\}$, are plotted on a logarithmic scale on Figure~\ref{fig:swap}.
Since $\xi^0_\star\approx219.64$, these accuracies coincide with accuracy rates in the tune of $\frac{1/2}{219.64}\approx 6.83\%$ through $ \frac{15}{219.64}\approx 0.23\%$. The prescribed accuracies $\varepsilon$ have been adjusted by a factor of $C'=\frac1{450}$  into $C'\varepsilon$ before using them in the different SA schemes.
Figure~\ref{fig:swap:eps} showcases the average execution times against the prescribed accuracies.
Table~\ref{tbl:swap:slopes} reports the regressed slopes on these curves as depicted in dashed lines on Figures~\ref{fig:swap} and~\ref{fig:swap:eps}.

\begin{figure}[H]
\includegraphics[width=\textwidth]{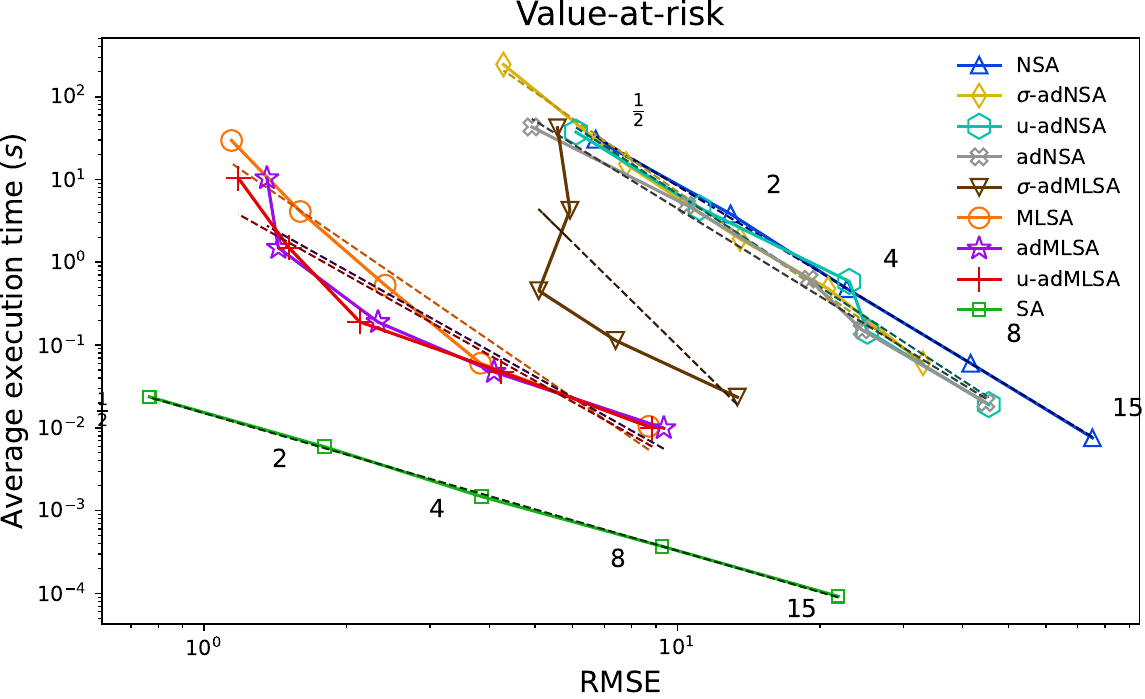}
\caption{Performance comparison of the different SA schemes.}
\label{fig:swap}
\end{figure}

\begin{figure}[H]
\includegraphics[width=\textwidth]{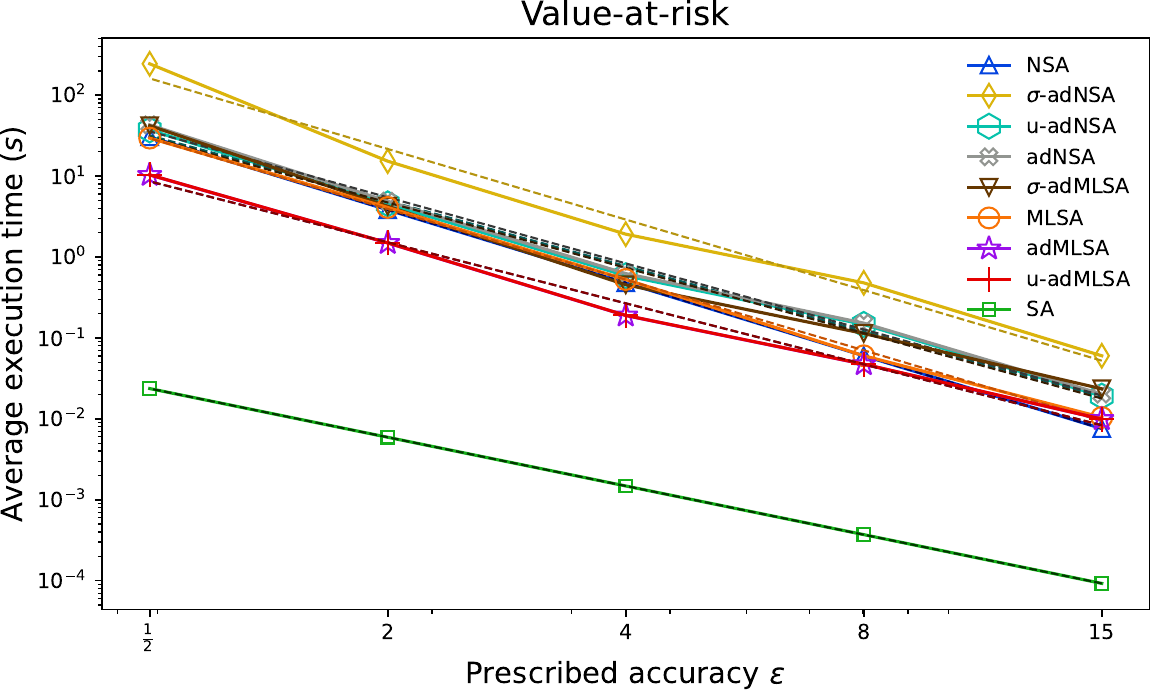}
\caption{Complexity comparison of the different SA schemes.}
\label{fig:swap:eps}
\end{figure}

\begin{table}[H]
\centerline{
\begin{tabular}{|c!{\vrule width 1pt}c|c|c|c|c|c|c|c|c|c|}
\hline
SA scheme
& NSA
& $\sigma$-adNSA
& u-adNSA
& adNSA
& $\sigma$-adMLSA
& MLSA
& adMLSA
& u-adMLSA
& SA \\
\specialrule{0.1em}{0em}{0em}
RMSE
& $-3.46$
& $-3.98$
& $-3.74$
& $-3.52$
& $-5.62$
& $-3.92$
& $-3.21$
& $-3.21$
& $-1.66$
\\\hline
$\varepsilon$
& $-3.00$
& $-2.90$
& $-2.68$
& $-2.71$
& $-2.68$
& $-2.90$
& $-2.50$
& $-2.50$
& $-2.00$
\\\hline
\end{tabular}
}
\caption{Reported slopes from Figures~\ref{fig:swap} and~\ref{fig:swap:eps}.}
\label{tbl:swap:slopes}
\end{table}

We refer to \cite[Section~6.3]{CFL23} for extended comments on SA, NSA and MLSA.

Some similar comments to the previous case study are applicable here.
Concerning the nested schemes, adNSA, $\sigma$-adNSA and u-adNSA score a slight speed-up over NSA.
As for the multilevel schemes, only adMLSA and u-adMLSA score a speed-up with respect to MLSA.
$\sigma$-adMLSA seems to score worse than MLSA.
This may be attributed to the computational overhead coming from the recomputation of the constant $C_\mathrm{ad}$ at each iteration.
The performance margins between adNSA, $\sigma$-adNSA and u-adNSA on the one hand and between adMLSA and u-adMLSA on the other hand remain slim.
%adNSA and adMLSA remain preferable as they do not require recomputing $C_\mathrm{ad}$ once it has been fine-tuned on a couple of prescribed accuracies.

Although further fine-tuning may be required, the fitted slopes already indicate that the adaptive multilevel schemes adMLSA and u-adMLSA are closer than MLSA to the desired quadratic complexity.

\section{Conclusion}

For a prescribed accuracy $\varepsilon>0$, the canonical $\varepsilon^{-2}$ multilevel complexity order is retrieved, up to a logarithmic factor, for an MLSA scheme with a Heaviside-type update function. This is made possible by adopting an adaptive refinement strategy on the samples driving the multilevel algorithm's inner nested schemes.
Our strategy allows us to close of the performance gap existing between the nested MLSA and the unbiased Robbins-Monro schemes for Heaviside-type update functions.
The performance gain achieved by adaptive MLSA compared to its regular counterpart is significant in practice, attaining a two-fold speed-up in certain cases and expected to increase exponentially with smaller prescribed accuracies.
Potential future lines of research could involve applying a Polyak-Ruppert averaging \cite{amlsa} to adMLSA to promote greater numerical stability, or extending our findings on adaptive refinement to triply nested X-value adjustment estimations \cite{GHS23}, or exploring the effect of sampling the risk factors $Y$ and $Z$ from an SDE trajectory \cite{RG12}.

\paragraph{Acknowledgement.}
The research of S.~Cr\'epey has benefited from the support of 
(by alphabetical order): \textit{Chair Capital Markets Tomorrow: Modeling and Computational Issues} under the aegis of the Institut Europlace de Finance, a joint initiative of Laboratoire de Probabilit\'es, Statistique et Modélisation (LPSM) / Université Paris Cit\'e and Cr\'edit Agricole CIB; \textit{Chair Futures of Quantitative Finance}, a partnership between LPSM at Université Paris Cit\'e, CERMICS at École nationale des ponts et chauss\'ees, and BNP Paribas Global Markets; \textit{Chair Stress Test, RISK Management and Financial Steering}, led by the French Ecole polytechnique and its Foundation and sponsored by BNP Paribas.

The research of N.~Frikha has benefited from the support of the Institut Europlace de Finance.

\appendix

\section{Auxiliary Results}
\label{aux}

Similarly to \cite[Sect.~2.1]{CG21} and \cite[Appendix~C]{CFL23}, we define, for $\mu\in\mathbb{R}_+$, $h\in\overline{\mathcal{H}}$ and $q\in\mathbb{N}^*$, the Lyapunov function
$\mathcal{L}^\mu_{h,q}:\mathbb{R}\to\mathbb{R}_+$ given by
\begin{equation}\label{eq:Lq}
\mathcal{L}^\mu_{h,q}(\xi)=\big(V_h(\xi)-V_h(\xi^h_\star)\big)^q\exp\Big(\mu\big(V_h(\xi)-V_h(\xi^h_\star)\big)\Big),
\quad\xi\in\mathbb{R}.
\end{equation}
The following lemma states some important properties of $\mathcal{L}^\mu_{h,q}$.

\begin{lemma}[{\cite[Lemma~C.1]{CFL23}}]\label{lmm:lyapunov}
Denote $k_\alpha=1\vee\frac{\alpha}{1-\alpha}$, $\mu_{h,q}= q\|V_h''\|_\infty$, and $\bar{\mathcal{L}}_{h,q}=\mathcal{L}^{\mu_{h,q}}_{h,q}$, $h\in\overline{\mathcal{H}}$, $q\in\mathbb{N}^*$.
Under Assumption \ref{asp:misc}, for all $\mu \in\mathbb{R}_+$, $h\in\overline{\mathcal{H}}$ and $q\in\mathbb{N}^*$,
\begin{enumerate}[(i)]
\item\label{lmm:lyapunov-i}
$\mathcal{L}^\mu_{h,q}$ is twice continuously differentiable on $\mathbb{R}$ and
\begin{equation}
\label{relation:first:deriv:lyapunov}
(\mathcal{L}^\mu_{h,q})'(\xi)=qV_h'(\xi) \mathcal{L}^\mu_{h,q-1}(\xi)+\mu V_{h}'(\xi) \mathcal{L}^\mu_{h,q}(\xi),
\quad\xi\in\mathbb{R}.
\end{equation}
\item\label{lmm:lyapunov-i-bis}
for all $\xi\in\mathbb{R}$,
\begin{equation*}
\bar{\mathcal{L}}_{h,q}(\xi)\leq k_\alpha^q|\xi-\xi^h_\star|^q\exp\bigg(\frac{qk_\alpha}{1-\alpha} \sup_{h'\in\overline{\mathcal{H}}}\|f_{X^{h'}}\|_\infty |\xi-\xi^h_\star|\bigg).
\end{equation*}
\item\label{lmm:lyapunov-ii}
for all $\xi\in\mathbb{R}$,
\begin{equation*}
V_h'(\xi)(\mathcal{L}^\mu_{h,q})'(\xi)\geq\lambda^\mu_{h,q}\mathcal{L}^\mu_{h,q}(\xi),
\quad\text{where}\quad
\lambda^\mu_{h,q}:=\frac38qV_h''(\xi^h_\star)\wedge \mu \frac{V_h''(\xi^h_\star)^4}{4[V_h'']_{\text{\rm Lip}}^2}.
\end{equation*}
Furthermore, for all $q'\geq q$,
$\inf _{h\in \overline{\mathcal{H}}}{\lambda^{\mu_{h,q'}}_{h,q}}>0$.
In particular, letting $\bar\lambda_{h,q}:=\lambda^{\mu_{h,q}}_{h,q}$, one has $\inf_{h\in\overline{\mathcal{H}}}\bar{\lambda}_{h,q}>0$.

\item\label{lmm:lyapunov-iii}
for all $\xi\in\mathbb{R}$,
\begin{equation*}
|(\mathcal{L}^\mu_{h,q})''(\xi)|
\leq\chi^\mu_{h,q}\big(\mathcal{L}^\mu_{h,q}(\xi)+\mathcal{L}^\mu_{h,q-1}(\xi)\big),
\end{equation*}
where
\begin{equation*}
\chi^\mu_{h,q}
:=(q\vee\mu )\|V_h''\|_\infty+k_\alpha^2\mu (\mu \vee2)+q\big(2\mu \vee(q-1)\big)
\bigg(\frac{3k_\alpha^2[V_h'']_{\text{\rm Lip}}^2}{V_h''(\xi^h_\star)^3}\vee\frac{3\|V_h''\|_\infty^2}{V_h''(\xi^h_\star)}\bigg).
\end{equation*}
Besides, for all $q'\geq q$,
$|\lambda^{\mu_{h,q'}}_{h,q}|^{2}\leq\chi^{\mu_{h,q'}}_{h,q}$, and
$\sup _{h\in \overline{\mathcal{H}}}{\chi^{\mu_{h,q'}}_{h,q}}<\infty$.
In particular, introducing $\bar{\chi}_{h,q}:=\chi^{\mu_{h,q}}_{h,q}$, then $|\bar{\lambda}_{h,q}|^2\leq\bar{\chi}_{h,q}$, and $\sup _{h\in \overline{\mathcal{H}}}{\bar{\chi}_{h,q}}<\infty$.

\item\label{lmm:lyapunov-iv}
for all $\mu'\in\mathbb{R}_+$ and all $\xi\in\mathbb{R}$,
\begin{equation*}
(\xi-\xi^h_\star)^{2q}
\leq\kappa_{h,q}\big(\mathcal{L}^\mu_{h,q}(\xi)+\mathcal{L}^{\mu'}_{h,2q}(\xi)\big),
\quad\text{where}\quad
\kappa_{h,q}
:=\frac{3^q}{V_h''(\xi^h_\star)^q}\vee\frac{3^{2q}[V_h'']_{\text{\rm Lip}}^{2q}}{V_h''(\xi^h_\star)^{4q}}.
\end{equation*}
Moreover, $\sup_{h\in\overline{\mathcal{H}}}{\kappa_{h,q}}<\infty$.
\end{enumerate}
\end{lemma}

For $\omega\in\mathbb{R}$, let
\begin{equation}\label{eq:phi}
\Phi_\omega(t):=
   \begin{cases}
      \omega^{-1}(t^\omega-1),
         &\text{if $\omega\neq0$,}\\
      \ln{t}+\gamma^\mathcal{E},
         &\text{if $\omega=0$,}
   \end{cases}
   \quad t\in\mathbb{R}^*_+,
\end{equation}
where $\gamma^\mathcal{E}$ is the Euler-Mascheroni constant.
The next result generalizes \cite[Lemma~C.2]{CFL23}.

\begin{lemma}\label{lmm:gamma}
Let $\gamma_n=\gamma_1n^{-\beta}$, $n\in\mathbb{N}^*$, with $\gamma_1>0$ and $\beta\in(0,1]$.
Using the convention $\prod_\varnothing=1$, define
\begin{equation*}
\Pi_{k+1:n}^\delta:=\prod_{j=k+1}^n(1-\lambda\gamma_j+\zeta\gamma_j^{1+\delta}),
\quad k\in[\![0,n]\!],
\quad n\in\mathbb{N}^*,
\end{equation*}
where $\delta,\lambda>0$ and $\zeta\geq0$ are constants independent of $k$ and $n$.
Assume that $1-\lambda\gamma_n+\zeta\gamma_n^{1+\delta}>0$, $n\in\mathbb{N}^*$.
Then, for all $n\in\mathbb{N}^*$,

\begin{enumerate}[(i)]
\item
for all $k\in[\![0,n-1]\!]$,
\begin{equation*}
\Pi_{k+1:n}^\delta\leq
\begin{cases}
\exp\{-\lambda\gamma_1(\Phi_{1-\beta}(n+1)-\Phi_{1-\beta}(k+1))\}\\
\;\times\exp\{2^{\beta(1+\delta)}\zeta\gamma_1^{1+\delta}(\Phi_{1-\beta(1+\delta)}(n+1)-\Phi_{1-\beta(1+\delta)}(k+1))\}\\
\hphantom{\exp(\zeta(1+\frac1\delta)\gamma_1^{1+\delta}+\frac{\lambda\gamma_1}2)
\frac{(k+1)^{\lambda\gamma_1}}{(n+1)^{\lambda\gamma_1}}
\qquad\qquad\qquad\qquad}
\text{if $\beta\in(0,1)$,}\\
\exp(\zeta(1+\frac1\delta)\gamma_1^{1+\delta}+\frac{\lambda\gamma_1}2)
\frac{(k+1)^{\lambda\gamma_1}}{(n+1)^{\lambda\gamma_1}}
\qquad\qquad\qquad\qquad
\text{if $\beta=1$.}
\end{cases}
\end{equation*}

\item
for all $p\geq1$,
\begin{equation*}
\begin{aligned}
&\sum_{k=1}^n\gamma_k^p\Pi_{k+1:n}^\delta\leq\\
&\begin{cases}
\zeta^{-1}(\gamma_1^{p-(1+\delta)}\vee(\frac\lambda{2\zeta})^\frac{p-(1+\delta)}\delta)\exp(2^{1+\beta(1+\delta)}\zeta\gamma_1^{1+\delta}\Phi_{1-\beta(1+\delta)}(n+1))\\
\hphantom{\zeta^{-1}(\gamma_1^{p-(1+\delta)}\vee(\frac\lambda{2\zeta})^\frac{p-(1+\delta)}\delta)}
\times\exp(-\frac{\lambda\gamma_1}2\Phi_{1-\beta}(n+1))\\
+2^\beta\gamma_1^p\exp(-2^{-(\beta+2)}\lambda\gamma_1n^{1-\beta})\Phi_{1-\beta}(n+1)+\frac{2^{1+(p-1)\beta}\gamma_1^{p-1}}{\lambda n^{(p-1)\beta}}
&\text{if $\beta\in(0,1)$,}\\
\gamma_1^p\exp\{\zeta(1+\frac1\delta)\gamma_1^{1+\delta}+\frac{\lambda\gamma_1}2+((\lambda\gamma_1)\vee p)\ln{(2)}\}
\frac{\Phi_{\lambda\gamma_1-(p-1)}(n+1)}{(n+1)^{\lambda\gamma_1}}
&\text{if $\beta=1$.}
\end{cases}
\end{aligned}
\end{equation*}

\end{enumerate}
\end{lemma}

\begin{proof}
We follow closely the strategy employed in the proof of~\cite[Lemma~C.2]{CFL23}.
\\

Let $n\in\mathbb{N}^*$, $k\in[\![0,n]\!]$ and $p\geq1$.
Using that $1+x\leq\e^x$, $x\in\mathbb{R}$,
\begin{equation}
\label{eq:exp<1+x}
\Pi_{k+1:n}^\delta\leq\exp\bigg(-\lambda\sum_{j=k+1}^n\gamma_j\bigg)\exp\bigg(\zeta\sum_{j=k+1}^n\gamma_j^{1+\delta}\bigg),
\end{equation}
with the convention $\sum_\varnothing=0$.
\\

\noindent
\emph{Step~1. Case $\beta=1$.}
\newline
\emph{Step~1.1. Inequality on $\Pi^\delta_{k+1:n}$.}
\newline
Denoting $\psi$ is the digamma function (c.f.~Abramowitz et al.~\cite[Section~6.3]{ASR88}), we have
\begin{equation*}
\sum_{j=k+1}^n\gamma_j=\gamma_1\big(\psi(n+1)-\psi(k+1)\big).
\end{equation*}
We recall that $\psi$ satisfies (see e.g.~Alzer \cite{Alz97})
\begin{equation*}
\ln{x}-\frac1x\leq\psi(x)\leq\ln{x}-\frac1{2x},
\quad x>0.
\end{equation*}
Thus
\begin{equation*}
\sum_{j=k+1}^n\gamma_j
\geq\gamma_1\ln\bigg(\frac{n+1}{k+1}\bigg)-\frac{\gamma_1}{n+1}+\frac{\gamma_1}{2(k+1)}
\geq\gamma_1\ln\bigg(\frac{n+1}{k+1}\bigg)-\frac{\gamma_1}2.
\end{equation*}
Besides, via a series-integral comparison,
\begin{equation*}
\sum_{j=1}^\infty\gamma_j^{1+\delta}
\leq\gamma_1^{1+\delta}\Big(1+\frac1\delta\Big).
\end{equation*}
Thus, recalling \eqref{eq:exp<1+x},
\begin{equation}
\label{eq:Pi:delta<}
\Pi_{k+1:n}^\delta\leq\exp\bigg(\zeta\Big(1+\frac1\delta\Big)\gamma_1^{1+\delta}+\frac{\lambda\gamma_1}2\bigg)\frac{(k+1)^{\lambda\gamma_1}}{(n+1)^{\lambda\gamma_1}}.
\end{equation}

\noindent
\emph{Step~1.2. Inequality on $\sum_{k=1}^n\gamma_k^p\Pi_{k+1:n}^\delta$.}
\newline
Via the upper bound in \eqref{eq:Pi:delta<},
\begin{equation*}
\sum_{k=1}^n\gamma_k^p\Pi_{k+1:n}^\delta
\leq\frac{\gamma_1^p}{(n+1)^{\lambda\gamma_1}}\exp\bigg(\zeta\Big(1+\frac1\delta\Big)\gamma_1^{1+\delta}+\frac{\lambda\gamma_1}2\bigg)\sum_{k=1}^n\frac{(k+1)^{\lambda\gamma_1}}{k^p}.
\end{equation*}
If $\lambda\gamma_1\leq p$, then
\begin{equation*}
\sum_{k=1}^n\frac{(k+1)^{\lambda\gamma_1}}{k^p}
\leq2^p\sum_{k=1}^n\frac1{(k+1)^{p-\lambda\gamma_1}}
\leq2^p\Phi_{\lambda\gamma_1-p+1}(n+1).
\end{equation*}
Otherwise, if $\lambda\gamma_1>p$, then
\begin{equation*}
\sum_{k=1}^n\frac{(k+1)^{\lambda\gamma_1}}{k^p}
\leq2^{\lambda\gamma_1}\sum_{k=1}^nk^{\lambda\gamma_1-p}
\leq2^{\lambda\gamma_1}\Phi_{\lambda\gamma_1-p+1}(n+1).
\end{equation*}
All in all,
\begin{equation*}
\sum_{k=1}^n\gamma_k^p\Pi_{k+1:n}^\delta
\leq\gamma_1^p\exp\bigg(\zeta\Big(1+\frac1\delta\Big)\gamma_1^{1+\delta}+\frac{\lambda\gamma_1}2+\big((\lambda\gamma_1)\vee p\big)\ln{(2)}\bigg)\frac{\Phi_{\lambda\gamma_1-p+1}(n+1)}{(n+1)^{\lambda\gamma_1}}.
\end{equation*}

\noindent
\emph{Step~2. Case $\beta\in(0,1)$.}
\newline
\emph{Step~2.1. Inequality on $\Pi_{k+1:n}^\delta$.}
\newline
Via a series-integral comparison,
\begin{equation}
\label{eq:sum:gamma>}
\sum_{j=k+1}^n\gamma_j
\geq\gamma_1\big(\Phi_{1-\beta}(n+1)-\Phi_{1-\beta}(k+1)\big),
\end{equation}
and also
\begin{equation}
\label{eq:sum:gamma<}
\sum_{j=k+1}^n\gamma_j^{1+\delta}
\leq2^{\beta(1+\delta)}\gamma_1^{1+\delta}\big(\Phi_{1-\beta(1+\delta)}(n+1)-\Phi_{1-\beta(1+\delta)}(k+1)\big).
\end{equation}
Therefore, recalling \eqref{eq:exp<1+x},
\begin{equation*}
\begin{aligned}
\Pi_{k+1:n}^\delta
&\leq\exp\Big(-\lambda\gamma_1\big(\Phi_{1-\beta}(n+1)-\Phi_{1-\beta}(k+1)\big)\Big)\\
&\hphantom{\leq}\times
\exp\Big(2^{\beta(1+\delta)}\zeta\gamma_1^{1+\delta}\big(\Phi_{1-\beta(1+\delta)}(n+1)-\Phi_{1-\beta(1+\delta)}(k+1)\big)\Big).
\end{aligned}
\end{equation*}

\noindent
\emph{Step~2.2. Inequality on $\sum_{k=1}^n\gamma_k^p\Pi_{k+1:n}^\delta$.}
\newline
Let
\begin{equation*}
n_0:=\inf{\Big\{n\in\mathbb{N}^*:\gamma_n\leq\Big(\frac\lambda{2\zeta}\Big)^\frac1\delta\Big\}}-1.
\end{equation*}
Note that
\begin{equation*}
1-\lambda\gamma_n+\zeta\gamma_n^{1+\delta}\leq1-\frac{\lambda}2\gamma_n,\quad n\geq n_0+1,
\end{equation*}
and that
\begin{equation}
\label{eq:n0<}
n_0
=\bigg\lceil\bigg(\gamma_1\Big(\frac{2\zeta}\lambda\Big)^\frac1\delta\bigg)^\frac1\beta\bigg\rceil-1
<\bigg(\gamma_1\Big(\frac{2\zeta}\lambda\Big)^\frac1\delta\bigg)^\frac1\beta.
\end{equation}
Now,
\begin{equation}
\label{eq:A+B}
\begin{aligned}
\sum_{k=1}^n\gamma_k^p\Pi_{k+1:n}^\delta
&=\sum_{k=1}^{n_0\wedge n}\gamma_k^p\Pi_{k+1:n_0\wedge n}^\delta\Pi_{n_0\wedge n+1:n}^\delta+\sum_{k=n_0\wedge n+1}^n\gamma_k^p\Pi_{k+1:n}^\delta\\
&\leq\bigg(\sum_{k=1}^{n_0\wedge n}\gamma_k^p\prod_{j=k+1}^{n_0\wedge n}(1+\zeta\gamma_j^{1+\delta})\bigg)\prod_{j=n_0\wedge n+1}^n\bigg(1-\frac{\lambda}2\gamma_j\bigg)\\
&\hphantom{\leq}
+\sum_{k=1}^n\gamma_k^p\prod_{j=k+1}^n\bigg(1-\frac{\lambda}2\gamma_j\bigg).
\end{aligned}
\end{equation}
Let us study each term in the upper bound separately.

On the one hand,
\begin{equation}
\label{eq:term:1:factor:1}
\begin{aligned}
\sum_{k=1}^{n_0\wedge n}\gamma_k^p&\prod_{j=k+1}^{n_0\wedge n}(1+\zeta\gamma_j^{1+\delta})\\
&\leq(\gamma_1^{p-(1+\delta)}\vee\gamma_{n_0}^{p-(1+\delta)})\sum_{k=1}^{n_0\wedge n}\gamma_k^{1+\delta}\prod_{j=k+1}^{n_0\wedge n}(1+\zeta\gamma_j^{1+\delta}).
\end{aligned}
\end{equation}
If $p\geq1+\delta$, then, given the decreasing monotonicity of $(\gamma_n)_{n\geq1}$,
\begin{equation}
\label{eq:case:1}
\gamma_1^{p-(1+\delta)}\vee\gamma_{n_0}^{p-(1+\delta)}\leq\gamma_1^{p-(1+\delta)}.
\end{equation}
Otherwise, if $p<1+\delta$, then, taking into account \eqref{eq:n0<},
\begin{equation}
\label{eq:case:2}
\gamma_1^{p-(1+\delta)}\vee\gamma_{n_0}^{p-(1+\delta)}
\leq\bigg(\frac{n_0^\beta}{\gamma_1}\bigg)^{(1+\delta)-p}
<\bigg(\frac{2\zeta}\lambda\bigg)^\frac{(1+\delta)-p}\delta.
\end{equation}
Besides, note that, by telescopic summation,
\begin{equation}
\label{eq:simple=complex}
\begin{aligned}
\sum_{k=1}^{n_0\wedge n}\gamma_k^{1+\delta}&\prod_{j=k+1}^{n_0\wedge n}(1+\zeta\gamma_j^{1+\delta})\\
&=\frac1\zeta\sum_{k=1}^{n_0\wedge n}\bigg(\prod_{j=k}^{n_0\wedge n}(1+\zeta\gamma_j^{1+\delta})-\prod_{j=k+1}^{n_0\wedge n}(1+\zeta\gamma_j^{1+\delta})\bigg),\\
&\leq\frac1\zeta\prod_{j=1}^{n_0\wedge n}(1+\zeta\gamma_j^{1+\delta}).
\end{aligned}
\end{equation}
Thus, coming back to \eqref{eq:term:1:factor:1}, using \eqref{eq:case:1}, \eqref{eq:case:2}, \eqref{eq:simple=complex}, telescopic summation and the inequality $1+x\leq\e^x$, $x\in\mathbb{R}$,
\begin{equation*}
\begin{aligned}
\sum_{k=1}^{n_0\wedge n}\gamma_k^p&\prod_{j=k+1}^{n_0\wedge n}(1+\zeta\gamma_j^{1+\delta})\\
&\leq\zeta^{-1}\bigg(\gamma_1^{p-(1+\delta)}\vee\Big(\frac\lambda{2\zeta}\Big)^\frac{p-(1+\delta)}\delta\bigg)
\prod_{j=1}^{n_0\wedge n}(1+\zeta\gamma_j^{1+\delta})\\
&\leq\zeta^{-1}\bigg(\gamma_1^{p-(1+\delta)}\vee\Big(\frac\lambda{2\zeta}\Big)^\frac{p-(1+\delta)}\delta\bigg)
\exp\bigg(\zeta\sum_{j=1}^{n_0\wedge n}\gamma_j^{1+\delta}\bigg).
\end{aligned}
\end{equation*}
Moreover, note that, for $j\leq n_0$, $\gamma_j\geq(\frac\lambda{2\zeta})^\frac1\delta$, so that $\frac\lambda2\gamma_j\leq\zeta\gamma_j^{1+\delta}$.
Hence, using the inequality $1+x\leq\e^x$, $x\in\mathbb{R}$,
\begin{equation*}
\begin{aligned}
\prod_{j=n_0\wedge n+1}^n\bigg(1-\frac{\lambda}2\gamma_j\bigg)
&\leq\exp\bigg(-\frac{\lambda}2\sum_{j=1}^n\gamma_j\bigg)\exp\bigg(\frac{\lambda}2\sum_{j=1}^{n_0\wedge n}\gamma_j\bigg)\\
&\leq\exp\bigg(-\frac{\lambda}2\sum_{j=1}^n\gamma_j\bigg)\exp\bigg(\zeta\sum_{j=1}^{n_0\wedge n}\gamma^{1+\delta}_j\bigg).
\end{aligned}
\end{equation*}
Combining the two preceding upper bounds and using \eqref{eq:sum:gamma>} and \eqref{eq:sum:gamma<} with $k=0$,
\begin{equation}
\label{eq:A}
\begin{aligned}
\bigg(\sum_{k=1}^{n_0\wedge n}\gamma_k^p&\prod_{j=k+1}^{n_0\wedge n}(1+\zeta\gamma_j^{1+\delta})\bigg)\prod_{j=n_0\wedge n+1}^n\bigg(1-\frac{\lambda}2\gamma_j\bigg)\\
&\leq\zeta^{-1}\bigg(\gamma_1^{p-(1+\delta)}\vee\Big(\frac\lambda{2\zeta}\Big)^\frac{p-(1+\delta)}\delta\bigg)
\exp\bigg(2\zeta\sum_{j=1}^{n_0\wedge n}\gamma_j^{1+\delta}\bigg)\exp\bigg(-\frac{\lambda}2\sum_{j=1}^n\gamma_j\bigg)\\
&\leq\zeta^{-1}\bigg(\gamma_1^{p-(1+\delta)}\vee\Big(\frac\lambda{2\zeta}\Big)^\frac{p-(1+\delta)}\delta\bigg)
\exp\big(2^{1+\beta(1+\delta)}\zeta\gamma_1^{1+\delta}\Phi_{1-\beta(1+\delta)}(n+1)\big)\\
&\hphantom{\leq\zeta^{-1}\bigg(\gamma_1^{p-(1+\delta)}\vee\Big(\frac\lambda{2\zeta}\Big)^\frac{p-(1+\delta)}\delta\bigg)}\times
\exp\bigg(-\frac{\lambda\gamma_1}2\Phi_{1-\beta}(n+1)\bigg).
\end{aligned}
\end{equation}

On the other hand, for $m\in[\![1,n]\!]$,
\begin{equation}
\label{eq:term:2}
\sum_{k=1}^n\gamma_k^p\prod_{j=k+1}^n\bigg(1-\frac{\lambda}2\gamma_j\bigg)
=\sum_{k=1}^m\gamma_k^p\prod_{j=k+1}^n\bigg(1-\frac{\lambda}2\gamma_j\bigg)
+\sum_{k=m+1}^n\gamma_k^p\prod_{j=k+1}^n\bigg(1-\frac{\lambda}2\gamma_j\bigg).
\end{equation}
By the inequality $1+x\leq\e^x$, $x\in\mathbb{R}$ and the decreasing monotony of $(\gamma_n)_{n\geq1}$,
\begin{equation}
\label{eq:prev:-2}
\begin{aligned}
\sum_{k=1}^m\gamma_k^p\prod_{j=k+1}^n\bigg(1-\frac{\lambda}2\gamma_j\bigg)
&\leq\prod_{j=m+1}^n\bigg(1-\frac{\lambda}2\gamma_j\bigg)\sum_{k=1}^n\gamma_k^p\\
&\leq\exp\bigg(-\frac{\lambda}2\sum_{j=m+1}^n\gamma_j\bigg)\gamma_1^{p-1}\sum_{k=1}^n\gamma_k.
\end{aligned}
\end{equation}
Using series-integral comparisons,
\begin{equation}
\label{eq:prev:-1}
2^{-\beta}\gamma_1\big(\Phi_{1-\beta}(n)-\Phi_{1-\beta}(m)\big)
\leq\sum_{j=m+1}^n\gamma_j
\leq\sum_{j=1}^n\gamma_j
\leq2^\beta\gamma_1\Phi_{1-\beta}(n+1).
\end{equation}
Hence, combining \eqref{eq:prev:-2} and \eqref{eq:prev:-1},
\begin{equation}
\label{eq:first:term}
\sum_{k=1}^m\gamma_k^p\prod_{j=k+1}^n\bigg(1-\frac{\lambda}2\gamma_j\bigg)
\leq2^\beta\gamma_1^p\exp{\Big(-2^{-(\beta+1)}\lambda\gamma_1\big(\Phi_{1-\beta}(n)-\Phi_{1-\beta}(m)\big)\Big)}\Phi_{1-\beta}(n+1).
\end{equation}
Besides, since $(\gamma_n)_{n\geq1}$ is decreasing,
\begin{equation}
\label{eq:prev:-2:bis}
\sum_{k=m+1}^n\gamma_k^p\prod_{j=k+1}^n\bigg(1-\frac{\lambda}2\gamma_j\bigg)
\leq\gamma_{m+1}^{p-1}\sum_{k=m+1}^n\gamma_k\prod_{j=k+1}^n\bigg(1-\frac{\lambda}2\gamma_j\bigg).
\end{equation}
By telescopic summation,
\begin{equation}
\label{eq:simple=complex:bis}
\sum_{k=m+1}^n\gamma_k\prod_{j=k+1}^n\bigg(1-\frac{\lambda}2\gamma_j\bigg)
=\frac2\lambda\sum_{k=m+1}^n\bigg(\prod_{j=k+1}^n\bigg(1-\frac{\lambda}2\gamma_j\bigg)-\prod_{j=k}^n\bigg(1-\frac{\lambda}2\gamma_j\bigg)\bigg)
\leq\frac2\lambda.
\end{equation}
Thus, gathering \eqref{eq:prev:-2:bis} and \eqref{eq:simple=complex:bis},
\begin{equation}
\label{eq:second:term}
\sum_{k=m+1}^n\gamma_k^p\prod_{j=k+1}^n\bigg(1-\frac{\lambda}2\gamma_j\bigg)
\leq\frac{2\gamma_{m+1}^{p-1}}\lambda
=\frac{2\gamma_1^{p-1}}{\lambda(m+1)^{(p-1)\beta}}.
\end{equation}
Collecting the upper bounds from \eqref{eq:term:2}, \eqref{eq:first:term} and \eqref{eq:second:term},
\begin{equation*}
\begin{aligned}
\sum_{k=1}^n&\gamma_k^p\prod_{j=k+1}^n\bigg(1-\frac{\lambda}2\gamma_j\bigg)\\
&\leq2^\beta\gamma_1^p\exp{\Big(-2^{-(\beta+1)}\lambda\gamma_1\big(\Phi_{1-\beta}(n)-\Phi_{1-\beta}(m)\big)\Big)}\Phi_{1-\beta}(n+1)
+\frac{2\gamma_1^{p-1}}{\lambda(m+1)^{(p-1)\beta}}.
\end{aligned}
\end{equation*}
Taking $m=\lfloor\frac{n}2\rfloor$, and using that $2^\beta\leq1+\beta$, $\beta\in(0,1]$, one has
\begin{equation*}
\Phi_{1-\beta}(n)
-\Phi_{1-\beta}\Big(\Big\lfloor\frac{n}2\Big\rfloor\Big)
\geq\frac{n^{1-\beta}}2.
\end{equation*}
Thus
\begin{equation}
\label{eq:B}
\begin{aligned}
\sum_{k=1}^n\gamma_k^p&\prod_{j=k+1}^n\bigg(1-\frac{\lambda}2\gamma_j\bigg)\\
&\leq2^\beta\gamma_1^p\exp(-2^{-(\beta+2)}\lambda\gamma_1n^{1-\beta})\Phi_{1-\beta}(n+1)
   +\frac{2^{1+(p-1)\beta}\gamma_1^{p-1}}{\lambda n^{(p-1)\beta}}.
\end{aligned}
\end{equation}

Finally, injecting the upper bounds of \eqref{eq:A} and \eqref{eq:B} into \eqref{eq:A+B} concludes the proof.
\end{proof}

The proof of the following result is included within the preliminary steps of the proof of \cite[Proposition~5.2]{CFL23}, and is therefore omitted.

\begin{lemma}\label{lmm:aux}
\begin{enumerate}[(i)]
\item\label{lmm:aux:i}
Assume that there exists $p\geq1$ such that
\begin{equation*}
\mathbb{E}\big[\big|\varphi(Y, Z)-\mathbb{E}[\varphi(Y,Z)|Y]\big|^p\big]<\infty.
\end{equation*}
Then, for all $h,h'\in\overline{\mathcal{H}}$,
\begin{equation*}
\mathbb{E}[|X^h-X^{h'}|^p]\leq C|h-h'|^\frac{p}2,
\end{equation*}
where
\begin{equation*}
C=\sqrt2B_p\mathbb{E}\big[\big|\varphi(Y, Z)-\mathbb{E}[\varphi(Y, Z)|Y]\big|^p\big]^\frac1p,
\end{equation*}
with $B_p$ being a positive constant that depends only on $p$.
\item\label{lmm:aux:ii}
Assume that there exists $\mathfrak{g}>0$ such that, for all $u\in\mathbb{R}$,
\begin{equation*}
\mathbb{E}\Big[\exp\Big(u\big(\varphi(Y,Z)-\mathbb{E}[\varphi(Y,Z)|Y]\big)\Big)\Big|Y\Big]\leq\e^{\mathfrak{g}u^2}
\quad\Pas.
\end{equation*}
Then, for all $h,h'\in\overline{\mathcal{H}}$ and all $u\in\mathbb{R}$,
\begin{equation*}
\mathbb{E}\big[\exp\big(u(X^h-X^{h'})\big)\big|Y\big]
\leq\exp(\mathfrak{g}u^2|h-h'|)
\quad\text{$\Pas$.}
\end{equation*}
\end{enumerate}
\end{lemma}

We refer to Remarks~\ref{rmk:framework}(\ref{rmk:framework:p}) and~(\ref{rmk:framework:sub:gaussian}) for related comments.

\section{Proofs of the Convergence Results}
\label{proofs}

\begin{proof}[Proof of Lemma~\ref{lmm:local:strong:error:indicator:func-ii:bis}]
\label{prf:local:strong:error:indicator:func-ii:bis}

The proof is similar to \cite[Proposition~4.2(ii)]{CFL23}.
Using the tower law, one has
\begin{equation*}
\begin{aligned}
\mathbb{E}[|\mathds1_{\{X^{h'}>\xi\}}&-\mathds1_{\{X^h>\xi\}}|]\\
&=\mathbb{P}(X^{h'}\leq\xi<X^h)+\mathbb{P}(X^h\leq\xi<X^{h'})\\
&=\mathbb{P}\big(X^h+{(h')}^\frac12G_h^{h'}\leq\xi<X^h\big)
+\mathbb{P}\big(X^h\leq\xi<X^h+{(h')}^\frac12G_h^{h'}\big)\\
&=\mathbb{P}\big(X^h-{(h')}^\frac12(G_h^{h'})^-\leq\xi<X^h\big)
+\mathbb{P}\big(X^h\leq\xi<X^h+{(h')}^\frac12(G_h^{h'})^+\big)\\
&=\mathbb{E}\big[F_{X^h|G_h^{h'}}\big(\xi+(h')^\frac12(G_h^{h'})^-\big)-F_{X^h|G_h^{h'}}(\xi)\big]\\
&\qquad+\mathbb{E}\big[F_{X^h|G_h^{h'}}(\xi)-F_{X^h|G_h^{h'}}\big(\xi-(h')^\frac12(G_h^{h'})^+\big)\big]\\
&=\mathbb{E}\big[F_{X^h|G_h^{h'}}\big(\xi+(h')^\frac12(G_h^{h'})^-\big)
-F_{X^h|G_h^{h'}}\big(\xi-(h')^\frac12(G_h^{h'})^+\big)\big]\\
&\leq\Big(\sup_{0\leq h_1<h_2\in\overline{\mathcal{H}}}{\mathbb{E}[K_{h_1}^{h_2}|G_{h_1}^{h_2}|]}\Big)(h')^\frac12.
\end{aligned}
\end{equation*}
\end{proof}

\begin{proof}[Proof of Lemma~\ref{lmm:bias}]
\label{prf:bias}

Let $\ell,n\in\mathbb{N}^*$ and $\xi\in\mathbb{R}$, $C_\mathrm{ad}>0$, $r>1$ and $\theta\in(0,1]$.

By \eqref{eq:eta} and the law of total probability,
\begin{equation}
\begin{aligned}
\mathbb{P}(X^{h_{\ell+\eta^\ell_n(\xi)}}\leq \xi)&-F_{X^{h_{\ell+\lceil\theta\ell\rceil}}}(\xi)\\
&=\sum_{k=0}^{\lceil\theta\ell\rceil}\mathbb{P}\big(X^{h_{\ell+k}}\leq\xi,\eta^\ell_n(\xi)=k\big)-\mathbb{P}\big(X^{h_{\ell+\lceil\theta\ell\rceil}}\leq\xi,\eta^\ell_n(\xi)=k\big)\\
&=\sum_{k=0}^{\lceil\theta\ell\rceil-1}\mathbb{E}[(\mathds1_{\{X^{h_{\ell+k}}\leq\xi\}}-\mathds1_{\{X^{h_{\ell+\lceil\theta\ell\rceil}}\leq\xi\}})\mathds1_{\{\eta^\ell_n(\xi)=k\}}].
\label{eq:*:0}
\end{aligned}
\end{equation}
Let $k\in[\![0,\lceil\theta\ell\rceil-1]\!]$. One the one hand,
\begin{equation}
\label{eq:*:1}
\mathds1_{\{X^{h_{\ell+k}}\leq\xi\}}-\mathds1_{\{X^{h_{\ell+\lceil\theta\ell\rceil}}\leq\xi\}}
% &=\begin{cases}
% 1&\text{on $\{X^{h_{\ell+k}}\leq\xi<X^{h_{\ell+\lceil\theta\ell\rceil}}\}$,}\\
% -1&\text{on $\{X^{h_{\ell+\lceil\theta\ell\rceil}}\leq\xi<X^{h_{\ell+k}}\}$,}\\
% 0&\text{elsewhere,}
% \end{cases}\\
=\mathds{1}_{\{X^{h_{\ell+k}}\leq\xi<X^{h_{\ell+\lceil\theta\ell\rceil}}\}}-\mathds{1}_{\{X^{h_{\ell+\lceil\theta\ell\rceil}}\leq\xi<X^{h_{\ell+k}}\}}.
\end{equation}
On the other hand, by \eqref{eq:eta},
\begin{equation}
\label{eq:*:2}
\{\eta^\ell_n(\xi)=k\}\subset\{|X^{h_{\ell+k}}-\xi|\geq C_\mathrm{ad}\psi^{\ell,k}_n\}.
\end{equation}
Therefore, combining \eqref{eq:*:1} and \eqref{eq:*:2},
\begin{equation}
\label{eq:*}
\begin{aligned}
|\mathbb{E}[(&\mathds1_{\{X^{h_{\ell+k}}\leq\xi\}}-\mathds1_{\{X^{h_{\ell+\lceil\theta\ell\rceil}}\leq\xi\}})\mathds1_{\{\eta^\ell_n(\xi)=k\}}]|\\
&\leq\mathbb{E}[(\mathds1_{\{X^{h_{\ell+k}}\leq\xi<X^{h_{\ell+\lceil\theta\ell\rceil}}\}}+\mathds1_{\{X^{h_{\ell+\lceil\theta\ell\rceil}}\leq\xi<X^{h_{\ell+k}}\}})
\mathds1_{\{|X^{h_{\ell+k}}-\xi|\geq C_\mathrm{ad}\psi^{\ell,k}_n\}}]\\
&\leq\mathbb{P}(|X^{h_{\ell+k}}-X^{h_{\ell+\lceil\theta\ell\rceil}}|\geq|X^{h_{\ell+k}}-\xi|,|X^{h_{\ell+k}}-\xi|\geq C_\mathrm{ad}\psi^{\ell,k}_n)\\
&\leq\mathbb{P}(|X^{h_{\ell+k}}-X^{h_{\ell+\lceil\theta\ell\rceil}}|\geq C_\mathrm{ad}\psi^{\ell,k}_n),
\end{aligned}
\end{equation}
where we used that
\begin{equation*}
\{X^{h_{\ell+k}}\leq\xi<X^{h_{\ell+\lceil\theta\ell\rceil}}\}\cup\{X^{h_{\ell+\lceil\theta\ell\rceil}}\leq\xi<X^{h_{\ell+k}}\}\subset\{|X^{h_{\ell+k}}-X^{h_{\ell+\lceil\theta\ell\rceil}}|\geq|X^{h_{\ell+k}}-\xi|\}.
\end{equation*}

\noindent (\ref{lmm:bias:i})\hyperref[lmm:bias:i:a]{\rm a}.\
By \eqref{eq:*}, Markov's inequality and Lemma~\ref{lmm:aux}(\ref{lmm:aux:i}),
\begin{equation}
\label{eq:prev}
\begin{aligned}
|\mathbb{E}[(&\mathds1_{\{X^{h_{\ell+k}}\leq\xi\}}-\mathds1_{\{X^{h_{\ell+\lceil\theta\ell\rceil}}\leq\xi\}})\mathds1_{\{\eta^\ell_n(\xi)=k\}}]|\\
&\leq\mathbb{P}(|X^{h_{\ell+k}}-X^{h_{\ell+\lceil\theta\ell\rceil}}|\geq C_\mathrm{ad}u_n^{-\frac1{p_\star}}h_{\theta\ell(r-1)+k}^\frac1r)\\
&\leq C_\mathrm{ad}^{-p_\star}u_nh_{\theta\ell(r-1)+k}^{-\frac{p_\star}r}\mathbb{E}[|X^{h_{\ell+k}}-X^{h_{\ell+\lceil\theta\ell\rceil}}|^{p_\star}]\\
&\leq C_{11}u_nh_{\theta\ell(r-1)+k}^{-\frac{p_\star}r}h_{\ell+k}^\frac{p_\star}2,
\end{aligned}
\end{equation}
where
\begin{equation*}
C_{11}=\sqrt2C_\mathrm{ad}^{-p_\star}B_{p_\star}\mathbb{E}\big[\big|\varphi(Y,Z)-\mathbb{E}[\varphi(Y,Z)|Y]\big|^{p_\star}\big]^\frac1{p_\star},
\end{equation*}
with $B_{p_\star}$ being a positive constant that depends only on $p_\star$.
Using \eqref{eq:prev}, \eqref{eq:*:0} and the condition \eqref{assumption:finite:Lp:moment:bias} on $r$ and on $\theta$, it follows
\begin{equation}
\begin{aligned}
|\mathbb{P}(X^{h_{\ell+\eta^\ell_n(\xi)}}\leq\xi)&-F_{X^{h_{\ell+\lceil\theta\ell\rceil}}}(\xi)|\\
&\leq C_{11}u_n\sum_{k=0}^{\lceil\theta\ell\rceil-1}h_{\theta\ell(r-1)+k}^{-\frac{p_\star}r}h_{\ell+k}^\frac{p_\star}2\\
&\leq\frac{C_{11}h_1^{-p_\star(\frac1r-\frac12)}}{M^{p_\star(\frac1r-\frac12)}-1}u_nM^{-\frac{p_\star\ell}2(1-\theta)}\\
&\leq C_{12}u_nh_\ell^{1+\theta},
\end{aligned}\label{eq:a}
\end{equation}
where
\begin{equation*}
C_{12}=\frac{C_{11}h_0^{-\frac{2p_\star}{p_\star+2}}h_1^{-p_\star(\frac1r-\frac12)}}{M^{p_\star(\frac1r-\frac12)}-1}.
\end{equation*}
\\

\noindent (\ref{lmm:bias:i})\hyperref[lmm:bias:i:b]{\rm b}.\
Let $\lambda>0$. By \eqref{eq:*}, the definition \eqref{eq:eta} for the case \eqref{assumption:conditional:gaussian:concentration}, Markov's exponential inequality,  the fact that $\e^{|x|}\leq\e^x+\e^{-x}$, $x\in\mathbb{R}$, and Lemma~\ref{lmm:aux}(\ref{lmm:aux:ii}),
\begin{equation*}
\begin{aligned}
|\mathbb{E}[(&\mathds1_{\{X^{h_{\ell+k}}\leq\xi\}}-\mathds1_{\{X^{h_{\ell+\lceil\theta\ell\rceil}}\leq\xi\}})\mathds1_{\{\eta^\ell_n(\xi)=k\}}]|\\
&\leq\mathbb{P}\Big(|X^{h_{\ell+k}}-X^{h_{\ell+\lceil\theta\ell\rceil}}|\geq C_\mathrm{ad}h^\frac1r_{\theta\ell(r-1)+k}\ln^\frac12{\big((\gamma_1^{-1}\gamma_n)^{-\frac12}h_{\ell+k}^{-\frac{1+\theta}2}\big)}\Big)\\
&\leq\exp\Big(-\lambda C_\mathrm{ad}h^\frac1r_{\theta\ell(r-1)+k}\ln^\frac12{\big((\gamma_1^{-1}\gamma_n)^{-\frac12}h_{\ell+k}^{-\frac{1+\theta}2}\big)}\Big)\mathbb{E}[\exp(\lambda|X^{h_{\ell+k}}-X^{h_{\ell+\lceil\theta\ell\rceil}}|)]\\
&\leq2\exp\Big(-\lambda C_\mathrm{ad}h^\frac1r_{\theta\ell(r-1)+k}\ln^\frac12{\big((\gamma_1^{-1}\gamma_n)^{-\frac12}h_{\ell+k}^{-\frac{1+\theta}2}\big)}+\mathfrak{g}\lambda^2h_{\ell+k}\Big).
\end{aligned}
\end{equation*}
Minimizing the above upper bound with respect to $\lambda$ yields
\begin{equation*}
\begin{aligned}
|\mathbb{E}[(&\mathds1_{\{X^{h_{\ell+k}}\leq\xi\}}-\mathds1_{\{X^{h_{\ell+\lceil\theta\ell\rceil}}\leq\xi\}})\mathds1_{\{\eta^\ell_n(\xi)=k\}}]|\\
&\leq2\exp\bigg(-\frac{C_\mathrm{ad}^2h^\frac2r_{\theta\ell(r-1)+k}\ln{((\gamma_1^{-1}\gamma_n)^{-\frac12}h_{\ell+k}^{-\frac{1+\theta}2})}}{4\mathfrak{g}h_{\ell+k}}\bigg)\\
&=2\bigg(\frac{\gamma_n}{\gamma_1}h_{\ell+k}^{1+\theta}\bigg)^\frac{C_\mathrm{ad}^2h_{\theta\ell(r-1)+k}^\frac2r}{8\mathfrak{g}h_{\ell+k}}.
\end{aligned}
\end{equation*}
Observe that $\frac{\gamma_n}{\gamma_1}h_{\ell+k}^{1+\theta}\leq1$.
Now, consider the condition \eqref{assump:unif:lipschitz:integrability:conditional:cdf:bias} on $h_0$, $r$ and $\theta$. On the one hand, one has $\theta\leq1\leq\frac{r}{2(r-1)}$, so that $-2\theta(1-\frac1r)+1\geq0$, i.e.~$M^{-2\theta\ell(1-\frac1r)+\ell}\geq1$.
On the other hand, one also has $\frac{C_\mathrm{ad}^2h_0^{\frac2r-1}}{8\mathfrak{g}}\geq1$.
Thus
\begin{equation*}
\begin{aligned}
|\mathbb{E}[(\mathds1_{\{X^{h_{\ell+k}}\leq\xi\}}&-\mathds1_{\{X^{h_{\ell+\lceil\theta\ell\rceil}}\leq\xi\}})\mathds1_{\{\eta^\ell_n(\xi)=k\}}]|\\
&\leq2\bigg(\frac{\gamma_n}{\gamma_1}h_{\ell+k}^{1+\theta}\bigg)^{\frac{C_\mathrm{ad}^2h_0^{\frac2r-1}}{8\mathfrak{g}}M^{-2\theta\ell(1-\frac1r)+\ell}}\\
&\leq C_{21}\gamma_nh_{\ell+k}^{1+\theta},
\end{aligned}
\end{equation*}
where
\begin{equation*}
C_{21}=2\gamma_1^{-1}.
\end{equation*}
Therefore, by \eqref{eq:*:0},
\begin{equation}
|\mathbb{P}(X^{h_{\ell+\eta^\ell_n(\xi)}}\leq \xi)-F_{X^{h_{\ell+\lceil\theta\ell\rceil}}}(\xi)|\leq C_{22}\gamma_nh_\ell^{1+\theta}.
\label{eq:b}
\end{equation}
where
\begin{equation*}
C_{22}=\frac{C_{21}}{1-M^{-1}}.
\end{equation*}

\noindent (\ref{lmm:bias:ii})\
Via \eqref{eq:*}, the definition \eqref{eq:eta} for the case \eqref{assump:unif:lipschitz:integrability:conditional:cdf:bis}, the definition \eqref{eq:G} and Markov's exponential inequality,
\begin{equation*}
\begin{aligned}
|\mathbb{E}[(&\mathds1_{\{X^{h_{\ell+k}}\leq\xi\}}-\mathds1_{\{X^{h_{\ell+\lceil\theta\ell\rceil}}\leq\xi\}})\mathds1_{\{\eta^\ell_n(\xi)=k\}}]|\\
&\leq\mathbb{P}\Big(|X^{h_{\ell+k}}-X^{h_{\ell+\lceil\theta\ell\rceil}}|\geq C_\mathrm{ad}h^\frac1r_{\theta\ell(r-1)+k}\ln^\frac12{\big((\gamma_1^{-1}\gamma_n)^{-\frac12}h_{\ell+k}^{-\frac{1+\theta}2}\big)}\Big)\\
&=\mathbb{P}\Big(|G_{h_{\ell+\lceil\theta\ell\rceil}}^{h_{\ell+k}}|\geq C_\mathrm{ad}h^\frac1r_{\theta\ell(r-1)+k}h_{\ell+k}^{-\frac12}\ln^\frac12{\big((\gamma_1^{-1}\gamma_n)^{-\frac12}h_{\ell+k}^{-\frac{1+\theta}2}}\big)\Big)\\
&\leq\exp{\Big(-\upsilon_0C_\mathrm{ad}^2h_{\theta\ell(r-1)+k}^\frac2rh_{\ell+k}^{-1}\ln{\big((\gamma_1^{-1}\gamma_n)^{-\frac12}h_{\ell+k}^{-\frac{1+\theta}2}\big)}\Big)}\mathbb{E}[\exp(\upsilon_0|G_{h_{\ell+\lceil\theta\ell\rceil}}^{h_{\ell+k}}|^2)].
\end{aligned}
\end{equation*}
Noting that $\frac{\gamma_n}{\gamma_1}h_{\ell+k}^{1+\theta}\leq1$, and using the condition \eqref{assump:unif:lipschitz:integrability:conditional:cdf:bias} on $h_0$, $r$ and $\theta$, which implies in particular that $\theta\leq1\leq\frac{r}{2(r-1)}$, so that~$-2\theta(1-\frac1r)+1\geq0$, hence $M^{-2\theta\ell(1-\frac1r)+\ell}\geq1$, and also that $\frac12\upsilon_0C_\mathrm{ad}^2h_0^{\frac2r-1}\geq1$, one has
\begin{equation*}
\begin{aligned}
|\mathbb{E}[(&\mathds1_{\{X^{h_{\ell+k}}\leq\xi\}}-\mathds1_{\{X^{h_{\ell+\lceil\theta\ell\rceil}}\leq\xi\}})\mathds1_{\{\eta^\ell_n(\xi)=k\}}]|\\
&\leq\Big(\sup_{0\leq h<h'\in\overline{\mathcal{H}}}\mathbb{E}[\exp(\upsilon_0|G_h^{h'}|^2)]\Big)\bigg(\frac{\gamma_n}{\gamma_1}h_{\ell+k}^{1+\theta}\bigg)^{\frac12\upsilon_0C_\mathrm{ad}^2h_0^{\frac2r-1}M^{-2\theta\ell(1-\frac1r)+\ell}}\\
&\leq C_{31}\gamma_nh_{\ell+k}^{1+\theta},
\end{aligned}
\end{equation*}
where
\begin{equation*}
C_{31}=\gamma_1^{-1}\Big(\sup_{0\leq h<h'\in\overline{\mathcal{H}}}\mathbb{E}[\exp(\upsilon_0|G_h^{h'}|^2)]\Big).
\end{equation*}
Thus, by \eqref{eq:*:0},
\begin{equation}
\label{eq:c}
|\mathbb{P}(X^{h_{\ell+\eta^\ell_n(\xi)}}\leq \xi)-F_{X^{h_{\ell+\lceil\theta\ell\rceil}}}(\xi)|\leq C_{32}\gamma_nh_\ell^{1+\theta}.
\end{equation}
where
\begin{equation*}
C_{32}=\frac{C_{31}}{1-M^{-1}}.
\end{equation*}
\end{proof}

\begin{proof}[Proof of Proposition~\ref{prp:error:statistical:bis}]
\label{prf:error:statistical:bis}
In this proof, we omit the superscript $\mu$ from the Lyapunov function $\mathcal{L}^\mu_{h,q}$ and the parameters $\lambda^\mu_{h,q}$ and $\chi^\mu_{h,q}$, as defined in Lemma~\ref{lmm:lyapunov}, and simply denote $\mathcal{L}_{h,q}$, $\lambda_{h,q}$ and $\chi_{h,q}$.
For $\ell\in\mathbb{N}$, let $(\widetilde{\mathcal{F}}^{h_\ell}_n)_{n\geq0}$ be the filtration defined by $\widetilde{\mathcal{F}}^{h_\ell}_0=\sigma(\widetilde\xi^{h_\ell}_0)$ and
$\widetilde{\mathcal{F}}^{h_\ell}_n=\sigma(\widetilde\xi^{h_\ell}_0,X^{h_{\ell+\lceil\theta\ell\rceil}}_1,\dots,X^{h_{\ell+\lceil\theta\ell\rceil}}_n)$, $n\in\mathbb{N}^*$.
\\

\noindent
\emph{Step~1. Linearization of the adNSA dynamics.}
\newline
Let $\ell,n\in\mathbb{N}^*$.
Define
\begin{equation}\label{eq:vee}
v^\ell_n(\xi):=\mathbb{E}[H(\xi,X^{h_{\ell+\eta^\ell_n(\xi)}})]=1-\frac1{1-\alpha}\mathbb{P}(X^{h_{\ell+\eta^\ell_n(\xi)}}\geq\xi),
\quad\xi\in\mathbb{R}.
\end{equation}
The dynamics \eqref{adNSA} can then be rewritten as
\begin{equation}
\label{xi:decomp}
\widetilde\xi^{h_\ell}_{n+1}
=\widetilde\xi^{h_\ell}_n
-\gamma_{n+1}V_{h_{\ell+\lceil\theta\ell\rceil}}'(\widetilde\xi^{h_\ell}_n)
-\gamma_{n+1}r^{h_\ell}_{n+1}
-\gamma_{n+1} e^{h_\ell}_{n+1},
\end{equation}
where
\begin{align}
r^{h_\ell}_{n+1}
&:= v^\ell_{n+1}(\widetilde\xi^{h_\ell}_n)-V_{h_{\ell+\lceil\theta\ell\rceil}}'(\widetilde\xi^{h_\ell}_n),
\label{eq:r:def}\\
e^{h_\ell}_{n+1}
&:= H(\widetilde\xi_n^{h_\ell},\widetilde{X}^{h_{\ell}}_{n+1})-v^\ell_{n+1}(\widetilde\xi^{h_\ell}_n).
\label{eq:e:def}
\end{align} 

By Lemma~\ref{lmm:bias},
\begin{equation}
    \label{bound:estimate:rnhell}
|r_{n+1}^{h_\ell}|
=\frac1{1-\alpha}|\mathbb{P}(X^{h_{\ell+\eta^\ell_{n+1}(\xi)}}\leq\xi)_{|\xi=\widetilde\xi^{h_\ell}_n}
-\mathbb{P}(X^{h_{\ell+\lceil\theta\ell\rceil}} \leq\xi)_{|\xi=\widetilde\xi^{h_\ell}_n}|
\leq CMh_{\ell+\lceil\theta\ell\rceil}\widetilde{u}_{n+1},
\end{equation}
where
\begin{equation}
\label{eq:u:tilde}
\widetilde{u}_{n+1}=\begin{cases}
u_{n+1}&\text{if \eqref{assumption:finite:Lp:moment:bias} holds,}\\
\gamma_{n+1}&\text{if \eqref{assumption:conditional:gaussian:concentration:bias} or \eqref{assump:unif:lipschitz:integrability:conditional:cdf:bias} holds.}
\end{cases}
\end{equation}
and
\begin{equation}
\label{eq:C=Ci,i=1,2,3}
C=\begin{cases}
C_1&\text{if \eqref{assumption:finite:Lp:moment:bias} holds,}\\
C_2&\text{if \eqref{assumption:conditional:gaussian:concentration:bias} holds,}\\
C_3&\text{if \eqref{assump:unif:lipschitz:integrability:conditional:cdf:bias} holds,}
\end{cases}
\end{equation}
with $C_1$, $C_2$ and $C_3$ being the positive constants defined in Lemma~\ref{lmm:bias}.
\\

\noindent
\emph{Step~2. Testing the Lyapunov function.}
\newline
Let $\mu\in\mathbb{R}_+$ and $q\in\mathbb{N}^*$. In the spirit of \cite[Theorem~3.7]{CFL23}, we test the Lyapunov $\mathcal{L}_{h_{\ell+\lceil\theta\ell\rceil},q}$ along the dynamics \eqref{xi:decomp}. Using a second order Taylor expansion,
\begin{equation}
\begin{aligned}
\mathcal{L}_{h_{\ell+\lceil\theta\ell\rceil},q}(\widetilde\xi^{h_\ell}_{n+1})
&=\mathcal{L}_{h_{\ell+\lceil\theta\ell\rceil},q}\big(\widetilde\xi^{h_\ell}_n-\gamma_{n+1}V_{h_{\ell+\lceil\theta\ell\rceil}}'(\widetilde\xi^{h_\ell}_{n})-\gamma_{n+1} r^{h_\ell}_{n+1}-\gamma_{n+1} e^{h_\ell}_{n+1}\big)\\
&=\mathcal{L}_{h_{\ell+\lceil\theta\ell\rceil},q}(\widetilde\xi^{h_\ell}_n)-\gamma_{n+1}\mathcal{L}_{h_{\ell+\lceil\theta\ell\rceil},q}'(\widetilde\xi^{h_\ell}_n)\big(V_{h_{\ell+\lceil\theta\ell\rceil}}'(\widetilde\xi^{h_\ell}_n)+r_{n+1}^{h_\ell} + e^{h_\ell}_{n+1}\big)\\
&\hphantom{=}
+\gamma_{n+1}^2H(\widetilde\xi^{h_\ell}_n,\widetilde{X}^{h_\ell}_{n+1})^2\int_0^1(1-t)\mathcal{L}_{h_{\ell+\lceil\theta\ell\rceil},q}''\big(t\widetilde\xi^{h_\ell}_{n+1}+(1-t)\widetilde\xi^{h_\ell}_n\big) \mathrm{d}t.
\end{aligned}
\end{equation}
It follows from Lemmas~\ref{lmm:lyapunov}(\ref{lmm:lyapunov-ii})-(\ref{lmm:lyapunov-iii}), \eqref{relation:first:deriv:lyapunov} and \eqref{bound:estimate:rnhell} that
\begin{equation}\label{eq:dynamics}
\begin{aligned}
&\mathcal{L}_{h_{\ell+\lceil\theta\ell\rceil},q}(\widetilde\xi^{h_\ell}_{n+1})
\leq\mathcal{L}_{h_{\ell+\lceil\theta\ell\rceil},q}(\widetilde\xi^{h_\ell}_n)\\
&\hphantom{\quad+}\times
(1-\lambda_{h_{\ell+\lceil\theta\ell\rceil},q}\gamma_{n+1}
+CM\mu\|V_{h_{\ell+\lceil\theta\ell\rceil}}'\|_\infty h_{\ell+\lceil\theta\ell\rceil}\gamma_{n+1}\widetilde{u}_{n+1})\\
&\quad
-\gamma_{n+1}\mathcal{L}_{h_{\ell+\lceil\theta\ell\rceil},q}'(\widetilde\xi^{h_\ell}_n)e^{h_\ell}_{n+1}
+CMq\|V_{h_{\ell+\lceil\theta\ell\rceil}}'\|_\infty h_{\ell+\lceil\theta\ell\rceil}\gamma_{n+1}\widetilde{u}_{n+1}\mathcal{L}_{h_{\ell+\lceil\theta\ell\rceil},q-1}(\widetilde\xi^{h_\ell}_n)\\
&\quad
+\chi_{h_{\ell+\lceil\theta\ell\rceil},q}\gamma_{n+1}^2H(\widetilde\xi^{h_\ell}_n,\widetilde{X}^{h_\ell}_{n+1})^2\\
&\hphantom{\quad+}\times
\int_0^1(1-t)\Big(\mathcal{L}_{h_{\ell+\lceil\theta\ell\rceil},q}\big(t\widetilde\xi^{h_\ell}_{n+1}+(1-t)\widetilde\xi^{h_\ell}_n\big)
+\mathcal{L}_{h_{\ell+\lceil\theta\ell\rceil},q-1}\big(t\widetilde\xi^{h_\ell}_{n+1}+(1-t)\widetilde\xi^{h_\ell}_n\big)\Big)\mathrm{d}t.
\end{aligned}
\end{equation}

The mean value theorem guarantees that, for all $t\in [0,1]$, there exists $a^{h_\ell}_n(t)\in\mathbb{R}$ such that
\begin{equation}\label{eq:mean-value}
V_{h_{\ell+\lceil\theta\ell\rceil}}\big(t\widetilde\xi^{h_\ell}_{n+1}+(1-t)\widetilde\xi^{h_\ell}_n\big)
=V_{h_{\ell+\lceil\theta\ell\rceil}}(\widetilde\xi^{h_\ell}_n)+tV'_{h_{\ell+\lceil\theta\ell\rceil}}\big(a^{h_\ell}_n(t)\big)(\widetilde\xi^{h_\ell}_{n+1}-\widetilde\xi^{h_\ell}_n).
\end{equation}
Besides, from \eqref{eq:H1},
\begin{equation}\label{eq:|thetap-p+1|<}
|\widetilde\xi^{h_\ell}_{n+1}-\widetilde\xi^{h_\ell}_n|
=\gamma_{n+1}|H(\widetilde\xi^{h_\ell}_n, \widetilde{X}^{h_\ell}_{n+1})|
\leq k_\alpha\gamma_{n+1},
\end{equation}
with $k_\alpha=\frac\alpha{1-\alpha}\vee1$.
From \eqref{eq:mean-value}, \eqref{eq:|thetap-p+1|<} and the fact that $\|V_{h_{\ell+\lceil\theta\ell\rceil}}'\|_\infty\leq k_\alpha$, it ensues that, for all $t\in[0,1]$,
\begin{equation}\label{eq:Vht-Vh<}
\begin{aligned}
V_{h_{\ell+\lceil\theta\ell\rceil}}\big(t \widetilde\xi^{h_\ell}_{n+1}&+(1-t)\widetilde\xi^{h_\ell}_n\big)-V_{h_{\ell+\lceil\theta\ell\rceil}}(\xi^{h_{\ell+\lceil\theta\ell\rceil}}_\star)\\
&\leq V_{h_{\ell+\lceil\theta\ell\rceil}}(\widetilde\xi^{h_\ell}_n)-V_{h_{\ell+\lceil\theta\ell\rceil}}(\xi^{h_{\ell+\lceil\theta\ell\rceil}}_\star)+k_\alpha^2\gamma_{n+1}.
\end{aligned}
\end{equation}
Using the inequality $\e^x\leq\e\mathds1_{x\leq1}+x^q\e^x\mathds1_{x>1}\leq\e(1+x^q\e^x)$, $x\geq0$, and the very definition \eqref{eq:Lq} of $\mathcal{L}_{h,q}$, it follows
\begin{equation}
\label{eq:L0<Lq}
\mathcal{L}^\mu_{h,0}(\xi)\leq\e\big(1+\mu^q\mathcal{L}^\mu_{h,q}(\xi)\big),
\quad\xi\in\mathbb{R},
\quad\mu\in\mathbb{R}_+,
\quad h\in\overline{\mathcal{H}},
\quad q\in\mathbb{N}^*.
\end{equation}
Thus, by \eqref{eq:Vht-Vh<} and \eqref{eq:L0<Lq}, for all $t\in[0,1]$,
\begin{equation}\label{eq:Lt<}
\begin{aligned}
&\mathcal{L}_{h_{\ell+\lceil\theta\ell\rceil},q}\big(t\widetilde\xi^{h_\ell}_{n+1}+(1-t)\widetilde\xi^{h_\ell}_n\big)\\
&\quad
\leq\big(V_{h_{\ell+\lceil\theta\ell\rceil}}(\widetilde\xi^{h_\ell}_n)-V_{h_{\ell+\lceil\theta\ell\rceil}}(\xi^{h_{\ell+\lceil\theta\ell\rceil}}_\star)+k_\alpha^2\gamma_{n+1}\big)^q\\
&\quad\hphantom{\leq}\times
\exp\Big(\mu\big(V_{h_{\ell+\lceil\theta\ell\rceil}}(\widetilde\xi^{h_\ell}_n)-V_{h_{\ell+\lceil\theta\ell\rceil}}(\xi^{h_{\ell+\lceil\theta\ell\rceil}}_\star)+k_\alpha^2\gamma_{n+1}\big)\Big)\\
&\quad
\leq2^{q-1}\exp(\mu k_\alpha^2\gamma_{n+1})
   \big(\mathcal{L}_{h_{\ell+\lceil\theta\ell\rceil},q}(\widetilde\xi^{h_\ell}_n)+k_\alpha^{2q}\gamma_{n+1}^q\mathcal{L}_{h_{\ell+\lceil\theta\ell\rceil},0}(\widetilde\xi^{h_\ell}_n)\big)\\
&\quad
\leq\sigma^\mu_q\big(\mathcal{L}_{h_{\ell+\lceil\theta\ell\rceil},q}(\widetilde\xi^{h_\ell}_n)+\gamma_{n+1}^q\big),
\end{aligned}
\end{equation}
with
\begin{equation*}
\sigma^\mu_q
:=2^{q-1}\exp(\mu k_\alpha^2\gamma_1)
\big((1+\e\mu^qk_\alpha^{2q}\gamma_1^q)
\vee\e k_\alpha^{2q}\big)
\geq 2^{q-1},
\quad\mu\in\mathbb{R}_+,
\quad h\in\overline{\mathcal{H}},
\quad q\in\mathbb{N}^*.
\end{equation*}
Plugging the upper bounds of \eqref{eq:|thetap-p+1|<} and \eqref{eq:Lt<} into \eqref{eq:dynamics} yields
\begin{equation}
\label{eq:Lhq(xih{n+1})<}
\begin{aligned}
&\mathcal{L}_{h_{\ell+\lceil\theta\ell\rceil},q}(\widetilde\xi^{h_\ell}_{n+1})
\leq\mathcal{L}_{h_{\ell+\lceil\theta\ell\rceil},q}(\widetilde\xi^{h_\ell}_n)\\
&\hphantom{\quad+}\times
\big(1-\lambda_{h_{\ell+\lceil\theta\ell\rceil},q}\gamma_{n+1}
+(CM\mu\|V'_{h_{\ell+\lceil\theta\ell\rceil}}\|_\infty h_{\ell+\lceil\theta\ell\rceil}\gamma_{n+1}\widetilde{u}_{n+1}
+\nu^\mu_{h_{\ell+\lceil\theta\ell\rceil},q}\gamma^2_{n+1})\big)\\
&\quad
+(CMq\|V'_{h_{\ell+\lceil\theta\ell\rceil}}\|_\infty\gamma_{n+1}\widetilde{u}_{n+1}h_{\ell+\lceil\theta\ell\rceil}
+\nu^\mu_{h_{\ell+\lceil\theta\ell\rceil},q}\gamma_{n+1}^2)\mathcal{L}_{h_{\ell+\lceil\theta\ell\rceil},q-1}(\widetilde\xi^{h_\ell}_n)\\
&\quad
-\gamma_{n+1}\mathcal{L}_{h_{\ell+\lceil\theta\ell\rceil},q}'(\widetilde\xi^{h_\ell}_n)e^{h_\ell}_{n+1}
+\nu^\mu_{h_{\ell+\lceil\theta\ell\rceil},q}\gamma_{n+1}^{q+1},
\end{aligned}
\end{equation}
where
\begin{equation}
\label{eq:xihq}
\nu^\mu_{h,q}
:=\frac12\chi^\mu_{h,q}
k_\alpha^2
\big((\gamma_1\sigma^\mu_{h,q}+\sigma_{h,q-1})
\vee\sigma^\mu_{h,q}\big),
\quad\mu\in\mathbb{R}_+,
\quad h\in\overline{\mathcal{H}},
\quad q\in\mathbb{N}^*.
\end{equation}
Hereafter, to simplify notation, we will simply denote $\nu_{h,q}$ instead of $\nu^\mu_{h,q}$.
\\

\noindent
\emph{Step~3. Inequality on $\mathbb{E}[\mathcal{L}_{h_{\ell+\lceil\theta\ell\rceil},q}(\widetilde\xi^{h_\ell}_n)]$.}
\newline
Via the tower law,
\begin{equation}\label{eq:E[E]}
\mathbb{E}[\mathcal{L}_{h_{\ell+\lceil\theta\ell\rceil},q}'(\widetilde\xi^{h_\ell}_n) e^{h_\ell}_{n+1}]
=\mathbb{E}\big[\mathcal{L}_{h_{\ell+\lceil\theta\ell\rceil},q}'(\widetilde\xi^{h_\ell}_n)\mathbb{E}[e^{h_\ell}_{n+1}|\widetilde{\mathcal{F}}^{h_\ell}_n]\big].
\end{equation}
For $n\in\mathbb{N}$, by the independence of the $(X^{h_{\ell+k}}_{n+1})_{1\leq k\leq\lceil\theta\ell\rceil}$ and $\widetilde{\mathcal{F}}^{h_\ell}_n$,
\begin{equation*}
\begin{aligned}
\mathbb{E}[&e^{h_\ell}_{n+1}|\widetilde{\mathcal{F}}^{h_\ell}_n]\\
&=\mathbb{E}[H(\widetilde\xi^{h_\ell}_n,X^{h_{\ell+\eta^\ell_{n+1}(\widetilde\xi^{h_\ell}_n)}}_{n+1})|\widetilde{\mathcal{F}}^{h_\ell}_n]-\mathbb{E}[H(\xi,X^{h_{\ell+\eta^\ell_{n+1}(\xi)}}_{n+1})]_{|\xi=\widetilde\xi^{h_\ell}_n}\\
&=\mathbb{E}\bigg[\sum_{k=1}^{\lceil\theta\ell\rceil}H(\widetilde\xi^{h_\ell}_n,X^{h_\ell+k}_{n+1})\mathds1_{\{\eta^\ell_{n+1}(\widetilde\xi^{h_\ell}_n)=k\}}\bigg|\widetilde{\mathcal{F}}^{h_\ell}_n\bigg]-\sum_{k=1}^{\lceil\theta\ell\rceil}\mathbb{E}[H(\xi,X^{h_\ell+k}_{n+1})\mathds1_{\{\eta^\ell_{n+1}(\xi)=k\}}]_{|\xi=\widetilde\xi^{h_\ell}_n}\\
&=0.
\end{aligned}
\end{equation*}
Hence $(e^{h_\ell}_n)_{n\geq1}$ is a $((\widetilde{\mathcal{F}}^{h_\ell}_n)_{n\geq0},\mathbb{P})$-martingale increment sequence, and, by \eqref{eq:E[E]},
\begin{equation*}
\mathbb{E}[\mathcal{L}_{h_{\ell+\lceil\theta\ell\rceil},q}'(\widetilde\xi^{h_\ell}_n) e^{h_\ell}_{n+1}]=0.
\end{equation*}

Thus, by taking the expectation in both sides of the inequality \eqref{eq:Lhq(xih{n+1})<},
\begin{equation}
\label{eq:ELq<}
\begin{aligned}
\mathbb{E}[&\mathcal{L}_{h_{\ell+\lceil\theta\ell\rceil},q}(\widetilde\xi^{h_\ell}_{n+1})]
\leq\mathbb{E}[\mathcal{L}_{h_{\ell+\lceil\theta\ell\rceil},q}(\widetilde\xi^{h_\ell}_n)]\\
&\hphantom{+}\times
\big(1-\lambda_{h_{\ell+\lceil\theta\ell\rceil},q}\gamma_{n+1}
+(CM\mu\|V'_{h_{\ell+\lceil\theta\ell\rceil}}\|_\infty h_{\ell+\lceil\theta\ell\rceil}\gamma_{n+1}\widetilde{u}_{n+1}
+\nu_{h_{\ell+\lceil\theta\ell\rceil},q}\gamma^2_{n+1})\big)\\
&+(CMq\|V'_{h_{\ell+\lceil\theta\ell\rceil}}\|_\infty h_{\ell+\lceil\theta\ell\rceil}\gamma_{n+1}\widetilde{u}_{n+1}
+\nu_{h_{\ell+\lceil\theta\ell\rceil},q}\gamma_{n+1}^2)\mathbb{E}[\mathcal{L}_{h_{\ell+\lceil\theta\ell\rceil},q-1}(\widetilde\xi^{h_\ell}_n)]\\
&+\nu_{h_{\ell+\lceil\theta\ell\rceil},q}\gamma_{n+1}^{q+1}.
\end{aligned}
\end{equation}

\noindent(\ref{prp:error:statistical:bis:i})\
We hereby aim to use Lemma~\ref{lmm:lyapunov}(\ref{lmm:lyapunov-iv}) with $h=h_{\ell+\lceil\theta\ell\rceil}$ and $q=1$.
To that end, we provide below sharper controls for $\mathbb{E}[\bar{\mathcal{L}}_{h_{\ell+\lceil\theta\ell\rceil},1}(\widetilde\xi^{h_\ell}_n)]$ and $\mathbb{E}[\bar{\mathcal{L}}_{h_{\ell+\lceil\theta\ell\rceil},2}(\widetilde\xi^{h_\ell}_n)]$.
According to Lemma~\ref{lmm:lyapunov}(\ref{lmm:lyapunov-i-bis}), by assumption, for $\mu\in\{\mu_{h_{\ell+\lceil\theta\ell\rceil},q'},q'\in\{1,2\}\}$,
\begin{equation*}
\begin{aligned}
\mathbb{E}[\mathcal{L}_{h_{\ell+\lceil\theta\ell\rceil},1}(\widetilde\xi^{h_\ell}_0)]
&\vee\mathbb{E}[\mathcal{L}_{h_{\ell+\lceil\theta\ell\rceil},2}(\widetilde\xi^{h_\ell}_0)]\\
&\leq\mathbb{E}\bigg[\big(1+2k_\alpha^2(|\widetilde\xi^{h_\ell}_0|^2+|\xi^{h_{\ell+\lceil\theta\ell\rceil}}_\star|^2)\big)\\
&\hphantom{\leq\mathbb{E}\bigg[}\times
\exp\bigg(\frac{2k_\alpha}{1-\alpha}
\|f_{X^{h_{\ell+\lceil\theta\ell\rceil}}}\|_\infty
(|\widetilde\xi_0^{h_\ell}|+|\xi^{h_{\ell+\lceil\theta\ell\rceil}}_\star|)\bigg)\bigg]
<\infty.
\end{aligned}
\end{equation*}

\noindent
\emph{Step~1. Inequality on $\mathbb{E}[\mathcal{L}_{h_{\ell+\lceil\theta\ell\rceil},1}(\widetilde\xi^{h_\ell}_n)]$.}
\newline
Taking $q=1$ in \eqref{eq:ELq<} and \eqref{eq:L0<Lq},
\begin{equation*}
\begin{aligned}
&\mathbb{E}[\mathcal{L}_{h_{\ell+\lceil\theta\ell\rceil},1}(\widetilde\xi^{h_\ell}_{n+1})]
\leq\mathbb{E}[\mathcal{L}_{h_{\ell+\lceil\theta\ell\rceil},1}(\widetilde\xi^{h_\ell}_n)]\\
&\times
\big(1-\lambda_{h_{\ell+\lceil\theta\ell\rceil},1}\gamma_{n+1}
+(1+\e\mu)
(CM\mu\|V'_{h_{\ell+\lceil\theta\ell\rceil}}\|_\infty h_{\ell+\lceil\theta\ell\rceil}\gamma_{n+1}\widetilde{u}_{n+1}
+\nu_{h_{\ell+\lceil\theta\ell\rceil},1}\gamma^2_{n+1})\big)\\
&
+CM\e\|V'_{h_{\ell+\lceil\theta\ell\rceil}}\|_\infty h_{\ell+\lceil\theta\ell\rceil}\gamma_{n+1}\widetilde{u}_{n+1}+(1+\e\mu)\nu_{h_{\ell+\lceil\theta\ell\rceil},1}\gamma_{n+1}^2.
\end{aligned}
\end{equation*}
Let
\begin{equation}
\label{eq:delta}
\delta'=
\begin{cases}
\delta
&\text{if \eqref{assumption:finite:Lp:moment:bias} holds,}\\
\beta
&\text{if \eqref{assumption:conditional:gaussian:concentration:bias} or \eqref{assump:unif:lipschitz:integrability:conditional:cdf:bias} holds.}
\end{cases}
\end{equation}
Note that, by \eqref{eq:u:tilde},
\begin{equation}
\label{eq:u=f(g)}
\widetilde{u}_n
=\frac{\gamma_1}{n^{\delta'}}
=\gamma_1^{1-\frac{\delta'}\beta}\gamma_n^\frac{\delta'}\beta,
\quad n\in\mathbb{N}^*,
\end{equation}
so that
\begin{equation}
\label{eq:gamma:u}
\gamma_n\widetilde{u}_n
=\frac{\gamma_1^2}{n^{\beta+\delta'}}
=\gamma_1^{1-\frac{\delta'}\beta}\gamma_n^{1+\frac{\delta'}\beta},
\quad n\in\mathbb{N}^*,
\end{equation}
hence
\begin{equation}
\label{eq:max:gamma<gamma}
\gamma_n\widetilde{u}_n\vee\gamma_n^2
=\frac{\gamma_1^2}{n^{\beta+(\delta'\wedge\beta)}}
=\gamma_1^{1-\frac{\delta'\wedge\beta}\beta}\gamma_n^{1+\frac{\delta'\wedge\beta}\beta},
\quad n\in\mathbb{N}^*.
\end{equation}
Therefore, using in particular that $\gamma_n\widetilde{u}_n\leq\gamma_1^{1-\frac{\delta'\wedge\beta}\beta}\gamma_n^{1+\frac{\delta'\wedge\beta}\beta}$, $n\in\mathbb{N}^*$,
\begin{equation}
\label{ineq:bar:L1:before:iteration}
\begin{aligned}
&\mathbb{E}[\mathcal{L}_{h_{\ell+\lceil\theta\ell\rceil},1}(\widetilde\xi^{h_\ell}_{n+1})]
\leq\mathbb{E}[\mathcal{L}_{h_{\ell+\lceil\theta\ell\rceil},1}(\widetilde\xi^{h_\ell}_n)]\\
&\times\Big(1-\lambda_{h_{\ell+\lceil\theta\ell\rceil},1}\gamma_{n+1}\\
&\hphantom{\times\Big(}
+(1+\e\mu)(CM\mu\|V'_{h_{\ell+\lceil\theta\ell\rceil}}\|_\infty h_{\ell+\lceil\theta\ell\rceil}
+\nu_{h_{\ell+\lceil\theta\ell\rceil},1})\gamma_1^{1-\frac{\delta'\wedge\beta}\beta}\gamma_{n+1}^{1+\frac{\delta'\wedge\beta}\beta}\Big)\\
&
+CM\gamma_1^{1-\frac{\delta'}\beta}\e\|V'_{h_{\ell+\lceil\theta\ell\rceil}}\|_\infty h_{\ell+\lceil\theta\ell\rceil}\gamma_{n+1}^{1+\frac{\delta'}\beta}
+(1+\e\mu)\nu_{h_{\ell+\lceil\theta\ell\rceil},1}\gamma_{n+1}^2.
\end{aligned}
\end{equation}

Recalling the definition of $\chi^\mu_{h,q}$ from Lemma~\ref{lmm:lyapunov}(\ref{lmm:lyapunov-iii}) and the definition of $\nu^\mu_{h,q}$ given in \eqref{eq:xihq}, let
\begin{multline}
\label{eq:c:hq}
c^\mu_{h,q}
:=\frac{\nu^\mu_{h,q}}{\chi^\mu_{h,q}}
=\frac12
\big(\sigma^\mu_q\vee(\gamma_1\sigma^\mu_q+\sigma^\mu_{q-1})\big)
k_\alpha^2
>\frac12,
\quad
\bar{c}_{h,q}:=c^{\mu_{h,q}}_{h,q},\\
\mu\in\mathbb{R}_+,
\quad h\in\overline{\mathcal{H}},
\quad q\in\mathbb{N}^*.
\end{multline}
Let $q'\geq q$ and set $\mu=\mu_{h,q'}$.
Recalling that $|\lambda^{\mu_{h,q'}}_{h,q}|^2
\leq\chi^{\mu_{h,q'}}_{h,q}$ according to Lemma~\ref{lmm:lyapunov}(\ref{lmm:lyapunov-iii}), and that $\gamma_1^{1-\frac{\delta'\wedge\beta}\beta}\gamma_n^{1+\frac{\delta'\wedge\beta}\beta}\geq\gamma_n^2$, $n\in\mathbb{N}^*$, according to \eqref{eq:max:gamma<gamma}, it ensues that, for all $n\in\mathbb{N}^*$,
\begin{equation}
\label{eq:1-lambda>}
\begin{aligned}
1
-\lambda^{\mu_{h,q'}}_{h,q}\gamma_n
&+(1+\e\mu_{h,q'}^q)
\big(CM\mu_{h,q'}\|V'_h\|_\infty h
+\nu^{\mu_{h,q'}}_{h,q}\big)\gamma_1^{1-\frac{\delta'\wedge\beta}\beta}\gamma_n^{1+\frac{\delta'\wedge\beta}\beta}\\
&\geq1-\lambda^{\mu_{h,q'}}_{h,q}\gamma_n
+\nu^{\mu_{h,q'}}_{h,q}\gamma^2_n\\
&\geq1
-(\chi^{\mu_{h,q'}}_{h,q})^\frac12\gamma_{n+1}
+\chi^{\mu_{h,q'}}_{h,q}c^{\mu_{h,q'}}_{h,q}\gamma_n^2\\
&\geq1-\frac1{4c^{\mu_{h,q'}}_{h,q}}\\
&\geq\frac12>0.
\end{aligned}
\end{equation}

Therefore, for $q'\geq1$, setting $\mu=\mu_{h_{\ell+\lceil\theta\ell\rceil},q'}$ and iterating $n$ times the inequality \eqref{ineq:bar:L1:before:iteration},
\begin{equation}
\label{eq:E[L1]<}
\begin{aligned}
\mathbb{E}[\mathcal{L}_{h_{\ell+\lceil\theta\ell\rceil},1}(\widetilde\xi^{h_\ell}_n)]
&\leq\mathbb{E}[\mathcal{L}_{h_{\ell+\lceil\theta\ell\rceil},1}(\widetilde\xi^{h_\ell}_0)]
\Pi_{1:n}^{h_{\ell+\lceil\theta\ell\rceil},1,q'}\\
&\hphantom{\leq}
+CM\gamma_1^{1-\frac{\delta'}\beta}\e\|V'_{h_{\ell+\lceil\theta\ell\rceil}}\|_\infty
\sum_{k=1}^nh_{\ell+\lceil\theta\ell\rceil}\gamma_k^{1+\frac{\delta'}\beta}
\Pi_{k+1:n}^{h_{\ell+\lceil\theta\ell\rceil},1,q'}\\
&\hphantom{\leq}
+(1+\e\mu_{h_{\ell+\lceil\theta\ell\rceil},q'})\nu_{h_{\ell+\lceil\theta\ell\rceil},1}
\sum_{k=1}^n\gamma_k^2
\Pi_{k+1:n}^{h_{\ell+\lceil\theta\ell\rceil},1,q'},
\end{aligned}
\end{equation}
where
\begin{multline*}
\Pi_{k:n}^{h,q,q'}
:=\prod_{j=k}^n
\big(
1
-\lambda^{\mu_{h,q'}}_{h,q}\gamma_j
+(1+\e\mu_{h,q'}^q)
(
    CM\mu_{h,q'}\|V'_h\|_\infty h
    +\nu^{\mu_{h,q'}}_{h,q}
)
\gamma_1^{1-\frac{\delta'\wedge\beta}\beta}
\gamma_j^{1+\frac{\delta'\wedge\beta}\beta}
\big),\\
\quad h\in\overline{\mathcal{H}},
\quad q\in\mathbb{N}^*,
\quad q'\geq q,
\quad k,n\in\mathbb{N}^*,
\end{multline*}
with the convention $\prod_\varnothing=1$.

If $\gamma_n=\gamma_1n^{-1}$, $n\in\mathbb{N}^*$, then, applying Lemma~\ref{lmm:gamma},
\begin{equation*}
\begin{aligned}
\mathbb{E}[\mathcal{L}_{h_{\ell+\lceil\theta\ell\rceil},1}(\widetilde\xi^{h_\ell}_n)]
&\leq
\widehat{C}^{\ell,1}_{1,q'}\frac{\mathbb{E}[\mathcal{L}_{h_{\ell+\lceil\theta\ell\rceil},1}(\widetilde\xi^{h_\ell}_0)]}{(n+1)^{\lambda_{h_{\ell+\lceil\theta\ell\rceil},1}\gamma_1}}
+\bar{C}^{\ell,1}_{1,q'}\frac{\Phi_{\lambda_{h_{\ell+\lceil\theta\ell\rceil},1}\gamma_1-\delta'}(n+1)}{(n+1)^{\lambda_{h_{\ell+\lceil\theta\ell\rceil},1}\gamma_1}}h_{\ell+\lceil\theta\ell\rceil}\\
&\hphantom{\leq}
+\widetilde{C}^{\ell,1}_{1,q'}\frac{\Phi_{\lambda_{h_{\ell+\lceil\theta\ell\rceil},1}\gamma_1-1}(n+1)}{(n+1)^{\lambda_{h_{\ell+\lceil\theta\ell\rceil},1}\gamma_1}},
\end{aligned}
\end{equation*}
where
\begin{equation*}
\begin{aligned}
\widehat{C}^{\ell,1}_{1,q'}
&=\exp\bigg(\zeta^{1,\delta'}_{h_{\ell+\lceil\theta\ell\rceil},1,q'}
\Big(1+\frac1{\delta'}\Big)\gamma_1^{1+\delta'}
+\frac{\lambda_{h_{\ell+\lceil\theta\ell\rceil},1}\gamma_1}2\bigg),\\
\bar{C}^{\ell,1}_{1,q'}
&=2^{(\lambda_{h_{\ell+\lceil\theta\ell\rceil},1}\gamma_1)\vee(1+\delta')}
C\widehat{C}^{\ell,1}_{1,q'}M\e\|V'_{h_{\ell+\lceil\theta\ell\rceil}}\|_\infty
\gamma_1^2,\\
\widetilde{C}^{\ell,1}_{1,q'}
&=2^{(\lambda_{h_{\ell+\lceil\theta\ell\rceil},1}\gamma_1)\vee 2}(1+\e\mu_{h_{\ell+\lceil\theta\ell\rceil},q'})\nu_{h_{\ell+\lceil\theta\ell\rceil},1}\widehat{C}^{\ell,1}_{1,q'}
\gamma_1^2,
\end{aligned}
\end{equation*}
with
\begin{multline*}
\zeta^{\beta,\delta}_{h,q,q'}
:=(1+\e\mu_{h,q'}^q)
\big(CM\mu_{h,q'}\|V'_h\|_\infty h
+\nu^{\mu_{h,q'}}_{h,q}\big)
\gamma_1^{1-\frac{\delta\wedge\beta}\beta},\\
\quad h\in\overline{\mathcal{H}},
\quad q\in\mathbb{N}^*,
\quad q'\geq q,
\quad\beta,\delta\in(0,1].
\end{multline*}
If $\bar\lambda_{h_{\ell+\lceil\theta\ell\rceil},1}\gamma_1
>1$, so that, for $q'\geq1$, $\lambda^{\mu_{h_{\ell+\lceil\theta\ell\rceil},q'}}_{h_{\ell+\lceil\theta\ell\rceil},1}\gamma_1
\geq\bar\lambda_{h_{\ell+\lceil\theta\ell\rceil},1}\gamma_1
>1
\geq\delta'$, then, via the definition \eqref{eq:phi},
\begin{equation}
\label{estimate:L1}
\mathbb{E}[\mathcal{L}_{h_{\ell+\lceil\theta\ell\rceil},1}(\widetilde\xi^{h_\ell}_n)] \leq C^{\ell,1}_{1,q'}\big((h_\ell^{1+\theta}\widetilde{u}_n)\vee\gamma_n\big),
\end{equation}
where
\begin{equation*}
C^{\ell,1}_{1,q'}
=\gamma_1^{-1}(\widehat{C}^{\ell,1}_{1,q'}\mathbb{E}[\mathcal{L}_{h_{\ell+\lceil\theta\ell\rceil},1}(\widetilde\xi^{h_\ell}_0)]
+h_0^{-1}\bar{C}^{\ell,1}_{1,q'}
+\widetilde{C}^{\ell,1}_{1,q'}).
\end{equation*}

Likewise, if $\gamma_n=\gamma_1n^{-\beta}$, $n\in\mathbb{N}^*$, with $\beta\in(0,1)$, then, by Lemma~\ref{lmm:gamma},
\begin{equation*}
\begin{aligned}
&\mathbb{E}[\mathcal{L}_{h_{\ell+\lceil\theta\ell\rceil},1}(\widetilde\xi^{h_\ell}_n)]
\leq(\mathbb{E}[\mathcal{L}_{h_{\ell+\lceil\theta\ell\rceil},1}(\widetilde\xi^{h_\ell}_0)]+\bar{C}^{\ell,\beta}_{1,q'}+\widetilde{C}^{\ell,\beta}_{1,q'})\\
&\hphantom{+}\times
\exp\bigg(2^{1+\beta+\delta'\wedge\beta}\zeta^{\beta,\delta'}_{h_{\ell+\lceil\theta\ell\rceil},1,q'}\gamma_1^{1+\frac{\delta'\wedge\beta}\beta}\Phi_{1-\beta-\delta'\wedge\beta}(n+1)-\frac{\lambda_{h_{\ell+\lceil\theta\ell\rceil},1}\gamma_1}2\Phi_{1-\beta}(n+1)\bigg)\\
&
+2^\beta(\bar{C}^{\ell,\beta}_{1,q'}\gamma_1^{1+\frac{\delta'}\beta}h_{\ell+\lceil\theta\ell\rceil}
+\widetilde{C}^{\ell,\beta}_{1,q'}\gamma_1^2)
\exp(-2^{-(\beta+2)}\lambda_{h_{\ell+\lceil\theta\ell\rceil},1}\gamma_1n^{1-\beta})\Phi_{1-\beta}(n+1)\\
&
+\bar{C}^{\ell,\beta}_{1,q'}
\frac{2^{1+\delta'}\gamma_1^{\frac{\delta'}\beta}}{\lambda_{h_{\ell+\lceil\theta\ell\rceil},1}n^{\delta'}}
h_{\ell+\lceil\theta\ell\rceil}
+\widetilde{C}^{\ell,\beta}_{1,q'}
\frac{2^{1+\beta}\gamma_1}{\lambda_{h_{\ell+\lceil\theta\ell\rceil},1}n^\beta},
\end{aligned}
\end{equation*}
where
\begin{equation*}
\begin{aligned}
\bar{C}^{\ell,\beta}_{1,q'}
&=(\zeta^{\beta,\delta'}_{h_{\ell+\lceil\theta\ell\rceil},1,q'})^{-1}
\bigg(\gamma_1^\frac{\delta'-(\delta'\wedge\beta)}\beta
\vee\bigg(\frac{\lambda_{h_{\ell+\lceil\theta\ell\rceil},1}}{2\zeta^{\beta,\delta'}_{h_{\ell+\lceil\theta\ell\rceil},1,q'}}\bigg)^\frac{\delta'-(\delta'\wedge\beta)}{\delta'\wedge\beta}\bigg)
CM\gamma_1^{1-\frac{\delta'}\beta}\e\|V'_{h_{\ell+\lceil\theta\ell\rceil}}\|_\infty,\\
\widetilde{C}^{\ell,\beta}_{1,q'}
&=(\zeta^{\beta,\delta'}_{h_{\ell+\lceil\theta\ell\rceil},1,q'})^{-1}
\bigg(\gamma_1^\frac{\beta-(\delta'\wedge\beta)}\beta
\vee\bigg(\frac{\lambda_{h_{\ell+\lceil\theta\ell\rceil},1}}{2\zeta^{\beta,\delta'}_{h_{\ell+\lceil\theta\ell\rceil},1,q'}}\bigg)^\frac{\beta-\delta'\wedge\beta}{\delta'\wedge\beta}\bigg)
(1+\e\mu_{h_{\ell+\lceil\theta\ell\rceil},q'})\nu_{h_{\ell+\lceil\theta\ell\rceil},1}.
\end{aligned}
\end{equation*}
Hence
\begin{equation}
\label{estimate:L1:beta}
\mathbb{E}[\mathcal{L}_{h_{\ell+\lceil\theta\ell\rceil},1}(\widetilde\xi^{h_\ell}_n)]
\leq C^{\ell,\beta}_{1,q'}\big((h_\ell^{1+\theta}\widetilde{u}_n)\vee\gamma_n\big),
\end{equation}
where
\begin{equation*}
\begin{aligned}
C^{\ell,\beta}_{1,q'}
&=\gamma_1^{-1}(\mathbb{E}[\mathcal{L}_{h_{\ell+\lceil\theta\ell\rceil},1}(\widetilde\xi^{h_\ell}_0)]+\bar{C}^{\ell,\beta}_{1,q'}+\widetilde{C}^{\ell,\beta}_{1,q'})\\
&\hphantom{=+}\times
\sup_{m\geq1}\bigg\{m^\beta
\exp\bigg(2^{1+\beta+\delta'\wedge\beta}\zeta^{\beta,\delta'}_{h_{\ell+\lceil\theta\ell\rceil},1,q'}\gamma_1^{1+\frac{\delta'\wedge\beta}\beta}\Phi_{1-\beta-\delta'\wedge\beta}(m+1)\bigg)\\
&\hphantom{=+\times\sup_{m\geq1}\bigg\{}\times
\exp\bigg(-\frac{\lambda_{h_{\ell+\lceil\theta\ell\rceil},1}\gamma_1}2\Phi_{1-\beta}(m+1)\bigg)
\bigg\}\\
&\hphantom{=}
+2^\beta(\bar{C}^{\ell,\beta}_{1,q'}\gamma_1^{\frac{\delta'}\beta}h_{\ell+\lceil\theta\ell\rceil}
+\widetilde{C}^{\ell,\beta}_{1,q'}\gamma_1)\\
&\hphantom{=+}\times
\sup_{m\geq1}\{m^\beta
\exp(-2^{-(\beta+2)}\lambda_{h_{\ell+\lceil\theta\ell\rceil},1}\gamma_1m^{1-\beta})\Phi_{1-\beta}(m+1)
\}\\
&\hphantom{=}
+\bar{C}^{\ell,\beta}_{1,q'}
\frac{2^{1+\delta'}\gamma_1^{-1+\frac{\delta'}\beta}}{h_0\lambda_{h_{\ell+\lceil\theta\ell\rceil},1}}
+\widetilde{C}^{\ell,\beta}_{1,q'}
\frac{2^{1+\beta}}{\lambda_{h_{\ell+\lceil\theta\ell\rceil},1}},
\end{aligned}
\end{equation*}

Note that $\sup_{\ell\geq1}C^{\ell,\beta}_{1,q'}<\infty$, $q'\in\{1,2\}$, $\beta\in(0,1]$.
\\

\noindent
\emph{Step~2. Inequality on $\mathbb{E}[\mathcal{L}_{h_{\ell+\lceil\theta\ell\rceil},2}(\widetilde\xi^{h_\ell}_n)]$.}
\newline
Via \eqref{estimate:L1} and \eqref{estimate:L1:beta}, assuming that $\bar\lambda_{h_{\ell+\lceil\theta\ell\rceil},1}\gamma_1>1$ if $\beta=1$, for $q'\geq1$,
\begin{equation}
\label{eq:L1:expanded}
\mathbb{E}[\mathcal{L}_{h_{\ell+\lceil\theta\ell\rceil},1}(\widetilde\xi^{h_\ell}_n)]
\leq C^{\ell,\beta}_{1,q'}(2^{\delta'}Mh_{\ell+\lceil\theta\ell\rceil}\widetilde{u}_{n+1}+2^\beta\gamma_{n+1}).
\end{equation}
Thus, taking $q=2$, $h=h_{\ell+\lceil\theta\ell\rceil}$ and $\mu=\mu_{h_{\ell+\lceil\theta\ell\rceil},q'}$, with $q'\geq2$, in \eqref{eq:ELq<}, and using \eqref{eq:max:gamma<gamma} and \eqref{eq:L1:expanded},
\begin{equation}
\label{eq:E[L2]<}
\begin{aligned}
&\mathbb{E}[\mathcal{L}_{h_{\ell+\lceil\theta\ell\rceil},2}(\widetilde\xi^{h_\ell}_{n+1})]
\leq\mathbb{E}[\mathcal{L}_{h_{\ell+\lceil\theta\ell\rceil},2}(\widetilde\xi^{h_\ell}_n)]\\
&\times
\big(1-\lambda_{h_{\ell+\lceil\theta\ell\rceil},2}\gamma_{n+1}
+(CM\mu_{h_{\ell+\lceil\theta\ell\rceil},q'}\|V'_{h_{\ell+\lceil\theta\ell\rceil}}\|_\infty h_{\ell+\lceil\theta\ell\rceil}+\nu_{h_{\ell+\lceil\theta\ell\rceil},2})\gamma_1^{1-\frac{\delta'\wedge\beta}\beta}\gamma_{n+1}^{1+\frac{\delta'\wedge\beta}\beta}\big)\\
&\hphantom{\leq}
+2^{1+\delta'}CM^2\|V_{h_{\ell+\lceil\theta\ell\rceil}}'\|_\infty C^{\ell,\beta}_{1,q'}h_{\ell+\lceil\theta\ell\rceil}^2\gamma_{n+1}\widetilde{u}_{n+1}^2\\
&\hphantom{\leq}
+C^{\ell,\beta}_{1,q'}M(2^{1+\beta}C\|V_{h_{\ell+\lceil\theta\ell\rceil}}'\|_\infty+2^{\delta'}\nu_{h_{\ell+\lceil\theta\ell\rceil},2})h_{\ell+\lceil\theta\ell\rceil}\gamma_{n+1}^2\widetilde{u}_{n+1}\\
&\hphantom{\leq}
+(2^\beta C^{\ell,\beta}_{1,q'}+1)\nu_{h_{\ell+\lceil\theta\ell\rceil},2}\gamma_{n+1}^3\\
&
\leq\mathbb{E}[\mathcal{L}_{h_{\ell+\lceil\theta\ell\rceil},2}(\widetilde\xi^{h_\ell}_n)]
(1-\lambda_{h_{\ell+\lceil\theta\ell\rceil},2}\gamma_{n+1}
+\bar\zeta^{\beta,\delta'}_{h_{\ell+\lceil\theta\ell\rceil},2,q'}\gamma_{n+1}^{1+\frac{\delta'\wedge\beta}\beta})\\
&\hphantom{\leq}
+d^{\beta,\delta'}_{h_{\ell+\lceil\theta\ell\rceil},2,q'}\gamma_{n+1}(h_{\ell+\lceil\theta\ell\rceil}\widetilde{u}_{n+1}+\gamma_{n+1})^2\\
&
\leq\mathbb{E}[\mathcal{L}_{h_{\ell+\lceil\theta\ell\rceil},2}(\widetilde\xi^{h_\ell}_n)]
(1-\lambda_{h_{\ell+\lceil\theta\ell\rceil},2}\gamma_{n+1}
+\bar\zeta^{\beta,\delta'}_{h_{\ell+\lceil\theta\ell\rceil},2,q'}\gamma_{n+1}^{1+\frac{\delta'\wedge\beta}\beta})\\
&\hphantom{\leq}
+2d^{\beta,\delta'}_{h_{\ell+\lceil\theta\ell\rceil},2,q'}\gamma_1^{2-2\frac{\delta'}\beta}
h_{\ell+\lceil\theta\ell\rceil}^2\gamma_{n+1}^{1+2\frac{\delta'}\beta}
+2d^{\beta,\delta'}_{h_{\ell+\lceil\theta\ell\rceil},2,q'}\gamma_{n+1}^3,
\end{aligned}
\end{equation}
where we used
\begin{equation*}
\begin{aligned}
\bar\zeta^{\beta,\delta}_{h,q,q'}
&=(CM\mu_{h,q'}\|V'_h\|_\infty h+\nu^{\mu_{h,q'}}_{h,q})\gamma_1^{1-\frac{\delta\wedge\beta}\beta},\\
d^{\beta,\delta}_{h,q,q'}
&=(2^{1+\delta'}CC^{\ell,\beta}_{1,q'}M^2\|V_h'\|_\infty)
\vee\bigg(\frac{C^{\ell,\beta}_{1,q'}}2(2^{\delta'}M+2^{1+\beta}CM\|V_h'\|_\infty)\bigg)\\
&\hphantom{=(2^{1+\delta'}CC^{\ell,\beta}_{1,q'}M^2\|V_h'\|_\infty)}\;
\vee\big((2^\beta C^{\ell,\beta}_{1,q'}+1)\nu^{\mu_{h,q'}}_{h,q}\big)\\
&\qquad\qquad\qquad
\qquad\qquad\qquad
\qquad\qquad
\quad h\in\overline{\mathcal{H}},
\quad q\in\mathbb{N}^*,
\quad q'\geq q,
\quad \beta,\delta\in(0,1].
\end{aligned}
\end{equation*}
and the fact that
\begin{equation}
\label{eq:gamma:u^2}
\gamma_n\widetilde{u}_n^2
=\frac{\gamma_1^3}{n^{\beta+2\delta'}}
=\gamma_1^{2-2\frac{\delta'}\beta}\gamma_n^{1+2\frac{\delta'}\beta},
\quad n\in\mathbb{N}^*.
\end{equation}

Recall that $\gamma_1^{1-\frac{\delta'\wedge\beta}\beta}\gamma_n^{1+\frac{\delta'\wedge\beta}\beta}\geq\gamma_n^2$, $n\in\mathbb{N}^*$ by \eqref{eq:max:gamma<gamma}.
Therefore, via \eqref{eq:1-lambda>}, for all $n\in\mathbb{N}^*$, $h\in\overline{\mathcal{H}}$, $q\in\mathbb{N}^*$ and $q'\geq q$,
\begin{equation}
\label{eq:1-lambda*gamma>0}
\begin{aligned}
1-\lambda^{\mu_{h,q'}}_{h,q}\gamma_n
    &+(
        CM\mu_{h,q'}\|V'_h\|_\infty h
        +\nu^{\mu_{h,q'}}_{h,q}
    )
    \gamma_1^{1-\frac{\delta'\wedge\beta}\beta}
    \gamma_n^{1+\frac{\delta'\wedge\beta}\beta}\\
&\geq1
-\lambda^{\mu_{h,q'}}_{h,q}\gamma_n
+\nu^{\mu_{h,q'}}_{h,q}\gamma^2_n\\
&\geq\frac12>0.
\end{aligned}
\end{equation}
Hence, iterating $n$ times the inequality \eqref{eq:E[L2]<} yields, for $q'\geq2$,
\begin{equation}
\label{eq:E[L2(theta)]<}
\begin{aligned}
\mathbb{E}[\mathcal{L}_{h_{\ell+\lceil\theta\ell\rceil},2}(\widetilde\xi^{h_\ell}_n)]
&\leq\mathbb{E}[\mathcal{L}_{h_{\ell+\lceil\theta\ell\rceil},2}(\widetilde\xi^{h_\ell}_0)]
\widetilde\Pi^{h_{\ell+\lceil\theta\ell\rceil},2,q'}_{1:n}\\
&\hphantom{\leq}
+2d^{\beta,\delta'}_{h_{\ell+\lceil\theta\ell\rceil},2,q'}\gamma_1^{2-2\frac{\delta'}\beta}
\sum_{k=1}^nh_{\ell+\lceil\theta\ell\rceil}^2\gamma_k^{1+2\frac{\delta'}\beta}
\widetilde\Pi^{h_{\ell+\lceil\theta\ell\rceil},2,q'}_{k+1:n}\\
&\hphantom{\leq}
+2d^{\beta,\delta'}_{h_{\ell+\lceil\theta\ell\rceil},2,q'}
\sum_{k=1}^n\gamma_k^3
\widetilde\Pi^{h_{\ell+\lceil\theta\ell\rceil},2,q'}_{k+1:n},
\end{aligned}
\end{equation}
where
\begin{multline}
\label{eq:Pi:~}
\widetilde\Pi^{h,q,q'}_{k:n}:=\prod_{j=k}^n
\big(
    1-\lambda^{\mu_{h,q'}}_{h,q}\gamma_j
    +(
        CM\mu_{h,q'}\|V'_h\|_\infty h
        +\nu^{\mu_{h,q'}}_{h,q}
    )
    \gamma_1^{1-\frac{\delta'\wedge\beta}\beta}
    \gamma_j^{1+\frac{\delta'\wedge\beta}\beta}
\big),\\
\quad h\in\overline{\mathcal{H}},
\quad q\in\mathbb{N}^*,
\quad q'\geq q,
\quad k,n\in\mathbb{N}^*,
\end{multline}
with the convention $\prod_\varnothing=1$.

If $\gamma_n=\gamma_1n^{-1}$, $n\in\mathbb{N}^*$, then, by Lemma~\ref{lmm:gamma},
\begin{equation}
\label{estimate:L2:1}
\begin{aligned}
\mathbb{E}[\mathcal{L}_{h_{\ell+\lceil\theta\ell\rceil},2}(\widetilde{\xi}^{h_\ell}_n)]
&\leq\widehat{C}^{\ell,1}_{2,q'}\frac{\mathbb{E}[\mathcal{L}_{h_{\ell+\lceil\theta\ell\rceil},2}(\widetilde{\xi}^{h_\ell}_0)]}{(n+1)^{\lambda_{h_{\ell+\lceil\theta\ell\rceil},2}\gamma_1}}
+\bar{C}^{\ell,1}_{2,q'}\frac{\Phi_{\lambda_{h_{\ell+\lceil\theta\ell\rceil},2}\gamma_1-2\delta'}(n+1)}{(n+1)^{\lambda_{h_{\ell+\lceil\theta\ell\rceil},2}\gamma_1}}h_{\ell+\lceil\theta\ell\rceil}^2\\
&\hphantom{\leq}
+\widetilde{C}^{\ell,1}_{2,q'}\frac{\Phi_{\lambda_{h_{\ell+\lceil\theta\ell\rceil},2}\gamma_1-2}(n+1)}{(n+1)^{\lambda_{h_{\ell+\lceil\theta\ell\rceil},2}\gamma_1}},
\end{aligned}
\end{equation}
where
\begin{equation*}
\begin{aligned}
\widehat{C}^{\ell,1}_{2,q'}
&=\exp\bigg(\bar\zeta^{1,\delta'}_{h_{\ell+\lceil\theta\ell\rceil},2,q'}\Big(1+\frac1{\delta'}\Big)\gamma_1^{1+\delta'}+\frac{\lambda_{h_{\ell+\lceil\theta\ell\rceil},2}\gamma_1}2\bigg),\\
\bar{C}^{\ell,1}_{2,q'}
&=2^{1+(\lambda_{h_{\ell+\lceil\theta\ell\rceil},2}\gamma_1)\vee(1+\delta')}
\widehat{C}^{\ell,1}_{2,q'}
d^{1,\delta'}_{h_{\ell+\lceil\theta\ell\rceil},2,q'}
\gamma_1^3,\\
\widetilde{C}^{\ell,1}_{2,q'}
&=2^{1+(\lambda_{h_{\ell+\lceil\theta\ell\rceil},2}\gamma_1)\vee2}
\widehat{C}^{\ell,1}_{2,q'}
d^{1,\delta'}_{h_{\ell+\lceil\theta\ell\rceil},2,q'}
\gamma_1^3.
\end{aligned}
\end{equation*}
If $\bar\lambda_{h_{\ell+\lceil\theta\ell\rceil},1}\gamma_1
>1$, so that $\lambda_{h_{\ell+\lceil\theta\ell\rceil},2}\gamma_1
\geq\bar\lambda_{h_{\ell+\lceil\theta\ell\rceil},2}\gamma_1
\geq\bar\lambda_{h_{\ell+\lceil\theta\ell\rceil},1}\gamma_1
>1
\geq\delta'$, then
\begin{equation}
\label{estimate:lyapunov:2:1}
\mathbb{E}[\mathcal{L}_{h_{\ell+\lceil\theta\ell\rceil},2}(\widetilde\xi^{h_\ell}_n)]
\leq C^{\ell,1}_{2,q'}\big((h_\ell^{1+\theta}\widetilde{u}_n)\vee\gamma_n\big),
\end{equation}
where
\begin{equation*}
\begin{aligned}
C^{\ell,1}_{2,q'}
&=\gamma_1^{-1}
\bigg(\widehat{C}^{\ell,1}_{2,q'}\mathbb{E}[\mathcal{L}_{h_{\ell+\lceil\theta\ell\rceil},2}(\widetilde\xi^{h_\ell}_0)]\\
&\hphantom{=\gamma_1^{-1}
\bigg(}
+\frac{\bar{C}^{\ell,1}_{2,q'}}{M^{\ell+\lceil\theta\ell\rceil}}
\sup_{m\geq1}\bigg\{m^{\delta'}\frac{\Phi_{\lambda_{h_{\ell+\lceil\theta\ell\rceil},2}\gamma_1-2\delta'}(m+1)}{(m+1)^{\lambda_{h_{\ell+\lceil\theta\ell\rceil},2}\gamma_1}}\bigg\}\\
&\hphantom{=\gamma_1^{-1}
\bigg(}
+\widetilde{C}^{\ell,1}_{2,q'}
\sup_{m\geq1}\bigg\{m\frac{\Phi_{\lambda_{h_{\ell+\lceil\theta\ell\rceil},2}\gamma_1-2}(m+1)}{(m+1)^{\lambda_{h_{\ell+\lceil\theta\ell\rceil},2}\gamma_1}}\bigg\}\bigg).
\end{aligned}
\end{equation*}

If $\gamma_n=\gamma_1n^{-\beta}$, $n\in\mathbb{N}^*$, with $\beta\in(0,1)$, then, as a consequence of Lemma~\ref{lmm:gamma},
\begin{equation}
\label{estimate:L2:beta}
\begin{aligned}
&\mathbb{E}[\mathcal{L}_{h_{\ell+\lceil\theta\ell\rceil},2}(\widetilde\xi^{h_\ell}_n)]
\leq
(\mathbb{E}[\mathcal{L}_{h_{\ell+\lceil\theta\ell\rceil},2}(\widetilde\xi^{h_\ell}_0)]+\bar{C}^{\ell,\beta}_{2,q'}+\widetilde{C}^{\ell,\beta}_{2,q'})\\
&\quad\times
\exp\bigg(2^{1+\beta+\delta'\wedge\beta}\bar\zeta^{\beta,\delta'}_{h_{\ell+\lceil\theta\ell\rceil},2,q'}\gamma_1^{1+\frac{\delta'\wedge\beta}\beta}\Phi_{1-\beta-\delta'\wedge\beta}(n+1)
-\frac{\lambda_{h_{\ell+\lceil\theta\ell\rceil},2}\gamma_1}2\Phi_{1-\beta}(n+1)\bigg)\\
&+2^\beta(\bar{C}^{\ell,\beta}_{2,q'}\gamma_1^{1+2\frac{\delta'}\beta}h_{\ell+\lceil\theta\ell\rceil}^2
+\widetilde{C}^{\ell,\beta}_{2,q'}\gamma_1^3)
\exp(-2^{-(\beta+2)}\lambda_{h_{\ell+\lceil\theta\ell\rceil},2}\gamma_1n^{1-\beta})\Phi_{1-\beta}(n+1)\\
&
+\bar{C}^{\ell,\beta}_{2,q'}\frac{2^{1+2\delta'}\gamma_1^{2\frac{\delta'}\beta}}{\lambda_{h_{\ell+\lceil\theta\ell\rceil},2}n^{2\delta'}}
h_{\ell+\lceil\theta\ell\rceil}^2
+\widetilde{C}^{\ell,\beta}_{2,q'}
\frac{2^{1+2\beta}\gamma_1^2}{\lambda_{h_{\ell+\lceil\theta\ell\rceil},2}n^{2\beta}},
\end{aligned}
\end{equation}
where
\begin{equation*}
\begin{aligned}
\bar{C}^{\ell,\beta}_{2,q'}
&=
(\bar\zeta^{\beta,\delta'}_{h_{\ell+\lceil\theta\ell\rceil},2,q'})^{-1}
\bigg(\gamma_1^\frac{2\delta'-\delta'\wedge\beta}\beta\vee\bigg(\frac{\lambda_{h_{\ell+\lceil\theta\ell\rceil},2}}{2\bar\zeta^{\beta,\delta'}_{h_{\ell+\lceil\theta\ell\rceil},2,q'}}\bigg)^\frac{2\delta'-\delta'\wedge\beta}{\delta'\wedge\beta}\bigg),\\
\widetilde{C}^{\ell,\beta}_{2,q'}
&=
(\bar\zeta^{\beta,\delta'}_{h_{\ell+\lceil\theta\ell\rceil},2,q'})^{-1}
\bigg(\gamma_1^\frac{2\beta-\delta'\wedge\beta}\beta\vee\bigg(\frac{\lambda_{h_{\ell+\lceil\theta\ell\rceil},2}}{2\bar\zeta^{\beta,\delta'}_{h_{\ell+\lceil\theta\ell\rceil},2,q'}}\bigg)^\frac{2\beta-\delta'\wedge\beta}{\delta'\wedge\beta}\bigg).
\end{aligned}
\end{equation*}
Thus
\begin{equation}
\label{estimate:lyapunov:2:beta}
    \mathbb{E}[\mathcal{L}_{h_{\ell+\lceil\theta\ell\rceil},2}(\widetilde\xi^{h_\ell}_n)]
    \leq C^{\ell,\beta}_{2,q'}\big((h_\ell^{1+\theta}\widetilde{u}_n)\vee\gamma_n\big),
\end{equation}
where
\begin{equation*}
\begin{aligned}
C^{\ell,\beta}_{2,q'}
&=\gamma_1^{-1}(\mathbb{E}[\mathcal{L}_{h_{\ell+\lceil\theta\ell\rceil},2}(\widetilde\xi^{h_\ell}_0)]+\bar{C}^{\ell,\beta}_{2,q'}+\widetilde{C}^{\ell,\beta}_{2,q'})\\
&\hphantom{=+}\times
\sup_{m\geq1}\bigg\{m^\beta\exp\bigg(2^{1+\beta+\delta'\wedge\beta}\bar\zeta^{\beta,\delta'}_{h_{\ell+\lceil\theta\ell\rceil},2,q'}\gamma_1^{1+\frac{\delta'\wedge\beta}\beta}\Phi_{1-\beta-\delta'\wedge\beta}(m+1)\bigg)\\
&\hphantom{=+\times\sup_{m\geq1}\bigg\{}\times
\exp\bigg(-\frac{\lambda_{h_{\ell+\lceil\theta\ell\rceil},2}\gamma_1}2\Phi_{1-\beta}(m+1)\bigg)\bigg\}\\
&\hphantom{=}
+2^\beta(\bar{C}^{\ell,\beta}_{2,q'}\gamma_1^{2\frac{\delta'}\beta}h_{\ell+\lceil\theta\ell\rceil}^2
+\widetilde{C}^{\ell,\beta}_{2,q'}\gamma_1^2)\\
&\hphantom{=+}\times
\sup_{m\geq1}\{m^\beta\exp(-2^{-(\beta+2)}\lambda_{h_{\ell+\lceil\theta\ell\rceil},2}\gamma_1m^{1-\beta})\Phi_{1-\beta}(m+1)\}\\
&\hphantom{=}
+\bar{C}^{\ell,\beta}_{2,q'}\frac{2^{1+2\delta'}\gamma_1^{-1+2\frac{\delta'}\beta}}{h_0M^{\ell+\lceil\theta\ell\rceil}\lambda_{h_{\ell+\lceil\theta\ell\rceil},2}}
+\widetilde{C}^{\ell,\beta}_{2,q'}
\frac{2^{1+2\beta}\gamma_1}{\lambda_{h_{\ell+\lceil\theta\ell\rceil},2}}.
\end{aligned}
\end{equation*}

Note that $\sup_{\ell\geq1}C^{\ell,\beta}_{2,2}<\infty$, $\beta\in(0,1]$.
\\

\noindent
\emph{Step~3. Conclusion.}
\newline
Applying Lemma \ref{lmm:lyapunov}(\ref{lmm:lyapunov-iv}) with $h=h_{\ell+\lceil\theta\ell\rceil}$, $q=1$ yields
\begin{equation*}
\mathbb{E}[(\widetilde\xi^{h_\ell}_n-\xi^{h_{\ell+\lceil\theta\ell\rceil}}_\star)^2]
\leq
\kappa_{h_{\ell+\lceil\theta\ell\rceil},1}
\big(\mathbb{E}[\bar{\mathcal{L}}_{h_{\ell+\lceil\theta\ell\rceil},1}(\widetilde\xi^{h_\ell}_n)]
+\mathbb{E}[\bar{\mathcal{L}}_{h_{\ell+\lceil\theta\ell\rceil},2}(\widetilde\xi^{h_\ell}_n)]\big).
\end{equation*}
Thus, via \eqref{estimate:L1}, \eqref{estimate:L1:beta}, \eqref{estimate:lyapunov:2:1} and \eqref{estimate:lyapunov:2:beta}, if $\bar\lambda_{h_{\ell+\lceil\theta\ell\rceil},1}\gamma_1
>1$ when $\beta=1$, so that $\bar\lambda_{h_{\ell+\lceil\theta\ell\rceil},2}\gamma_1
\geq\bar\lambda_{h_{\ell+\lceil\theta\ell\rceil},1}\gamma_1
>1$ when $\beta=1$, then
\begin{equation*}
\mathbb{E}[(\widetilde\xi^{h_\ell}_n-\xi^{h_{\ell+\lceil\theta\ell\rceil}}_\star)^2]
\leq C^{\ell,\beta}\big((h_\ell^{1+\theta}\widetilde{u}_n)\vee\gamma_n\big),
\end{equation*}
where
\begin{equation}
\label{eq:C^l,b}
C^{\ell,\beta}=\kappa_{h_{\ell+\lceil\theta\ell\rceil},1}(C^{\ell,\beta}_{1,1}+C^{\ell,\beta}_{2,2}).
\end{equation}
Observe that $\sup_{\ell\geq1}C^{\ell,\beta}<\infty$, $\beta\in(0,1]$.
\\

\noindent(\ref{prp:error:statistical:bis:ii})\
Our goal is to apply Lemma~\ref{lmm:lyapunov}(\ref{lmm:lyapunov-iv}) with $q=2$ and $h=h_{\ell+\lceil\theta\ell\rceil}$.
Via Lemma~\ref{lmm:lyapunov}(\ref{lmm:lyapunov-i-bis}), by assumption, for $\mu\in\{\mu_{h_{\ell+\lceil\theta\ell\rceil},q'},q'\in\{1,2,3,4\}\}$,
\begin{equation*}
\begin{aligned}
\mathbb{E}[\mathcal{L}_{h_{\ell+\lceil\theta\ell\rceil},1}(\widetilde\xi^{h_\ell}_0)]
&\vee\mathbb{E}[\mathcal{L}_{h_{\ell+\lceil\theta\ell\rceil},2}(\widetilde\xi^{h_\ell}_0)]
\vee\mathbb{E}[\mathcal{L}_{h_{\ell+\lceil\theta\ell\rceil},3}(\widetilde\xi^{h_\ell}_0)]
\vee\mathbb{E}[\mathcal{L}_{h_{\ell+\lceil\theta\ell\rceil},4}(\widetilde\xi^{h_\ell}_0)]\\
&\leq\mathbb{E}\big[\big(1+8k_\alpha^4(|\widetilde\xi^{h_\ell}_0|^4+|\xi^{h_{\ell+\lceil\theta\ell\rceil}}_\star|^4)\big)
\exp\big(k_\alpha(|\widetilde\xi_0^{h_\ell}|+|\xi^{h_{\ell+\lceil\theta\ell\rceil}}_\star|)\big)\big]
<\infty.
\end{aligned}
\end{equation*}
\\

\noindent
\emph{Step~1. Alternative inequality on $\mathbb{E}[\mathcal{L}_{h_{\ell+\lceil\theta\ell\rceil},2}(\widetilde\xi^{h_\ell}_n)]$.}
\newline
If $\gamma_n=\gamma_1n^{-1}$, $n\in\mathbb{N}^*$, and $\bar\lambda_{h_{\ell+\lceil\theta\ell\rceil},2}\gamma_1>2$, which implies in particular that $\bar\lambda_{h_{\ell+\lceil\theta\ell\rceil},1}\gamma_1
=\frac{\bar\lambda_{h_{\ell+\lceil\theta\ell\rceil},2}\gamma_1}2>1$
and that, for $q'\geq2$, $\lambda_{h_{\ell+\lceil\theta\ell\rceil},2}^{\mu_{h_{\ell+\lceil\theta\ell\rceil},q'}}\gamma_1
\geq\bar\lambda_{h_{\ell+\lceil\theta\ell\rceil},2}\gamma_1
>2
\geq2\delta'$,
then, recalling the control on $\mathbb{E}[\mathcal{L}_{h_{\ell+\lceil\theta\ell\rceil},2}(\widetilde\xi^{h_\ell}_n)]$ derived in \eqref{estimate:L2:1}, for $q'\geq2$, 
\begin{equation}
\label{eq:EL2<:1}
\mathbb{E}[\mathcal{L}_{h_{\ell+\lceil\theta\ell\rceil},2}(\widetilde\xi^{h_\ell}_n)]
\leq\check{C}^{\ell,1}_{2,q'}\big((h_\ell^{2(1+\theta)}\widetilde{u}_n^2)\vee\gamma_n^2\big),
\end{equation}
where
\begin{equation*}
\check{C}^{\ell,1}_{2,q'}
=\gamma_1^{-2}
(\widehat{C}^{\ell,1}_{2,q'}\mathbb{E}[\mathcal{L}_{h_{\ell+\lceil\theta\ell\rceil},2}(\widetilde\xi^{h_\ell}_0)]
+h_0^{-2}\bar{C}^{\ell,1}_{2,q'}
+\widetilde{C}^{\ell,1}_{2,q'}).
\end{equation*}

Reconsidering the inequality on $\mathbb{E}[\mathcal{L}_{h_{\ell+\lceil\theta\ell\rceil},2}(\widetilde\xi^{h_\ell}_n)]$ obtained in \eqref{estimate:L2:beta},
if $\gamma_n=\gamma_1n^{-\beta}$, $n\in\mathbb{N}^*$, with $\beta\in(0,1)$, then
\begin{equation}
\label{eq:EL2<:beta}
    \mathbb{E}[\mathcal{L}_{h_{\ell+\lceil\theta\ell\rceil},2}(\widetilde\xi^{h_\ell}_n)]\leq \check{C}^{\ell,\beta}_{2,q'}\big((h_\ell^{2(1+\theta)}\widetilde{u}_n^2)\vee\gamma_n^2\big),
\end{equation}
where
\begin{equation*}
\begin{aligned}
\check{C}^{\ell,\beta}_{2,q'}
&=\gamma_1^{-2}(\mathbb{E}[\mathcal{L}_{h_{\ell+\lceil\theta\ell\rceil},2}(\widetilde\xi^{h_\ell}_0)]+\bar{C}^{\ell,\beta}_{2,q'}+\widetilde{C}^{\ell,\beta}_{2,q'})\\
&\hphantom{=+}\times
\sup_{m\geq1}\bigg\{m^{2\beta}\exp\bigg(2^{1+\beta+\delta'\wedge\beta}\bar\zeta^{\beta,\delta'}_{h_{\ell+\lceil\theta\ell\rceil},2,q'}\gamma_1^{1+\frac{\delta'\wedge\beta}\beta}\Phi_{1-\beta-\delta'\wedge\beta}(m+1)\bigg)\\
&\hphantom{=+\times\sup_{m\geq1}\bigg\{}\times
\exp\bigg(-\frac{\lambda_{h_{\ell+\lceil\theta\ell\rceil},2}\gamma_1}2\Phi_{1-\beta}(m+1)\bigg)\bigg\}\\
&\hphantom{=}
+2^\beta(\bar{C}^{\ell,\beta}_{2,q'}\gamma_1^{-1+2\frac{\delta'}\beta}h_{\ell+\lceil\theta\ell\rceil}^2
+\widetilde{C}^{\ell,\beta}_{2,q'}\gamma_1)\\
&\hphantom{=+}\times
\sup_{m\geq1}\{m^{2\beta}\exp(-2^{-(\beta+2)}\lambda_{h_{\ell+\lceil\theta\ell\rceil},2}\gamma_1m^{1-\beta})\Phi_{1-\beta}(m+1)\}\\
&\hphantom{=}
+\bar{C}^{\ell,\beta}_{2,q'}\frac{2^{1+2\delta'}\gamma_1^{-2+2\frac{\delta'}\beta}}{h_0^2\lambda_{h_{\ell+\lceil\theta\ell\rceil},2}}
+\widetilde{C}^{\ell,\beta}_{2,q'}
\frac{2^{1+2\beta}}{\lambda_{h_{\ell+\lceil\theta\ell\rceil},2}}.
\end{aligned}
\end{equation*}

Note that $\sup_{\ell\geq1}\check{C}^{\ell,\beta}_{2,q'}<\infty$, $q'\in\{2,3,4\}$, $\beta\in(0,1]$.
\\

\noindent
\emph{Step~2. Inequality on $\mathbb{E}[\mathcal{L}_{h_{\ell+\lceil\theta\ell\rceil},3}(\widetilde\xi^{h_\ell}_n)]$.}
\newline
To be able to apply Lemma~\ref{lmm:lyapunov}(\ref{lmm:lyapunov-iv}) with $q=2$ and $h=h_{\ell+\lceil\theta\ell\rceil}$, one needs a control on $\mathbb{E}[\bar{\mathcal{L}}_{h_{\ell+\lceil\theta\ell\rceil},4}(\widetilde\xi^{h_\ell}_n)]$, which, ostensibly from the recursive relation \eqref{eq:ELq<}, requires in turn a control on $\mathbb{E}[\mathcal{L}_{h_{\ell+\lceil\theta\ell\rceil},3}(\widetilde\xi^{h_\ell}_n)]$.
For $q'\geq3$, assuming that $\bar\lambda_{h_{\ell+\lceil\theta\ell\rceil},2}\gamma_1
>2$ if $\beta=1$, then, by \eqref{eq:EL2<:1} and \eqref{eq:EL2<:beta},
\begin{equation*}
\mathbb{E}[\mathcal{L}_{h_{\ell+\lceil\theta\ell\rceil},2}(\widetilde\xi^{h_\ell}_n)]
\leq\check{C}^{\ell,\beta}_{2,q'}(2^{2\delta'}M^2h_{\ell+\lceil\theta\ell\rceil}^2\widetilde{u}_{n+1}^2
+2^{2\beta}\gamma_{n+1}^2).
\end{equation*}
Thus, by taking $q=3$ and $h=h_{\ell+\lceil\theta\ell\rceil}$ in \eqref{eq:ELq<}, for $q'\geq3$,
\begin{equation}
\label{eq:EL3<}
\begin{aligned}
\mathbb{E}&[\mathcal{L}_{h_{\ell+\lceil\theta\ell\rceil},3}(\widetilde\xi^{h_\ell}_{n+1})]
\leq\mathbb{E}[\mathcal{L}_{h_{\ell+\lceil\theta\ell\rceil},3}(\widetilde\xi^{h_\ell}_n)]
(1
-\lambda_{h_{\ell+\lceil\theta\ell\rceil},3}\gamma_{n+1}
+\bar\zeta^{\beta,\delta'}_{h_{\ell+\lceil\theta\ell\rceil},3,q'}\gamma_{n+1}^{1+\frac{\delta'\wedge\beta}\beta})\\
&\quad
+2^{2\delta'}3C\check{C}^{\ell,\beta}_{2,q'}M^3\|V_{h_{\ell+\lceil\theta\ell\rceil}}'\|_\infty
h_{\ell+\lceil\theta\ell\rceil}^3\widetilde{u}_{n+1}^3\gamma_{n+1}\\
&\quad
+2^{2\beta}3C\check{C}^{\ell,\beta}_{2,q'}M\|V_{h_{\ell+\lceil\theta\ell\rceil}}'\|_\infty
h_{\ell+\lceil\theta\ell\rceil}\widetilde{u}_{n+1}\gamma_{n+1}^3\\
&\quad
+2^{2\delta'}\check{C}^{\ell,\beta}_{2,q'}M^2\nu_{h_{\ell+\lceil\theta\ell\rceil},3}
h_{\ell+\lceil\theta\ell\rceil}^2\widetilde{u}_{n+1}^2\gamma_{n+1}^2\\
&\quad
+(2^{2\beta}\check{C}^{\ell,\beta}_{2,q'}+1)\nu_{h_{\ell+\lceil\theta\ell\rceil},3}
\gamma_{n+1}^4\\
&\leq\mathbb{E}[\mathcal{L}_{h_{\ell+\lceil\theta\ell\rceil},3}(\widetilde\xi^{h_\ell}_n)]
(1
-\lambda_{h_{\ell+\lceil\theta\ell\rceil},3}\gamma_{n+1}
+\bar\zeta^{\beta,\delta'}_{h_{\ell+\lceil\theta\ell\rceil},3,q'}\gamma_{n+1}^{1+\frac{\delta'\wedge\beta}\beta})\\
&\quad
+\bar{d}^{\beta,\delta'}_{h_{\ell+\lceil\theta\ell\rceil},3,q'}
\gamma_{n+1}(h_{\ell+\lceil\theta\ell\rceil}\widetilde{u}_{n+1}+\gamma_{n+1})^3\\
&\leq\mathbb{E}[\mathcal{L}_{h_{\ell+\lceil\theta\ell\rceil},3}(\widetilde\xi^{h_\ell}_n)]
(1
-\lambda_{h_{\ell+\lceil\theta\ell\rceil},3}\gamma_{n+1}
+\bar\zeta^{\beta,\delta'}_{h_{\ell+\lceil\theta\ell\rceil},3,q'}\gamma_{n+1}^{1+\frac{\delta'\wedge\beta}\beta})\\
&\quad
+4\bar{d}^{\beta,\delta'}_{h_{\ell+\lceil\theta\ell\rceil},3,q'}\gamma_1^{3-3\frac{\delta'}\beta}
h_{\ell+\lceil\theta\ell\rceil}^3\gamma_{n+1}^{1+3\frac{\delta'}\beta}
+4\bar{d}^{\beta,\delta'}_{h_{\ell+\lceil\theta\ell\rceil},3,q'}
\gamma_{n+1}^4,
\end{aligned}
\end{equation}
where we introduced
\begin{equation*}
\begin{aligned}
\bar{d}_{h,q,q'}^{\beta,\delta}
&
=(2^{2\delta'}3C\check{C}^{\ell,\beta}_{2,q'}M^3\|V_h'\|_\infty)
\vee
(2^{2\beta}C\check{C}^{\ell,\beta}_{2,q'}M\|V_h'\|_\infty)\\
&\hphantom{=(2^{2\delta'}3C\check{C}^{\ell,\beta}_{2,q'}M^3\|V_h'\|_\infty)}\;
\vee
\bigg(\frac{2^{2\delta'}}3\check{C}^{\ell,\beta}_{2,q'}M^2\nu^{\mu_{h,q'}}_{h,q}\bigg)
\vee
\big((2^{2\beta}\check{C}^{\ell,\beta}_{2,q'}+1)\nu^{\mu_{h,q'}}_{h,q}\big),\\
&\qquad\qquad\qquad
\qquad\qquad\qquad
\qquad\qquad
      h\in\overline{\mathcal{H}},
\quad q\in\mathbb{N}^*,
\quad q'\geq q,
\quad \beta,\delta\in(0,1],
\end{aligned}
\end{equation*}
and used that
\begin{equation}
\label{eq:gamma:u^3}
\gamma_n\widetilde{u}_n^3
=\frac{\gamma_1^4}{n^{\beta+3\delta'}}
=\gamma_1^{3-3\frac{\delta'}\beta}\gamma_n^{1+3\frac{\delta'}\beta},
\quad n\in\mathbb{N}^*.
\end{equation}
Considering \eqref{eq:1-lambda*gamma>0} and recalling \eqref{eq:Pi:~}, for $q'\geq3$, by iterating $n$ times the inequality \eqref{eq:EL3<},
\begin{equation*}
\begin{aligned}
\mathbb{E}[\mathcal{L}_{h_{\ell+\lceil\theta\ell\rceil},3}(\widetilde\xi^{h_\ell}_n)]
&\leq\mathbb{E}[\mathcal{L}_{h_{\ell+\lceil\theta\ell\rceil},3}(\widetilde\xi^{h_\ell}_0)]\widetilde\Pi^{h_{\ell+\lceil\theta\ell\rceil},3,q'}_{1:n}\\
&\hphantom{\leq}
+4\bar{d}^{\beta,\delta'}_{h_{\ell+\lceil\theta\ell\rceil},3,q'}\gamma_1^{3-3\frac{\delta'}\beta}
\sum_{k=1}^nh_{\ell+\lceil\theta\ell\rceil}^3
\gamma_k^{1+3\frac{\delta'}\beta}
\widetilde\Pi^{h_{\ell+\lceil\theta\ell\rceil},3,q'}_{k+1:n}\\
&\hphantom{\leq}
+4\bar{d}^{\beta,\delta'}_{h_{\ell+\lceil\theta\ell\rceil},3,q'}
\sum_{k=1}^n\gamma_k^4
\widetilde\Pi^{h_{\ell+\lceil\theta\ell\rceil},3,q'}_{k+1:n}.
\end{aligned}
\end{equation*}

If $\gamma_n=\gamma_1n^{-1}$, $n\in\mathbb{N}^*$, then, by Lemma~\ref{lmm:gamma},
\begin{equation*}
\begin{aligned}
\mathbb{E}[\mathcal{L}_{h_{\ell+\lceil\theta\ell\rceil},3}(\widetilde{\xi}^{h_\ell}_n)]
&\leq\widehat{C}^{\ell,1}_{3,q'}
\frac{\mathbb{E}[\mathcal{L}_{h_{\ell+\lceil\theta\ell\rceil},3}(\widetilde{\xi}^{h_\ell}_0)]}{(n+1)^{\lambda_{h_{\ell+\lceil\theta\ell\rceil},3}\gamma_1}}
+\bar{C}^{\ell,1}_{3,q'}
\frac{\Phi_{\lambda_{h_{\ell+\lceil\theta\ell\rceil},3}\gamma_1-3\delta'}(n+1)}{(n+1)^{\lambda_{h_{\ell+\lceil\theta\ell\rceil},3}\gamma_1}}
h_{\ell+\lceil\theta\ell\rceil}^3\\
&\hphantom{\leq}
+\widetilde{C}^{\ell,1}_{3,q'}
\frac{\Phi_{\lambda_{h_{\ell+\lceil\theta\ell\rceil},3}\gamma_1-3}(n+1)}{(n+1)^{\lambda_{h_{\ell+\lceil\theta\ell\rceil},3}\gamma_1}},
\end{aligned}
\end{equation*}
where
\begin{equation*}
\begin{aligned}
\widehat{C}^{\ell,1}_{3,q'}
&=\exp\bigg(\bar\zeta^{1,\delta'}_{h_{\ell+\lceil\theta\ell\rceil},3,q'}\Big(1+\frac1{\delta'}\Big)\gamma_1^{1+\delta'}+\frac{\lambda_{h_{\ell+\lceil\theta\ell\rceil},3}\gamma_1}2\bigg),\\
\bar{C}^{\ell,1}_{3,q'}
&=2^{2+(\lambda_{h_{\ell+\lceil\theta\ell\rceil},3}\gamma_1)\vee(1+3\delta')}
\widehat{C}^{\ell,1}_{3,q'}
\bar{d}^{1,\delta'}_{h_{\ell+\lceil\theta\ell\rceil},3,q'}
\gamma_1^4,\\
\widetilde{C}^{\ell,1}_{3,q'}
&=2^{2+(\lambda_{h_{\ell+\lceil\theta\ell\rceil},3}\gamma_1)\vee4}
\widehat{C}^{\ell,1}_{3,q'}
\bar{d}^{1,\delta'}_{h_{\ell+\lceil\theta\ell\rceil},3,q'}
\gamma_1^4.
\end{aligned}
\end{equation*}
In particular, if $\bar\lambda_{h_{\ell+\lceil\theta\ell\rceil},2}>2$, so that $\lambda_{h_{\ell+\lceil\theta\ell\rceil},3}\gamma_1
\geq\lambda_{h_{\ell+\lceil\theta\ell\rceil},2}\gamma_1
\geq\bar\lambda_{h_{\ell+\lceil\theta\ell\rceil},2}\gamma_1
>2
\geq2\delta'$, then
\begin{equation}
\label{eq:EL3<:1}
\mathbb{E}[\mathcal{L}_{h_{\ell+\lceil\theta\ell\rceil},3}(\widetilde\xi^{h_\ell}_n)]
\leq C^{\ell,1}_{3,q'}\big((h_\ell^{2(1+\theta)}\widetilde{u}_n^2)\vee\gamma_n^2\big),
\end{equation}
where
\begin{equation*}
\begin{aligned}
C^{\ell,1}_{3,q'}
&=\gamma_1^{-2}
\bigg(\widehat{C}^{\ell,1}_{3,q'}\mathbb{E}[\mathcal{L}_{h_{\ell+\lceil\theta\ell\rceil},3}(\widetilde\xi^{h_\ell}_0)]\\
&\hphantom{=\gamma_1^{-1}
\bigg(}
+\frac{\bar{C}^{\ell,1}_{3,q'}}{h_0M^{\ell+\lceil\theta\ell\rceil}}
\sup_{m\geq1}\bigg\{m^{2\delta'}\frac{\Phi_{\lambda_{h_{\ell+\lceil\theta\ell\rceil},3}\gamma_1-3\delta'}(m+1)}{(m+1)^{\lambda_{h_{\ell+\lceil\theta\ell\rceil},3}\gamma_1}}\bigg\}\\
&\hphantom{=\gamma_1^{-2}
\bigg(}
+\widetilde{C}^{\ell,1}_{3,q'}
\sup_{m\geq1}\bigg\{m^2\frac{\Phi_{\lambda_{h_{\ell+\lceil\theta\ell\rceil},3}\gamma_1-3}(m+1)}{(m+1)^{\lambda_{h_{\ell+\lceil\theta\ell\rceil},3}\gamma_1}}\bigg\}\bigg).
\end{aligned}
\end{equation*}

If $\gamma_n=\gamma_1n^{-\beta}$, $n\in\mathbb{N}^*$, with $\beta\in(0,1)$, then, by Lemma~\ref{lmm:gamma},
\begin{equation*}
\begin{aligned}
&\mathbb{E}[\mathcal{L}_{h_{\ell+\lceil\theta\ell\rceil},3}(\widetilde\xi^{h_\ell}_n)]
\leq
(\mathbb{E}[\mathcal{L}_{h_{\ell+\lceil\theta\ell\rceil},3}(\widetilde\xi^{h_\ell}_0)]+\bar{C}^{\ell,\beta}_{3,q'}+\widetilde{C}^{\ell,\beta}_{3,q'})\\
&\quad\times
\exp\bigg(2^{1+\beta+\delta'\wedge\beta}\bar\zeta^{\beta,\delta'}_{h_{\ell+\lceil\theta\ell\rceil},3,q'}\gamma_1^{1+\frac{\delta'\wedge\beta}\beta}\Phi_{1-\beta-\delta'\wedge\beta}(n+1)
-\frac{\lambda_{h_{\ell+\lceil\theta\ell\rceil},3}\gamma_1}2\Phi_{1-\beta}(n+1)\bigg)\\
&+2^\beta(\bar{C}^{\ell,\beta}_{3,q'}\gamma_1^{1+3\frac{\delta'}\beta}h_{\ell+\lceil\theta\ell\rceil}^3
+\widetilde{C}^{\ell,\beta}_{3,q'}\gamma_1^4)
\exp(-2^{-(\beta+2)}\lambda_{h_{\ell+\lceil\theta\ell\rceil},3}\gamma_1n^{1-\beta})\Phi_{1-\beta}(n+1)\\
&
+\bar{C}^{\ell,\beta}_{3,q'}\frac{2^{1+3\delta'}\gamma_1^{3\frac{\delta'}\beta}}{\lambda_{h_{\ell+\lceil\theta\ell\rceil},3}n^{3\delta'}}
h_{\ell+\lceil\theta\ell\rceil}^3
+\widetilde{C}^{\ell,\beta}_{3,q'}
\frac{2^{1+3\beta}\gamma_1^3}{\lambda_{h_{\ell+\lceil\theta\ell\rceil},3}n^{3\beta}},
\end{aligned}
\end{equation*}
where
\begin{equation*}
\begin{aligned}
\bar{C}^{\ell,\beta}_{3,q'}
&=
(\bar\zeta^{\beta,\delta'}_{h_{\ell+\lceil\theta\ell\rceil},3,q'})^{-1}
\bigg(\gamma_1^\frac{3\delta'-\delta'\wedge\beta}\beta\vee\bigg(\frac{\lambda_{h_{\ell+\lceil\theta\ell\rceil},3}}{2\bar\zeta^{\beta,\delta'}_{h_{\ell+\lceil\theta\ell\rceil},3,q'}}\bigg)^\frac{3\delta'-\delta'\wedge\beta}{\delta'\wedge\beta}\bigg),\\
\widetilde{C}^{\ell,\beta}_{3,q'}
&=
(\bar\zeta^{\beta,\delta'}_{h_{\ell+\lceil\theta\ell\rceil},3,q'})^{-1}
\bigg(\gamma_1^\frac{3\beta-\delta'\wedge\beta}\beta\vee\bigg(\frac{\lambda_{h_{\ell+\lceil\theta\ell\rceil},3}}{2\bar\zeta^{\beta,\delta'}_{h_{\ell+\lceil\theta\ell\rceil},3,q'}}\bigg)^\frac{3\beta-\delta'\wedge\beta}{\delta'\wedge\beta}\bigg).
\end{aligned}
\end{equation*}
Thus
\begin{equation}
\label{eq:EL3<:beta}
    \mathbb{E}[\mathcal{L}_{h_{\ell+\lceil\theta\ell\rceil},3}(\widetilde\xi^{h_\ell}_n)]\leq C^{\ell,\beta}_{3,q'}\big((h_\ell^{2(1+\theta)}\widetilde{u}_n^2)\vee\gamma_n^2\big),
\end{equation}
where
\begin{equation*}
\begin{aligned}
C^{\ell,\beta}_{3,q'}
&=\gamma_1^{-2}(\mathbb{E}[\mathcal{L}_{h_{\ell+\lceil\theta\ell\rceil},3}(\widetilde\xi^{h_\ell}_0)]+\bar{C}^{\ell,\beta}_{3,q'}+\widetilde{C}^{\ell,\beta}_{3,q'})\\
&\hphantom{=+}\times
\sup_{m\geq1}\bigg\{m^{2\beta}\exp\bigg(2^{1+\beta+\delta'\wedge\beta}\bar\zeta^{\beta,\delta'}_{h_{\ell+\lceil\theta\ell\rceil},3,q'}\gamma_1^{1+\frac{\delta'\wedge\beta}\beta}\Phi_{1-\beta-\delta'\wedge\beta}(m+1)\bigg)\\
&\hphantom{=+\times\sup_{m\geq1}\bigg\{}\times
\exp\bigg(-\frac{\lambda_{h_{\ell+\lceil\theta\ell\rceil},3}\gamma_1}2\Phi_{1-\beta}(m+1)\bigg)\bigg\}\\
&\hphantom{=}
+2^\beta(\bar{C}^{\ell,\beta}_{3,q'}\gamma_1^{-1+3\frac{\delta'}\beta}h_{\ell+\lceil\theta\ell\rceil}^3
+\widetilde{C}^{\ell,\beta}_{3,q'}\gamma_1^2)\\
&\hphantom{=+}\times
\sup_{m\geq1}\{m^{2\beta}\exp(-2^{-(\beta+2)}\lambda_{h_{\ell+\lceil\theta\ell\rceil},3}\gamma_1m^{1-\beta})\Phi_{1-\beta}(m+1)\}\\
&\hphantom{=}
+\bar{C}^{\ell,\beta}_{3,q'}\frac{2^{1+3\delta'}\gamma_1^{-2+3\frac{\delta'}\beta}}{h_0M^{\ell+\lceil\theta\ell\rceil}\lambda_{h_{\ell+\lceil\theta\ell\rceil},3}}
+\widetilde{C}^{\ell,\beta}_{3,q'}
\frac{2^{1+3\beta}\gamma_1}{\lambda_{h_{\ell+\lceil\theta\ell\rceil},3}}.
\end{aligned}
\end{equation*}

Note that $\sup_{\ell\geq1}C^{\ell,\beta}_{3,q'}<\infty$, $q'\in\{3,4\}$, $\beta\in(0,1]$.
\\

\noindent
\emph{Step~3. Inequality on $\mathbb{E}[\mathcal{L}_{h_{\ell+\lceil\theta\ell\rceil},4}(\widetilde\xi^{h_\ell}_n)]$.}
\newline
For $q'\geq4$, assuming that $\bar\lambda_{h_{\ell+\lceil\theta\ell\rceil},2}\gamma_1>2$ if $\beta=1$, following from \eqref{eq:EL3<:1} and \eqref{eq:EL3<:beta},
\begin{equation*}
\mathbb{E}[\mathcal{L}_{h_{\ell+\lceil\theta\ell\rceil},3}(\widetilde\xi^{h_\ell}_n)]
\leq C^{\ell,\beta}_{3,q'}(2^{2\delta'}M^2h_{\ell+\lceil\theta\ell\rceil}^2\widetilde{u}_{n+1}^2
+2^{2\beta}\gamma_{n+1}^2).
\end{equation*}
Thus, the relation \eqref{eq:ELq<}, with $q=4$ and $h=h_{\ell+\lceil\theta\ell\rceil}$, together with the decreasing monotonicity of $(\gamma_n)_{n\geq1}$ and the relation \eqref{eq:gamma:u^3}, give
\begin{equation}
\label{eq:EL4<}
\begin{aligned}
\mathbb{E}&[\mathcal{L}_{h_{\ell+\lceil\theta\ell\rceil},4}(\widetilde\xi^{h_\ell}_{n+1})]
\leq\mathbb{E}[\mathcal{L}_{h_{\ell+\lceil\theta\ell\rceil},4}(\widetilde\xi^{h_\ell}_n)]
(1
-\lambda_{h_{\ell+\lceil\theta\ell\rceil},4}\gamma_{n+1}
+\bar\zeta^{\beta,\delta'}_{h_{\ell+\lceil\theta\ell\rceil},4,q'}\gamma_{n+1}^{1+\frac{\delta'\wedge\beta}\beta})\\
&\quad
+2^{2+2\delta'}CC^{\ell,\beta}_{3,q'}M^3\|V_{h_{\ell+\lceil\theta\ell\rceil}}'\|_\infty
h_{\ell+\lceil\theta\ell\rceil}^3\widetilde{u}_{n+1}^3\gamma_{n+1}\\
&\quad
+2^{2+2\beta}CC^{\ell,\beta}_{3,q'}M\|V_{h_{\ell+\lceil\theta\ell\rceil}}'\|_\infty
h_{\ell+\lceil\theta\ell\rceil}\widetilde{u}_{n+1}\gamma_{n+1}^3\\
&\quad
+2^{2\delta'}C^{\ell,\beta}_{3,q'}M^2\nu_{h_{\ell+\lceil\theta\ell\rceil},4}
h_{\ell+\lceil\theta\ell\rceil}^2\widetilde{u}_{n+1}^2\gamma_{n+1}^2\\
&\quad
+(2^{2\beta}C^{\ell,\beta}_{3,q'}+\gamma_1)\nu_{h_{\ell+\lceil\theta\ell\rceil},4}
\gamma_{n+1}^4\\
&\leq\mathbb{E}[\mathcal{L}_{h_{\ell+\lceil\theta\ell\rceil},4}(\widetilde\xi^{h_\ell}_n)]
(1
-\lambda_{h_{\ell+\lceil\theta\ell\rceil},4}\gamma_{n+1}
+\bar\zeta^{\beta,\delta'}_{h_{\ell+\lceil\theta\ell\rceil},4,q'}\gamma_{n+1}^{1+\frac{\delta'\wedge\beta}\beta})\\
&\quad
+\hat{d}^{\beta,\delta'}_{h_{\ell+\lceil\theta\ell\rceil},4,q'}
\gamma_{n+1}(h_{\ell+\lceil\theta\ell\rceil}\widetilde{u}_{n+1}+\gamma_{n+1})^3\\
&\leq\mathbb{E}[\mathcal{L}_{h_{\ell+\lceil\theta\ell\rceil},4}(\widetilde\xi^{h_\ell}_n)]
(1
-\lambda_{h_{\ell+\lceil\theta\ell\rceil},4}\gamma_{n+1}
+\bar\zeta^{\beta,\delta'}_{h_{\ell+\lceil\theta\ell\rceil},4,q'}\gamma_{n+1}^{1+\frac{\delta'\wedge\beta}\beta})\\
&\quad
+4\hat{d}^{\beta,\delta'}_{h_{\ell+\lceil\theta\ell\rceil},4,q'}\gamma_1^{3-3\frac{\delta'}\beta}
h_{\ell+\lceil\theta\ell\rceil}^3\gamma_{n+1}^{1+3\frac{\delta'}\beta}
+4\hat{d}^{\beta,\delta'}_{h_{\ell+\lceil\theta\ell\rceil},4,q'}
\gamma_{n+1}^4,
\end{aligned}
\end{equation}
where we introduced
\begin{equation*}
\begin{aligned}
\hat{d}_{h,q,q'}^{\beta,\delta}
&
=(2^{2+2\delta'}CC^{\ell,\beta}_{3,q'}M^3\|V_h'\|_\infty)
\vee
\bigg(\frac{2^{2+2\beta}}3CC^{\ell,\beta}_{3,q'}M\|V_h'\|_\infty\bigg)\\
&\hphantom{=(2^{2+2\delta'}CC^{\ell,\beta}_{3,q'}M^3\|V_h'\|_\infty)}\;
\vee
\bigg(\frac{2^{2\delta'}}3C^{\ell,\beta}_{3,q'}M^2\nu^{\mu_{h,q'}}_{h,q}\bigg)
\vee
\big((2^{2\beta}C^{\ell,\beta}_{3,q'}+\gamma_1)\nu^{\mu_{h,q'}}_{h,q}\big),\\
&\qquad\qquad\qquad
\qquad\qquad\qquad
\qquad\qquad
      h\in\overline{\mathcal{H}},
\quad q\in\mathbb{N}^*,
\quad q'\geq q,
\quad \beta,\delta\in(0,1].
\end{aligned}
\end{equation*}
Hence, iterating $n$ times the inequality \eqref{eq:EL4<},
\begin{equation*}
\begin{aligned}
\mathbb{E}[\mathcal{L}_{h_{\ell+\lceil\theta\ell\rceil},4}(\widetilde\xi^{h_\ell}_n)]
&\leq\mathbb{E}[\mathcal{L}_{h_{\ell+\lceil\theta\ell\rceil},4}(\widetilde\xi^{h_\ell}_0)]\widetilde\Pi^{h_{\ell+\lceil\theta\ell\rceil},4,q'}_{1:n}\\
&\hphantom{\leq}
+4\hat{d}^{\beta,\delta'}_{h_{\ell+\lceil\theta\ell\rceil},4,q'}\gamma_1^{3-3\frac{\delta'}\beta}
\sum_{k=1}^nh_{\ell+\lceil\theta\ell\rceil}^3
\gamma_k^{1+3\frac{\delta'}\beta}
\widetilde\Pi^{h_{\ell+\lceil\theta\ell\rceil},4,q'}_{k+1:n}\\
&\hphantom{\leq}
+4\hat{d}^{\beta,\delta'}_{h_{\ell+\lceil\theta\ell\rceil},4,q'}
\sum_{k=1}^n\gamma_k^4
\widetilde\Pi^{h_{\ell+\lceil\theta\ell\rceil},4,q'}_{k+1:n}.
\end{aligned}
\end{equation*}

If $\gamma_n=\gamma_1n^{-1}$, $n\in\mathbb{N}^*$, then, by Lemma~\ref{lmm:gamma},
\begin{equation*}
\begin{aligned}
\mathbb{E}[\mathcal{L}_{h_{\ell+\lceil\theta\ell\rceil},4}(\widetilde{\xi}^{h_\ell}_n)]
&\leq\widehat{C}^{\ell,1}_{4,q'}
\frac{\mathbb{E}[\mathcal{L}_{h_{\ell+\lceil\theta\ell\rceil},4}(\widetilde{\xi}^{h_\ell}_0)]}{(n+1)^{\lambda_{h_{\ell+\lceil\theta\ell\rceil},4}\gamma_1}}
+\bar{C}^{\ell,1}_{4,q'}
\frac{\Phi_{\lambda_{h_{\ell+\lceil\theta\ell\rceil},4}\gamma_1-3\delta'}(n+1)}{(n+1)^{\lambda_{h_{\ell+\lceil\theta\ell\rceil},4}\gamma_1}}
h_{\ell+\lceil\theta\ell\rceil}^3\\
&\hphantom{\leq}
+\widetilde{C}^{\ell,1}_{4,q'}
\frac{\Phi_{\lambda_{h_{\ell+\lceil\theta\ell\rceil},4}\gamma_1-3}(n+1)}{(n+1)^{\lambda_{h_{\ell+\lceil\theta\ell\rceil},4}\gamma_1}},
\end{aligned}
\end{equation*}
where
\begin{equation*}
\begin{aligned}
\widehat{C}^{\ell,1}_{4,q'}
&=\exp\bigg(\bar\zeta^{1,\delta'}_{h_{\ell+\lceil\theta\ell\rceil},4,q'}\Big(1+\frac1{\delta'}\Big)\gamma_1^{1+\delta'}+\frac{\lambda_{h_{\ell+\lceil\theta\ell\rceil},4}\gamma_1}2\bigg),\\
\bar{C}^{\ell,1}_{4,q'}
&=2^{2+(\lambda_{h_{\ell+\lceil\theta\ell\rceil},4}\gamma_1)\vee(1+3\delta')}
\widehat{C}^{\ell,1}_{4,q'}
\hat{d}^{1,\delta'}_{h_{\ell+\lceil\theta\ell\rceil},4,q'}
\gamma_1^4,\\
\widetilde{C}^{\ell,1}_{4,q'}
&=2^{2+(\lambda_{h_{\ell+\lceil\theta\ell\rceil},4}\gamma_1)\vee4}
\widehat{C}^{\ell,1}_{4,q'}
\hat{d}^{1,\delta'}_{h_{\ell+\lceil\theta\ell\rceil},4,q'}
\gamma_1^4.
\end{aligned}
\end{equation*}
In particular, if $\bar\lambda_{h_{\ell+\lceil\theta\ell\rceil},2}>2$, so that $\lambda_{h_{\ell+\lceil\theta\ell\rceil},4}\gamma_1
\geq\lambda_{h_{\ell+\lceil\theta\ell\rceil},2}\gamma_1
\geq\bar\lambda_{h_{\ell+\lceil\theta\ell\rceil},2}\gamma_1
>2
\geq2\delta'$, then
\begin{equation}
\label{eq:EL4<:1}
\mathbb{E}[\mathcal{L}_{h_{\ell+\lceil\theta\ell\rceil},4}(\widetilde\xi^{h_\ell}_n)]
\leq C^{\ell,1}_{4,q'}\big((h_\ell^{2(1+\theta)}\widetilde{u}_n^2)\vee\gamma_n^2\big),
\end{equation}
where
\begin{equation*}
\begin{aligned}
C^{\ell,1}_{4,q'}
&=\gamma_1^{-2}
\bigg(\widehat{C}^{\ell,1}_{4,q'}\mathbb{E}[\mathcal{L}_{h_{\ell+\lceil\theta\ell\rceil},4}(\widetilde\xi^{h_\ell}_0)]\\
&\hphantom{=\gamma_1^{-1}\bigg(}
+\frac{\bar{C}^{\ell,1}_{4,q'}}{h_0M^{\ell+\lceil\theta\ell\rceil}}
\sup_{m\geq1}\bigg\{m^{2\delta'}\frac{\Phi_{\lambda_{h_{\ell+\lceil\theta\ell\rceil},4}\gamma_1-3\delta'}(m+1)}{(m+1)^{\lambda_{h_{\ell+\lceil\theta\ell\rceil},4}\gamma_1}}\bigg\}\\
&\hphantom{=\gamma_1^{-2}
\bigg(}
+\widetilde{C}^{\ell,1}_{4,q'}
\sup_{m\geq1}\bigg\{m^2\frac{\Phi_{\lambda_{h_{\ell+\lceil\theta\ell\rceil},4}\gamma_1-3}(m+1)}{(m+1)^{\lambda_{h_{\ell+\lceil\theta\ell\rceil},4}\gamma_1}}\bigg\}\bigg).
\end{aligned}
\end{equation*}

If $\gamma_n=\gamma_1n^{-\beta}$, $n\in\mathbb{N}^*$, with $\beta\in(0,1)$, then, by Lemma~\ref{lmm:gamma},
\begin{equation*}
\begin{aligned}
&\mathbb{E}[\mathcal{L}_{h_{\ell+\lceil\theta\ell\rceil},4}(\widetilde\xi^{h_\ell}_n)]
\leq
(\mathbb{E}[\mathcal{L}_{h_{\ell+\lceil\theta\ell\rceil},4}(\widetilde\xi^{h_\ell}_0)]+\bar{C}^{\ell,\beta}_{4,q'}+\widetilde{C}^{\ell,\beta}_{4,q'})\\
&\quad\times
\exp\bigg(2^{1+\beta+\delta'\wedge\beta}\bar\zeta^{\beta,\delta'}_{h_{\ell+\lceil\theta\ell\rceil},4,q'}\gamma_1^{1+\frac{\delta'\wedge\beta}\beta}\Phi_{1-\beta-\delta'\wedge\beta}(n+1)
-\frac{\lambda_{h_{\ell+\lceil\theta\ell\rceil},4}\gamma_1}2\Phi_{1-\beta}(n+1)\bigg)\\
&+2^\beta(\bar{C}^{\ell,\beta}_{4,q'}\gamma_1^{1+3\frac{\delta'}\beta}h_{\ell+\lceil\theta\ell\rceil}^3
+\widetilde{C}^{\ell,\beta}_{4,q'}\gamma_1^4)
\exp(-2^{-(\beta+2)}\lambda_{h_{\ell+\lceil\theta\ell\rceil},4}\gamma_1n^{1-\beta})\Phi_{1-\beta}(n+1)\\
&
+\bar{C}^{\ell,\beta}_{4,q'}\frac{2^{1+3\delta'}\gamma_1^{3\frac{\delta'}\beta}}{\lambda_{h_{\ell+\lceil\theta\ell\rceil},4}n^{3\delta'}}
h_{\ell+\lceil\theta\ell\rceil}^3
+\widetilde{C}^{\ell,\beta}_{4,q'}
\frac{2^{1+3\beta}\gamma_1^3}{\lambda_{h_{\ell+\lceil\theta\ell\rceil},4}n^{3\beta}},
\end{aligned}
\end{equation*}
where
\begin{equation*}
\begin{aligned}
\bar{C}^{\ell,\beta}_{4,q'}
&=
(\bar\zeta^{\beta,\delta'}_{h_{\ell+\lceil\theta\ell\rceil},4,q'})^{-1}
\bigg(\gamma_1^\frac{3\delta'-\delta'\wedge\beta}\beta\vee\bigg(\frac{\lambda_{h_{\ell+\lceil\theta\ell\rceil},4}}{2\bar\zeta^{\beta,\delta'}_{h_{\ell+\lceil\theta\ell\rceil},4,q'}}\bigg)^\frac{3\delta'-\delta'\wedge\beta}{\delta'\wedge\beta}\bigg),\\
\widetilde{C}^{\ell,\beta}_{4,q'}
&=
(\bar\zeta^{\beta,\delta'}_{h_{\ell+\lceil\theta\ell\rceil},4,q'})^{-1}
\bigg(\gamma_1^\frac{3\beta-\delta'\wedge\beta}\beta\vee\bigg(\frac{\lambda_{h_{\ell+\lceil\theta\ell\rceil},4}}{2\bar\zeta^{\beta,\delta'}_{h_{\ell+\lceil\theta\ell\rceil},4,q'}}\bigg)^\frac{3\beta-\delta'\wedge\beta}{\delta'\wedge\beta}\bigg).
\end{aligned}
\end{equation*}
Thus
\begin{equation}
\label{eq:EL4<:beta}
    \mathbb{E}[\mathcal{L}_{h_{\ell+\lceil\theta\ell\rceil},4}(\widetilde\xi^{h_\ell}_n)]\leq C^{\ell,\beta}_{4,q'}\big((h_\ell^{2(1+\theta)}\widetilde{u}_n^2)\vee\gamma_n^2\big),
\end{equation}
where
\begin{equation*}
\begin{aligned}
C^{\ell,\beta}_{4,q'}
&=\gamma_1^{-2}(\mathbb{E}[\mathcal{L}_{h_{\ell+\lceil\theta\ell\rceil},4}(\widetilde\xi^{h_\ell}_0)]+\bar{C}^{\ell,\beta}_{4,q'}+\widetilde{C}^{\ell,\beta}_{4,q'})\\
&\hphantom{=+}\times
\sup_{m\geq1}\bigg\{m^{2\beta}\exp\bigg(2^{1+\beta+\delta'\wedge\beta}\bar\zeta^{\beta,\delta'}_{h_{\ell+\lceil\theta\ell\rceil},4,q'}\gamma_1^{1+\frac{\delta'\wedge\beta}\beta}\Phi_{1-\beta-\delta'\wedge\beta}(m+1)\bigg)\\
&\hphantom{=+\times\sup_{m\geq1}\bigg\{}\times
\exp\bigg(-\frac{\lambda_{h_{\ell+\lceil\theta\ell\rceil},4}\gamma_1}2\Phi_{1-\beta}(m+1)\bigg)\bigg\}\\
&\hphantom{=}
+2^\beta(\bar{C}^{\ell,\beta}_{4,q'}\gamma_1^{-1+3\frac{\delta'}\beta}h_{\ell+\lceil\theta\ell\rceil}^3
+\widetilde{C}^{\ell,\beta}_{4,q'}\gamma_1^2)\\
&\hphantom{=+}\times
\sup_{m\geq1}\{m^{2\beta}\exp(-2^{-(\beta+2)}\lambda_{h_{\ell+\lceil\theta\ell\rceil},4}\gamma_1m^{1-\beta})\Phi_{1-\beta}(m+1)\}\\
&\hphantom{=}
+\bar{C}^{\ell,\beta}_{4,q'}\frac{2^{1+3\delta'}\gamma_1^{-2+3\frac{\delta'}\beta}}{h_0M^{\ell+\lceil\theta\ell\rceil}\lambda_{h_{\ell+\lceil\theta\ell\rceil},4}}
+\widetilde{C}^{\ell,\beta}_{4,q'}
\frac{2^{1+3\beta}\gamma_1}{\lambda_{h_{\ell+\lceil\theta\ell\rceil},4}}.
\end{aligned}
\end{equation*}

Note that $\sup_{\ell\geq1}C^{\ell,\beta}_{4,4}<\infty$, $\beta\in(0,1]$.
\\

\noindent
\emph{Step~4. Conclusion.}
\newline
All in all, if $\bar\lambda_{h_{\ell+\lceil\theta\ell\rceil},2}\gamma_1>2$ when $\beta=1$, then, considering Lemma~\ref{lmm:lyapunov}(\ref{lmm:lyapunov-iv}) with $q=2$, $h=h_{\ell+\lceil\theta\ell\rceil}$, $\mu=\mu_{h_{\ell+\lceil\theta\ell\rceil},2}$ and $\mu'=\mu_{h_{\ell+\lceil\theta\ell\rceil},4}$, the inequalities \eqref{eq:EL2<:1}, \eqref{eq:EL2<:beta}, \eqref{eq:EL4<:1} and \eqref{eq:EL4<:beta} yield
\begin{equation*}
\mathbb{E}[(\widetilde\xi^{h_\ell}_n-\xi^{h_{\ell+\lceil\theta\ell\rceil}}_\star)^4]
\leq \bar{C}^{\ell,\beta}\big((h_\ell^{2(1+\theta)}\widetilde{u}_n^2)\vee\gamma_n^2\big),
\end{equation*}
where the
\begin{equation}
\label{eq:C:bar:l,b}
\bar{C}^{\ell,\beta}=\kappa_{h_{\ell+\lceil\theta\ell\rceil},2}(\check{C}^{\ell,\beta}_{2,2}+C^{\ell,\beta}_{4,4})
\end{equation}
satisfy $\sup_{\ell\geq1}\bar{C}^{\ell,\beta}<\infty$, $\beta\in(0,1]$.
\end{proof}

\begin{proof}[Proof of Lemma~\ref{lmm:local:strong:error:indicator:func:variance}]
\label{prf:local:strong:error:indicator:func:variance}

Let $\ell,n\in\mathbb{N}^*$ and $\xi\in\mathbb{R}$, $C_\mathrm{ad}>0$, $r>1$ and $\theta\in(0,1]$.

Via \eqref{eq:eta} and the law of total probability,
\begin{equation*}
\mathbb{E}[|\mathds1_{\{X^{h_{\ell+\eta^\ell_n(\xi)}}>\xi\}}-\mathds1_{\{X^0>\xi\}}|]
=\sum_{k=0}^{\lceil\theta\ell\rceil}\mathbb{E}[|\mathds1_{\{X^{h_{\ell+k}}>\xi\}}-\mathds1_{\{X^0>\xi\}}|\mathds1_{\{\eta^\ell_n(\xi)=k\}}]
\leq A^\ell+B^\ell,
\end{equation*}
where
\begin{align}
A^\ell&:=\mathbb{E}[|\mathds1_{\{X^{h_{\ell+\lceil\theta\ell\rceil}}>\xi\}}-\mathds1_{\{X^0>\xi\}}|],\notag\\
B^\ell&:=\sum_{k=0}^{\lceil\theta\ell\rceil-1}\mathbb{E}[|\mathds1_{\{X^{h_{\ell+k}}\leq\xi\}}-\mathds1_{\{X^0\leq\xi\}}|\mathds1_{\{\eta^\ell_n(\xi)=k\}}].\label{eq:carre}
\end{align}

Let $k\in[\![0,\lceil\theta\ell\rceil-1]\!]$.
Observe that
\begin{equation*}
|\mathds1_{\{X^{h_{\ell+k}}\leq\xi\}}-\mathds1_{\{X^{h_{\ell+\lceil\theta\ell\rceil}}\leq\xi\}}|
=\mathds1_{\{X^{h_{\ell+k}}\leq\xi<X^{h_{\ell+\lceil\theta\ell\rceil}}\}}+\mathds1_{\{X^{h_{\ell+\lceil\theta\ell\rceil}}\leq\xi<X^{h_{\ell+k}}\}},
\end{equation*}
that
\begin{equation*}
\{X^{h_{\ell+k}}\leq\xi<X^{h_{\ell+\lceil\theta\ell\rceil}}\}\cup\{X^{h_{\ell+\lceil\theta\ell\rceil}}\leq\xi<X^{h_{\ell+k}}\}\subset\{|X^{h_{\ell+k}}-X^{h_{\ell+\lceil\theta\ell\rceil}}|\geq|X^{h_{\ell+k}}-\xi|\},
\end{equation*}
and that, by \eqref{eq:eta},
\begin{equation*}
\{\eta^\ell_n(\xi)=k\}\subset\{|X^{h_{\ell+k}}-\xi|\geq C_\mathrm{ad}\psi^{\ell,k}_n\}.
\end{equation*}
Hence
\begin{equation}\label{eq:E[|1|]<}
\begin{aligned}
\mathbb{E}[|&\mathds1_{\{X^{h_{\ell+k}}\leq\xi\}}-\mathds1_{\{X^{h_{\ell+\lceil\theta\ell\rceil}}\leq\xi\}}|\mathds1_{\{\eta^\ell_n(\xi)=k\}}]\\
&\leq\mathbb{E}[(\mathds1_{\{X^{h_{\ell+k}}\leq\xi<X^{h_{\ell+\lceil\theta\ell\rceil}}\}}+\mathds1_{\{X^{h_{\ell+\lceil\theta\ell\rceil}}\leq\xi<X^{h_{\ell+k}}\}})
\mathds1_{\{|X^{h_{\ell+k}}-\xi|\geq C_\mathrm{ad}\psi^{\ell,k}_n\}}]\\
&\leq\mathbb{P}(|X^{h_{\ell+k}}-X^{h_{\ell+\lceil\theta\ell\rceil}}|\geq|X^{h_{\ell+k}}-\xi|,|X^{h_{\ell+k}}-\xi|\geq C_\mathrm{ad}\psi^{\ell,k}_n)\\
&\leq\mathbb{P}(|X^{h_{\ell+k}}-X^{h_{\ell+\lceil\theta\ell\rceil}}|\geq C_\mathrm{ad}\psi^{\ell,k}_n).
\end{aligned}
\end{equation}

In the following, we separately study $A^\ell$ and $B^\ell$ within each framework.
\\

\noindent (\ref{lmm:local:strong:error:indicator:func-i:variance})\hyperref[lmm:local:strong:error:indicator:func-ia:variance]{\rm a}.\
\emph{Step~1. Study of $A^\ell$.}
\newline
Using Lemma~\ref{lmm:local:strong:error:indicator:func}(\ref{lmm:local:strong:error:indicator:func-i})\hyperref[lmm:local:strong:error:indicator:func-ia]{\rm a},
\begin{equation*}
A^\ell
\leq C_{1A}h_\ell^{(1+\theta)\frac{p_\star}{2(p_\star+1)}},
\end{equation*}
where
\begin{equation*}
C_{1A}=h_0^{-\frac{p_\star}{2(p_\star+1)}}B_{p_\star}\mathbb{E}\big[\big|\varphi(Y, Z)-\mathbb{E}[\varphi(Y,Z)|Y]\big|^{p_\star}\big]^\frac1{p_\star+1}\Big(\sup_{h\in\overline{\mathcal{H}}}\|f_{X^h}\|_\infty\Big)^\frac{p_\star}{p_\star+1},
\end{equation*}
with $B_{p_\star}$ being a positive constant that depends only on $p_\star$.
\\

\noindent\emph{Step~2. Study of $B^\ell$.}
\newline
Via \eqref{eq:E[|1|]<}, the definition \eqref{eq:eta}, Markov's inequality and Lemma~\ref{lmm:aux}(\ref{lmm:aux:i}),
\begin{equation*}
\begin{aligned}
\mathbb{E}[|&\mathds1_{\{X^{h_{\ell+k}}\leq\xi\}}-\mathds1_{\{X^0\leq\xi\}}|\mathds1_{\{\eta^\ell_n(\xi)=k\}}]\\
&\leq\mathbb{P}(|X^{h_{\ell+k}}-X^0|\geq C_\mathrm{ad}u_n^{-\frac1{p_\star}}h_{\theta\ell(r-1)+k}^\frac1r)\\
&\leq C_\mathrm{ad}^{-p_\star}u_nh_{\theta\ell(r-1)+k}^{-\frac{p_\star}r}\mathbb{E}[|X^{h_{\ell+k}}-X^0|^{p_\star}]\\
&\leq C_{1b}h_{\theta\ell(r-1)+k}^{-\frac{p_\star}r}h_{\ell+k}^\frac{p_\star}2,
\end{aligned}
\end{equation*}
where
\begin{equation*}
C_{1b}=\gamma_1\sqrt2C_\mathrm{ad}^{-p_\star}B_{p_\star}\mathbb{E}\big[\big|\varphi(Y, Z)-\mathbb{E}[\varphi(Y, Z)|Y]\big|^{p_\star}\big]^\frac1{p_\star}.
\end{equation*}
Thus, by \eqref{eq:carre} and using the condition \eqref{assumption:finite:Lp:moment:variance} on $r$ and $\theta$,
\begin{equation*}
\begin{aligned}
B^\ell
&\leq C_{1b}\sum_{k=0}^{\lceil\theta\ell\rceil-1}h_{\theta\ell(r-1)+k}^{-\frac{p_\star}r}h_{\ell+k}^\frac{p_\star}2\\
&\leq\frac{C_{1b}h_1^{-p_\star(\frac1r-\frac12)}}{M^{p_\star(\frac1r-\frac12)}-1}M^{-\frac{p_\star\ell}2(1-\theta)}\\
&\leq C_{1B}h_\ell^{(1+\theta)\frac{p_\star}{2(p_\star+1)}},
\end{aligned}
\end{equation*}
where
\begin{equation*}
C_{1B}=\frac{C_{1b}h_0^{-\frac{2(p_\star+1)}{p_\star+2}}h_1^{-p_\star(\frac1r-\frac12)}}{M^{p_\star(\frac1r-\frac12)}-1}.
\end{equation*}

\noindent (\ref{lmm:local:strong:error:indicator:func-i:variance})\hyperref[lmm:local:strong:error:indicator:func-ib:variance]{\rm b}.\
\emph{Step~1. Study of $A^\ell$.}
\newline
Lemma~\ref{lmm:local:strong:error:indicator:func}(\ref{lmm:local:strong:error:indicator:func-i})\hyperref[lmm:local:strong:error:indicator:func-ib]{\rm b} entails
\begin{equation*}
A^\ell\leq2\sqrt{\mathfrak{g}h_{\ell+\lceil\theta\ell\rceil}}\bigg(1+\Big(\sup_{h\in\overline{\mathcal{H}}}{\|f_{X^h}\|_\infty}\Big)\sqrt{2|\ln{(\mathfrak{g}h_{\ell+\lceil\theta\ell\rceil})}|}\bigg).
\end{equation*}
On the one hand,
\begin{equation*}
2\sqrt{\mathfrak{g}h_{\ell+\lceil\theta\ell\rceil}}
\leq2\sqrt{\mathfrak{g}h_0^{-1}}h_\ell^\frac{1+\theta}2.
\end{equation*}
On the other hand, using that $\theta\leq1$ under the condition~\eqref{assumption:conditional:gaussian:concentration:variance} and the inequality $\sqrt{x+y}\leq\sqrt{x}+\sqrt{y}$, $x,y\geq0$,
\begin{equation*}
\sqrt{2|\ln{(\mathfrak{g}h_{\ell+\theta\ell\rceil})}|}
\leq\sqrt{2|\ln{(\mathfrak{g}h_0^{-1})}+(1+\theta)\ln{h_\ell}|}
\leq\sqrt{2|\ln{(\mathfrak{g}h_0^{-1})}|}+2\sqrt{|\ln{h_\ell}|}.
\end{equation*}
Thus
\begin{equation*}
A^\ell
\leq C_{2A}h_\ell^\frac{1+\theta}2|\ln{h_\ell}|^\frac12,
\end{equation*}
where
\begin{equation*}
C_{2A}=4\mathfrak{g}^\frac12h_0^{-\frac12}\bigg(1+\Big(\sup_{h\in\overline{\mathcal{H}}}{\|f_{X^h}\|_\infty}\Big)\Big(2\big(2\vee|\ln{(\mathfrak{g}h_0^{-1})}|\big)\Big)^\frac12\bigg).
\end{equation*}

\noindent\emph{Step~2. Study of $B^\ell$.}
\newline
Let $\lambda>0$. By the inequality \eqref{eq:E[|1|]<}, the definition \eqref{eq:eta}, Markov's exponential inequality and Lemma~\ref{lmm:aux}(\ref{lmm:aux:ii}),
\begin{equation*}
\begin{aligned}
\mathbb{E}[|&\mathds1_{\{X^{h_{\ell+k}}\leq\xi\}}-\mathds1_{\{X^0\leq\xi\}}|\mathds1_{\{\eta^\ell_n(\xi)=k\}}]\\
&\leq\mathbb{P}\Big(|X^{h_{\ell+k}}-X^0|\geq C_\mathrm{ad}h^\frac1r_{\theta\ell(r-1)+k}\ln^\frac12{\big((\gamma_1^{-1}\gamma_n)^{-\frac12}h_{\ell+k}^{-\frac{1+\theta}2}\big)}\Big)\\
&\leq\exp\Big(-\lambda C_\mathrm{ad}h^\frac1r_{\theta\ell(r-1)+k}\ln^\frac12{\big((\gamma_1^{-1}\gamma_n)^{-\frac12}h_{\ell+k}^{-\frac{1+\theta}2}\big)}\Big)\mathbb{E}[\exp(\lambda|X^{h_{\ell+k}}-X^0|)]\\
&\leq2\exp\Big(-\lambda C_\mathrm{ad}h^\frac1r_{\theta\ell(r-1)+k}\ln^\frac12{\big((\gamma_1^{-1}\gamma_n)^{-\frac12}h_{\ell+k}^{-\frac{1+\theta}2}\big)}+\mathfrak{g}\lambda^2h_{\ell+k}\Big),
\end{aligned}
\end{equation*}
where we used the inequality $\e^{|x|}\leq\e^x+\e^{-x}$, $x\in\mathbb{R}$.
Minimizing the above upper bound with respect to $\lambda$ yields
\begin{equation*}
\begin{aligned}
\mathbb{E}[|\mathds1_{\{X^{h_{\ell+k}}\leq\xi\}}&-\mathds1_{\{X^0\leq\xi\}}|\mathds1_{\{\eta^\ell_n(\xi)=k\}}]\\
&\leq2\exp\bigg(-\frac{C_\mathrm{ad}^2h^\frac2r_{\theta\ell(r-1)+k}\ln{\big((\gamma_1^{-1}\gamma_n)^{-\frac12}h_{\ell+k}^{-\frac{1+\theta}2}\big)}}{4\mathfrak{g}h_{\ell+k}}\bigg)\\
&=2\bigg(\frac{\gamma_n}{\gamma_1}h_{\ell+k}^{1+\theta}\bigg)^\frac{C_\mathrm{ad}^2h_{\theta\ell(r-1)+k}^\frac2r}{8\mathfrak{g}h_{\ell+k}}.
\end{aligned}
\end{equation*}
Note that $\gamma_n\leq\gamma_1$ and $h_{\ell+k}^{1+\theta}\leq1$.
The condition \eqref{assumption:conditional:gaussian:concentration:variance} on $h_0$, $r$ and $\theta$ implies in particular that $\theta\leq1\leq\frac{r}{2(r-1)}$, so that $-2\theta(1-\frac1r)+1\geq0$, hence $M^{-2\theta\ell(1-\frac1r)+\ell}\geq1$.
It also entails that $\frac{C_\mathrm{ad}^2h_0^{\frac2r-1}}{8\mathfrak{g}}\geq\frac12$.
Hence
\begin{equation*}
\begin{aligned}
\mathbb{E}[|\mathds1_{\{X^{h_{\ell+k}}\leq\xi\}}&-\mathds1_{\{X^0\leq\xi\}}|\mathds1_{\{\eta^\ell_n(\xi)=k\}}]\\
&\leq2h_{\ell+k}^{(1+\theta)\frac{C_\mathrm{ad}^2h_0^{\frac2r-1}}{8\mathfrak{g}}M^{-2\theta\ell(1-\frac1r)+\ell}}\\
&\leq2h_{\ell+k}^{\frac{1+\theta}2}.
\end{aligned}
\end{equation*}
Eventually, \eqref{eq:carre} yields
\begin{equation*}
B^\ell
\leq C_{2B}h_\ell^\frac{1+\theta}2
\leq C_{2B}h_\ell^\frac{1+\theta}2|\ln{h_\ell}|^\frac12,
\end{equation*}
where
\begin{equation*}
C_{2B}=\frac2{1-M^{-\frac12}}.
\end{equation*}

\noindent (\ref{lmm:local:strong:error:indicator:func-ii:variance})\
\emph{Step~1. Study of $A^\ell$.}
\newline
Consequently from Lemma~\ref{lmm:local:strong:error:indicator:func-ii:bis},
\begin{equation*}
A^\ell\leq C_{3A}h_\ell^\frac{1+\theta}2,
\end{equation*}
where
\begin{equation*}
C_{3A}=h_0^{-\frac12}\sup_{0\leq h_1<h_2\in\overline{\mathcal{H}}}{\mathbb{E}[K_{h_1}^{h_2}|G_{h_1}^{h_2}|]}.
\end{equation*}

\noindent
\emph{Step~2. Study of $B^\ell$.}
\newline
Via \eqref{eq:E[|1|]<}, the definition \eqref{eq:eta} and Markov's exponential inequality, and recalling the condition \eqref{assump:unif:lipschitz:integrability:conditional:cdf:variance} on $h_0$, $r$ and $\theta$,
\begin{equation*}
\begin{aligned}
|\mathbb{E}[(&\mathds1_{\{X^{h_{\ell+k}}\leq\xi\}}-\mathds1_{\{X^0\leq\xi\}})\mathds1_{\{\eta^\ell_n(\xi)=k\}}]|\\
&\leq\mathbb{P}\Big(|X^{h_{\ell+k}}-X^0|\geq C_\mathrm{ad}h^\frac1r_{\theta\ell(r-1)+k}\ln^\frac12{\big((\gamma_1^{-1}\gamma_n)^{-\frac12}h_{\ell+k}^{-\frac{1+\theta}2}\big)}\Big)\\
&=\mathbb{P}\Big(|G_0^{h_{\ell+k}}|\geq C_\mathrm{ad}h^\frac1r_{\theta\ell(r-1)+k}h_{\ell+k}^{-\frac12}\ln^\frac12{\big((\gamma_1^{-1}\gamma_n)^{-\frac12}h_{\ell+k}^{-\frac{1+\theta}2}}\big)\Big)\\
&\leq\exp{\Big(-\upsilon_0C_\mathrm{ad}^2h_{\theta\ell(r-1)+k}^\frac2rh_{\ell+k}^{-1}\ln{\big((\gamma_1^{-1}\gamma_n)^{-\frac12}h_{\ell+k}^{-\frac{1+\theta}2}\big)}\Big)}\mathbb{E}[\exp(\upsilon_0|G_0^{h_{\ell+k}}|^2)]\\
&\leq\Big(\sup_{h\in\mathcal{H}}\mathbb{E}[\exp(\upsilon_0|G_0^h|^2)]\Big)\bigg(\frac{\gamma_n}{\gamma_1}h_{\ell+k}^{1+\theta}\bigg)^{\frac12\upsilon_0C_\mathrm{ad}^2h_0^{\frac2r-1}M^{-2\theta\ell(1-\frac1r)+\ell}}\\
&\leq C_{3b}h_{\ell+k}^\frac{1+\theta}2,
\end{aligned}
\end{equation*}
with
\begin{equation*}
C_{3b}=\sup_{h\in\mathcal{H}}\mathbb{E}[\exp(\upsilon_0|G_0^h|^2)],
\end{equation*}
where we used that $\gamma_n\leq\gamma_1$, that $\theta\leq1\leq\frac{r}{2(r-1)}$, so that $M^{-2\theta\ell(1-\frac1r)+\ell}\geq1$, and that $\upsilon_0C_\mathrm{ad}^2h_0^{\frac2r-1}\geq1$.
All in all, via \eqref{eq:carre},
\begin{equation*}
B^\ell\leq C_{3B}h_\ell^\frac{1+\theta}2,
\end{equation*}
where
\begin{equation*}
C_{3B}=\frac{C_{3b}}{1-M^{-\frac12}}.
\end{equation*}
\end{proof}

\section{Proof of Theorem~\ref{thm:amlsa:L2}}
\label{apx:prf:main}

For all $\ell\in\mathbb{N}$, let $(\widetilde{\mathcal{F}}^{h_\ell}_n)_{n\geq0}$ be the filtration defined by $\widetilde{\mathcal{F}}^{h_\ell}_0=\sigma(\widetilde\xi^{h_\ell}_0)$ and \newline $\widetilde{\mathcal{F}}^{h_\ell}_n=\sigma(\widetilde\xi^{h_\ell}_0,X^{h_{\ell+\lceil\theta\ell\rceil}}_{1},\dots,X^{h_{\ell+\lceil\theta\ell\rceil}}_{n})$, $n\geq1$.

Following \eqref{xi:decomp}, the dynamics \eqref{adNSA} can be decomposed  into
\begin{equation}
\label{decomposition:var:sa}
\widetilde\xi^{h_\ell}_n-\xi^{h_{\ell+\lceil\theta\ell\rceil}}_\star
=\big(1-\gamma_nV_0''(\xi^0_\star)\big)(\widetilde\xi^{h_\ell}_{n-1}-\xi^{h_{\ell+\lceil\theta\ell\rceil}}_\star)-\gamma_ng^{h_\ell}_n-\gamma_n\rho^{h_\ell}_n-\gamma_nr^{h_\ell}_n-\gamma_ne^{h_\ell}_n,
\end{equation}
where $(e^{h_\ell}_n)_{n\geq1}$ and $(r^{h_\ell}_n)_{n\geq1}$ are defined in \eqref{eq:r:def} and \eqref{eq:e:def}, and
\begin{align}
g^{h_\ell}_n
&:=\big(V_{h_{\ell+\lceil\theta\ell\rceil}}''(\xi^{h_{\ell+\lceil\theta\ell\rceil}}_\star)-V_0''(\xi^0_\star)\big)(\widetilde\xi^{h_\ell}_{n-1}-\xi^{h_{\ell+\lceil\theta\ell\rceil}}_\star),
\label{eq:ghn}\\
\rho^{h_\ell}_n
&:= V_{h_{\ell+\lceil\theta\ell\rceil}}'(\widetilde\xi^{h_\ell}_{n-1})-V_{h_{\ell+\lceil\theta\ell\rceil}}''(\xi^{h_{\ell+\lceil\theta\ell\rceil}}_\star)(\widetilde\xi^{h_\ell}_{n-1}-\xi^{h_{\ell+\lceil\theta\ell\rceil}}_\star).
\label{eq:rhohn}
\end{align}
By iterating \eqref{decomposition:var:sa},
\begin{equation}
\label{eq:xih-xi*}
\begin{aligned}
\widetilde\xi^{h_\ell}_n-\xi^{h_{\ell+\lceil\theta\ell\rceil}}_\star=
(\widetilde\xi^{h_\ell}_0-\xi^{h_{\ell+\lceil\theta\ell\rceil}}_\star)\Pi_{1:n}
&-\sum_{k=1}^n\gamma_k\Pi_{k+1:n}g^{h_\ell}_k
-
\sum_{k=1}^n\gamma_k\Pi_{k+1:n}\rho^{h_\ell}_k\\
-\sum_{k=1}^n\gamma_k\Pi_{k+1:n}r^{h_\ell}_k
&-\sum_{k=1}^n\gamma_k\Pi_{k+1:n}e^{h_\ell}_k,
\end{aligned}
\end{equation}
where
\begin{equation}
\Pi_{k:n}=\prod_{j=k}^n\big(1-\gamma_jV_0''(\xi^0_\star)\big) 
\label{eq:Pi}
\end{equation}
with the convention $\prod_\varnothing=1$.

Given that $\gamma_k\downarrow0$ as $k\uparrow\infty$, there exists $k_0\geq0$ such that for $j\geq k_0$, $(1-\gamma_jV_0''(\xi^0_\star))>0$.
Hence, using the inequality $1+x\leq\e^x$, $x\in\mathbb{R}$, for all $n\in\mathbb{N}^*$,
\begin{equation}
\label{upper:estimate:pi:i:n}
\begin{aligned}
|\Pi_{k:n}|
&=|\Pi_{k:k_0-1}|\prod_{j=k_0 \vee k}^n\big(1-\gamma_jV_0''(\xi^0_\star)\big)\\
&\leq|\Pi_{k:k_0-1}|\exp\bigg(-V_0''(\xi^0_\star)\sum_{j=k_0 \vee k}^n\gamma_j\bigg)\\
&\leq\bar{C}\exp\bigg(-V_0''(\xi^0_\star)\sum_{j=k}^n\gamma_j\bigg),
\end{aligned}
\end{equation}
where
\begin{equation*}
\bar{C}=1\vee\max_{1\leq k\leq k_0}
\bigg\{|\Pi_{k:k_0-1}|
\exp\bigg(V_0''(\xi^0_\star)\sum_{j=k_0\wedge k}^{k-1}\gamma_j\bigg)\bigg\},
\end{equation*}
with the convention $\sum_\varnothing=0$.

According to \eqref{alg:amlsa:var} and the decomposition \eqref{eq:xih-xi*},
\begin{equation}
\label{eq:xiML-xi*(hL)}
\begin{aligned}
\widetilde\xi^\text{\tiny\rm ML}_\mathbf{N}-\xi^{h_{L+\lceil\theta L\rceil}}_\star
&=\widetilde\xi^{h_0}_{N_0}-\xi^{h_0}_\star
+\sum_{\ell=1}^L\big(\widetilde\xi^{h_\ell}_{N_\ell}-\xi^{h_{\ell+\lceil\theta\ell\rceil}}_\star-(\widetilde\xi^{h_{\ell-1}}_{N_\ell}-\xi^{h_{\ell-1+\lceil\theta(\ell-1)\rceil}}_\star)\big)\\
&=\widetilde\xi^{h_0}_{N_0}-\xi^{h_0}_\star
+\sum_{\ell=1}^L\big(\widetilde\xi^{h_\ell}_0-\xi^{h_{\ell+\lceil\theta\ell\rceil}}_\star-(\widetilde\xi^{h_{\ell-1}}_0-\xi^{h_{\ell-1+\lceil\theta(\ell-1)\rceil}}_\star)\big)\Pi_{1:N_\ell}\\
&\quad-\sum_{\ell=1}^L\sum_{k=1}^{N_\ell}\gamma_k\Pi_{k+1:N_\ell}(g^{h_\ell}_k-g^{h_{\ell-1}}_k)
-\sum_{\ell=1}^L\sum_{k=1}^{N_\ell}\gamma_k\Pi_{k+1:N_\ell}(\rho^{h_\ell}_k-\rho^{h_{\ell-1}}_k)\\
&\quad-\sum_{\ell=1}^L\sum_{k=1}^{N_\ell}\gamma_k\Pi_{k+1:N_\ell}(r^{h_\ell}_k-r^{h_{\ell-1}}_k)
-\sum_{\ell=1}^L\sum_{k=1}^{N_\ell}\gamma_k\Pi_{k+1:N_\ell}(e^{h_\ell}_k-e^{h_{\ell-1}}_k).
\end{aligned}
\end{equation}
We study each term in the above decomposition separately.
\\

\noindent
\emph{Step~1. Study of $\widetilde\xi^{h_0}_{N_0}-\xi^{h_0}_\star$.}
\newline
Following Remark~\ref{rmk:refine}(\ref{rmk:level:0}), no refinement is applied at level $0$, so that by Lemma~\ref{lmm:error}(\ref{lmm:error:statistical}),
\begin{equation*}
\mathbb{E}[(\widetilde\xi^{h_0}_{N_0}-\xi^{h_0}_\star)^2]
\leq C_{-1}\gamma_{N_0},
\end{equation*}
where $C_{-1}>0$ is the constant described in \eqref{eq:C:recall}.
\\

\noindent
\emph{Step~2. Study of $\sum_{\ell=1}^L(\widetilde\xi^{h_\ell}_0-\xi^{h_{\ell+\lceil\theta\ell\rceil}}_\star-(\widetilde\xi^{h_{\ell-1}}_0-\xi^{h_{\ell-1+\lceil\theta(\ell-1)\rceil}}_\star))\Pi_{1:N_\ell}$.}
\newline
By assumption, $\sup_{\ell\geq0}\mathbb{E}[|\widetilde\xi^{h_\ell}_0|^2]<\infty$, and according to Lemma~\ref{lmm:error}(\ref{lmm:error:weak}), $(\xi^h_\star)_{h\in\overline{\mathcal{H}}}$ is bounded.
Hence $\sup_{\ell\geq0}{\mathbb{E}[(\widetilde\xi^{h_\ell}_0-\xi^{h_{\ell+\lceil\theta\ell\rceil}}_\star)^2]^\frac12}\leq\sup_{\ell\ge0}{\mathbb{E}[|\widetilde\xi^{h_\ell}_0|^2]^\frac12}+\sup_{\ell\geq0}{|\xi^{h_\ell}_\star|}<\infty$.
Besides, by \eqref{upper:estimate:pi:i:n} and \cite[Lemma~A.1(ii)]{amlsa}, $\limsup_{n\uparrow\infty}\gamma_n^{-1}\exp(-V_0''(\xi^0_\star)\sum_{k=1}^n\gamma_k)=0$ under the condition $\bar\lambda_2\gamma_1>2$, since the latter entails that $\gamma_1V_0''(\xi^0_\star)>\bar\lambda_2\gamma_1>2>0$.
Hence,
\begin{equation*}
\begin{aligned}
\mathbb{E}\bigg[\bigg(\sum_{\ell=1}^L\big(\widetilde\xi^{h_\ell}_0-\xi^{h_{\ell+\lceil\theta\ell\rceil}}_\star&-(\widetilde\xi^{h_{\ell-1}}_0-\xi^{h_{\ell-1+\lceil\theta(\ell-1)\rceil}}_\star)\big)\Pi_{1:N_\ell}\bigg)^2\bigg]^\frac12\\
&\leq2\sup_{\ell'\geq0}{\mathbb{E}[(\widetilde\xi^{h_{\ell'}}_0-\xi^{h_{\ell'+\lceil\theta\ell'\rceil}}_\star)^2]^\frac12}\sum_{\ell=1}^L|\Pi_{1:N_\ell}|\\
&\leq C_0\sum_{\ell=1}^L\gamma_{N_\ell},
\end{aligned}
\end{equation*}
where
\begin{equation*}
C_0=2\bar{C}\Big(\sup_{\ell'\ge0}{\mathbb{E}[|\widetilde\xi^{h_{\ell'}}_0|^2]^\frac12}
+\sup_{\ell'\geq0}{|\xi^{h_{\ell'}}_\star|}\Big)
\sup_{m\geq1}\bigg\{\gamma_m^{-1}\exp\bigg(-V_0''(\xi^0_\star)\sum_{k=1}^m\gamma_k\bigg)\bigg\}.
\end{equation*}

\noindent
\emph{Step~3. Study of $\sum_{\ell=1}^L\sum_{k=1}^{N_\ell}\gamma_k\Pi_{k+1:N_\ell}(g^{h_\ell}_k-g^{h_{\ell-1}}_k)$.}
\newline
Recalling that, by Lemma~\ref{lmm:error}(\ref{lmm:error:weak}), $(\xi^h_\star)_{h\in\overline{\mathcal{H}}}$ is bounded, there exists a compact set $\mathcal{K}\subset\mathbb{R}$ such that $\xi^{h_{\ell+\lceil\theta\ell\rceil}}_\star\in\mathcal{K}$, $\ell\in\mathbb{N}$. From Lemma~\ref{lmm:error}(\ref{lmm:error:weak}) and Assumptions~\ref{asp:misc}(\ref{asp:misc:iv}) and~\ref{asp:fh-f0},
\begin{equation}
\label{eq:V-V}
\begin{aligned}
|V_0''(\xi^0_\star)&-V_{h_{\ell+\lceil\theta\ell\rceil}}''(\xi_\star^{h_{\ell+\lceil\theta\ell\rceil}})|\\
&\leq|V_0''(\xi^0_\star)-V_0''(\xi_\star^{h_{\ell+\lceil\theta\ell\rceil}})|
+|V_0''(\xi_\star^{h_{\ell+\lceil\theta\ell\rceil}})-V_{h_{\ell+\lceil\theta\ell\rceil}}''(\xi_\star^{h_{\ell+\lceil\theta\ell\rceil}})|\\
&\leq\frac1{1-\alpha}\Big([f_{X^0}]_{\text{\rm Lip}}|\xi_\star^{h_{\ell+\lceil\theta\ell\rceil}}-\xi^0_\star|
+\sup_{\xi\in\mathcal{K}}{|f_{X^0}(\xi)-f_{X^{h_{\ell+\lceil\theta\ell\rceil}}}(\xi)|}\Big)\\
&\leq\frac{[f_{X_0}]_{\text{\rm Lip}}c'+c}{1-\alpha}(h_{\ell+\lceil\theta\ell\rceil}\vee h_{\ell+\lceil\theta\ell\rceil}^{\frac14+\delta_0})\\
&\leq C_g^1h_\ell^{((\frac14+\delta_0)\wedge1)(1+\theta)},
\end{aligned}
\end{equation}
where $c'\geq0$ is the constant described in \eqref{eq:oO}, and
\begin{equation*}
C_g^1=M^{(\frac14+\delta_0)\wedge1}\frac{[f_{X_0}]_{\text{\rm Lip}}c'+c}{1-\alpha}.
\end{equation*}
Recalling the definitions \eqref{def:zeta}, \eqref{eq:u:tilde} and \eqref{eq:delta}, using Proposition~\ref{prp:error:statistical:bis}(\ref{prp:error:statistical:bis:ii}), \eqref{eq:u=f(g)} and the fact that
\begin{equation}
\label{eq:u^2}
\widetilde{u}_n^2
=\gamma_1^{2\frac{\beta-\delta'}\beta}\gamma_n^{2\frac{\delta'}\beta},
\quad n\in\mathbb{N}^*,
\end{equation}
one gets
\begin{equation}
\label{eq:total:recall}
\begin{aligned}
\mathbb{E}[(\widetilde\xi^{h_\ell}_{n-1}&-\xi^{h_{\ell+\lceil\theta\ell\rceil}}_\star)^4]\\
&\leq
\begin{cases}
8(\sup_{\ell'\geq1}\mathbb{E}[|\widetilde{\xi}^{h_{\ell'}}_0|^4]+\sup_{\ell'\geq1}|\xi^{h_{\ell'}}_\star|^4)
&\text{if $n=1$,}\\
C^\beta(\widetilde{\gamma}_{n-1}^\ell)^2
&\text{if $n\geq2$,}
\end{cases}\\
&\leq\bigg\{\bigg(8\gamma_1^{-2}\Big(\sup_{\ell'\geq1}\mathbb{E}[|\widetilde{\xi}^{h_{\ell'}}_0|^4]+\sup_{\ell'\geq1}|\xi^{h_{\ell'}}_\star|^4\Big)\bigg)\vee(2^{2(\beta\vee\delta')}C^\beta)\bigg\}
(h_\ell^{2(1+\theta)}\widetilde{u}_n^2+\gamma_n^2)\\
&\leq C_g^2(h_\ell^{2(1+\theta)}\gamma_n^{2\frac{\delta'}\beta}+\gamma_n^2),
\quad n\in\mathbb{N}^*,
\end{aligned}
\end{equation}
where
\begin{equation*}
C_g^2=\bigg\{\bigg(8\gamma_1^{-2}\Big(\sup_{\ell'\geq1}\mathbb{E}[|\widetilde{\xi}^{h_{\ell'}}_0|^4]+\sup_{\ell'\geq1}|\xi^{h_{\ell'}}_\star|^4\Big)\bigg)\vee(2^{2(\beta\vee\delta')}C^\beta)\bigg\}(\gamma_1^{2\frac{\beta-\delta'}\beta}\vee1),
\end{equation*}
with $C^\beta>0$ being the constant described in \eqref{eq:L4:control}.
Consequently, using \eqref{eq:ghn}, \eqref{eq:V-V} and the inequality $(x+y)^\frac14\leq x^\frac14+y^\frac14$, $x,y\geq0$, one gets
\begin{equation}
\label{eq:E[|g|]<}
\begin{aligned}
\mathbb{E}[|g^{h_\ell}_n|^2]^\frac12
&\leq|V_0''(\xi^0_\star)-V_{h_{\ell+\lceil\theta\ell\rceil}}''(\xi^{h_{\ell+\lceil\theta\ell\rceil}}_\star)|\mathbb{E}[(\widetilde\xi^{h_\ell}_{n-1}-\xi^{h_{\ell+\lceil\theta\ell\rceil}}_\star)^4]^\frac14\\
&\leq C_g^3h_\ell^{((\frac14+\delta_0)\wedge1)(1+\theta)}(h_\ell^\frac{1+\theta}2\gamma_n^\frac{\delta'}{2\beta}+\gamma_n^\frac12),
\end{aligned}
\end{equation}
where
\begin{equation*}
C_g^3=C_g^1(C_g^2)^\frac14.
\end{equation*}
Recalling \eqref{upper:estimate:pi:i:n}, under the condition $\bar\lambda_2\gamma_1>2$ if $\beta=1$, which entails that $\gamma_1V_0''(\xi^0_\star)>\bar\lambda_2\gamma_1>2\geq\frac12\vee\frac{\delta'}2$ if $\beta=1$, by \cite[Lemma~A.1(i)]{amlsa},
\begin{equation*}
\begin{aligned}
\mathbb{E}\bigg[\bigg(&\sum_{\ell=1}^L\sum_{k=1}^{N_\ell}\gamma_k\Pi_{k+1:N_\ell}(g^{h_\ell}_k-g^{h_{\ell-1}}_k)\bigg)^2\bigg]^\frac12\\
&\leq\sum_{\ell=1}^L\sum_{k=1}^{N_\ell}\gamma_k|\Pi_{k+1:N_\ell}|(\mathbb{E}[|g^{h_\ell}_k|^2]^\frac12+\mathbb{E}[|g^{h_{\ell-1}}_k|^2]^\frac12)\\
&\leq C_g^3(M+1)\sum_{\ell=1}^Lh_\ell^{((\frac14+\delta_0)\wedge1)(1+\theta)}
\bigg(\sum_{k=1}^{N_\ell}h_\ell^\frac{1+\theta}2\gamma_k^{1+\frac{\delta'}{2\beta}}|\Pi_{k+1:N_\ell}|
+\sum_{k=1}^{N_\ell}\gamma_k^\frac32|\Pi_{k+1:N_\ell}|\bigg)\\
&\leq C_g^4\sum_{\ell=1}^Lh_\ell^{((\frac14+\delta_0)\wedge1)(1+\theta)}(h_\ell^\frac{1+\theta}2\gamma_{N_\ell}^\frac{\delta'}{2\beta}+\gamma_{N_\ell}^\frac12)\\
&\leq C_g^5\sum_{\ell=1}^Lh_\ell^{(\frac14+(\delta_0\wedge\frac34))(1+\theta)}
(\widetilde{\gamma}^\ell_{N_\ell})^\frac12,
\end{aligned}
\end{equation*}
where
\begin{equation*}
\begin{aligned}
C_g^4
&=C_g^3\bar{C}(M+1)\bigg(\sup_{m\geq1}\bigg\{\gamma_m^{-\frac{\delta'}{2\beta}}\sum_{k=1}^m\gamma_k^{1+\frac{\delta'}{2\beta}}\exp\bigg(-V_0''(\xi^0_\star)\sum_{j=k+1}^m\gamma_j\bigg)\bigg\}\\
&\hphantom{=C_g^3\bar{C}(M+1)\bigg\{}\;
\vee\sup_{m\geq1}\bigg\{\gamma_m^{-\frac12}\sum_{k=1}^m\gamma_k^\frac32\exp\bigg(-V_0''(\xi^0_\star)\sum_{j=k+1}^m\gamma_j\bigg)\bigg\}\bigg),\\
C_g^5
&=\sqrt2C_g^4(\gamma_1^{-1+\frac{\delta'}{2\beta}}\vee1).
\end{aligned}
\end{equation*}
Hence, using the Cauchy-Schwarz inequality and the fact that, recalling \eqref{eq:eps(hl)}, $h^\frac12\leq\epsilon(h)$, $h\in\mathcal{H}\setminus\{1\}$, one gets
\begin{equation}
\begin{aligned}
\mathbb{E}\bigg[\bigg(\sum_{\ell=1}^L\sum_{k=1}^{N_\ell}&\gamma_k\Pi_{k+1:N_\ell}(g^{h_\ell}_k-g^{h_{\ell-1}}_k)\bigg)^2\bigg]^\frac12\\
&\leq C_g^5\bigg(\sum_{\ell=1}^L\widetilde\gamma_{N_\ell}^\ell h_\ell^{\frac12(1+\theta)}\bigg)^\frac12\bigg(\sum_{\ell=1}^Lh_\ell^{2(1+\theta)(\delta_0\wedge\frac34)}\bigg)^\frac12\\
&\leq C_g\bigg(\sum_{\ell=1}^L\widetilde\gamma_{N_\ell}^\ell\epsilon(h_\ell)^{1+\theta}\bigg)^\frac12,
\end{aligned}
\end{equation}
where
\begin{equation*}
C_g
=C_g^5\bigg(\sum_{\ell=1}^\infty h_\ell^{2(1+\theta)(\delta_0\wedge\frac34)}\bigg)^\frac12
=\frac{h_0^{(1+\theta)(\delta_0\wedge\frac34)}C_g^5}{(1-M^{-2(1+\theta)(\delta_0\wedge\frac34)})^\frac12}.
\end{equation*}

\noindent
\emph{Step~4. Study of $\sum_{\ell=1}^L\sum_{k=1}^{N_\ell}\gamma_k\Pi_{k+1:N_\ell}(\rho^{h_\ell}_k-\rho^{h_{\ell-1}}_k)$.}
\newline
Taking into account the fact that $V_{h_{\ell+\lceil\theta\ell\rceil}}'(\xi^{h_{\ell+\lceil\theta\ell\rceil}}_\star)=0$ and Assumption~\ref{asp:misc}(\ref{asp:misc:iv}), a first order Taylor expansion yields
\begin{equation*}
\begin{aligned}
&|V_{h_{\ell+\lceil\theta\ell\rceil}}'(\widetilde\xi^{h_\ell}_{n-1})-V_{h_{\ell+\lceil\theta\ell\rceil}}''(\xi^{h_{\ell+\lceil\theta\ell\rceil}}_\star)(\widetilde\xi^{h_\ell}_{n-1}-\xi^{h_{\ell+\lceil\theta\ell\rceil}}_\star)|\\
&\qquad=\bigg|(\widetilde\xi^{h_\ell}_{n-1}-\xi^{h_{\ell+\lceil\theta\ell\rceil}}_\star)\int_0^1\Big(V_{h_{\ell+\lceil\theta\ell\rceil}}''\big(t\widetilde\xi^{h_\ell}_{n-1}+(1-t)\xi^{h_{\ell+\lceil\theta\ell\rceil}}_\star\big)-V_{h_{\ell+\lceil\theta\ell\rceil}}''(\xi^{h_{\ell+\lceil\theta\ell\rceil}}_\star)\Big)\mathrm{d}t\bigg|\\
&\qquad\leq\frac{[f_{X^{h_\ell}}]_\text{Lip}}{2(1-\alpha)}(\widetilde\xi^{h_\ell}_{n-1}-\xi^{h_{\ell+\lceil\theta\ell\rceil}}_\star)^2.
\end{aligned}
\end{equation*}
Hence, by \eqref{eq:rhohn}, \eqref{eq:total:recall} and the inequality $(x+y)^\frac12\leq x^\frac12+y^\frac12$, $x,y\geq0$,
\begin{equation*}
\mathbb{E}[|\rho^{h_\ell}_n|^2]^\frac12
\leq\frac{[f_{X^{h_\ell}}]_\text{Lip}}{2(1-\alpha)}\mathbb{E}[(\widetilde\xi^{h_\ell}_{n-1}-\xi^{h_{\ell+\lceil\theta\ell\rceil}}_\star)^4]^\frac12
\leq C_\rho^1(h_\ell^{1+\theta}\gamma_n^{\frac{\delta'}\beta}+\gamma_n),
\end{equation*}
where
\begin{equation*}
C_\rho^1=\frac{(C_g^2)^\frac12\sup_{\ell'\geq0}{[f_{X^{h_{\ell'}}}]_\text{Lip}}}{2(1-\alpha)}.
\end{equation*}
Recalling \eqref{upper:estimate:pi:i:n} and that, by assumption, $\bar\lambda_2\gamma_1>2$ if $\beta=1$, i.e.~$\gamma_1V_0''(\xi^0_\star)>\bar\lambda_2\gamma_1>2>1\vee\delta'$ if $\beta=1$, via \cite[Lemma~A.1(i)]{amlsa}
\begin{equation*}
\begin{aligned}
\mathbb{E}\bigg[\bigg(\sum_{\ell=1}^L\sum_{k=1}^{N_\ell}&\gamma_k\Pi_{k+1:N_\ell}(\rho^{h_\ell}_k-\rho^{h_{\ell-1}}_k)\bigg)^2\bigg]^\frac12\\
&\leq\sum_{\ell=1}^L\sum_{k=1}^{N_\ell}\gamma_k|\Pi_{k+1:N_\ell}|(\mathbb{E}[|\rho^{h_\ell}_k|^2]^\frac12+\mathbb{E}[|\rho^{h_{\ell-1}}_k|^2]^\frac12)\\
&
\leq C_\rho^1(M+1)\sum_{\ell=1}^L\bigg(
\sum_{k=1}^{N_\ell}h_\ell^{1+\theta}\gamma_k^{1+\frac{\delta'}\beta}|\Pi_{k+1:N_\ell}|
+\sum_{k=1}^{N_\ell}\gamma_k^2|\Pi_{k+1:N_\ell}|\bigg)\\
&\leq C_\rho^2\sum_{\ell=1}^L(h_\ell^{1+\theta}\gamma_{N_\ell}^\frac{\delta'}\beta+\gamma_{N_\ell})\\
&\leq C_\rho\sum_{\ell=1}^L\widetilde{\gamma}^\ell_{N_\ell},
\end{aligned}
\end{equation*}
where
\begin{equation*}
\begin{aligned}
C_\rho^2&=C_\rho^1\bar{C}(M+1)\bigg(\sup_{m\geq1}\bigg\{\gamma_m^{-\frac{\delta'}\beta}\sum_{k=1}^m\gamma_k^{1+\frac{\delta'}\beta}\exp\bigg(-V_0''(\xi^0_\star)\sum_{j=k+1}^m\gamma_j\bigg)\bigg\}\\
&\hphantom{=C_{33}\bar{C}(M+1)\bigg\{}\;
\vee\sup_{m\geq1}\bigg\{\gamma_m^{-1}\sum_{k=1}^m\gamma_k^2\exp\bigg(-V_0''(\xi^0_\star)\sum_{j=k+1}^m\gamma_j\bigg)\bigg\}\bigg),\\
C_\rho&=2C_\rho^2(\gamma_1^{-1+\frac{\delta'}\beta}\vee1).
\end{aligned}
\end{equation*}

\noindent
\emph{Step~5. Study of $\sum_{\ell=1}^L\sum_{k=1}^{N_\ell}\gamma_k\Pi_{k+1:N_\ell}(r^{h_\ell}_k-r^{h_{\ell-1}}_k)$.}
\newline
Recalling the definition \eqref{eq:C=Ci,i=1,2,3}, it follows from the inequality \eqref{bound:estimate:rnhell} that
\begin{equation*}
|r^{h_\ell}_n|\leq C_r^1h_\ell^{1+\theta}\gamma_n^\frac{\delta'}\beta.
\end{equation*}
where
\begin{equation*}
C_r^1=Ch_1^{-1}\gamma_1^{1-\frac{\delta'}\beta}.
\end{equation*}
Thus, using \eqref{upper:estimate:pi:i:n} and recalling that $\bar\lambda_2\gamma_1>2$ if $\beta=1$, hence $\gamma_1V_0''(\xi^0_\star)>\bar\lambda_2\gamma_1>2>\delta'$ if $\beta=1$, by \cite[Lemma~A.1(i)]{amlsa},
\begin{equation*}
\begin{aligned}
\mathbb{E}\bigg[\bigg(\sum_{\ell=1}^L\sum_{k=1}^{N_\ell}&\gamma_k\Pi_{k+1:N_\ell}(r^{h_\ell}_k-r^{h_{\ell-1}}_k)\bigg)^2\bigg]^\frac12\\
&\leq\sum_{\ell=1}^L\sum_{k=1}^{N_\ell}\gamma_k|\Pi_{k+1:N_\ell}|(\mathbb{E}[|r^{h_\ell}_k|^2]^\frac12+\mathbb{E}[|r^{h_{\ell-1}}_k|^2]^\frac12)\\
&\leq C_r^1(M+1)\sum_{\ell=1}^L\sum_{k=1}^{N_\ell}h_\ell^{1+\theta}\gamma_k^{1+\frac{\delta'}\beta}|\Pi_{k+1:N_\ell}|\\
&\leq C_r\sum_{\ell=1}^Lh_\ell^{1+\theta}\gamma_{N_\ell}^\frac{\delta'}\beta\\
&\leq C_r\gamma_1^{-1+\frac{\delta'}\beta}\sum_{\ell=1}^L\widetilde\gamma_{N_\ell}^\ell,
\end{aligned}
\end{equation*}
where
\begin{equation*}
C_r=C_r^1(M+1)\sup_{m\geq1}\bigg\{\gamma_m^{-\frac{\delta'}\beta}\sum_{k=1}^m\gamma_k^{1+\frac{\delta'}\beta}\exp\bigg(-V_0''(\xi^0_\star)\sum_{j=k+1}^m\gamma_j\bigg)\bigg\}.
\end{equation*}

\noindent
\emph{Step~6. Study of $\sum_{\ell=1}^L\sum_{k=1}^{N_\ell}\gamma_k\Pi_{k+1:N_\ell}(e^{h_\ell}_k-e^{h_{\ell-1}}_k)$.}
\newline
Note that the random variables $(\sum_{k=1}^{N_\ell}\gamma_k\Pi_{k+1:N_\ell}(e^{h_\ell}_k-e^{h_{\ell-1}}_k))_{\ell\geq1}$ are independent with zero mean and that, at each level $\ell\in\mathbb{N}^*$, $(e^{h_\ell}_k-e^{h_{\ell-1}}_k)_{k\geq1}$ are $(\widetilde{\mathcal{F}}^{h_\ell}_k)_{k\geq1}$-martingale increments.
Therefore
\begin{equation*}
\mathbb{E}\bigg[\bigg(\sum_{\ell=1}^L\sum_{k=1}^{N_\ell}\gamma_k\Pi_{k+1:N_\ell}(e^{h_\ell}_k-e^{h_{\ell-1}}_k)\bigg)^2\bigg]
=\sum_{\ell=1}^L\sum_{k=1}^{N_\ell}\gamma_k^2|\Pi_{k+1:n}|^2\mathbb{E}[(e^{h_\ell}_k-e^{h_{\ell-1}}_k)^2].
\end{equation*}

From \eqref{eq:e:def}, \eqref{eq:vee} and \eqref{eq:H1},
\begin{equation*}
\begin{aligned}
\mathbb{E}[(e^{h_\ell}_n-e^{h_{\ell-1}}_n)^2]
&=\mathbb{E}\big[\Var\big(H(\widetilde\xi^{h_\ell}_{n-1},\widetilde{X}^{h_\ell}_n)-H(\widetilde\xi^{h_{\ell-1}}_{n-1},\widetilde{X}^{h_{\ell-1}}_n)\big|\widetilde{\mathcal{F}}^{h_\ell}_{n-1}\big)\big]\\
&\leq\mathbb{E}\big[\big(H(\widetilde\xi^{h_\ell}_{n-1},\widetilde{X}^{h_\ell}_n)-H(\widetilde\xi^{h_{\ell-1}}_{n-1},\widetilde{X}^{h_{\ell-1}}_n)\big)^2\big]\\
&\leq\frac1{(1-\alpha)^2}(\mathbb{E}[|\mathds1_{\{\widetilde{X}^{h_\ell}_n>\widetilde\xi^{h_\ell}_{n-1}\}}-\mathds1_{\{X^0>\widetilde\xi^{h_\ell}_{n-1}\}}|]\\
&\hphantom{\leq\frac1{(1-\alpha)^2}(}
+\mathbb{E}[|\mathds1_{\{X^0>\widetilde\xi^{h_\ell}_{n-1}\}}-\mathds1_{\{X^0>\widetilde\xi^{h_{\ell-1}}_{n-1}\}}|]\\
&\hphantom{\leq\frac1{(1-\alpha)^2}(}
+\mathbb{E}[|\mathds1_{\{X^0>\widetilde\xi^{h_{\ell-1}}_{n-1}\}}-\mathds1_{\{\widetilde{X}^{h_{\ell-1}}_n>\widetilde\xi^{h_{\ell-1}}_{n-1}\}}|]).
\end{aligned}
\end{equation*}
On the one hand, for $j\in\{\ell-1,\ell\}$, recalling the definition \eqref{eq:eps(hl)}, by the tower law and Lemma~\ref{lmm:local:strong:error:indicator:func:variance},
\begin{equation*}
\begin{aligned}
\mathbb{E}[|\mathds1_{\{\widetilde{X}^{h_j}_n>\widetilde\xi^{h_j}_{n-1}\}}-\mathds1_{\{X^0>\widetilde\xi^{h_j}_{n-1}\}}|]
&=\mathbb{E}\Big[\mathbb{E}\big[|\mathds1_{\{\widetilde{X}^{h_j}_n>\widetilde\xi^{h_j}_{n-1}\}}-\mathds1_{\{X^0>\widetilde\xi^{h_j}_{n-1}\}}|\big|\widetilde{\mathcal{F}}^{h_j}_{n-1}\big]\Big]\\
&=\mathbb{E}\big[\mathbb{E}[|\mathds1_{\{\widetilde{X}^{h_j}_n>\xi\}}-\mathds1_{\{X^0>\xi\}}|]_{|\xi=\widetilde\xi^{h_j}_{n-1}}\big]\\
&\leq C'\epsilon(h_j)^{1+\theta}\\
&\leq C_e^1\epsilon(h_\ell)^{1+\theta},
\end{aligned}
\end{equation*}
where
\begin{equation*}
C_e^1=C'M^\frac12(\ln{M})^\frac12,
\end{equation*}
and
\begin{equation*}
C'=\begin{cases}
C_1&\text{if \eqref{assumption:finite:Lp:moment:variance} holds,}\\
C_2&\text{if \eqref{assumption:conditional:gaussian:concentration:variance} holds,}\\
C_3&\text{if \eqref{assump:unif:lipschitz:integrability:conditional:cdf:variance} holds,}
\end{cases}
\end{equation*}
with $C_1$, $C_2$ and $C_3$ being the positive constants defined in Lemma~\ref{lmm:local:strong:error:indicator:func:variance}.

On the other hand, using Assumption~\ref{asp:misc}(\ref{asp:misc:iv}),
\begin{equation*}
\begin{aligned}
\mathbb{E}[|&\mathds1_{\{X^0>\widetilde\xi^{h_\ell}_{n-1}\}}-\mathds1_{\{X^0>\widetilde\xi^{h_{\ell-1}}_{n-1}\}}|]\\
&=\mathbb{E}[\mathds1_{\{\widetilde\xi^{h_\ell}_{n-1}<X^0\leq\widetilde\xi^{h_{\ell-1}}_{n-1}\}}+\mathds1_{\{\widetilde\xi^{h_{\ell-1}}_{n-1}<X^0\leq\widetilde\xi^{h_\ell}_{n-1}\}}]\\
&=\mathbb{E}[|F_{X^0}(\widetilde\xi^{h_\ell}_{n-1})-F_{X^0}(\widetilde\xi^{h_{\ell-1}}_{n-1})|]\\
&\leq\|f_{X^0}\|_\infty\mathbb{E}[(\widetilde\xi^{h_\ell}_{n-1}-\widetilde\xi^{h_{\ell-1}}_{n-1})^4]^\frac14\\
&\leq\|f_{X^0}\|_\infty\big(\mathbb{E}[(\widetilde\xi^{h_\ell}_{n-1}-\xi^{h_{\ell+\lceil\theta\ell\rceil}}_\star)^4]^\frac14
+|\xi^{h_{\ell+\lceil\theta\ell\rceil}}_\star-\xi^0_\star|\\
&\hphantom{\leq\|f_{X^0}\|_\infty\big(}
+|\xi^0_\star-\xi^{h_{\ell-1+\lceil\theta(\ell-1)\rceil}}_\star|
+\mathbb{E}[(\xi^{h_{\ell-1+\lceil\theta(\ell-1)\rceil}}_\star-\widetilde\xi^{h_{\ell-1}}_{n-1})^4]^\frac14\big).
\end{aligned}
\end{equation*}
Recalling \eqref{eq:total:recall}, by using the inequality $(x+y)^\frac14\leq x^\frac14+y^\frac14$, $x,y\geq0$,
\begin{equation*}
\mathbb{E}[(\widetilde\xi^{h_\ell}_{n-1}-\xi^{h_{\ell+\lceil\theta\ell\rceil}}_\star)^4]^\frac14
\leq(C_g^2)^\frac14(h_\ell^\frac{1+\theta}2\gamma_n^\frac{\delta'}{2\beta}+\gamma_n^\frac12).
\end{equation*}
Besides, by Lemma~\ref{lmm:error}(\ref{lmm:error:weak})
\begin{equation*}
|\xi^{h_{\ell+\lceil\theta\ell\rceil}}_\star-\xi^0_\star|
\leq c'h_0^{-1}h_\ell^{1+\theta},
\end{equation*}
recalling that $c'\geq0$ is the constant described in \eqref{eq:oO}.
Thus
\begin{equation*}
\begin{aligned}
\mathbb{E}[|&\mathds1_{\{X^0>\widetilde\xi^{h_\ell}_{n-1}\}}-\mathds1_{\{X^0>\widetilde\xi^{h_{\ell-1}}_{n-1}\}}|]\\
&\leq\|f_{X^0}\|_\infty\big((C_g^2)^\frac14(M+1)(h_\ell^\frac{1+\theta}2\gamma_n^\frac{\delta'}{2\beta}+\gamma_n^\frac12)+c'h_0^{-1}(M^2+1)h_\ell^{1+\theta}\big)\\
&\leq C_e^2\big(h_\ell^{1+\theta}+h_\ell^\frac{1+\theta}2\gamma_n^\frac{\delta'}{2\beta}+\gamma_n^\frac12\big),
\end{aligned}
\end{equation*}
where
\begin{equation*}
C_e^2=2\|f_{X^0}\|_\infty\Big(\big((C_g^2)^\frac14(M+1)\big)\vee\big(c'h_0^{-1}(M^2+1)\big)\Big).
\end{equation*}
Combining the previous results and invoking \cite[Lemma~A.1(i)]{amlsa}, recalling the decreasing monotonicity of $(\gamma_n)_{n\geq1}$ and the fact that $\bar\lambda_2\gamma_1>2$ if $\beta=1$, i.e.~$\gamma_1V_0''(\xi^0_\star)>\bar\lambda_2\gamma_1>2>(1+\frac{\delta'}2)\vee\frac32$ if $\beta=1$,
\begin{equation*}
\begin{aligned}
\mathbb{E}\bigg[\bigg(&\sum_{\ell=1}^L\sum_{k=1}^{N_\ell}\gamma_k\Pi_{k+1:n}(e^{h_\ell}_k-e^{h_{\ell-1}}_k)\bigg)^2\bigg]\\
&
\leq\frac1{(1-\alpha)^2}\sum_{\ell=1}^L\bigg(
C_e^1\sum_{k=1}^{N_\ell}\epsilon(h_\ell)^{1+\theta}\gamma_k^2|\Pi_{k+1:N_\ell}|^2
+C_e^2\sum_{k=1}^{N_\ell}h_\ell^{1+\theta}\gamma_k^2|\Pi_{k+1:N_\ell}|^2\\
&\hphantom{\leq\frac1{(1-\alpha)^2}\sum_{\ell=1}^L\bigg(}
+C_e^2\sum_{k=1}^{N_\ell}h_\ell^\frac{1+\theta}2\gamma_k^{2+\frac{\delta'}{2\beta}}|\Pi_{k+1:N_\ell}|^2
+C_e^2\sum_{k=1}^{N_\ell}\gamma_k^\frac52|\Pi_{k+1:N_\ell}|^2\bigg)\\
&\leq C_e\sum_{\ell=1}^L\big(\gamma_{N_\ell}\epsilon(h_\ell)^{1+\theta}+\gamma_{N_\ell}(\widetilde{\gamma}_{N_\ell}^\ell)^\frac12\big),
\end{aligned}
\end{equation*}
where
\begin{equation*}
\begin{aligned}
C_e&=\frac1{(1-\alpha)^2}\bar{C}^2\bigg(
(C_e^1+C_e^2)\sup_{m\geq1}\bigg\{\gamma_m^{-1}\sum_{k=1}^m\gamma_k^2\exp\bigg(-2V_0''(\xi^0_\star)\sum_{j=k+1}^m\gamma_j\bigg)\bigg\}\\
&\hphantom{=\frac{C_e^1+C_e^2}{(1-\alpha)^2}\bar{C}^2\bigg(}\;
\vee C_e^2\bigg(\sup_{m\geq1}\bigg\{\gamma_m^{-1-\frac{\delta'}{2\beta}}\sum_{k=1}^m\gamma_k^{2+\frac{\delta'}{2\beta}}\exp\bigg(-2V_0''(\xi^0_\star)\sum_{j=k+1}^m\gamma_j\bigg)\bigg\}\\
&\hphantom{=\frac{C_e^1+C_e^2}{(1-\alpha)^2}\bar{C}^2\bigg(\vee C_e^2\bigg(}
+\sup_{m\geq1}\bigg\{\gamma_m^{-\frac32}\sum_{k=1}^m\gamma_k^\frac52\exp\bigg(-2V_0''(\xi^0_\star)\sum_{j=k+1}^m\gamma_j\bigg)\bigg\}
\bigg)\bigg).
\end{aligned}
\end{equation*}

\noindent
\emph{Step~7. Conclusion.}
\newline
Combining the previous upper bounds, we obtain
\begin{equation*}
\mathbb{E}[(\widetilde\xi^\text{\tiny\rm ML}_\mathbf{N}-\xi^{h_{L+\lceil\theta L\rceil}}_\star)^2]
\leq \widehat{C}\bigg(\gamma_{N_0}+\bigg(\sum_{\ell=1}^L\widetilde\gamma_{N_\ell}^\ell\bigg)^2+\sum_{\ell=1}^L\gamma_{N_\ell}(\widetilde{\gamma}_{N_\ell}^\ell)^\frac12+\sum_{\ell=1}^L\gamma_{N_\ell}\epsilon(h_\ell)^{1+\theta}\bigg),
\end{equation*}
where
\begin{equation}
\label{eq:master:cst}
\widehat{C}=6\Big(C_{-1}\vee(C_0)^2\vee(C_g)^2\vee(C_\rho)^2\vee\big((C_r)^2\gamma_1^{-2+{2\frac{\delta'}\beta}}\big)\vee C_e\Big).
\end{equation}

\section{Proofs of the Complexity Results}
\label{complexity}

\begin{proof}[Proof of Proposition~\ref{prp:selection:number:level:adNSA}]
By Lemma~\ref{lmm:error}(\ref{lmm:error:weak}), the bias error of adNSA satisfies
\begin{equation}
\label{eq:set:l:1}
|\xi^{h_{\ell+\lceil\theta\ell\rceil}}_\star-\xi^0_\star|
\leq Ch_{\ell+\lceil\theta\ell\rceil}
\leq\frac{Ch_0}{M^{\ell(1+\theta)}},
\end{equation}
where $C>0$ is the constant described in \eqref{eq:oO}.
Hence, assuming that $h_0>C^{-1}\varepsilon$, in order to achieve a bias error of order $\varepsilon>0$, it suffices to take $\ell$ minimal such that
\begin{equation}
\label{eq:set:l:2}
\frac{Ch_0}{M^{\ell(1+\theta)}}\leq\varepsilon,
\quad\text{i.e.}\quad
\ell=\bigg\lceil\frac{\ln{(Ch_0\varepsilon^{-1})}}{(1+\theta)\ln{M}}\bigg\rceil\geq1.
\end{equation}
\end{proof}

\begin{proof}[Proof of Lemma~\ref{lmm:E:h:-1}]
Let $\ell,n\in\mathbb{N}^*$.
By the law of total probability,
\begin{equation*}
\mathbb{E}[h_{\ell+\widetilde\eta^\ell_n}^{-1}]
=\sum_{k=0}^{\lceil\theta\ell\rceil}h_{\ell+k}^{-1}\mathbb{P}(\widetilde\eta^\ell_n=k)
\leq h_\ell^{-1}+\sum_{k=1}^{\lceil\theta\ell\rceil}h_{\ell+k}^{-1}\mathbb{P}(\widetilde\eta^\ell_n=k).
\end{equation*}
For $k\in\intl1,\lceil\theta\ell\rceil\intr$,
by the definition \eqref{eq:eta}, the tower law and Assumption~\ref{asp:misc}(\ref{asp:misc:iv}),
\begin{equation*}
\begin{aligned}
\mathbb{P}(\widetilde\eta^\ell_n=k)
&\leq\mathbb{P}(|X^{h_{\ell+k-1}}_n-\widetilde\xi^{h_\ell}_{n-1}|<C_\mathrm{ad}\psi^{\ell,k-1}_n)\\
&=\mathbb{E}[\mathbb{P}(|X^{h_{\ell+k-1}}_n-\xi|<C_\mathrm{ad}\psi^{\ell,k-1}_n)_{|\xi=\widetilde\xi^{h_\ell}_{n-1}}]\\
&\leq2C_\mathrm{ad}\Big(\sup_{\ell'\geq0}{\|f_{X^{h_{\ell'}}}\|_\infty}\Big)\psi^{\ell,k-1}_n.
\end{aligned}
\end{equation*}
Thus
\begin{equation*}
\mathbb{E}[h_{\ell+\widetilde\eta^\ell_n}^{-1}]
\leq C_0\bigg(h_\ell^{-1}+\sum_{k=1}^{\lceil\theta\ell\rceil}h_{\ell+k}^{-1}\psi^{\ell,k-1}_n\bigg),
\end{equation*}
where
\begin{equation*}
C_0=\Big(2C_\mathrm{ad}\sup_{\ell'\geq0}{\|f_{X^{h_{\ell'}}}\|_\infty}\Big)\vee1.
\end{equation*}

Recall now the definition \eqref{eq:psi} and the fact that $r>1$. If \eqref{assumption:finite:Lp:moment} holds, then
\begin{equation*}
\begin{aligned}
\mathbb{E}[h_{\ell+\widetilde\eta^\ell_n}^{-1}]
&\leq C_0\bigg(h_\ell^{-1}+u_n^{-\frac1{p_\star}}\sum_{k=1}^{\lceil\theta\ell\rceil}h_{\ell+k}^{-1}h_{\theta\ell(r-1)+k}^\frac1r\bigg)\\
&=C_0\bigg(h_\ell^{-1}+\frac{h_0^{-1+\frac1r}n^\frac\delta{p_\star}}{\gamma_1^\frac1{p_\star}M^{-\ell+\theta\ell(1-\frac1r)}}\sum_{k=1}^{\lceil\theta\ell\rceil}M^{(1-\frac1r)k}\bigg)\\
&\leq C_0n^\frac\delta{p_\star}\bigg(h_\ell^{-1}+\frac{h_0^{-1+\frac1r}}{\gamma_1^\frac1{p_\star}M^{-\ell+\theta\ell(1-\frac1r)}}M^{1-\frac1r}\frac{M^{(2+\theta\ell)(1-\frac1r)}}{M^{1-\frac1r}-1}\bigg)\\
&=C_1n^\frac\delta{p_\star}h_\ell^{-1},
\end{aligned}
\end{equation*}
where
\begin{equation*}
C_1=C_0\bigg(1+\frac{h_0^\frac1rM^{3(1-\frac1r)}}{\gamma_1^\frac1{p_\star}(M^{1-\frac1r}-1)}\bigg).
\end{equation*}

Next, recalling that $h_0=\frac1K$, note that, for $k\in[\![1,\lceil\theta\ell\rceil]\!]$,
\begin{equation*}
\begin{aligned}
\ln\bigg(\bigg(\frac{\gamma_n}{\gamma_1}\bigg)^{-\frac12}h_{\ell+k}^{-\frac{1+\theta}2}\bigg)
&=\frac\beta2\ln{n}+\frac{1+\theta}2(\ell+k)\ln{M}+\frac{1+\theta}2\ln{K}\\
&\leq\frac\beta2\vee\bigg(\frac12(1+\theta)(2+\theta)\ln{M}\bigg)\vee\bigg(\frac{1+\theta}2\ln{K}\bigg)(1+\ell+\ln{n})\\
&\leq C_2'(\ell+\ln{n}),
\end{aligned}
\end{equation*}
where
\begin{equation*}
C_2'=\beta\vee\big((1+\theta)(2+\theta)\ln{M}\big)\vee\big((1+\theta)\ln{K}\big).
\end{equation*}
Hence, if \eqref{assumption:conditional:gaussian:concentration} or \eqref{assump:unif:lipschitz:integrability:conditional:cdf:bis} holds, then
\begin{equation*}
\begin{aligned}
\mathbb{E}[h_{\ell+\widetilde\eta^\ell_n}^{-1}]
&\leq C_0\bigg(h_\ell^{-1}+\sum_{k=1}^{\lceil\theta\ell\rceil}\ln^\frac12\bigg(\bigg(\frac{\gamma_n}{\gamma_1}\bigg)^{-\frac12}h^{-\frac{1+\theta}2}_{\ell+k}\bigg)h_{\ell+k}^{-1}h_{\theta\ell(r-1)+k}^\frac1r\bigg)\\
&\leq C_0(\ell+\ln{n})^\frac12\bigg(h_\ell^{-1}+C_2'\frac{h_0^{-1+\frac1r}}{M^{-\ell+\theta\ell(1-\frac1r)}}\sum_{k=1}^{\lceil\theta\ell\rceil}M^{(1-\frac1r)k}\bigg)\\
&\leq C_0(\ell+\ln{n})^\frac12\bigg(h_\ell^{-1}+C_2'\frac{h_0^{-1+\frac1r}}{M^{-\ell+\theta\ell(1-\frac1r)}}M^{1-\frac1r}\frac{M^{(2+\theta\ell)(1-\frac1r)}}{M^{1-\frac1r}-1}\bigg)\\
&=C_2(\ell+\ln{n})^\frac12h_\ell^{-1},
\end{aligned}
\end{equation*}
where
\begin{equation*}
C_2=C_0\bigg(1+C_2'\frac{h_0^\frac1rM^{3(1-\frac1r)}}{M^{1-\frac1r}-1}\bigg).
\end{equation*}

All in all,
\begin{equation*}
\mathbb{E}[h_{\ell+\widetilde\eta^\ell_n}^{-1}]
\leq C\begin{cases}
    n^\frac\delta{p_\star}h_\ell^{-1}
    &\text{if \eqref{assumption:finite:Lp:moment} holds,}\\
    (\ell+\ln{n})^\frac12h_\ell^{-1}
    &\text{if \eqref{assumption:conditional:gaussian:concentration} or \eqref{assump:unif:lipschitz:integrability:conditional:cdf:bis} holds,}
\end{cases}
\end{equation*}
where
\begin{equation}
\label{eq:C=C1:max:C2}
C=\begin{cases}
    C_1
    &\text{if \eqref{assumption:finite:Lp:moment} holds,}\\
    C_2
    &\text{if \eqref{assumption:conditional:gaussian:concentration} or \eqref{assump:unif:lipschitz:integrability:conditional:cdf:bis} holds.}
\end{cases}
\end{equation}

\end{proof}

\begin{proof}[Proof of Proposition~\ref{prp:ansa:complexity}]
Let $\ell,n\in\mathbb{N}^*$.
Based on \eqref{adNSA}, adNSA performs $n$ iterations, where, at each iteration $k\in[\![1,n]\!]$, it simulates one copy of $Y$ of cost $c_0$, and simulates $h_{\ell+\widetilde{\eta}^\ell_k}^{-1}$ i.i.d.~copies of $Z$, each copy being of cost $c_0$. Overall, the average computational cost of adNSA satisfies
\begin{equation*}
\Cost_{\text{\rm adNSA}}
\leq c_0\sum_{k=1}^n(1+\mathbb{E}[h_{\ell+\widetilde\eta^\ell_k}^{-1}])
\leq2c_0\sum_{k=1}^n\mathbb{E}[h_{\ell+\widetilde\eta^\ell_k}^{-1}].
\end{equation*}

By Lemma~\ref{lmm:E:h:-1}, if \eqref{assumption:finite:Lp:moment} holds,
\begin{equation*}
\Cost_{\text{\rm adNSA}}
\leq2c_0C'\sum_{k=1}^nh_\ell^{-1}k^\frac\delta{p_\star},
\end{equation*}
where $C'>0$ is the constant described in \eqref{eq:C=C1:max:C2}.
Via a series-integral comparison,
\begin{equation*}
\Cost_{\text{\rm adNSA}}
\leq C_1n^{1+\frac\delta{p_\star}}h_\ell^{-1},
\quad\text{where}\quad
C_1=2c_0C'\bigg(1\vee\frac{2^\frac\delta{p_\star}p_\star}{p_\star+\delta}\bigg).
\end{equation*}

If \eqref{assumption:conditional:gaussian:concentration} or \eqref{assump:unif:lipschitz:integrability:conditional:cdf:bis} holds, using Lemma~\ref{lmm:E:h:-1} and the inequality $(x+y)^\frac12\leq x^\frac12+y^\frac12$, $x,y\geq0$,
\begin{equation*}
\Cost_{\text{\rm adNSA}}
\leq2c_0C'\sum_{k=1}^nh_\ell^{-1}(\ell+\ln{k})^\frac12
\leq2c_0C'h_\ell^{-1}\bigg(n\ell^\frac12+\sum_{k=1}^n\ln^\frac12{k}\bigg).
\end{equation*}
If $n\geq2$, then, via a series-integral comparison, the inequality $(x+y)^\frac12\leq x^\frac12+y^\frac12$, $x,y\geq0$, and the increasing monotony of $\mathbb{R}_+^*\ni x\mapsto\ln^\frac12{x}$,
\begin{equation*}
\sum_{k=1}^n\ln^\frac12{k}
\leq n\ln^\frac12{2}+\sum_{k=2}^n\ln^\frac12(k-1)
\leq n\ln^\frac12{2}+\int_1^n\ln^\frac12{s}\mathrm{d}s
\leq2n\ln^\frac12{n}.
\end{equation*}
The above upper bound can also be obtained by integration by parts.
Thus, for all $n\in\mathbb{N}^*$,
\begin{equation*}
\Cost_{\text{\rm adNSA}}
\leq C_2n(\ell^\frac12+\ln^\frac12{n})h_\ell^{-1},
\quad\text{where}\quad
C_2=4c_0C'.
\end{equation*}

Overall,
\begin{equation*}
\Cost_{\text{\rm adNSA}}
\leq C
\begin{cases}
n^{1+\frac\delta{p_\star}}h_\ell^{-1}
&\text{if \eqref{assumption:finite:Lp:moment} holds,}\\
n(\ell^\frac12+\ln^\frac12{n})h_\ell^{-1}
&\text{if \eqref{assumption:conditional:gaussian:concentration} or \eqref{assump:unif:lipschitz:integrability:conditional:cdf:bis} holds,}
\end{cases}
\end{equation*}
where
\begin{equation}
\label{eq:C:meta}
C=\begin{cases}
    C_1
    &\text{if \eqref{assumption:finite:Lp:moment} holds,}\\
    C_2
    &\text{if \eqref{assumption:conditional:gaussian:concentration} or \eqref{assump:unif:lipschitz:integrability:conditional:cdf:bis} holds.}
\end{cases}
\end{equation}
\end{proof}

\begin{proof}[Proof of Theorem~\ref{thm:cost:adansa}]

Let $n\in\mathbb{N}^*$ and $\varepsilon>0$, and set $\ell\in\mathbb{N}^*$ as in \eqref{selection:number:level:adNSA}.
According to \eqref{eq:set:l:2}, one has $h_\ell^{1+\theta}\leq\frac{\varepsilon}{C'}$, where $C'>0$ is the constant described in \eqref{eq:oO}.
Thus, consequently from Proposition~\ref{prp:error:statistical:bis}(\ref{prp:error:statistical:bis:i}), the statistical error of adNSA satisfies
\begin{equation*}
\begin{aligned}
\mathbb{E}[(\widetilde\xi^{h_\ell}_n-\xi^{h_{\ell+\lceil\theta\ell\rceil}}_\star)^2]^\frac12
&\leq(C^\beta)^\frac12(\widetilde{\gamma}^\ell_n)^\frac12\\
&\leq C
\begin{cases}
(n^{-\frac\delta2}\varepsilon^\frac12)\vee n^{-\frac\beta2}
&\text{if \eqref{assumption:finite:Lp:moment:bias} holds,}\\
n^{-\frac\beta2}
&\text{if \eqref{assumption:conditional:gaussian:concentration:bias} or \eqref{assump:unif:lipschitz:integrability:conditional:cdf:bias} holds,}
\end{cases}
\end{aligned}
\end{equation*}
where
\begin{equation}
\label{eq:cst}
C=\Big(C^\beta\big((C')^{-1}\vee1\big)\gamma_1\Big)^\frac12,
\end{equation}
with $C^\beta>0$ denoting the constant described in \eqref{eq:L2:stat:error}.

Assume \eqref{assumption:finite:Lp:moment:bias} holds, with $(n^{-\frac\delta2}\varepsilon^\frac12)\vee n^{-\frac\beta2}=n^{-\frac\delta2}\varepsilon^\frac12$.
To achieve a statistical error of $\varepsilon$, it suffices to take $n$ minimal such that
\begin{equation}
\label{eq:n:1}
Cn^{-\frac\delta2}\varepsilon^\frac12\leq\varepsilon,
\quad\text{i.e.}\quad
n=\lceil C^\frac2\delta\varepsilon^{-\frac1\delta}\rceil.
\end{equation}
If \eqref{assumption:finite:Lp:moment:bias} holds, with $(n^{-\frac\delta2}\varepsilon^\frac12)\vee n^{-\frac\beta2}=n^{-\frac\beta2}$, or if \eqref{assumption:conditional:gaussian:concentration:bias} or \eqref{assump:unif:lipschitz:integrability:conditional:cdf:bias} holds, then, to achieve a statistical error of order $\varepsilon$, we must take $n$ minimal such that
\begin{equation}
\label{eq:n:2}
Cn^{-\frac\beta2}\leq\varepsilon,
\quad\text{i.e.}\quad
n=\lceil C^\frac2\beta\varepsilon^{-\frac2\beta}\rceil.
\end{equation}

We now go back to the framework \eqref{assumption:finite:Lp:moment:bias}. If $n$ is set as in \eqref{eq:n:1}, then
\begin{equation}
\label{eq:intermediate}
n^{-\frac\delta2}\varepsilon^\frac12\leq C^{-1}\varepsilon,
\quad\text{and}\quad
n^{-\frac\beta2}>\frac{\varepsilon^\frac\beta{2\delta}}{(C^\frac2\delta+\varepsilon^\frac1\delta)^\frac\beta2}.
\end{equation}
Thus, if $n^{-\frac\beta2}\leq n^{-\frac\delta2}\varepsilon^\frac12$ for all $\varepsilon>0$, then, necessarily, $\delta\leq\frac\beta2$.
Conversely, if $\delta\leq\frac\beta2$, then, according to \eqref{eq:intermediate}, by taking $\varepsilon$ small enough, it holds that $n^{-\frac\beta2}\leq n^{-\frac\delta2}\varepsilon^\frac12$.
Applying a similar reasoning to \eqref{eq:n:2} shows that $n^{-\frac\delta2}\varepsilon^\frac12\leq n^{-\frac\beta2}$ if and only if $\delta\geq\frac\beta2$, provided that $\varepsilon$ is taken small enough.

Let $\widehat{C}>0$ denote the constant defined in \eqref{eq:C:meta}, set $r>1$, and $\ell$ as in \eqref{selection:number:level:adNSA}, and take $\varepsilon>0$ small enough. Following Proposition~\ref{prp:ansa:complexity}, by setting $n$ as in \eqref{eq:n:1} if \eqref{assumption:finite:Lp:moment:bias} holds with $\delta\leq\frac\beta2$, and as in \eqref{eq:n:1} otherwise, it holds
\begin{equation*}
\Cost_{\text{\rm adNSA}}
\leq\bar{C}\begin{cases}
\varepsilon^{-\frac{p_\star+\delta}{p_\star\delta}-\frac1{1+\theta}}
&\text{if \eqref{assumption:finite:Lp:moment:bias} holds with $\delta\leq\frac\beta2$,}\\
\varepsilon^{-\frac{2(\delta+p_\star)}{\beta p_\star}-\frac1{1+\theta}}
&\text{if \eqref{assumption:finite:Lp:moment:bias} holds with $\delta\geq\frac\beta2$,}\\
\varepsilon^{-\frac2\beta-\frac1{1+\theta}}|\ln{\varepsilon}|^\frac12
&\text{if \eqref{assumption:conditional:gaussian:concentration:bias} or \eqref{assump:unif:lipschitz:integrability:conditional:cdf:bias} holds,}
\end{cases}
\end{equation*}
where
\begin{equation}
\label{eq:cst:bar}
\bar{C}=\widehat{C}(C')^\frac1{1+\theta}h_0^\frac1{1+\theta}h_1^{-1}
\begin{cases}
(C^\frac2\delta+1)^\frac{p_\star+\delta}{p_\star}
&\text{if \eqref{assumption:finite:Lp:moment:bias} holds with $\delta\leq\frac\beta2$,}\\
(C^\frac2\beta+1)^\frac{p_\star+\delta}{p_\star}
&\text{if \eqref{assumption:finite:Lp:moment:bias} holds with $\delta\geq\frac\beta2$,}\\
(1+C^\frac2\beta)\\
\times
(\ln{(1+C^\frac2\beta)}+\frac2\beta+1
+\frac{1+|\ln{(Ch_0^{-1})}|}{(1+\theta)\ln{M}})
&\text{if \eqref{assumption:conditional:gaussian:concentration:bias} or \eqref{assump:unif:lipschitz:integrability:conditional:cdf:bias} holds.}
\end{cases}
\end{equation}
The best complexity rates are then easily obtained by optimizing the upper bounds with respect to $\theta$, $\delta$ and $\beta$.

\end{proof}

\begin{proof}[Proof of Proposition~\ref{prp:selection:number:level:adMLSA}]
Lemma~\ref{lmm:error}(\ref{lmm:error:weak}) guarantees that the bias error of adMLSA satisfies
\begin{equation*}
|\xi^{h_{L+\lceil\theta L\rceil}}_\star-\xi^0_\star|
\leq Ch_{L+\lceil\theta L\rceil}
\leq\frac{Ch_0}{M^{L(1+\theta)}},
\end{equation*}
where $C>0$ is the constant described in \eqref{eq:oO}.
Thus, provided that $h_0>C^{-1}\varepsilon$, it suffices to take $L$ minimal so that
\begin{equation*}
\frac{Ch_0}{M^{L(1+\theta)}}\leq\varepsilon,
\quad\text{i.e.}\quad
L=\bigg\lceil\frac{\ln{(Ch_0\varepsilon^{-1})}}{(1+\theta)\ln{M}}\bigg\rceil\geq1.
\end{equation*}
\end{proof}

\begin{proof}[Proof of Proposition~\ref{prp:amlsa:complexity}]
Recalling that no adaptation applies at level $0$ (c.f.~Remark~\ref{rmk:refine}(\ref{rmk:level:0})), the average complexity of the adaptive multilevel SA algorithm satisfies
\begin{equation*}
\begin{aligned}
\Cost_{\text{adMLSA}}
&\leq c_0\bigg(\sum_{n=1}^{N_0}(1+h_0^{-1})
+\sum_{\ell=1}^L\sum_{n=1}^{N_\ell}
(1+\mathbb{E}[h_{\ell+\widetilde\eta^\ell_n}^{-1}\vee h_{\ell-1+\widetilde\eta^{\ell-1}_n}^{-1}])
\bigg)\\
&\leq3c_0\bigg(N_0h_0^{-1}
+\sum_{\ell=1}^L\sum_{n=1}^{N_\ell}(\mathbb{E}[h_{\ell+\widetilde\eta^\ell_n}^{-1}]+\mathbb{E}[h_{\ell-1+\widetilde\eta^{\ell-1}_n}^{-1}])
\bigg)
\end{aligned}
\end{equation*}
Using a similar reasoning to the proof of Proposition~\ref{prp:ansa:complexity}, for $\ell\in\mathbb{N}^*$, $N_\ell\in\mathbb{N}^*$, and $j\in\{\ell-1,\ell\}$,
\begin{equation*}
\sum_{n=1}^{N_\ell}\mathbb{E}[h_{j+\widetilde{\eta}^j_n}^{-1}]
\leq (C'\vee1)
\begin{cases}
N_\ell^{1+\frac\delta{p_\star}}h_\ell^{-1}
&\text{if \eqref{assumption:finite:Lp:moment} holds,}\\
N_\ell(\ell^\frac12+\ln^\frac12{N_\ell})h_\ell^{-1}
&\text{if \eqref{assumption:conditional:gaussian:concentration} or \eqref{assump:unif:lipschitz:integrability:conditional:cdf:bis} holds,}
\end{cases}
\end{equation*}
where $C'>0$ is the constant described in \eqref{eq:C:meta}.
Note that the previous inequality takes into account the case where $\ell=1$ and $j=0$, for which no adaptation occurs.
Therefore
\begin{equation*}
\Cost_{\text{adMLSA}}
\leq C
\begin{cases}
        \sum_{\ell=0}^LN_\ell^{1+\frac\delta{p_\star}}h_\ell^{-1}
    &\text{if \eqref{assumption:finite:Lp:moment} holds,}\\
        \sum_{\ell=0}^LN_\ell(L^\frac12+\ln^\frac12{N_\ell})h_\ell^{-1}
    &\text{if \eqref{assumption:conditional:gaussian:concentration} or \eqref{assump:unif:lipschitz:integrability:conditional:cdf:bis} holds.}
\end{cases}
\end{equation*}
where
\begin{equation}
\label{eq:new:C}
C=6c_0(C'\vee1).
\end{equation}

\end{proof}

\begin{proof}[Proof of Theorem~\ref{thm:amlsa:complexity}]
Define
\begin{equation*}
\phi_L(N_\ell):=
\begin{cases}
N_\ell^{1+\frac\delta{p_\star}}
&\text{if \eqref{assumption:finite:Lp:moment:bias} holds,}\\
N_\ell(L^\frac12+\ln^\frac12{N_\ell})
&\text{if \eqref{assumption:conditional:gaussian:concentration:bias} or \eqref{assump:unif:lipschitz:integrability:conditional:cdf:bias} holds.}
\end{cases}
\end{equation*}
Following \cite{GH19,CFL23}, a heuristic proxy for the upper estimate in \eqref{L2:norm:adaML:VaR} is $\bar{C}\sum_{\ell=1}^L\gamma_{N_\ell}\epsilon(h_\ell)^{1+\theta}$,
where $\bar{C}$ is the constant given in \eqref{eq:master:cst}.
To determine the optimal number of iterations $N_0,\dots,N_L\in\mathbb{N}^*$, we relax the problem to seeking the optimal values of $N_0,\dots,N_L\in\mathbb{R}$ that minimize the complexity while constraining the proxy $L^2(\mathbb{P})$-error control to $\varepsilon^2$:
\begin{equation*}
\begin{cases}
\min_{N_0,\dots,N_L\in\mathbb{R}}
&C\sum_{\ell=0}^L\phi_L(N_\ell)h_\ell^{-1},\\
\text{s.t.}
&\bar{C}\sum_{\ell=0}^L\gamma_{N_\ell}\epsilon(h_\ell)^{1+\theta}=\varepsilon^2.
\end{cases}
\end{equation*}
where $C>0$ is the constant given in \eqref{eq:new:C}.
We can check easily that the obtained solutions are positive and that, through them, the remaining terms in the upper estimate \eqref{L2:norm:adaML:VaR} are of order $\varepsilon^2$.
\\

\noindent
(\ref{thm:amlsa:complexity:indicator:func-ia:N})\
\emph{Step~1. Iteration amounts.}
\newline
The problem to solve is
\begin{equation*}
\begin{cases}
\min_{N_0,\dots,N_L\in\mathbb{R}}
&C\sum_{\ell=0}^LN_\ell^\frac{p_\star+\delta}{p_\star}h_\ell^{-1},\\
\text{s.t.}
&\bar{C}\gamma_1\sum_{\ell=0}^LN_\ell^{-\beta}h_\ell^\frac{(1+\theta)p_\star}{2(p+1)}=\varepsilon^2.
\end{cases}
\end{equation*}
The associated Lagrangian can be written as
\begin{equation*}
\mathfrak{L}
=C\sum_{\ell=0}^L\frac{N_\ell^\frac{p_\star+\delta}{p_\star}}{h_\ell}
+\lambda\bigg(\bar{C}\gamma_1\sum_{\ell=0}^L\frac{h_\ell^\frac{(1+\theta)p_\star}{2(p+1)}}{N_\ell^\beta}-\varepsilon^2\bigg),
\end{equation*}
where $\lambda\in\mathbb{R}$ is the Lagrangian multiplier.
Optimizing with respect to $N_\ell$, $\ell\in[\![0,L]\!]$, leads to
\begin{equation}
\label{eq:N:l:1}
N_\ell
=\lambda^\frac{p_\star}{(1+\beta)p_\star+\delta}
\bigg(\frac{\bar{C}\gamma_1\beta p_\star}{C(p_\star+\delta)}\bigg)^\frac{p_\star}{(1+\beta)p_\star+\delta}
h_\ell^{\frac{(3+\theta)p_\star+2}{2(p_\star+1)((1+\beta)p_\star+\delta)}p_\star}.
\end{equation}
Injecting this value back into the Lagrangian gives
\begin{equation*}
\begin{aligned}
\mathfrak{L}
&=C^\frac{\beta p_\star}{(1+\beta)p_\star+\delta}(\bar{C}\gamma_1)^\frac{p_\star+\delta}{(1+\beta)p_\star+\delta}\\
&\hphantom{=}\times
\bigg(\bigg(\frac{\beta p_\star}{(1+\beta)p_\star+\delta}\bigg)^\frac{p_\star+\delta}{(1+\beta)p_\star+\delta}+\bigg(\frac{\beta p_\star}{(1+\beta)p_\star+\delta}\bigg)^{-\frac{\beta p_\star}{(1+\beta)p_\star+\delta}}\bigg)\\
&\hphantom{=}\times
\lambda^\frac{p_\star+\delta}{(1+\beta)p_\star+\delta}
\sum_{\ell=0}^Lh_\ell^{-\frac{(2\beta-(1+\theta))p_\star+(2\beta-(1+\theta)\delta)}{2(p_\star+1)((1+\beta)p_\star+\delta)}p_\star}
-\lambda\varepsilon^2.
\end{aligned}
\end{equation*}
The optimal Lagrangian multiplier $\lambda$ satisfies
\begin{equation*}
\begin{aligned}
\lambda^\frac{p\star}{(1+\beta)p_\star+\delta}
&=\bigg\{C^\frac{\beta p_\star}{(1+\beta)p_\star+\delta}(\bar{C}\gamma_1)^\frac{p_\star+\delta}{(1+\beta)p_\star+\delta}\bigg(\frac{p_\star+\delta}{(1+\beta)p_\star+\delta}\bigg)\\
&\hphantom{=\bigg\{}\times
\bigg(\bigg(\frac{\beta p_\star}{(1+\beta)p_\star+\delta}\bigg)^\frac{p_\star+\delta}{(1+\beta)p_\star+\delta}+\bigg(\frac{\beta p_\star}{(1+\beta)p_\star+\delta}\bigg)^{-\frac{\beta p_\star}{(1+\beta)p_\star+\delta}}\bigg)\bigg\}^\frac1\beta\\
&\hphantom{=}\times
\varepsilon^{-\frac2\beta}
\bigg(\sum_{\ell=0}^Lh_\ell^{-\frac{(2\beta-(1+\theta))p_\star+(2\beta-(1+\theta)\delta)}{2(p_\star+1)((1+\beta)p_\star+\delta)}p_\star}\bigg)^\frac1\beta.
\end{aligned}
\end{equation*}
The optimal iteration amounts $N_\ell$, $\ell\in[\![0,L]\!]$ are then obtained by injecting the previous value of the Lagrangian multiplier into \eqref{eq:N:l:1}:
\begin{equation}
\label{eq:N:l:continuous}
\boxed{
N_\ell
=\widehat{C}_1^{\beta,\delta}\varepsilon^{-\frac2\beta}\bigg(\sum_{\ell'=0}^Lh_{\ell'}^{-\frac{(2\beta-(1+\theta))p_\star+(2\beta-(1+\theta)\delta)}{2(p_\star+1)((1+\beta)p_\star+\delta)}p_\star}\bigg)^\frac1\beta h_\ell^{\frac{(3+\theta)p_\star+2}{2(p_\star+1)((1+\beta)p_\star+\delta)}p_\star},
}
\end{equation}
where
\begin{equation*}
\begin{aligned}
\widehat{C}_1^{\beta,\delta}
&=C^\frac{p_\star}{(1+\beta)p_\star+\delta}(\bar{C}\gamma_1)^\frac{(p_\star+\delta)p_\star}{(1+\beta)p_\star+\delta}\\
&\hphantom{=}\times
\bigg(\bigg(\frac{\beta p_\star}{p_\star+\delta}\bigg)^\frac{p_\star+\delta}{(1+\beta)p_\star+\delta}+\bigg(\frac{\beta p_\star}{p_\star+\delta}\bigg)^{-\frac{\beta p_\star}{(1+\beta)p_\star+\delta}}\bigg)^\frac1\beta
\bigg(\frac{p_\star+\delta}{(1+\beta)p_\star+\delta}\bigg)^\frac1\beta.
\end{aligned}
\end{equation*}
It is easy to see that $N_\ell>0$, $\ell\in[\![0,L]\!]$. Since we expect such values to be in $\mathbb{N}^*$, it suffices to take the ceiling thereof.
\\

\noindent
\emph{Step~2. Complexity.}
\newline
Denote the rational fractions
\begin{align}
A_{p_\star}
&:=\frac{(3+\theta)p_\star+2}
{2(p_\star+1)((1+\beta)p_\star+\delta)}
p_\star,
\label{Ap}\\
B_{p_\star}
&:=\frac{(2\beta-(1+\theta))p_\star+(2\beta-(1+\theta)\delta)}
{2(p_\star+1)((1+\beta)p_\star+\delta)}
p_\star.
\label{Bp}
\end{align}
Observe that
\begin{equation*}
\bigg(1+\frac\delta{p_\star}\bigg)A_{p_\star}-1=-B_{p_\star}.
\end{equation*}
Thus, recalling that the $N_\ell$, $\ell\in[\![0,L]\!]$, are set as the ceilings of the value indicated in \eqref{eq:N:l:continuous}, by considering Proposition~\ref{prp:amlsa:complexity},
\begin{equation}
\label{eq:Cost:1}
\Cost_\text{adMLSA}
\leq\bar{C}^{\beta,\delta}_1
\bigg(\sum_{\ell=0}^Lh_\ell^{-1}
+\varepsilon^{-\frac{2(p_\star+\delta)}{\beta p_\star}}
\bigg(\sum_{\ell=0}^Lh_\ell^{-B_{p_\star}}\bigg)^\frac{(1+\beta)p_\star+\delta}{\beta p_\star}\bigg).
\end{equation}
where
\begin{equation*}
\bar{C}^{\beta,\delta}_1
=2^\frac\delta{p_\star}\bar{C}
\big(1\vee(\widehat{C}^{\beta,\delta}_1)^\frac{p_\star+\delta}{p_\star}\big).
\end{equation*}
In view of \eqref{selection:number:level:adMLSA},
\begin{equation}
\label{eq:sum:hl:-1}
\sum_{\ell=0}^Lh_\ell^{-1}
\leq\frac{h_L^{-1}}{1-M^{-1}}
\leq\frac{(Ch_0)^\frac1{1+\theta}h_1^{-1}}{1-M^{-1}}
\varepsilon^{-\frac1{1+\theta}}.
\end{equation}
Moreover,
\begin{equation*}
\sum_{\ell=0}^Lh_\ell^{-B_{p_\star}}
\leq\begin{cases}
\frac{h_0^{-B_{p_\star}}}{1-M^{B_{p_\star}}}
&\text{if $B_{p_\star}<0$,}\\
L+1
&\text{if $B_{p_\star}=0$,}\\
\frac{h_L^{-B_{p_\star}}}{1-M^{-B_{p_\star}}}
&\text{if $B_{p_\star}>0$.}
\end{cases}
\end{equation*}
Take $\varepsilon$ small enough so that $|\ln{\varepsilon}|\geq1$.
Note that, following \eqref{selection:number:level:adMLSA},
\begin{equation}
\label{eq:L+1<}
L+1\leq\bigg(2+\frac{|\ln{(Ch_0)}|+1}{(1+\theta)\ln{M}}\bigg)|\ln{\varepsilon}|.
\end{equation}
Besides, if $B_{p_\star}>0$, then, it ensues from \eqref{selection:number:level:adMLSA} that
\begin{equation*}
h_L^{-B_{p_\star}}\leq(Ch_0)^\frac{B_{p_\star}}{1+\theta}h_1^{-B_{p_\star}}\varepsilon^{-\frac{B_{p_\star}}{1+\theta}}.
\end{equation*}
Thus
\begin{equation}
\label{eq:sum:hl:-Bp<eps}
\sum_{\ell=0}^Lh_\ell^{-B_{p_\star}}
\leq\widetilde{C}^{\beta,\delta,\theta}_1
\times\begin{cases}
1
&\text{if $B_{p_\star}<0$,}\\
|\ln{\varepsilon}|
&\text{if $B_{p_\star}=0$,}\\
\varepsilon^{-\frac{B_{p_\star}}{1+\theta}}
&\text{if $B_{p_\star}>0$,}
\end{cases}
\end{equation}
where
\begin{equation*}
\widetilde{C}^{\beta,\delta,\theta}_1
=\begin{cases}
\frac{h_0^{-B_{p_\star}}}{1-M^{B_{p_\star}}}
&\text{if $B_{p_\star}<0$,}\\
2+\frac{|\ln{(C'h_0)}|+1}{(1+\theta)\ln{M}}
&\text{if $B_{p_\star}=0$,}\\
\frac{(Ch_0)^\frac{B_{p_\star}}{1+\theta}h_1^{-B_{p_\star}}}{1-M^{-B_{p_\star}}}
&\text{if $B_{p_\star}>0$.}
\end{cases}
\end{equation*}
Therefore, via \eqref{eq:Cost:1}, \eqref{eq:sum:hl:-1} and \eqref{eq:sum:hl:-Bp<eps},
\begin{equation*}
\Cost_\text{adMLSA}
\leq\check{C}_1
\varepsilon^{-\frac1{1+\theta}}
+\check{C}_1
\begin{cases}
\varepsilon^{-\frac{2(p_\star+\delta)}{\beta p_\star}}
&\text{if $B_{p_\star}<0$,}\\
\varepsilon^{-\frac{2(p_\star+\delta)}{\beta p_\star}}
|\ln{\varepsilon}|^\frac{(1+\beta)p_\star+\delta}{\beta p_\star}
&\text{if $B_{p_\star}=0$,}\\
\varepsilon^{-\frac{2(p_\star+\delta)}{\beta p_\star}-\frac{(2\beta-(1+\theta))p_\star+2\beta-(1+\theta)\delta}{(1+\theta)\beta(p_\star+1)}}
&\text{if $B_{p_\star}>0$,}
\end{cases}
\end{equation*}
where
\begin{equation}
\label{eq:1:C}
\check{C}_1
=\sup_{\beta,\delta,\theta}\check{C}^{\beta,\delta,\theta}_1,
\quad\text{with}\quad
\check{C}^{\beta,\delta,\theta}_1
=\bar{C}^{\beta,\delta}_1
\bigg(\frac{(Ch_0)^\frac1{1+\theta}h_1^{-1}}{1-M^{-1}}
\vee(\widetilde{C}^{\beta,\delta,\theta}_1)^\frac{p_\star+\delta}{p_\star}\bigg).
\end{equation}
Hence
\begin{equation}
\label{eq:cost=O(eps)}
\boxed{
\Cost_\text{adMLSA}
\leq2\check{C}_1
\begin{cases}
\varepsilon^{-\frac{2(p_\star+\delta)}{\beta p_\star}}
&\text{if $B_{p_\star}<0$,}\\
\varepsilon^{-\frac{2(p_\star+\delta)}{\beta p_\star}}
|\ln{\varepsilon}|^\frac{(1+\beta)p_\star+\delta}{\beta p_\star}
&\text{if $B_{p_\star}=0$,}\\
\varepsilon^{-\frac{2(p_\star+\delta)}{\beta p_\star}-\frac{(2\beta-(1+\theta))p_\star+2\beta-(1+\theta)\delta}{(1+\theta)\beta(p_\star+1)}}
&\text{if $B_{p_\star}>0$.}
\end{cases}
}
\end{equation}

\noindent
\emph{Step~3. Best complexity.}
\newline
\emph{Step~3.1. Case where $B_{p_\star}<0$.}
\newline
If $B_{p_\star}<0$, then, by the definition \eqref{Bp},
\begin{equation*}
\big(2\beta-(1+\theta)\big)p_\star
+\big(2\beta-(1+\theta)\delta\big)
<0,
\quad\text{i.e.}\quad
\theta
>\frac{(2\beta-1)p_\star+2\beta-\delta}{p_\star+\delta}.
\end{equation*}
Given that $\theta\in(0,\frac{p_\star-2}{p_\star+2}]$, then, necessarily,
\begin{equation}
\label{eq:beta=f(delta):0}
\frac{(2\beta-1)p_\star+2\beta-\delta}{p_\star+\delta}
<\frac{p_\star-2}{p_\star+2},
\quad\text{i.e.}\quad
\beta<\frac{p_\star(p_\star+\delta)}{(p_\star+1)(p_\star+2)}<1.
\end{equation}
Reconsidering \eqref{eq:cost=O(eps)}, by using \eqref{eq:beta=f(delta):0}, and letting $\theta=\frac{p_\star-2}{p_\star+2}$ and $\beta\uparrow\frac{p_\star(p_\star+\delta)}{(p_\star+1)(p_\star+2)}$,
\begin{equation*}
\inf_{\beta,\theta}\Cost_{\text{adMLSA}}
\leq2\check{C}_1\varepsilon^{-2-\frac6{p_\star}-\frac4{p_\star^2}-\iota},
\end{equation*}
for any $\iota>0$.
This bound is independent of the choice of $\delta\in(0,1]$.
\\

\noindent
\emph{Step~3.2. Case where $B_{p_\star}=0$.}
\newline
If $B_{p_\star}=0$, then, by the definition \eqref{Bp},
\begin{equation*}
\big(2\beta-(1+\theta)\big)p_\star
+\big(2\beta-(1+\theta)\delta\big)
=0,
\quad\text{i.e.}\quad
\theta
=\frac{(2\beta-1)p_\star+2\beta-\delta}{p_\star+\delta}.
\end{equation*}
In particular,
\begin{equation}
\label{eq:theta=f(beta,delta)}
\frac1{1+\theta}=\frac{p_\star+\delta}{2\beta(p_\star+1)}.
\end{equation}
Since $\theta\in(0,\frac{p_\star-2}{p_\star+2}]$, then, necessarily,
\begin{equation}
\label{eq:beta=f(delta)}
\frac{(2\beta-1)p_\star+2\beta-\delta}{p_\star+\delta}
\leq\frac{p_\star-2}{p_\star+2},
\quad\text{i.e.}\quad
\beta\leq\frac{p_\star(p_\star+\delta)}{(p_\star+1)(p_\star+2)}<1.
\end{equation}
Coming back to \eqref{eq:cost=O(eps)}, via \eqref{eq:theta=f(beta,delta)} and \eqref{eq:beta=f(delta)}, by letting $\beta=\frac{p_\star(p_\star+\delta)}{(p_\star+1)(p_\star+2)}$,
\begin{equation*}
\inf_\beta\Cost_{\text{adMLSA}}
\leq2\check{C}_1\varepsilon^{-2-\frac6{p_\star}-\frac4{p_\star^2}}|\ln{\varepsilon}|^\frac{2p_\star^3+(3+2\delta)p_\star^2+(2+3\delta)p_\star+2\delta}{\delta p_\star^2+\delta^2p_\star}.
\end{equation*}
Note in particular that, for such choice of $\beta$,
\begin{equation*}
\theta=\frac{p_\star-2}{p_\star+2}.
\end{equation*}
This upper bound is minimal for $\delta=1$, in which case,
\begin{equation*}
\inf_{\beta,\delta}\Cost_{\text{adMLSA}}
\leq2\check{C}_1
\varepsilon^{-2-\frac6{p_\star}-\frac4{p_\star^2}}
|\ln{\varepsilon}|^\frac{2p_\star^2+3p_\star+2}{p_\star}.
\end{equation*}

\noindent
\emph{Step~3.3. Case where $B_{p_\star}>0$.}
\newline
If $B_{p_\star}>0$, then, by the definition \eqref{Bp},
\begin{equation*}
\big(2\beta-(1+\theta)\big)p_\star
+\big(2\beta-(1+\theta)\delta\big)
>0,
\quad\text{i.e.}\quad
\theta
<\frac{(2\beta-1)p_\star+2\beta-\delta}{p_\star+\delta}.
\end{equation*}
Note that, following \eqref{Bp} and \eqref{eq:cost=O(eps)},
\begin{equation}
\label{eq:cost=O(eps):3}
\Cost_{\text{adMLSA}}
\leq2\check{C}_1
\varepsilon^{-\frac{2(p_\star+\delta)}{\beta p_\star}-\frac{(2p_\star+\delta)((2\beta-(1+\theta))p_\star+(2\beta-(1+\theta)\delta))}{2(1+\theta)(p_\star+1)((1+\beta)p_\star+\delta)}}.
\end{equation}

First, assume
\begin{equation*}
\frac{p_\star-2}{p_\star+2}
<\frac{(2\beta-1)p_\star+2\beta-\delta}{p_\star+\delta},
\quad\text{i.e.}\quad
\beta>\frac{p_\star(p_\star+\delta)}{(p_\star+1)(p_\star+2)}.
\end{equation*}
Per \eqref{eq:cost=O(eps):3}, taking $\theta=\frac{p_\star-2}{p_\star+2}$,
\begin{equation*}
\inf_\theta\Cost_{\text{adMLSA}}
\leq2\check{C}_1
\varepsilon^{-\frac{2(p_\star+\delta)}{\beta p_\star}-\frac{(2p_\star+\delta)((p_\star+1)(p_\star+2)\beta-p_\star(p_\star+\delta))}{2p_\star(p_\star+1)((1+\beta)p_\star+\delta)}}.
\end{equation*}
Thus, taking $\beta=1$,
\begin{equation*}
\inf_{\beta,\theta}
\Cost_{\text{adMLSA}}
\leq2\check{C}_1
\varepsilon^{-\frac{2(p_\star+\delta)}{p_\star}
-\frac{(3-\delta)p_\star+2}{2p_\star(p_\star+1)}}.
\end{equation*}
Hence, letting $\delta\downarrow0$,
\begin{equation*}
\inf_{\beta,\delta,\theta}
\Cost_{\text{adMLSA}}
\leq2\check{C}_1
\varepsilon^{-2
-\frac{3p_\star+2}{2p_\star(p_\star+1)}-\iota},
\end{equation*}
for any $\iota>0$.

Assume now that
\begin{equation*}
\frac{(2\beta-1)p_\star+2\beta-\delta}{p_\star+\delta}
\leq\frac{p_\star-2}{p_\star+2},
\quad\text{i.e.}\quad
\beta\leq\frac{p_\star(p_\star+\delta)}{(p_\star+1)(p_\star+2)}
<1.
\end{equation*}
Via \eqref{eq:cost=O(eps):3}, letting $\theta\uparrow\frac{(2\beta-1)p_\star+2\beta-\delta}{p_\star+\delta}$,
\begin{equation*}
\inf_\theta\Cost_{\text{adMLSA}}
\leq2\check{C}_1
\varepsilon^{-\frac{2(p_\star+\delta)}{\beta p_\star}-\iota},
\end{equation*}
for any $\iota>0$.
Hence, setting $\beta=\frac{p_\star(p_\star+\delta)}{(p_\star+1)(p_\star+2)}$,
\begin{equation*}
\inf_{\beta,\theta}\Cost_{\text{adMLSA}}
\leq2\check{C}_1\varepsilon^{-2-\frac6{p_\star}-\frac4{p_\star^2}-\iota},
\end{equation*}
for any $\iota>0$,
independently from the choice of $\delta\in(0,1]$.
\\

\noindent
\emph{Step~3.4. Conclusion.}
\newline
All in all, the best complexity is attained when $\beta=1$, $\delta\downarrow0$ and $\theta=\frac{p_\star-2}{p_\star+2}$.
The ensuing complexity satisfies
\begin{equation*}
\boxed{
\inf_{\beta,\delta,\theta}\Cost_{\text{adMLSA}}
\leq2\check{C}_1
\varepsilon^{-2
-\frac{3p_\star+2}{2p_\star(p_\star+1)}-\iota}.
}
\end{equation*}

\noindent
(\ref{thm:amlsa:complexity:indicator:func-ib:N})\
\emph{Step~0. Preliminaries.}
\newline
Take $\varepsilon$ sufficiently small to ensure that $|\ln{\varepsilon}|\geq1$.
Note that, consequently from \eqref{selection:number:level:adMLSA},
\begin{equation}
\label{eq:L<ln:eps}
L\leq\widetilde{C}_1|\ln{\varepsilon}|,
\quad\text{where}\quad
\widetilde{C}_1=1+\frac{1+|\ln{(Ch_0)}|}{(1+\theta)\ln{M}}.
\end{equation}
We postulate that, for some $\widetilde{C}_2>0$,
\begin{equation}
\label{eq:adhoc}
\ln{N_\ell}\leq\widetilde{C}_2|\ln{\varepsilon}|,
\quad \ell\in[\![0,L]\!],
\end{equation}
a claim that we check a posteriori by  explicitating $\widetilde{C}_2$.
Then, one has
\begin{equation}
\label{eq:cost:proxy}
C\sum_{\ell=0}^L\frac{N_\ell}{h_\ell}(L^\frac12+\ln^\frac12{N_\ell})
\leq C\widetilde{C}|\ln{\varepsilon}|^\frac12\sum_{\ell=0}^L\frac{N_\ell}{h_\ell},
\quad\text{where}\quad
\widetilde{C}=(\widetilde{C}_1)^\frac12+(\widetilde{C}_2)^\frac12.
\end{equation}

\noindent
\emph{Step~1. Iteration amounts.}
\newline
We now solve the proxy problem
\begin{equation*}
\begin{cases}
\min_{N_0,\dots,N_L\in\mathbb{R}}
&C\widetilde{C}|\ln{\varepsilon}|^\frac12\sum_{\ell=0}^LN_\ell h_\ell^{-1},\\
\text{s.t.}
&\bar{C}\gamma_1\sum_{\ell=0}^LN_\ell^{-\beta}h_\ell^\frac{1+\theta}2|\ln{h_\ell}|^\frac{1+\theta}2=\varepsilon^2.
\end{cases}
\end{equation*}
The associated Lagrangian is
\begin{equation*}
\mathfrak{L}
=C\widetilde{C}|\ln{\varepsilon}|^\frac12\sum_{\ell=0}^L\frac{N_\ell}{h_\ell}
+\bigg(\bar{C}\gamma_1\sum_{\ell=0}^L\frac{h_\ell^\frac{1+\theta}2|\ln{h_\ell}|^\frac{1+\theta}2}{N_\ell^\beta}
-\varepsilon^2\bigg),
\end{equation*}
where $\lambda\in\mathbb{R}$ is the Lagrangian multiplier.
The optimal $N_\ell$, $\ell\in[\![0,L]\!]$ satisfy
\begin{equation}
\label{eq:N:l:2}
N_\ell
=\bigg(\frac{\bar{C}\gamma_1\beta}{C\widetilde{C}}\bigg)^\frac1{1+\beta}
\lambda^\frac1{1+\beta}
|\ln{\varepsilon}|^{-\frac1{2(1+\beta)}}
h_\ell^\frac{3+\theta}{2(1+\beta)}
|\ln{h_\ell}|^\frac{1+\theta}{2(1+\beta)}.
\end{equation}
Coming back to the Lagrangian, using the previous value for the $N_\ell$, $\ell\in[\![0,L]\!]$, one has
\begin{equation*}
\begin{aligned}
\mathfrak{L}
&=(C\widetilde{C})^\frac\beta{1+\beta}
(\bar{C}\gamma_1)^\frac1{1+\beta}
(\beta^\frac1{\beta+1}+\beta^{-\frac\beta{1+\beta}})\\
&\hphantom{=}\times
\lambda^\frac1{1+\beta}
|\ln{\varepsilon}|^\frac\beta{2(1+\beta)}
\sum_{\ell=0}^Lh_\ell^{-\frac{2\beta-(1+\theta)}{2(1+\beta)}}
|\ln{h_\ell}|^\frac{1+\theta}{2(1+\beta)}
-\lambda\varepsilon^2.
\end{aligned}
\end{equation*}
The optimal Lagrangian multiplier verifies
\begin{equation*}
\begin{aligned}
\lambda^\frac1{1+\beta}
&=\bigg(\frac{(C\widetilde{C})^\frac\beta{1+\beta}
(\bar{C}\gamma_1)^\frac1{1+\beta}
(\beta^\frac1{\beta+1}+\beta^{-\frac\beta{1+\beta}})}{1+\beta}\bigg)^\frac1\beta\\
&\hphantom{=}\times
\varepsilon^{-\frac2\beta}|\ln{\varepsilon}|^\frac1{2(1+\beta)}
\bigg(\sum_{\ell=0}^Lh_\ell^{-\frac{2\beta-(1+\theta)}{2(1+\beta)}}|\ln{h_\ell}|^\frac{1+\theta}{2(1+\beta)}\bigg)^\frac1\beta.
\end{aligned}
\end{equation*}
Injecting this value back into \eqref{eq:N:l:2} yields
\begin{equation}
\label{eq:N:l:2:bis}
\boxed{N_\ell
=(\bar{C}\gamma_1)^\frac1\beta
\varepsilon^{-\frac2\beta}
\bigg(\sum_{\ell'=0}^Lh_{\ell'}^{-\frac{2\beta-(1+\theta)}{2(1+\beta)}}|\ln{h_{\ell'}}|^\frac{1+\theta}{2(1+\beta)}\bigg)^\frac1\beta
h_\ell^\frac{3+\theta}{2(1+\beta)}
|\ln{h_\ell}|^\frac{1+\theta}{2(1+\beta)}.}
\end{equation}

Recalling that $h_0=\frac1K$, note that
\begin{equation}
\label{eq:ln:h<L}
|\ln{h_\ell}|\leq(\ln{K}+\ln{M})L.
\end{equation}
Besides,
\begin{equation}
\label{eq:sum:hl:-}
\sum_{\ell=0}^L
h_\ell^{-\frac{2\beta-(1+\theta)}{2(1+\beta)}}
\leq\begin{cases}
\frac{h_0^{-\frac{2\beta-(1+\theta)}{2(1+\beta)}}}{1-M^\frac{2\beta-(1+\theta)}{2(1+\beta)}}
&\text{if $2\beta<1+\theta$,}\\
L+1
&\text{if $2\beta=1+\theta$,}\\
\frac{h_L^{-\frac{2\beta-(1+\theta)}{2(1+\beta)}}}{1-M^{-\frac{2\beta-(1+\theta)}{2(1+\beta)}}}
&\text{if $2\beta>1+\theta$.}
\end{cases}
\end{equation}
Hence
\begin{equation}
\label{eq:ln:sum<}
\begin{aligned}
\ln\bigg(&\sum_{\ell=0}^Lh_\ell^{-\frac{2\beta-(1+\theta)}{2(1+\beta)}}\bigg)\\
&\leq\begin{cases}
-\ln{(1-M^\frac{2\beta-(1+\theta)}{2(1+\beta)})}
&\text{if $2\beta<1+\theta$,}\\
\ln{(L+1)}
&\text{if $2\beta=1+\theta$,}\\
\frac{2\beta-(1+\theta)}{2(1+\beta)}(\ln{K}+L\ln{M})-\ln{(1-M^{-\frac{2\beta-(1+\theta)}{2(1+\beta)}})}
&\text{if $2\beta>1+\theta$,}
\end{cases}\\
&\leq C'_2L,
\end{aligned}
\end{equation}
where
\begin{equation}
\label{eq:C'2}
C'_2=
\bigg(\bigg(\frac{2\beta-(1+\theta)}{2(1+\beta)}\bigg)^+\ln{(MK)}-\ln{(1-M^{-|\frac{2\beta-(1+\theta)}{2(1+\beta)}|})}
\mathds{1}_{\{2\beta\neq1+\theta\}}\bigg)
\vee1.
\end{equation}
Hence, considering \eqref{eq:L<ln:eps}, the inequality \eqref{eq:adhoc} holds with
\begin{equation*}
\begin{aligned}
\widetilde{C}_2
=\frac{2+|\ln{(\bar{C}\gamma_1)}|}\beta
+\widetilde{C}_1
\bigg(\frac{3+\theta}{2(1+\beta)}\ln{(MK)}
+\frac{1+\theta}{2\beta}\Big(1+\big|\ln{\big(\ln{(MK)}\big)}\big|\Big)
+\frac{C'_2}\beta\bigg).
\end{aligned}
\end{equation*}

\noindent
\emph{Step~2. Complexity.}
\newline
We set each $N_\ell$, $\ell\in[\![0,L]\!]$, as the ceiling of the value in \eqref{eq:N:l:2:bis} in order to retrieve iteration amounts in $\mathbb{N}^*$.
Via Proposition~\ref{prp:amlsa:complexity}, \eqref{eq:cost:proxy} and \eqref{eq:ln:h<L},
\begin{equation}
\label{eq:Cost:2}
\begin{aligned}
\Cost_\text{adMLSA}
&\leq C\widetilde{C}|\ln{\varepsilon}|^\frac12
\sum_{\ell=0}^L\frac{N_\ell}{h_\ell}\\
&\leq \widehat{C}^{\beta,\theta}_2
\bigg(|\ln{\varepsilon}|^\frac12
\sum_{\ell=0}^Lh_\ell^{-1}
+\varepsilon^{-\frac2\beta}
|\ln{\varepsilon}|^\frac{1+\theta+\beta}{2\beta}
\bigg(\sum_{\ell=0}^Lh_\ell^{-\frac{2\beta-(1+\theta)}{2(1+\beta)}}\bigg)^\frac{1+\beta}\beta\bigg),
\end{aligned}
\end{equation}
where
\begin{equation*}
\widehat{C}^{\beta,\theta}_2
=C\widetilde{C}\Big(1\vee(\bar{C}\gamma_1)^\frac1\beta
\big(\widetilde{C}_1\ln{(KM)}\big)^\frac{1+\theta}{2\beta}\Big).
\end{equation*}
If $2\beta>1+\theta$, it follows from \eqref{selection:number:level:adMLSA} that
\begin{equation*}
h_L^{-\frac{2\beta-(1+\theta)}{2(1+\beta)}}
\leq(Ch_0)^\frac{2\beta-(1+\theta)}{2(1+\theta)(1+\beta)}h_1^{-\frac{2\beta-(1+\theta)}{2(1+\theta)(1+\beta)}}
\varepsilon^{-\frac{2\beta-(1+\theta)}{2(1+\theta)(1+\beta)}}.
\end{equation*}
Hence, via \eqref{eq:sum:hl:-} and \eqref{eq:L+1<},
\begin{equation}
\label{eq:sum:hl:-2b-(1+o)}
\sum_{\ell=0}^L
h_\ell^{-\frac{2\beta-(1+\theta)}{2(1+\beta)}}
\leq\widetilde{C}^{\beta,\theta}_2
\times\begin{cases}
1
&\text{if $2\beta<1+\theta$,}\\
|\ln{\varepsilon}|
&\text{if $2\beta=1+\theta$,}\\
\varepsilon^{-\frac{2\beta-(1+\theta)}{2(1+\theta)(1+\beta)}}
&\text{if $2\beta>1+\theta$,}
\end{cases}
\end{equation}
where
\begin{equation}
\label{eq:C~2}
\widetilde{C}^{\beta,\theta}_2
=\begin{cases}
(1-M^\frac{2\beta-(1+\theta)}{2(1+\beta)})^{-1}
h_0^{-\frac{2\beta-(1+\theta)}{2(1+\beta)}}
&\text{if $2\beta<1+\theta$,}\\
2+\frac{|\ln{(Ch_0)}|+1}{(1+\theta)\ln{M}}
&\text{if $2\beta=1+\theta$,}\\
(1-M^{-\frac{2\beta-(1+\theta)}{2(1+\beta)}})^{-1}
(Ch_0)^\frac{2\beta-(1+\theta)}{2(1+\theta)(1+\beta)}h_1^{-\frac{2\beta-(1+\theta)}{2(1+\theta)(1+\beta)}}
&\text{if $2\beta>1+\theta$.}
\end{cases}
\end{equation}
Thus, by gathering the results of \eqref{eq:sum:hl:-1} and \eqref{eq:sum:hl:-2b-(1+o)} and coming back to \eqref{eq:Cost:2},
\begin{equation*}
\Cost_\text{adMLSA}
\leq\check{C}_2
\varepsilon^{-\frac1{1+\theta}}
|\ln{\varepsilon}|^\frac12
+\check{C}_2
\begin{cases}
\varepsilon^{-\frac2\beta}
|\ln{\varepsilon}|^\frac{1+\theta+\beta}{2\beta}
&\text{if $2\beta<1+\theta$,}\\
\varepsilon^{-\frac2\beta}
|\ln{\varepsilon}|^\frac{3(1+\beta)+\theta}{2\beta}
&\text{if $2\beta=1+\theta$,}\\
\varepsilon^{-\frac{3(1+\theta)+2\beta}{2(1+\theta)\beta}}
|\ln{\varepsilon}|^\frac{1+\theta+\beta}{2\beta}
&\text{if $2\beta>1+\theta$,}
\end{cases}
\end{equation*}
where
\begin{equation}
\label{eq:2:C}
\check{C}_2
=\sup_{\beta,\theta}\check{C}^{\beta,\theta}_2,
\quad\text{with}\quad
\check{C}^{\beta,\theta}_2
=\widehat{C}^{\beta,\theta}_2\bigg(\frac{(Ch_0)^\frac1{1+\theta}h_1^{-1}}{1-M^{-1}}
\vee(\widetilde{C}^{\beta,\theta}_2)^\frac{1+\beta}\beta\bigg).
\end{equation}
Thus
\begin{equation}
\label{eq:cost=O(eps):bis}
\boxed{
\Cost_\text{adMLSA}
\leq2\check{C}_2
\begin{cases}
\varepsilon^{-\frac2\beta}
|\ln{\varepsilon}|^\frac{1+\theta+\beta}{2\beta}
&\text{if $2\beta<1+\theta$,}\\
\varepsilon^{-\frac2\beta}
|\ln{\varepsilon}|^\frac{3(1+\beta)+\theta}{2\beta}
&\text{if $2\beta=1+\theta$,}\\
\varepsilon^{-\frac{3(1+\theta)+2\beta}{2(1+\theta)\beta}}
|\ln{\varepsilon}|^\frac{1+\theta+\beta}{2\beta}
&\text{if $2\beta>1+\theta$.}
\end{cases}
}
\end{equation}

\noindent
\emph{Step~3. Best complexity.}
\newline
If $2\beta<1+\theta$, then, the complexity bound is minimal if $\theta=1$ and $\beta\uparrow1$, in which case
\begin{equation*}
\inf_{\beta,\theta}\Cost_{\text{adMLSA}}
\leq\check{C}_2
\varepsilon^{-2-\iota}
|\ln{\varepsilon}|^{\frac32+\iota},
\end{equation*}
for any $\iota>0$.

If $2\beta=1+\theta$, then, the complexity bound is minimal if $\beta=\theta=1$, in which case
\begin{equation*}
\inf_{\beta,\theta}\Cost_{\text{adMLSA}}
\leq2\check{C}_2
\varepsilon^{-2}
|\ln{\varepsilon}|^\frac72.
\end{equation*}

If $2\beta>1+\theta$, then, the complexity bound is minimal if $\beta=1$ and $\theta\uparrow1$, in which case
\begin{equation*}
\inf_{\beta,\theta}\Cost_{\text{adMLSA}}
\leq\check{C}_2
\varepsilon^{-2-\iota}
|\ln{\varepsilon}|^{\frac32+\iota},
\end{equation*}
for any $\iota>0$.

Overall, the best complexity is attained for $\beta=\theta=1$, and it satisfies
\begin{equation*}
\boxed{
\inf_{\beta,\theta}\Cost_{\text{adMLSA}}
\leq2\check{C}_2
\varepsilon^{-2}
|\ln{\varepsilon}|^\frac72.
}
\end{equation*}

\noindent
(\ref{thm:amlsa:complexity:indicator:func-ii:N})\
\emph{Step~0. Preliminaries.}
\newline
We proceed similarly to the previous framework by assuming that $\varepsilon$ is small enough to ensure that $|\ln{\varepsilon}|\geq1$, and that \eqref{eq:adhoc} holds.
\\

\noindent
\emph{Step~1. Iteration amounts.}
\newline
Thus, taking into account \eqref{eq:L<ln:eps} and \eqref{eq:cost:proxy}, we look into the proxy problem
\begin{equation*}
\begin{cases}
\min_{N_0,\dots,N_L\in\mathbb{R}}
&C\widetilde{C}|\ln{\varepsilon}|^\frac12\sum_{\ell=0}^LN_\ell h_\ell^{-1},\\
\text{s.t.}
&\bar{C}\gamma_1\sum_{\ell=0}^LN_\ell^{-\beta}h_\ell^\frac{1+\theta}2=\varepsilon^2.
\end{cases}
\end{equation*}

The Lagrangian of the previous problem is
\begin{equation*}
\mathfrak{L}
=C\widetilde{C}|\ln{\varepsilon}|^\frac12\sum_{\ell=0}^L\frac{N_\ell}{h_\ell}
+\lambda\bigg(\bar{C}\gamma_1\sum_{\ell=0}^L\frac{h_\ell^\frac{1+\theta}2}{N_\ell^\beta}
-\varepsilon^2\bigg),
\end{equation*}
with $\lambda\in\mathbb{R}$ being the Lagrangian multiplier.
The optimal $N_\ell$, $\ell\in[\![0,L]\!]$, are given by
\begin{equation*}
N_\ell
=\bigg(\frac{\bar{C}\gamma_1\beta}{C\widetilde{C}}\bigg)^\frac1{1+\beta}
\lambda^\frac1{1+\beta}
|\ln{\varepsilon}|^{-\frac1{2(1+\beta)}}
h_\ell^\frac{3+\theta}{2(1+\beta)}.
\end{equation*}
Injecting these values back into the Lagrangian yields
\begin{equation*}
\mathfrak{L}
=(C\widetilde{C})^\frac\beta{1+\beta}
(\bar{C}\gamma_1)^\frac1{1+\beta}
(\beta^\frac1{\beta+1}+\beta^{-\frac\beta{1+\beta}})
\lambda^\frac1{1+\beta}
|\ln{\varepsilon}|^\frac\beta{2(1+\beta)}
\sum_{\ell=0}^Lh_\ell^{-\frac{2\beta-(1+\theta)}{2(1+\beta)}}
-\lambda\varepsilon^2.
\end{equation*}
The optimal Lagrangian multiplier $\lambda$ satisfies
\begin{equation*}
\lambda^\frac1{1+\beta}
=\bigg(\frac{(C\widetilde{C})^\frac\beta{1+\beta}
(\bar{C}\gamma_1)^\frac1{1+\beta}
(\beta^\frac1{\beta+1}+\beta^{-\frac\beta{1+\beta}})}{1+\beta}\bigg)^\frac1\beta
\varepsilon^{-\frac2\beta}|\ln{\varepsilon}|^\frac1{2(1+\beta)}
\bigg(\sum_{\ell=0}^Lh_\ell^{-\frac{2\beta-(1+\theta)}{2(1+\beta)}}\bigg)^\frac1\beta.
\end{equation*}
Thus
\begin{equation*}
\boxed{
N_\ell
=(\bar{C}\gamma_1)^\frac1\beta
\varepsilon^{-\frac2\beta}
\bigg(\sum_{\ell'=0}^Lh_{\ell'}^{-\frac{2\beta-(1+\theta)}{2(1+\beta)}}
\bigg)^\frac1\beta
h_\ell^\frac{3+\theta}{2(1+\beta)}.
}
\end{equation*}

Using \eqref{eq:ln:sum<} and \eqref{eq:L<ln:eps}, we obtain that \eqref{eq:adhoc} holds, with
\begin{equation*}
\widetilde{C}_2
=\frac{2+|\ln{(\bar{C}\gamma_1)}|}\beta
+\widetilde{C}_1
\bigg(\frac{3+\theta}{2(1+\beta)}\ln{(MK)}
+\frac{C'_2}\beta\bigg),
\end{equation*}
where $C_2'>0$ is the constant defined in \eqref{eq:C'2}.
\\

\noindent
\emph{Step~2. Complexity.}
\newline
We set each $N_\ell$, $\ell\in[\![0,L]\!]$, as the ceiling of the value in \eqref{eq:N:l:2:bis} in order to retrieve iteration amounts in $\mathbb{N}^*$.
Via Proposition~\ref{prp:amlsa:complexity}, \eqref{eq:cost:proxy} and \eqref{eq:ln:h<L},
\begin{equation*}
\begin{aligned}
\Cost_\text{adMLSA}
&\leq C\widetilde{C}|\ln{\varepsilon}|^\frac12
\sum_{\ell=0}^L\frac{N_\ell}{h_\ell}\\
&\leq \widehat{C}^{\beta,\theta}_3
\bigg(|\ln{\varepsilon}|^\frac12
\sum_{\ell=0}^Lh_\ell^{-1}
+\varepsilon^{-\frac2\beta}
|\ln{\varepsilon}|^\frac12
\bigg(\sum_{\ell=0}^Lh_\ell^{-\frac{2\beta-(1+\theta)}{2(1+\beta)}}\bigg)^\frac{1+\beta}\beta\bigg),
\end{aligned}
\end{equation*}
where
\begin{equation*}
\widehat{C}^{\beta,\theta}_3
=C\widetilde{C}\big(1\vee(\bar{C}\gamma_1)^\frac1\beta\big).
\end{equation*}
Using \eqref{eq:sum:hl:-1} and \eqref{eq:sum:hl:-2b-(1+o)},
\begin{equation*}
\Cost_\text{adMLSA}
\leq\check{C}_3
\varepsilon^{-\frac1{1+\theta}}
|\ln{\varepsilon}|^\frac12
+\check{C}_3
\begin{cases}
\varepsilon^{-\frac2\beta}
|\ln{\varepsilon}|^\frac12
&\text{if $2\beta<1+\theta$,}\\
\varepsilon^{-\frac2\beta}
|\ln{\varepsilon}|^\frac{1+2\beta}\beta
&\text{if $2\beta=1+\theta$,}\\
\varepsilon^{-\frac{3(1+\theta)+2\beta}{2(1+\theta)\beta}}
|\ln{\varepsilon}|^\frac12
&\text{if $2\beta>1+\theta$,}
\end{cases}
\end{equation*}
where
\begin{equation}
\label{eq:3:C}
\check{C}_3
=\sup_{\beta,\theta}\check{C}^{\beta,\theta}_3,
\quad\text{with}\quad
\check{C}^{\beta,\theta}_3
=\widehat{C}^{\beta,\theta}_3\bigg(\frac{(Ch_0)^\frac1{1+\theta}h_1^{-1}}{1-M^{-1}}
\vee(\widetilde{C}^{\beta,\theta}_2)^\frac{1+\beta}\beta\bigg),
\end{equation}
with $\widetilde{C}^{\beta,\theta}_2>0$ being defined in \eqref{eq:C~2}.
Hence
\begin{equation}
\label{eq:cost=O(eps):ter}
\boxed{
\Cost_\text{adMLSA}
\leq2\check{C}_3
\begin{cases}
\varepsilon^{-\frac2\beta}
|\ln{\varepsilon}|^\frac12
&\text{if $2\beta<1+\theta$,}\\
\varepsilon^{-\frac2\beta}
|\ln{\varepsilon}|^\frac{1+2\beta}\beta
&\text{if $2\beta=1+\theta$,}\\
\varepsilon^{-\frac{3(1+\theta)+2\beta}{2(1+\theta)\beta}}
|\ln{\varepsilon}|^\frac12
&\text{if $2\beta>1+\theta$.}
\end{cases}
}
\end{equation}

\noindent
\emph{Step~3. Best complexity.}
\newline
Similarly to the previous framework, the best complexity is attained for $\beta=\theta=1$, and it satisfies
\begin{equation*}
\boxed{
\inf_{\beta,\theta}\Cost_{\text{adMLSA}}
\leq2\check{C}_3
\varepsilon^{-2}
|\ln{\varepsilon}|^\frac52.
}
\end{equation*}

\end{proof}

\end{document}